\documentclass[onecolumn,draftcls]{IEEEtran}

\usepackage{epsfig}
\usepackage{times}
\usepackage{float}
\usepackage{afterpage}
\usepackage{amsmath}
\usepackage{amstext}
\usepackage{amssymb,bm, bbm}
\usepackage{latexsym}
\usepackage{color}
\usepackage{graphicx}
\usepackage{amsmath}
\usepackage{amsthm}
\usepackage{graphicx}
\usepackage{subfigure}
\usepackage{booktabs}
\usepackage{multicol}
\usepackage{lipsum}
\usepackage{dblfloatfix}
\usepackage{mathrsfs}
\usepackage{cite}
\usepackage{tikz}
\usepackage{pgfplots}
\pgfplotsset{compat=newest}
 \usepackage{enumitem}
\usepackage{makecell}
\usepackage{float}
\restylefloat{table}

\allowdisplaybreaks

\newcommand{\expect}[1]{{\mathbb{E}}\left[#1\right]}
\newcommand{\expcnd}[2]{{\mathbb{E}}\left[ #1 \;\middle|\; #2\right]}
\newcommand{\expcndn}[3]{{\mathbb{E}^{#1}}\left[ #2 \;\middle|\; #3\right]}

\newcommand{\lr}[1]{\left( #1 \right)}

\newcommand{\rmd}{\mathrm{d}}
\newcommand{\bbE}{\mathbb{E}}\newcommand{\rme}{\mathrm{e}}

\newcommand{\bbN}{\mathbb{N}}

\newcommand{\bbP}{\mathbb{P}}

\newcommand{\bbR}{\mathbb{R}}

\newcommand{\sfA}{\mathsf{A}}
\newcommand{\sfB}{\mathsf{B}}

\newcommand{\bfe}{\mathbf{e}}

\newcommand{\sfN}{\mathsf{N}}

\newcommand{\bfp}{\mathbf{p}}

\newcommand{\cA}{\mathcal{A}}
\newcommand{\cB}{\mathcal{B}}

\newcommand{\cD}{\mathcal{D}}

\newcommand{\cP}{\mathcal{P}}

\newcommand{\cX}{\mathcal{X}}

\newcommand{\del}{\partial}

\newcommand{\D}{D}
\newcommand{\kl}[2]{{\D}\left(\left.#1 \, \right\| #2 \right)}

\newcommand{\supp}{{\mathsf{supp}}}

\theoremstyle{mystyle}
\newtheorem{theorem}{Theorem}
\theoremstyle{mystyle}
\newtheorem{lemma}{Lemma}
\theoremstyle{mystyle}
\newtheorem{prop}{Proposition}
\theoremstyle{mystyle}
\newtheorem{corollary}{Corollary}
\theoremstyle{mystyle}
\theoremstyle{remark}
\theoremstyle{mystyle}
\theoremstyle{mystyle}
\theoremstyle{mystyle}
\theoremstyle{discussion}
\theoremstyle{mystyle}
\theoremstyle{mystyle}

\usepackage{enumitem}

\begin{document}

\title{The Binomial Channel: On Capacity, Optimal Inputs, and Beta-Binomial Approximation   }

\author{
\IEEEauthorblockN{ Antonino Favano,  Mohammadamin Baniasadi,  Ian Zieder, Luca Barletta, and Alex Dytso} 
\thanks{Part of this work was presented at the 2024 IEEE International Symposium on Information Theory~\cite{Zieder2024}. This is an extension of arXiv:2401.12818.}
\thanks{Antonino Favano and Luca Barletta are with the Dipartimento di Elettronica, Informazione e Bioingegneria, Politecnico di Milano, 20133 Milano, Italy (e-mail: \{antonino.favano, luca.barletta\}@polimi.it). Mohammadamin Baniasadi is with the University of California, Davis, CA, USA (e-mail: mbaniasadi@ucdavis.edu). Ian Zieder is with MECA Electronics, Denville, NJ, USA (e-mail: izieder@e-meca.com). Alex Dytso is with Qualcomm Flarion Technologies, Bridgewater,  NJ, USA (e-mail: odytso2@gmail.com).}
}

\maketitle

\begin{abstract}
We study the binomial channel with input alphabet $[0,1]$ and output alphabet
$\{0,\ldots,n\}$. We investigate its capacity and the structure of the
capacity-achieving input and output distributions. Since the output alphabet is
finite whereas the input alphabet is continuous, different input distributions
may induce the same output distribution; hence, uniqueness and support
properties of optimal inputs do not follow from strict concavity arguments.

We first establish structural properties of the capacity-achieving input
distribution. In particular, we show that it is discrete, unique, symmetric
around $1/2$, and contains the endpoints $\{0,1\}$ in its support. We also derive
location constraints and bounds on the probability masses of support points,
and improve the Witsenhausen-type upper bound on the support size from order
$n$ to order $n/2$.

We derive explicit nonasymptotic upper and lower bounds on capacity $C(n)$.
These bounds imply
$C(n)=\frac{1}{2}\log\!\left(\frac{n\pi}{2e}\right)+o(1)$.
The lower bound is obtained by evaluating the mutual information at the
reference input $X_r\sim\mathrm{Beta}(1/2,1/2)$, which induces a beta-binomial
output distribution, while the upper bound follows from a minimax redundancy
construction.

Finally, we prove an improved lower bound on the support size of the
capacity-achieving input distribution. We show that the beta-binomial output
induced by $X_r$ is asymptotically optimal and close to the capacity-achieving
output distribution in relative entropy and $\chi^2$ divergence. We also prove
a finite-mixture approximation lower bound showing that the beta-binomial
output cannot be approximated too accurately by binomial mixtures with few
components. Combining these results yields a support-size lower bound of order
$\Omega\!\left(\sqrt{n\log\log n}\right)$, with explicit constants. Numerical results illustrate the capacity bounds and optimal input.
\end{abstract}

\section{Introduction}

We consider a  channel for which the relationship between the input $X \in [0,1]$ and the output $Y \in  \{0,\ldots, n \}$ is described by the binomial distribution:
\begin{equation}
    P_{Y|X}(y|x) = \binom{n}{y} x^y (1-x)^{n-y}. \label{eq:Binomial_Channel}
\end{equation}
In \eqref{eq:Binomial_Channel}, we use the convention that $0^0=1$.

The objective of this paper is twofold. First, we study the capacity $C(n)$ of the binomial channel as a function of the number of trials $n$, that is
\begin{equation}\label{eq:capacity_opt_problem}
    C(n) = \max_{ P_X: \,  X \in [0,1] } I(X;Y), 
\end{equation} 
and derive explicit nonasymptotic upper and lower bounds whose gap vanishes as $n\to\infty$. Second, we investigate the structure of the capacity-achieving input distribution (CAID) $P_{X^\star}$. In particular, we study discreteness, uniqueness, symmetry, endpoint optimality, the location and probabilities of the support points, and upper and lower bounds on the cardinality of the support. Along the way, we also derive several related results, including properties of beta-binomial output distributions, estimation-theoretic identities for the binomial channel, and a finite-mixture approximation lower bound tailored to binomial mixtures. Together, these results give a more detailed picture of both the capacity and the optimizer for the binomial channel.

\subsection{Outline and Contributions}

The paper outline and main contributions are as follows. The remaining part of
Sec.~I introduces notation. Sec.~\ref{sec:Preliminaries} collects the
preliminary tools used throughout the paper. In particular,
Sec.~\ref{sec:Lit} goes over relevant literature. 
 Sec.~\ref{sec:KKT} recalls the Karush-Kuhn-Tucker (KKT) conditions for the CAID and introduces the active set $\cA_n$. Sec.~\ref{sec:EstPrelim}
develops several estimation-theoretic identities for the binomial channel,
including derivative formulas for the information density and monotonicity
properties of the conditional mean. The same section also introduces the
beta and beta-binomial distributions, the reference input
$X_r\sim\mathrm{Beta}(1/2,1/2)$, and the associated Chebyshev-polynomial
structure used later in the paper.

Sec.~\ref{sec:binomial-mixture-approximation} develops a best-approximation
result for finite binomial mixtures. In particular, we prove a Parseval-type
identity for the $\chi^2$-divergence between a binomial mixture output and the
reference beta-binomial output $P_{Y_r}$. We then use this identity, together
with a moment-matrix rank argument, to show that $P_{Y_r}$ cannot be
approximated too accurately by the output induced by an input distribution with
few mass points.

Sec.~\ref{sec:cap_bounds} derives nonasymptotic upper and lower bounds on the
capacity. The lower bound is obtained by evaluating the mutual information at
the reference input $X_r$, while the upper bound is based on the minimax
redundancy construction of Xie and Barron. These bounds imply
$C(n)=\frac12\log\frac{n\pi}{2e}+o(1)$ and provide an explicit upper bound on
the gap between $C(n)$ and $I(X_r;Y_r)$.

Sec.~\ref{sec:main_I} studies the structure of capacity-achieving input
distributions. We first show that all capacity-achieving distributions are discrete and then use this fact  to show that the CAID is unique. We further prove
that it is symmetric around $\frac12$, that the endpoints $\{0,1\}$ belong to
its support, and that there is at most one support point in each of the
intervals $\left(0,\frac1n\right]$ and
$\left[1-\frac1n,1\right)$. We also derive bounds on the probability masses of
the support points through a posterior concentration quantity, referred to as
the crest-factor.

The same section contains our main support-size results. First, we improve the
classical Witsenhausen-type upper bound of order $n$ to an upper bound of order
$n/2$. Second, we show that the beta-binomial output induced by $X_r$ is
asymptotically optimal and is close to the capacity-achieving output
distribution in relative entropy and $\chi^2$-divergence. Combining this output
approximation result with the finite-mixture lower bound from
Sec.~\ref{sec:binomial-mixture-approximation}, we obtain the improved lower
bound
$|\supp(P_{X^\star})|\ge \Omega(\sqrt{n\log\log n})$, up to explicit constants.
Furthermore, Sec.~\ref{sec:num_res} provides numerical results illustrating the capacity bounds,
the behavior of the optimal input distribution, and the support-size estimates. Finally, Sec.~\ref{sec:conclusion} concludes this work.    

\subsection{Notation}

All logarithms are to the base $\rme$. Deterministic scalar quantities are denoted by lower-case letters and random variables are denoted by uppercase letters.  
For a random variable $X$ and every measurable subset $\cA \subseteq \bbR$ the probability distribution is written as $P_{X}(\cA) = \bbP[X \in \cA]$. The support set of $P_X$ is
\begin{equation}
\supp(P_{X})=\{x:  \text{ $P_{X}( \mathcal{D})>0$ for every open set $ \mathcal{D} \ni x $ } \}. 
\end{equation} 
When $X$ is discrete, we write $P_X(x)$ for $P_X(\{x\})$, i.e., $P_X$ is a probability mass function (pmf). 

Given a function $f: \bbR \to \bbR$ and a set $\cB \subseteq \bbR$, the number of zeros of $f$ in $\cB$ is given by
\begin{equation}
\sfN\left(\cB; f\right) = \left|     \{x : f(x) =0 \} \cap \cB \right |,  
\end{equation}
where $| \cdot |$ denotes the cardinality. 

The set of the first $n$ positive integers is denoted by $[n]$. The entry in position $(i,j)$ of matrix $\sfA$ is denoted by $[\sfA]_{i,j}$.
We recall two standard divergence measures. Let $P$ and $Q$ be two
probability distributions on the same alphabet with $P\ll Q$. The relative
entropy, or KL divergence, is defined as
\begin{equation}
    \kl{P}{Q}
    =
    \bbE\left[
        \log \frac{\rmd P}{\rmd Q}(X)
    \right],  \, X \sim P, 
\end{equation}
and the  $\chi^2$-divergence is defined as
\begin{equation}
    \chi^2(P\|Q)
    =
    \bbE\left[
        \left(
        \frac{\rmd P}{\rmd Q}(X)-1
        \right)^2
    \right], \, X\sim Q. 
\end{equation}
For a positive integer $k$, the rising and the falling factorial are defined as
\begin{align}
(a)_k &= a(a+1)\cdots(a+k-1), \\
a^{\underline{k}} &= a(a-1)\cdots(a-k+1),
\end{align}
respectively.
Finally, we use standard Landau notation: for positive sequences $a_n$ and $b_n$, we write $a_n=O(b_n)$ if $\limsup_{n\to\infty} a_n/b_n<\infty$, $a_n=\Omega(b_n)$ if $\liminf_{n\to\infty} a_n/b_n>0$, $a_n=\Theta(b_n)$ if both $a_n=O(b_n)$ and $a_n=\Omega(b_n)$, and $a_n=o(b_n)$ if $a_n/b_n\to 0$.

\section{Literature Review}
\label{sec:Lit}

\subsection{Applications of Binomial Channel}
The binomial channel naturally arises in molecular communications and the interested reader is referred to \cite{farsad2020capacities,einolghozati2013design,farsad2017capacity,jamali2019channel} and references therein. The channel is also useful in the study of the deletion channel \cite{levenshtein1966binary,cheraghchi2019capacity}.   The binomial channel is also a special case of a multinomial channel, which finds applications in DNA storage \cite{mutiNomial}.

\subsection{Capacity Bounds}
The capacity of the binomial channel was first considered in \cite{komninakis2001capacity} where the authors used the minimax redundancy theorem in \cite{xie1997minimax} to argue that asymptotically the capacity scales as $\frac{1}{2} \log n $. 
The exact capacity for the $n=1$ case was computed in \cite{farsad2020capacities} where a binary distribution with support on $\{0,1 \}$ was shown to be capacity-achieving.  Subsequent
works derived both upper and lower bounds on the capacity; see, for example,
\cite{Zieder2024,TangCapBoundsBinomil}. 

In this work, one of the goals is to lower-bound the number of mass points in the
CAID $P_{X^\star}$. For this purpose, we need
capacity upper and lower bounds whose gap vanishes as $n\to\infty$. In particular, bounds with
non-vanishing additive constants are not sufficient for our analysis. We
therefore rely on the minimax-redundancy upper bound of~\cite{xie1997minimax}
and derive a new lower bound such that the gap to the upper bound vanishes as $n \to \infty$. 

\subsection{Structure of Capacity-Achieving Distribution} 
It turns out that for many channels of practical interest the CAID is discrete; the binomial channel studied in this work is one such example. In the information-theoretic literature, this observation goes back to \cite{richters1967communication}. The first general technique for proving discreteness of capacity-achieving distributions was developed in \cite{smith1971information} in the context of the amplitude-constrained additive white Gaussian noise channel, and has since been adapted to many other channels:  symmetric coherent vector additive Gaussian channels \cite{shamai1995capacity,rassouli2016capacity,vecGausEst, vectorGaussJonah}; non-coherent block-independent AWGN channels \cite{nuriyev2003capacity}; and Poisson channels  \cite{shamai1990capacity}.  Attempts were made to generalize discreteness to general additive channels in \cite{das2000capacity,tchamkerten2004discreteness, chan2005capacity,fahs2017properties}.  Generalizations to multiuser channels, such as multiple access and wiretap, are given in \cite{mamandipoor2012sum} and \cite{ozel2015gaussian,dytso2018optimal, wireTapFavano}, respectively. The reader is referred to \cite{CISS_2018} for a comprehensive summary of discreteness results. 

However, beyond showing that the support is discrete, probing the finer structure of the capacity-achieving distribution remains difficult.\footnote{There are, of course, exceptions, such as~\cite{Oettli_addive_unif,UniformAdditiveNoise}, where the exact capacity-achieving distribution is characterized for the additive channel with uniform noise. The approach in~\cite{Oettli_addive_unif,UniformAdditiveNoise} appears to exploit special structural properties of that channel, and it is not clear whether it can be extended to broader channel families. } In particular, determining the number of support points, or even its asymptotic behavior, has been an active line of research. One reason for this difficulty is that the problem requires tools from several different areas, including complex analysis tools such as identity theorem~\cite{smith1971information,dytso2019capacity} or zero counting methods~\cite{dytso2019capacity}, oscillation theorems~\cite{karlin1968total,dytso2019capacity,dytso2021properties}, and, more recently, best-approximation theory for mixtures of distributions \cite{wang2025super,improvedBoundsGaussianAmplitude}. Despite recent progress, the right set of techniques for obtaining tight support-size bounds is still not fully understood. In this sense, the present paper is part of a broader program that aims to better understand this question and to identify which techniques are needed to study the support of capacity-achieving distributions.

\subsection{Structure of Capacity-Achieving Distribution: Binomial Channel} 
   In \cite{farsad2020capacities}, for the case of $n=1$, the capacity-achieving input distribution was shown to be uniform on $\{0,1\}$. 

 It is also known that, by using the Witsenhausen technique \cite{WitsenhausenBOund}, there exists a CAID with at most $n+1$ mass points. We note, however, that the Witsenhausen technique does not guarantee that the CAID is unique. In fact, for the binomial channel, uniqueness has not been shown; note that uniqueness is important not just for theoretical purposes, but also for algorithmic purposes.  A conventional way to show that the capacity-achieving distribution is unique is by establishing that the mutual information is a \emph{strictly} concave function of the input distribution. However, as will be shown by an example, for the binomial channel, the mutual information is not strictly concave. 
Other properties, such as the location of the support points, are also not well understood. The main goal of this work is to close some of these gaps.

We note that some structural properties of CAIDs, including symmetry and the fact that the endpoints $\{0,1\}$ belong to the support, were also derived independently and concurrently with the conference version of this paper in~\cite{mutiNomial}.

 In \cite{mattingly2018maximizing}, using numerical evidence, the authors conjectured that the real scaling of the number of points is $|\supp(P_{X^\star})| = \Theta(n^{3/4})$. 

\subsection{Numerical Methods}
The authors of~\cite{komninakis2001capacity} used the ellipsoid method to compute the capacity-achieving distribution and study its structure. In~\cite{farsad2020capacities}, the slow convergence of the ellipsoid-based method was improved by using a Blahut--Arimoto-type algorithm together with an additional step that dynamically adjusts the locations of the mass points using the minimax capacity theorem. The authors of~\cite{farsad2020capacities} refer to this numerical procedure as Dynamic assignment Blahut--Arimoto (DAB) algorithm.

The authors of~\cite{mutiNomial} used a combination of the Blahut--Arimoto algorithm and learning-based methods to study CAIDs of the binomial channel under additional input constraints.

 In this work, we numerically derive the CAID for large values of $n$. In this regime, the large support cardinality of the input distribution can make the DAB algorithm in~\cite{farsad2020capacities} slow, without proper initialization. On the other hand, in our testings, the complexity of the ellipsoid method in~\cite{komninakis2001capacity} scales more favorably with $n$, but produces less accurate estimates of the optimal input distribution. 

Therefore, we derive our numerical results by using the ellipsoid method to evaluate a rough estimate of the CAID. Then, this preliminary estimate is used to initialize the DAB algorithm, which subsequently refines it and produces the final result.

\subsection{Estimation Theoretic Relationships}

In this work, we also rely on estimation theoretic quantities such as the conditional expectation. For the estimation theoretic treatments of the binomial channel, the interested reader is referred to \cite{taborda2014information,BD_Low}.

\subsection{Best Approximation with Finite Mixtures}

Recently, \cite{wang2025super} and \cite{improvedBoundsGaussianAmplitude} introduced a technique for lower-bounding the number of mass points that relies on best-approximation theory for finite mixtures. We follow a similar approach by deriving a lower bound in the setting of finite binomial mixtures. The key ideas behind this method were introduced in \cite{BestApporximationGaussianMixture}, and here we adapt and generalize them to the binomial channel.

Recently, deterministic identification capacity for the binomial channel has been studied in \cite{salariseddigh2023deterministic}.

\section {Preliminaries}
\label{sec:Preliminaries}
We now present some of the  tools needed in our analysis. 

\subsection{Karush-Kuhn-Tucker Conditions }
\label{sec:KKT}

The key that allows one to study properties of the support  of an optimal input distribution is the following lemma which contains the KKT conditions for our optimization problem~\cite{CISS_2018}.

\begin{lemma}\label{lem:KKT}
     $P_{X^\star}$ is a CAID if and only if the following conditions hold:  
\begin{align}
    i(x;P_{Y^\star}) &\le C(n), \qquad x \in [0,1], \\
    i(x;P_{Y^\star}) &= C(n), \qquad x \in \supp(P_{X^\star}) \label{eq:KKT_equality}
\end{align}
where $P_{Y^\star}$ is the output distribution induced by $P_{X^\star}$ and 
\begin{equation}
    i(x;P_{Y^\star})=\kl{P_{Y|X}(\cdot|x)}{P_{Y^\star}}. 
  \end{equation}
  
\end{lemma}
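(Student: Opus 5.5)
The plan is the standard convex-optimization argument: view $I(X;Y)=I(P_X)$ as a concave functional on the convex, weak-$*$ compact set $\cP$ of probability measures on $[0,1]$, and characterize its maximizers through directional (Gateaux) derivatives. First I will record that the channel is weakly continuous in the input. Each $x\mapsto P_{Y|X}(y|x)$ is a polynomial, hence continuous on $[0,1]$, so $P_Y(y)=\int P_{Y|X}(y|x)\,\rmd P_X(x)$ is weakly continuous in $P_X$. Because the output alphabet is finite, $I(P_X)=H(P_Y)-\int H(P_{Y|X}(\cdot|x))\,\rmd P_X(x)$ is therefore weakly continuous and bounded by $\log(n+1)$. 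Concavity follows from concavity of $P_Y\mapsto H(P_Y)$ together with linearity of $P_X\mapsto P_Y$ and of the second term. Existence of a maximizer follows from compactness, so $C(n)$ is well defined.

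Next I will compute the directional derivative. For $P_\theta=(1-\theta)P_{X^\star}+\theta Q$, I expect
\begin{equation}
\lim_{\theta\downarrow 0}\frac{I(P_\theta)-I(P_{X^\star})}{\theta}=\int i(x;P_{Y^\star})\,\rmd Q(x)-I(P_{X^\star}),
\end{equation}
which rests on the identity $I(P_X)=\int i(x;P_Y)\,\rmd P_X(x)$ and on differentiating $H(P_Y)$ along the segment. The main technical obstacle is justifying this limit when $P_{Y^\star}(y)=0$ for some $y$, where $\log$ blows up. I plan to rule this out first. If $P_{Y^\star}(y)=0$ for some $y$, then $P_{X^\star}$ is concentrated where $x^y(1-x)^{n-y}=0$, that is, on a subset of $\{0,1\}$. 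Mixing in a small amount of a point mass at an interior $x$ makes $-P_Y(y)\log P_Y(y)$ grow like $-\theta\log\theta$, which gives an infinite derivative and contradicts optimality. Once $P_{Y^\star}$ has full support on $\{0,\dots,n\}$, the function $i(x;P_{Y^\star})$ is continuous and bounded on $[0,1]$. The derivative is then an ordinary finite-dimensional derivative of a smooth function of $P_Y$ over the finitely many $y$, and the $\sum_y \frac{\rmd}{\rmd\theta}P_\theta(y)$ terms cancel because their total is zero.

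With this in hand, both directions follow from concavity. A concave function on a convex set is maximized at $P_{X^\star}$ if and only if every directional derivative is $\le 0$. This is equivalent to $\int i(x;P_{Y^\star})\,\rmd Q\le C(n)$ for all $Q$, and taking $Q=\delta_x$ turns it into the pointwise inequality. For the converse, the pointwise inequality gives the integrated one by integration, so nonpositive directional derivatives plus concavity yield optimality. For the sufficiency direction I also need $P_{Y^\star}$ to have full support, or else I should handle $i=+\infty$ directly. If some $P_{Y^\star}(y)=0$, then $i(x;P_{Y^\star})=+\infty$ for interior $x$, which violates the inequality, so the hypothesis itself forces full support. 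Finally, the equality on the support comes from combining $\int i\,\rmd P_{X^\star}=I(P_{X^\star})=C(n)$ with $i\le C(n)$ everywhere. Together these give $i=C(n)$ $P_{X^\star}$-a.e., and continuity of $i$ in $x$ extends this to every point of $\supp(P_{X^\star})$, since each such point is a limit of points where equality holds.
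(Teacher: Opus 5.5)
Your argument is correct: it is the standard concavity, weak-compactness and Gateaux-derivative derivation, including the needed check that $P_{Y^\star}$ has full support, which the paper does not reproduce but invokes by citing \cite{CISS_2018}. One step in the sufficiency direction is left implicit: turning $\int i(x;P_{Y^\star})\,\rmd Q(x)\le C(n)$ into a nonpositive directional derivative requires $I(P_{X^\star})=C(n)$, which you get from the equality condition together with $P_{X^\star}(\supp(P_{X^\star}))=1$ --- and that observation alone already shows $P_{X^\star}$ is a CAID, since $C(n)$ is defined as the maximum.
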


We also define the following set, which will be useful in our study of the uniqueness of $P_{X^\star}$:
\begin{equation}\label{eq:zeros_KKT}
\cA_n =\{x \in [0,1]:\:  i(x;P_{Y^\star}) -C(n)=0 \} .
\end{equation}
The importance of  $\cA_n$ is demonstrated in the following lemma. 
\begin{lemma}\label{lem:Uniquness_A_n} For a given $n \ge 1$
    \begin{itemize}
    \item $\cA_n$ is unique; and 
    \item  $\supp(P_{X^\star} )  \subseteq \cA_n$ for every $P_{X^\star}$. 
    \end{itemize}
\end{lemma}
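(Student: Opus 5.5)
The plan is to reduce both claims to a single fact: the capacity-achieving \emph{output} distribution $P_{Y^\star}$ is unique, even though the CAID itself might not be (a priori). Once $P_{Y^\star}$ is unique, the function $x\mapsto i(x;P_{Y^\star})$ is determined by $n$ alone. So is $C(n)$. Hence $\cA_n$ in \eqref{eq:zeros_KKT} does not depend on which CAID $P_{X^\star}$ was used to define it, which is the first bullet.

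To show uniqueness of $P_{Y^\star}$, I will use the standard strict-concavity argument on the output side. Let $P_{X_1}$ and $P_{X_2}$ both be CAIDs with outputs $P_{Y_1}$ and $P_{Y_2}$. Write $I(X;Y)=H(Y)-H(Y|X)$. The conditional entropy $H(Y|X)=\int H(P_{Y|X}(\cdot|x))\,\rmd P_X(x)$ is linear in $P_X$, while $H(Y)$ is a strictly concave function of $P_Y$, and $P_Y$ is linear in $P_X$. Take the mixture $P_{X_\lambda}=\lambda P_{X_1}+(1-\lambda)P_{X_2}$ for $\lambda\in(0,1)$. This gives $I(X_\lambda;Y_\lambda)\ge \lambda C(n)+(1-\lambda)C(n)=C(n)$, with strict inequality if $P_{Y_1}\neq P_{Y_2}$. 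That would contradict the definition of $C(n)$ as a maximum, so $P_{Y_1}=P_{Y_2}$. Existence of a maximizer is already implicit in \eqref{eq:capacity_opt_problem}: the input space $[0,1]$ is compact, and mutual information is weakly continuous here because the kernel is continuous in $x$ and the output alphabet is finite.

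For the second bullet, take any CAID $P_{X^\star}$. By Lemma~\ref{lem:KKT}, $i(x;P_{Y^\star})=C(n)$ for every $x\in\supp(P_{X^\star})$. By the uniqueness just shown, $P_{Y^\star}$ here is the same common output distribution used in defining $\cA_n$. It follows that $\supp(P_{X^\star})\subseteq\cA_n$.

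The only delicate point is the uniqueness of the output distribution. This step has to be argued through $H(Y)$, since mutual information fails to be strictly concave in $P_X$ (as the paper notes). I also need to make sure that $H(Y|X)$ is finite, which holds because $Y$ is finite-valued, so the linear-minus-strictly-concave decomposition is legitimate. Everything else is bookkeeping with Lemma~\ref{lem:KKT}.
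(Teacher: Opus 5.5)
Your proposal is correct and follows the paper's proof. Both arguments reduce the first bullet to the uniqueness of $C(n)$ and of the capacity-achieving output distribution $P_{Y^\star}$, and both obtain the second bullet directly from the KKT equality in Lemma~\ref{lem:KKT}. The only difference is how uniqueness of $P_{Y^\star}$ is justified: the paper cites Kemperman, whereas you prove it via strict concavity of $H(Y)$ in $P_Y$ and linearity of $H(Y|X)$ in $P_X$, which is valid here because the output alphabet is finite.
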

\begin{proof}
    Note that, for a given $n$, both $P_{Y^\star}$ and $C(n)$ are unique (even if $P_{X^\star}$ is not unique) \cite{kemperman1974shannon} and, since $\cA_n$ only depends on these quantities, the uniqueness follows. 

    The second part follows from the KKT conditions in Lemma~\ref{lem:KKT}, because $x \in \supp(P_{X^\star} )$ implies $x \in \cA_n$. 
\end{proof}


\subsection{Estimation Theoretic Preliminaries}
\label{sec:EstPrelim}
Estimation theoretic quantities will play an important role in our analysis. In what follows, the quantity $\expcndn{n-1}{f(Y)}{X=x}$ denotes expectation with respect to a binomial distribution with $n-1$ trials and success probability $x$ per trial, and 
\begin{equation}
    \ell_b(x,\hat{x}) = x \log\left(\frac{x (1-\hat{x})}{(1-x)\hat{x}}\right)-\frac{x-\hat{x}}{1-\hat{x}}, \quad (x,\hat{x}) \in (0,1)^2
\end{equation}
represents the Bregman divergence \cite{bregman1967relaxation} for the binomial channel~\cite{taborda2014information}.

We now summarize some of these preliminary results. 

\begin{prop}\label{prop:derivatives_info_density} For $n\ge 2$ and $x \in (0,1)$, we have
\begin{align}
    i'(x;P_Y)  &= \frac{n}{x} \expcndn{n-1}{\ell_b(x,\expcndn{n-1}{X}{Y})}{X=x} +\frac{n}{x} \expcndn{n-1}{\frac{x-\expcndn{n-1}{X}{Y}}{1-\expcndn{n-1}{X}{Y}}}{X=x}\label{eq:inf_den_derivative_with_bregman_1}
\end{align}
and
\begin{equation}
i''(x;P_Y) = \frac{n}{x(1-x)}+\frac{1}{(1-x)^2}  G(x) \label{eq:i_second_der} 
\end{equation}
where
\begin{equation}
G(x) = \expcnd{(n-Y)(n-Y-1)\log\frac{\expcnd{X}{Y=Y}}{\expcnd{1-X}{Y=Y+1}}\frac{\expcnd{1-X}{Y=Y+2}}{\expcnd{X}{Y=Y+1}}}{X=x}. \label{eq:G_Definition}
\end{equation}
\end{prop}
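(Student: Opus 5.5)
The plan is a direct computation. First I differentiate the information density by moving the derivative onto the binomial kernel. Then I rewrite ratios of output probabilities as ratios of posterior means. Write
\[
i(x;P_Y)=\sum_{y=0}^n P_{Y|X}(y|x)\Big[\log\tbinom{n}{y}+y\log x+(n-y)\log(1-x)-\log P_Y(y)\Big].
\]
When the derivative falls on the bracket, we get $\expcnd{Y/x-(n-Y)/(1-x)}{X=x}=0$, so that part vanishes. When it falls on the kernel, we use the standard identity
\[
\frac{\rmd}{\rmd x}P_{Y|X}(y|x)=n\big[P^{(n-1)}(y-1|x)-P^{(n-1)}(y|x)\big],
\]
where $P^{(n-1)}$ is the binomial pmf with $n-1$ trials. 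Summation by parts then gives
\[
i'(x;P_Y)=n\,\expcndn{n-1}{\log\frac{P_{Y|X}(Y+1|x)\,P_Y(Y)}{P_{Y|X}(Y|x)\,P_Y(Y+1)}}{X=x}.
\]

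Next I express the ratio of output probabilities through the input. We have
\[
P_Y(k)=\tbinom{n}{k}\bbE\big[X^k(1-X)^{n-k}\big],
\]
and both $\bbE[X^k(1-X)^{n-k}]$ and $\bbE[X^{k+1}(1-X)^{n-1-k}]$ share the factor $\bbE[X^k(1-X)^{n-1-k}]$. Dividing by this factor shows that
\[
\frac{P_Y(k+1)}{P_Y(k)}=\frac{\binom{n}{k+1}}{\binom{n}{k}}\cdot\frac{m_k}{1-m_k},
\qquad m_k=\expcndn{n-1}{X}{Y=k}.
\]
Here $m_k$ is the posterior mean under $n-1$ trials. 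The binomial coefficients cancel against those in $P_{Y|X}(k+1|x)/P_{Y|X}(k|x)$, and the $x$-dependent ratio is $x/(1-x)$. This leaves
\[
i'(x;P_Y)=n\,\expcndn{n-1}{\log\frac{x(1-m_Y)}{(1-x)m_Y}}{X=x}.
\]
The form \eqref{eq:inf_den_derivative_with_bregman_1} now follows from the definition of $\ell_b$, since
\[
\frac{n}{x}\Big[\ell_b(x,m)+\frac{x-m}{1-m}\Big]=n\log\frac{x(1-m)}{(1-x)m}.
\]

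For \eqref{eq:i_second_der} I differentiate the last display once more. The term $n\log\frac{x}{1-x}$ contributes $\frac{n}{x(1-x)}$. The remaining term is $-n\,\expcndn{n-1}{\log\frac{m_Y}{1-m_Y}}{X=x}$. I apply the kernel-derivative identity again, now at level $n-1$, and sum by parts. This produces
\[
-n(n-1)\,\expcndn{n-2}{\log\frac{m_{Y+1}(1-m_Y)}{(1-m_{Y+1})m_Y}}{X=x}.
\]
Then I change measure from $n-2$ trials back to $n$ trials using
\[
P^{(n-2)}(k|x)=\frac{(n-k)(n-k-1)}{n(n-1)(1-x)^2}P_{Y|X}(k|x).
\]
This accounts for both the prefactor $\frac{1}{(1-x)^2}$ and the weight $(n-Y)(n-Y-1)$ in $G$.

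The main obstacle is the final bookkeeping. I must identify the $(n-1)$-trial posterior means $m_k$ and $1-m_{k+1}$ with the $n$-trial conditional expectations that appear in \eqref{eq:G_Definition}. This means writing each $m_k$ as a ratio of moments $\bbE[X^a(1-X)^b]$, recognizing those moments as $n$-trial posterior means evaluated at the shifted arguments $Y$, $Y+1$, $Y+2$, and checking that the signs and the inverted ratio match $G$ exactly. I would do this one moment ratio at a time. The remaining steps are routine: interchanging the derivative with the finite sums, and the boundary terms in the summation by parts, which vanish because the binomial pmf is zero outside $\{0,\dots,n\}$.
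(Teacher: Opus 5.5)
Your proposal is correct. The first-derivative part coincides with the paper's argument: your kernel identity with summation by parts is \eqref{eq:der_cond_exp_n-1}, and your shared-factor computation of $P_Y(k+1)/P_Y(k)$ is Lemma~\ref{lem:channel_transformations}. For $i''$, however, you take a different route.

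The paper first derives a second, $n$-trial form of the first derivative,
\[
i'(x;P_Y)=n\log\frac{x}{1-x}+\frac{1}{1-x}\expcnd{(n-Y)\log\frac{\expcnd{1-X}{Y=Y+1}}{\expcnd{X}{Y=Y}}}{X=x},
\]
and differentiates that directly, using \eqref{eq:der_cond_exp_2} and the product rule on $\frac{1}{1-x}$. The weight $(n-Y)(n-Y-1)$ then comes from the cancellation $(n-Y)-(n-Y)^2=-(n-Y)(n-Y-1)$. Since $G$ comes out already written in $n$-trial posterior means, no matching between trial levels is needed.

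You instead differentiate the $(n-1)$-trial form and land on the $(n-2)$-trial expression, which the paper records separately and only for $n\ge 3$. You then pull it back to $n$ trials by your change of measure. Your route needs only one representation of $i'$, and as a by-product it shows that the $(n-2)$-trial formula and the $G$ formula agree. Its cost is the identification step you flagged.

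That step closes in one line, but only at the level of odds. In general $m_k\neq\expcnd{X}{Y=k}$. With $\mu_j=\bbE[X^j(1-X)^{n-j}]$, however, both $\frac{m_k}{1-m_k}$ and $\frac{\expcnd{X}{Y=k}}{\expcnd{1-X}{Y=k+1}}$ equal $\mu_{k+1}/\mu_k$; equivalently, combine your ratio formula with Lemma~\ref{lem:channel_transformations}. Applying this at $k=Y$ and $k=Y+1$ turns $\log\frac{m_Y(1-m_{Y+1})}{(1-m_Y)m_{Y+1}}$ into exactly the logarithm in \eqref{eq:G_Definition}, with your signs.

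For $n=2$, your second differentiation involves the zero-trial binomial $\delta_0$. Both the kernel identity and your change-of-measure formula still hold for it, so that case is covered too.
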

\begin{IEEEproof}
See Appendix~\ref{sec:Derivatives_info_density}. 
\end{IEEEproof}

The Bregman divergence in \eqref{eq:inf_den_derivative_with_bregman_1} appeared previously in a different but related result, specifically in 
\cite[Prop.~2]{taborda2014information} it was shown that  for $a \in (0,1)$
\begin{equation}
    \frac{\partial}{ \partial a}  I(  X; \cB_n(a X) ) = \frac{n}{a} \bbE \left[ \ell_b \left(aX, \bbE[aX|\cB_{n-1}(a X')  ] \right) \right] \label{eq:derivative:idenity}
\end{equation}
where $Y=\cB_n(aX)$ denotes the transformation of input $aX$ through a binomial channel with $n$ trials.  

Finally, we also need to show the monotonicity of the conditional mean. 

\begin{lemma}\label{lem:cond_exp_non_increasing}
    The function $y \mapsto \expcnd{X}{Y=y}$ is nondecreasing.
\end{lemma}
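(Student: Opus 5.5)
We will prove monotonicity of $y\mapsto \expcnd{X}{Y=y}$ by showing that the family of posteriors $P_{X|Y=y}$ is ordered by likelihood ratio in $y$. This is the standard consequence of the binomial kernel being totally positive of order two. The argument works for an arbitrary input distribution $P_X$ on $[0,1]$. We only consider $y$ with $P_Y(y)>0$; when $P_X$ is not the point mass at $0$ or $1$, every $y\in\{0,\ldots,n\}$ has positive probability, and the degenerate cases are trivial.

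\textbf{Step 1: a ratio formula.} Bayes' rule gives
\begin{equation}
\expcnd{X}{Y=y}=\frac{\bbE\left[X^{y+1}(1-X)^{n-y}\right]}{\bbE\left[X^{y}(1-X)^{n-y}\right]},
\end{equation}
since the binomial coefficients cancel. It therefore suffices to show $\expcnd{X}{Y=y+1}\ge \expcnd{X}{Y=y}$ for $0\le y\le n-1$.

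\textbf{Step 2: tilted measures.} Define the tilted probability measure $Q_y$ by $\rmd Q_y\propto x^{y}(1-x)^{n-y}\,\rmd P_X$, so that $\expcnd{X}{Y=y}$ is the mean of $Q_y$. Then
\begin{equation}
\frac{\rmd Q_{y+1}}{\rmd Q_y}(x)\propto \frac{x}{1-x},
\end{equation}
which is nondecreasing on $[0,1)$. Points at $x=1$ need care: they have zero mass under $Q_y$ for $y<n$, but they can only increase the mean of $Q_{y+1}$.

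\textbf{Step 3: monotone likelihood ratio implies ordered means.} The mean of $Q_{y+1}$ is
\begin{equation}
\bbE_{Q_{y+1}}[X]=\frac{\bbE_{Q_y}\left[X\,g(X)\right]}{\bbE_{Q_y}\left[g(X)\right]},\qquad g(x)=\frac{x}{1-x}.
\end{equation}
Both $x$ and $g$ are nondecreasing, so Chebyshev's association (covariance) inequality gives $\bbE_{Q_y}[Xg(X)]\ge \bbE_{Q_y}[X]\,\bbE_{Q_y}[g(X)]$. This is the desired inequality.

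To avoid the singularity of $g$ at $x=1$, one can instead argue directly by symmetrization. Let $X,X'$ be i.i.d. with law $P_X$ and set $w_y(x)=x^y(1-x)^{n-y}$. Then
\begin{equation}
\bbE[X w_{y+1}(X)]\,\bbE[w_y(X')]-\bbE[X w_y(X)]\,\bbE[w_{y+1}(X')]
=\tfrac12\,\bbE\!\left[(X-X')\,w_y(X)w_y(X')\left(\tfrac{X}{1-X}-\tfrac{X'}{1-X'}\right)\right].
\end{equation}
In polynomial form the weight factor is $X(1-X')-X'(1-X)=X-X'$, so the expectand becomes
\begin{equation}
(X-X')^2\,X^y(1-X)^{n-y-1}X'^y(1-X')^{n-y-1}\ge 0.
\end{equation}
This is fully polynomial and handles the endpoints cleanly.

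The only delicate point is this endpoint/zero-probability bookkeeping. The polynomial symmetrization identity resolves it, so no step is a real obstacle.
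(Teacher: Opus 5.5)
Your proof is correct and is essentially the paper's argument. The paper also starts from the ratio formula and applies Karlin's basic composition formula to the $\mathrm{TP}_2$ pairs $x^y, x^{y+1}$ and $(1-x)^{n-y}, x(1-x)^{n-y-1}$. Your symmetrized identity $\tfrac12\,\mathbb{E}\bigl[(X-X')^2X^y(1-X)^{n-y-1}X'^y(1-X')^{n-y-1}\bigr]$ is exactly that formula written out in the $2\times 2$ case, which is also why it handles the endpoint $x=1$ without fuss.

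One small correction to your preliminary remark. All $P_Y(y)$ are positive if and only if $P_X((0,1))>0$, so the degenerate case is any $P_X$ supported on $\{0,1\}$, not just a single point mass. In that case only $y\in\{0,n\}$ can occur, with conditional means $0$ and $1$, so the claim is again trivial.
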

\begin{proof}
    First of all, note that
    \begin{equation} \label{eq:Tweedy}
        \expcnd{X}{Y=y} = \frac{\expect{X^{y+1}(1-X)^{n-y}}}{\expect{X^{y}(1-X)^{n-y}}}.
    \end{equation}
    Let us now introduce the functions $f_1, f_2, g_1, g_2$ as follows:
    \begin{align}
        &f_1(x) = x^{y}, \quad f_2(x) = x^{y+1}, \quad g_1(x) = (1-x)^{n-y}, \quad g_2(x) = x(1-x)^{n-y-1},
    \end{align}
    and note that the functions 
    \begin{equation}
        \frac{f_2(x)}{f_1(x)} = x, \qquad \frac{g_2(x)}{g_1(x)} = \frac{x}{1-x}
    \end{equation}
    are both increasing and nonnegative for $x \in [0,1]$. As a consequence, the entries and the determinant of the matrices 
    \begin{equation}
        \left[\begin{array}{cc}
            f_1(x_1) & f_1(x_2) \\
            f_2(x_1) & f_2(x_2)
        \end{array}\right], \qquad \left[\begin{array}{cc}
            g_1(x_1) & g_1(x_2) \\
            g_2(x_1) & g_2(x_2)
        \end{array}\right],
    \end{equation}
    are nonnegative for any choice of $0\le x_1 < x_2 \le 1$. By using the \emph{basic composition formula} of~\cite[Ch.~3.1]{karlin1968total}, we can also say that the entries and the determinant of the matrix
    \begin{equation}
        \left[\begin{array}{cc}
            \expect{f_1(X)g_1(X)} & \expect{f_1(X)g_2(X)} \\
            \expect{f_2(X)g_1(X)} & \expect{f_2(X)g_2(X)}
        \end{array}\right]
    \end{equation}
    are nonnegative. Therefore, we have
    \begin{equation}
        \frac{\expect{f_2(X)g_2(X)}}{\expect{f_1(X)g_2(X)}} \ge \frac{\expect{f_2(X)g_1(X)}}{\expect{f_1(X)g_1(X)}}
    \end{equation}
    or
    \begin{equation}
        \frac{\expect{X^{y+2} (1-X)^{n-y-1}}}{\expect{X^{y+1}(1-X)^{n-y-1}}} \ge \frac{\expect{X^{y+1}(1-X)^{n-y}}}{\expect{X^y (1-X)^{n-y}}},
    \end{equation}
    which, by using \eqref{eq:Tweedy}, is the same as 
    \begin{equation}
        \expcnd{X}{Y=y+1} \ge \expcnd{X}{Y=y}.
    \end{equation}
    This concludes the proof. 
\end{proof}

The following expression, which can be viewed as a binomial analog of Tweedie's formula \cite{robbins1992empirical,Zieder_Mean_identies}, will also be useful. 
\begin{lemma}\label{lem:channel_transformations}
    For all $y=0,1,\ldots, n-1$, we have
        \begin{align}
        \frac{n-y}{y+1}\frac{P_Y(y)}{P_Y(y+1)} 
        &= \frac{\expcnd{1-X}{Y=y+1}}{\expcnd{X}{Y=y}}. \label{eq:ratio_cond_exp_n}
    \end{align}
\end{lemma}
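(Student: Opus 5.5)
The identity follows by writing both conditional expectations as ratios of moments of the input, using the same representation as in \eqref{eq:Tweedy}, and then comparing with the binomial pmf. The output distribution is
\begin{equation}
P_Y(y)=\binom{n}{y}\expect{X^y(1-X)^{n-y}}.
\end{equation}
By Bayes' rule,
\begin{equation}
\expcnd{X}{Y=y}=\frac{\expect{X^{y+1}(1-X)^{n-y}}}{\expect{X^y(1-X)^{n-y}}}.
\end{equation}
Likewise,
\begin{equation}
\expcnd{1-X}{Y=y+1}=\frac{\expect{X^{y+1}(1-X)^{n-y}}}{\expect{X^{y+1}(1-X)^{n-y-1}}}.
\end{equation}

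The key observation is that the two numerators coincide. Both equal $\expect{X^{y+1}(1-X)^{n-y}}$, so the ratio of the two conditional means collapses. Dividing the second expression by the first gives
\begin{equation}
\frac{\expcnd{1-X}{Y=y+1}}{\expcnd{X}{Y=y}}=\frac{\expect{X^{y}(1-X)^{n-y}}}{\expect{X^{y+1}(1-X)^{n-y-1}}}.
\end{equation}

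The right-hand side is then rewritten through the pmf. We have
\begin{equation}
\expect{X^{y}(1-X)^{n-y}}=\frac{P_Y(y)}{\binom{n}{y}}, \qquad \expect{X^{y+1}(1-X)^{n-y-1}}=\frac{P_Y(y+1)}{\binom{n}{y+1}}.
\end{equation}
The ratio of binomial coefficients is
\begin{equation}
\frac{\binom{n}{y+1}}{\binom{n}{y}}=\frac{n-y}{y+1}.
\end{equation}
Substituting these into the previous display yields \eqref{eq:ratio_cond_exp_n}.

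The only real obstacle is well-definedness: the denominators $P_Y(y)$, $P_Y(y+1)$ and $\expcnd{X}{Y=y}$ must be nonzero. If $P_X$ is not degenerate at $0$ or $1$, then every output $y$ has positive probability, so $P_Y(y)>0$ for all $y$. Also $\expcnd{X}{Y=y}>0$ unless $P_X$ puts all its conditional mass at $0$, which can only happen for $y=0$. In that case $\expcnd{X}{Y=0}=0$ forces $P_X=\delta_0$, and then $P_Y(1)=0$, so both sides are undefined together. I would therefore state the standing assumption that the expectations are nonzero, which holds for any nondegenerate input and in particular for every CAID, since a CAID contains $0$ and $1$ in its support.
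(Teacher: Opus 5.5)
Your derivation is correct and is essentially the paper's proof. The paper shows $P_Y(y)\,\mathbb{E}[X\mid Y=y]=\frac{y+1}{n-y}\,P_Y(y+1)\,\mathbb{E}[1-X\mid Y=y+1]$ by observing that both sides equal $\binom{n}{y}\mathbb{E}[X^{y+1}(1-X)^{n-y}]$, which is exactly your remark that the two numerators coincide. Working with this cross-multiplied form only avoids dividing by quantities that might vanish.

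Your closing paragraph on well-definedness, however, contains false statements.

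Excluding the point masses $\delta_0,\delta_1$ does not make every $P_Y(y)$ positive. For $P_X=\frac12(\delta_0+\delta_1)$ and $n\ge 2$ one has $P_Y(1)=n\,\mathbb{E}[X(1-X)^{n-1}]=0$.

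Also, $\mathbb{E}[X\mid Y=0]=0$ only forces $P_X((0,1))=0$, not $P_X=\delta_0$. In that case the two sides need not be undefined together. Take $n=1$ and $P_X=\frac12(\delta_0+\delta_1)$, which is precisely the CAID for $n=1$: at $y=0$ the left-hand side equals $1$, while the right-hand side is $0/0$.

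So having $\{0,1\}$ in the support is not the relevant property. The correct standing assumption is $P_X((0,1))>0$. Under it, $P_Y(y)$, $\mathbb{E}[X\mid Y=y]$ and $\mathbb{E}[1-X\mid Y=y]$ are all strictly positive for every $y$.

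This assumption holds for the CAID whenever $n\ge 2$. An input supported on $\{0,1\}$ gives $P_Y(1)=0$, hence $i(x;P_Y)=\infty$ for $x\in(0,1)$, which violates the KKT inequality of Lemma~\ref{lem:KKT}. It fails for $n=1$.
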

\begin{IEEEproof}
    We first start by noting that
    \begin{align}
        P_{Y|X}(y+1|x) &= \binom{n}{y+1} x^{y+1} (1-x)^{n-y-1} = \frac{nx}{y+1}\binom{n-1}{y} x^{y} (1-x)^{n-y-1} = \frac{nx}{y+1} P_{Y|X}^{n-1}(y|x), \label{eq:tr_ch_1}
    \end{align}
    where we have denoted by $P_{Y|X}^{n-1}$ the binomial distribution with $n-1$ trials.
Next, note that 
    \begin{align}
        P_Y(y+1) &= \sum_x P_X(x) P_{Y|X}(y+1|x) \\
        &= \sum_x P_X(x)\frac{nx}{y+1}P_{Y|X}^{n-1}(y|x) \label{eq:apply_channel_downgrage} \\
        &= \frac{n}{y+1} \expcndn{n-1}{X}{Y=y} P_Y^{n-1}(y) \label{eq:multiply_and_divide_by_p_Y}
    \end{align}
    where in~\eqref{eq:apply_channel_downgrage} we used~\eqref{eq:tr_ch_1}; and in the last step we multiplied and divided by $P_Y^{n-1}(y)$.
    In a similar fashion, we have that 
    \begin{align}
        P_{Y|X}(y|x) &= \binom{n}{n-y} x^{y} (1-x)^{n-y} = \frac{n(1-x)}{n-y} \binom{n-1}{n-y-1} x^{y} (1-x)^{n-y-1} = \frac{n(1-x)}{n-y}P_{Y|X}^{n-1}(y|x), \label{eq:tr_ch_2}
    \end{align}
   which leads to 
    \begin{align}
        P_Y(y) &= \sum_x P_X(x) P_{Y|X}(y|x) \\
        &= \sum_x P_X(x)\frac{n(1-x)}{n-y}P_{Y|X}^{n-1}(y|x) \label{eq:apply_channel_upgrade} \\
        &= \frac{n}{n-y} \expcndn{n-1}{1-X}{Y=y} P_Y^{n-1}(y) \label{eq:multiply_and_divide_by_p_Y_2}
    \end{align}
    where in~\eqref{eq:apply_channel_upgrade} we used~\eqref{eq:tr_ch_2}; and in the last step we multiplied and divided by $P_Y^{n-1}(y)$. 
    Finally, to show~\eqref{eq:ratio_cond_exp_n},  write
    \begin{align}
       P_Y(y)\expcnd{X}{Y=y} &= \sum_x P_X(x) x  \binom{n}{y} x^{y} (1-x)^{n-y} \\
        &= \sum_x P_X(x) (1-x)  \binom{n}{y} x^{y+1} (1-x)^{n-(y+1)} \\
        &= \sum_x P_X(x) (1-x)\frac{y+1}{n-y}  \binom{n}{y+1} x^{y+1} (1-x)^{n-(y+1)} \\
        &=\frac{y+1}{n-y} P_Y(y+1) \left(1-\expcnd{X}{Y=y+1} \right)
    \end{align}
    which is valid for $y\ne n$.
\end{IEEEproof}

\subsection{Beta and Beta-Binomial Distributions: Basic Properties}

An important role in our development will be played by beta distributions and the
induced beta-binomial output distributions. Let
$X_{\alpha,\beta}\sim \mathrm{Beta}(\alpha,\beta)$, where $\alpha,\beta>0$.
That is, $X_{\alpha,\beta}$ has density
\begin{equation}\label{eq:pdf_beta_general}
    f_{X_{\alpha,\beta}}(x)
    =
    \frac{\Gamma(\alpha+\beta)}
    {\Gamma(\alpha)\Gamma(\beta)}
    x^{\alpha-1}(1-x)^{\beta-1},
    \qquad x\in(0,1).
\end{equation}
Equivalently, using the beta function
$B(\alpha,\beta)=\frac{\Gamma(\alpha)\Gamma(\beta)}
{\Gamma(\alpha+\beta)}$, this can be written as
\begin{equation}
    f_{X_{\alpha,\beta}}(x)
    =
    \frac{x^{\alpha-1}(1-x)^{\beta-1}}
    {B(\alpha,\beta)},
    \qquad x\in(0,1).
\end{equation}

When $X_{\alpha,\beta}$ is used as the input to the binomial channel in
\eqref{eq:Binomial_Channel}, the induced output distribution is the
beta-binomial distribution~\cite{JohnsonKempKotz2005} . We denote this output by $Y_{\alpha,\beta}$.
For $y=0,\ldots,n$, its probability mass function is
\begin{align}
    P_{Y_{\alpha,\beta}}(y)
    &=
    \binom{n}{y}
    \mathbb{E}\left[
        X_{\alpha,\beta}^{y}
        (1-X_{\alpha,\beta})^{n-y}
    \right] \\
    &=
    \binom{n}{y}
    \frac{B(y+\alpha,n-y+\beta)}{B(\alpha,\beta)} \label{eq:beta_binomial_general} \\
    &=
    \binom{n}{y}
    \frac{
    \Gamma(y+\alpha)\Gamma(n-y+\beta)\Gamma(\alpha+\beta)
    }
    {
    \Gamma(n+\alpha+\beta)\Gamma(\alpha)\Gamma(\beta)
    } .
\end{align}
An important special case for us will be $X_r:=X_{1/2,1/2}$ and $Y_r:=Y_{1/2,1/2}$. The density of
$X_r$ becomes
\begin{equation}\label{eq:pdf_beta}
    f_{X_r}(x)
    =
    \frac{1}{\pi\sqrt{x(1-x)}},
    \qquad x\in(0,1),
\end{equation}
and the corresponding output distribution is
\begin{equation} \label{eq:output_dist_X_r}
P_{Y_r}(y)
=
\frac{\Gamma\!\left(y+\tfrac12\right)
\Gamma\!\left(n-y+\tfrac12\right)}
{\pi\,\Gamma(y+1)\Gamma(n-y+1)},
\qquad  y=0,\dots,n.
\end{equation}
Thus, \eqref{eq:output_dist_X_r} is the special case of
\eqref{eq:beta_binomial_general} obtained by setting
$\alpha=\beta=\frac12$.

There are two intuitive reasons why beta distributions appear naturally in our analysis. First,  the beta family is conjugate prior for the binomial channel~\cite{fink1997compendium}. 
In other words, the conditional distribution of $X_{\alpha,\beta}$ given $Y_{\alpha,\beta}=y$ remains a beta distribution, with only its parameters updated. 
Fig. \ref{fig:beta_densities} and Fig. \ref{fig:beta_binomial_pmfs} show beta input distributions and their corresponding output distributions, respectively.

\begin{figure}[t]
\centering
\begin{minipage}[t]{0.45\linewidth}
\centering
\begin{tikzpicture}
\begin{axis}[
    width=\linewidth,
    height=6cm,
    xlabel={$x$},
    ylabel={$f_X(x)$},
    xmin=0, xmax=1,
    ymin=0, ymax=8.5,
    xtick={0,0.2,0.4,0.6,0.8,1.0},
    ytick={0,5, 10, 15, 20, 25, 30},
    xmajorgrids=true,
    ymajorgrids=true,
    legend pos=north east,
    legend style={draw=gray!40, fill=white, font=\footnotesize},
    tick label style={font=\footnotesize},
    label style={font=\footnotesize},
]
\addplot[red, very thick, domain=0.001:0.999, samples=500]
    {1/(pi*sqrt(x*(1-x)))};
\addlegendentry{$\alpha=\beta=1/2$}

\addplot[blue, very thick, domain=0.001:0.999, samples=2]
    {1};
\addlegendentry{$\alpha=\beta=1$}

\addplot[black, very thick, domain=0.001:0.999, samples=500]
    {6*x*(1-x)};
\addlegendentry{$\alpha=\beta=2$}
\end{axis}
\end{tikzpicture}
\caption{Beta input densities for representative symmetric beta distributions.}
\label{fig:beta_densities}
\end{minipage}%
\pgfplotstableread{
y pmf_half pmf_one pmf_two
0 0.102578173009 0.0322580645161 0.00568181818182
1 0.0521583930552 0.0322580645161 0.0109970674487
2 0.0398050894369 0.0322580645161 0.0159457478006
3 0.0337740152798 0.0322580645161 0.0205278592375
4 0.0301098532447 0.0322580645161 0.0247434017595
5 0.0276302182716 0.0322580645161 0.0285923753666
6 0.0258445919207 0.0322580645161 0.0320747800587
7 0.0245091570798 0.0322580645161 0.0351906158358
8 0.0234879422015 0.0322580645161 0.0379398826979
9 0.0226989415591 0.0322580645161 0.0403225806452
10 0.022089945566 0.0322580645161 0.0423387096774
11 0.0216265201346 0.0322580645161 0.0439882697947
12 0.0212855614838 0.0322580645161 0.0452712609971
13 0.0210516542147 0.0322580645161 0.0461876832845
14 0.0209149551614 0.0322580645161 0.0467375366569
15 0.0208699767632 0.0322580645161 0.0469208211144
16 0.0209149551614 0.0322580645161 0.0467375366569
17 0.0210516542147 0.0322580645161 0.0461876832845
18 0.0212855614838 0.0322580645161 0.0452712609971
19 0.0216265201346 0.0322580645161 0.0439882697947
20 0.022089945566 0.0322580645161 0.0423387096774
21 0.0226989415591 0.0322580645161 0.0403225806452
22 0.0234879422015 0.0322580645161 0.0379398826979
23 0.0245091570798 0.0322580645161 0.0351906158358
24 0.0258445919207 0.0322580645161 0.0320747800587
25 0.0276302182716 0.0322580645161 0.0285923753666
26 0.0301098532447 0.0322580645161 0.0247434017595
27 0.0337740152798 0.0322580645161 0.0205278592375
28 0.0398050894369 0.0322580645161 0.0159457478006
29 0.0521583930552 0.0322580645161 0.0109970674487
30 0.102578173009 0.0322580645161 0.00568181818182
}\betabinomialpmfdata %
\hspace{4ex} %
\begin{minipage}[t]{0.45\linewidth}
\centering
\begin{tikzpicture}
\begin{axis}[
    width=\linewidth,
    height=6cm,
    xlabel={$y$},
    ylabel={$P_Y(y)$},
    xmin=0, xmax=30,
    ymin=0, ymax=0.11,
    xtick={0,5,10,15,20,25,30},
    ytick={0.02,0.04,0.06,0.08,0.10},
    xmajorgrids=true,
    ymajorgrids=true,
    legend pos=north east,
    legend style={draw=gray!40, fill=white, font=\footnotesize},
    tick label style={font=\footnotesize},
    label style={font=\footnotesize},
]
\addplot+[red, very thick, mark=*, mark size=1.5pt]
    table[x=y,y=pmf_half] {\betabinomialpmfdata};
\addlegendentry{$\alpha=\beta=1/2$}

\addplot+[blue, very thick, mark=square*, mark size=1.5pt]
    table[x=y,y=pmf_one] {\betabinomialpmfdata};
\addlegendentry{$\alpha=\beta=1$}

\addplot+[black, very thick, mark=triangle*, mark size=1.7pt]
    table[x=y,y=pmf_two] {\betabinomialpmfdata};
\addlegendentry{$\alpha=\beta=2$}
\end{axis}
\end{tikzpicture}
\caption{Beta-binomial output distributions induced by the input distributions of Fig.~\ref{fig:beta_densities} for $n=30$.}
\label{fig:beta_binomial_pmfs}
\end{minipage}
\end{figure}

In particular, if $X_r|Y_r=y \sim \text{Beta}(y+1/2, n-y+1/2)$,  the moments of this conditional distribution, which will be useful, are given by  \cite{JohnsonKempKotz2005}
\begin{equation}
  \bbE[X_r^k|Y_r=y] = \frac{(y+1/2)_k}{(n+1)_k}, \label{eq:moments_of_beta}
\end{equation}
where we recall that  $(a)_k $ is the rising factorial.

 The second reason is that $X_r$ is the Jeffreys prior for the binomial channel. Recall that the Jeffreys prior $\pi(\theta)$, when it exists, is defined as
\begin{equation}
\pi(\theta) \propto \sqrt{I(\theta)},
\end{equation}
where $I(\theta)$ is the Fisher information of the underlying parametric family; the interested reader is referred to \cite{kass1996selection} for detailed discussion on Jeffreys prior.  For the binomial channel in~\eqref{eq:Binomial_Channel}, the parameter of interest is $\theta=x$. The corresponding Fisher information is
\begin{equation}
I(x)=\frac{n}{x(1-x)}, \qquad x\in(0,1).
\end{equation}
Therefore,
\begin{equation}
\pi(x)\propto \frac{1}{\sqrt{x(1-x)}},
\end{equation}
which is precisely the density of the $\mathrm{Beta}(1/2,1/2)$ distribution, up to normalization. Thus, while all beta distributions are conjugate priors for the binomial channel, only $P_{X_r}$ is both a conjugate prior and Jeffreys' prior within this family.
 
In this work, we will show that $X_r$ is an asymptotically optimal input distribution. 
We now list a few important properties  of  $P_{Y_r}$. 
\begin{prop} \label{prop:properties_Y_r}
The following hold: 
\begin{itemize}
    \item \emph{Ratio.} For $y=0,\ldots,n-1$,
    \begin{align}
        \frac{P_{Y_r}(y+1)}{P_{Y_r}(y)}
        &=
        \frac{y+\tfrac12}{y+1}
        \cdot
        \frac{n-y}{n-y-\tfrac12}.
        \label{eq:qstar_ratio}
    \end{align}

    \item \emph{Minimum value.} For every $y\in\{0,\ldots,n\}$,
    \begin{equation}
        P_{Y_r}(y) \ge  P_{Y_r} \left ( \left \lfloor  \frac{n}{2}  \right \rfloor \right)
        \ge
        \frac{2}{\pi(n+2)}.
        \label{eq:qstar_min_bound}
    \end{equation}

    \item \emph{Pointwise bound.} For every $y= 0,\ldots,n $,
    \begin{equation}
    \frac{1}{
        \pi
        \sqrt{
            (y+\frac12)(n-y+\frac12)
        }} 
        \le   P_{Y_r}(y)
        \le
        \frac{1}{
        \pi\sqrt{
        \left(y+\tfrac{1}{4}\right)
        \left(n-y+\tfrac{1}{4}\right)
        }} .
        \label{eq:tail_bound}
    \end{equation}
\end{itemize}
\end{prop}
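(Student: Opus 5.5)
The three items all follow from the closed form \eqref{eq:output_dist_X_r}. The plan is to write it as
\begin{equation*}
P_{Y_r}(y)=\frac{1}{\pi}\,a_y\,a_{n-y},\qquad a_m:=\frac{\Gamma(m+\frac12)}{\Gamma(m+1)},\quad m\in\{0,1,\ldots\}.
\end{equation*}
Then each claim reduces to an elementary statement about the sequence $a_m$. The only analytic input needed is the ratio
\begin{equation*}
\frac{a_{m+1}}{a_m}=\frac{m+\frac12}{m+1},
\end{equation*}
which follows from $\Gamma(z+1)=z\Gamma(z)$, together with the standard asymptotic $a_m\sqrt{m}\to 1$. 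The latter is Wendel's limit $\Gamma(m+\frac12)/\Gamma(m+1)\sim m^{-1/2}$, a consequence of Stirling's formula.

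\emph{Ratio.} We have $P_{Y_r}(y+1)/P_{Y_r}(y)=\frac{a_{y+1}}{a_y}\cdot\frac{a_{n-y-1}}{a_{n-y}}$. Applying the ratio of consecutive $a_m$ once with $m=y$ and once with $m=n-y-1$ gives \eqref{eq:qstar_ratio} immediately.

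\emph{Minimum value.} Compare the ratio in \eqref{eq:qstar_ratio} with $1$ by cross-multiplying. A direct expansion gives
\begin{equation*}
(y+1)(n-y-\tfrac12)-(y+\tfrac12)(n-y)=\tfrac{n-1}{2}-y.
\end{equation*}
Hence $P_{Y_r}$ is strictly decreasing for $y<\frac{n-1}{2}$ and nondecreasing afterwards. Combined with the evident symmetry $P_{Y_r}(y)=P_{Y_r}(n-y)$, the minimum is attained at $\lfloor n/2\rfloor$. For the numerical bound, apply the lower half of \eqref{eq:tail_bound}, proved below, at $y=\lfloor n/2\rfloor$. By AM--GM,
\begin{equation*}
\sqrt{(y+\tfrac12)(n-y+\tfrac12)}\le\frac{n+1}{2},
\end{equation*}
so $P_{Y_r}(\lfloor n/2\rfloor)\ge\frac{2}{\pi(n+1)}\ge\frac{2}{\pi(n+2)}$.

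\emph{Pointwise bound.} It suffices to prove the Kershaw/Gautschi-type inequality
\begin{equation*}
\frac{1}{\sqrt{m+\frac12}}\le a_m\le\frac{1}{\sqrt{m+\frac14}},\qquad m\ge 0,
\end{equation*}
and then multiply the bounds for $m=y$ and $m=n-y$. I would prove this by a monotone-sequence argument. Set $h_m=a_m^2(m+\frac12)$ and $g_m=a_m^2(m+\frac14)$; both tend to $1$ by Wendel's limit.

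For $h_m$, the ratio of consecutive terms is
\begin{equation*}
\frac{h_{m+1}}{h_m}=\frac{(m+\frac12)(m+\frac32)}{(m+1)^2}<1,
\end{equation*}
so $h_m$ decreases to $1$ and hence $h_m\ge 1$. This is the lower bound.

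For $g_m$, the ratio of consecutive terms is
\begin{equation*}
\frac{g_{m+1}}{g_m}=\frac{(m+\frac12)^2(m+\frac54)}{(m+1)^2(m+\frac14)}.
\end{equation*}
Expanding numerator and denominator gives $m^3+\frac94m^2+\frac32m+\frac5{16}$ versus $m^3+\frac94m^2+\frac32m+\frac14$, so the ratio exceeds $1$. Thus $g_m$ increases to $1$, so $g_m\le 1$. This is the upper bound.

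The only genuinely delicate step is this last one. The constant $\frac14$ must be chosen so that the cubic, quadratic and linear coefficients cancel exactly, and the argument also relies on the limit $a_m\sqrt m\to1$. I would either cite Wendel's inequality for that limit or derive it from Stirling's formula.
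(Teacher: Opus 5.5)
Your proof is correct. It shares the paper's skeleton: the factorization $P_{Y_r}(y)=\frac{1}{\pi}a_y a_{n-y}$, the Gamma recurrence for the ratio, and the sign of $\frac{n-1}{2}-y$ to place the minimum at $\lfloor n/2\rfloor$.

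The real difference is how you get the two-sided bounds on $a_m$. The paper cites Wendel's inequality $\Gamma(x+s)/\Gamma(x)\le x^s$ (with $x=m+\frac12$, $s=\frac12$) for $a_m\ge (m+\frac12)^{-1/2}$. It cites Kershaw's inequality $\Gamma(x+1)/\Gamma(x+s)>(x+\frac{s}{2})^{1-s}$ for $a_m<(m+\frac14)^{-1/2}$ when $m>0$, and checks $m=0$ separately via $\sqrt{\pi}<2$. You instead prove both bounds from scratch, showing that $a_m^2(m+\frac12)$ decreases and $a_m^2(m+\frac14)$ increases. This needs only the recurrence and the limit $a_m\sqrt{m}\to 1$ (Wallis/Stirling). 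Your expansions check out, and $m=0$ is covered with no special case.

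What each route buys:
\begin{itemize}
\item Yours is self-contained and shows why $\frac14$ is sharp. For $a_m^2(m+c)$, comparing consecutive terms reduces to the sign of $m(\frac14-c)+\frac{1-3c}{4}$, which is eventually negative for every $c>\frac14$.
\item The paper's is shorter, and the cited inequalities hold for real, not just integer, arguments.
\end{itemize}

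For the central value, the paper first weakens to $a_m\ge (m+1)^{-1/2}$ and then applies AM--GM, getting $\frac{2}{\pi(n+2)}$. You apply AM--GM directly to the sharper pointwise lower bound and obtain $\frac{2}{\pi(n+1)}$, which implies the stated bound.
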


\begin{IEEEproof}
See Appendix~\ref{app:prop:properties_Y_r}.
\end{IEEEproof}

\subsection{Beta, Beta-Binomial Distributions and Orthogonal Polynomials}

In order to establish approximation bounds between  $P_{Y_r}$  and  $P_{Y^\star}$ in terms of $\chi^2$-divergence, one will need to rely on the theory of orthogonal polynomials. 
The orthogonal polynomials for the $\text{Beta}(1/2, 1/2)$ distribution are the shifted Chebyshev polynomials of the first kind, defined as: for $k \in \{0,1,\ldots \}$ 
\begin{equation}
    \tilde{T}_k(x) = T_k(2x-1) = \cos(k \arccos(2x-1)), \, x \in (0,1),
\end{equation} 
where $T_k$ is the Chebyshev polynomial of the
first kind \cite[Eq.~18.7.7]{olver2010nist}.
The $\tilde{T}_k$'s are orthogonal on $(0,1)$ with
weight $[x(1-x)]^{-1/2}$ \cite[Table~18.3.1]{olver2010nist}. For $X_r \sim \text{Beta}(1/2, 1/2)$, since 
$f_{X_r}(x) = \pi^{-1}[x(1-x)]^{-1/2}$, the normalization gives: 
\begin{equation} \label{eq:Chebyshev_orthog}
    \mathbb{E} \left[\tilde{T}_k(X_r) \tilde{T}_m(X_r) \right] = 
    \begin{cases} 
      1 & \text{if } k = m = 0, \\
      \frac{1}{2} & \text{if } k = m \ge 1, \\
      0 & \text{if } k \neq m.
   \end{cases}
\end{equation}

We now list a few additional properties.  
\begin{prop} Fix $k \le n$. Consider a polynomial expansion of $\tilde{T}_k(x)$:
    \begin{equation}
        \tilde{T}_k(x) = \sum_{i=0}^k a_i x^i, \quad x \in (0,1)
    \end{equation}
    and define a polynomial
    \begin{equation}
             H_k(y)
    =
    \sum_{j=0}^k a_j \frac{y^{\underline{j}}}{n^{\underline{j}}},
    \quad y\in\{0,\ldots,n\}. \label{Eq:H_k_basesis}
    \end{equation}

Then, the following properties hold:
\begin{itemize}
    \item \emph{$H_k$ vs. $\tilde{T}_k$}:
    \begin{equation}
        \bbE[ H_k (Y) | X =x] =  \tilde{T}_k(x), \, x \in (0,1).
    \end{equation}
    \item \emph{Orthogonality of $H_k$}: for $m <k \le n$
    \begin{equation}
        \mathbb{E}[H_k(Y_r) Y_r^m] =0.
    \end{equation}
    \item \emph{Norm of $H_k$}: 
 \begin{equation}
   \mathbb{E}[H_k^2(Y_r)]  =\frac{(n+1)_k}{n^{\underline{k}}} \mathbb{E}[\tilde{T}_k(X_r)^2] = \frac{1}{2} \prod_{j=1}^k \frac{n+j}{n-j+1}.
\end{equation}
\end{itemize}
\end{prop}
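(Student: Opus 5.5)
The plan is to prove the three items in order, each resting on the previous one.

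\emph{First item.} Use the falling-factorial moments of the binomial distribution: if $Y\sim$ Binomial$(n,x)$, then
\[
\bbE[Y^{\underline{j}}\mid X=x]=n^{\underline{j}}x^j, \qquad j\le n.
\]
This follows by counting ordered $j$-tuples of distinct successful trials, or by differentiating the probability generating function $(1-x+xs)^n$ $j$ times at $s=1$. By linearity,
\[
\bbE[H_k(Y)\mid X=x]=\sum_j a_j x^j=\tilde T_k(x).
\]
The hypothesis $k\le n$ guarantees $n^{\underline{j}}\neq 0$, so $H_k$ is well defined.

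\emph{Second item.} Write $Y_r^m$ in the falling-factorial basis, $y^m=\sum_{i\le m} S(m,i)\,y^{\underline{i}}$. Condition on $Y_r$ and move the expectation to the input side; the natural tool is the posterior.
\begin{itemize}
\item For $i<k$, I would start from $\bbE[H_k(Y_r)Y_r^{\underline{i}}]$, which is not immediately a function of $X_r$ alone.
\item Use the reversed identity instead: $\bbE[Y_r^{\underline{i}}\mid X_r]=n^{\underline{i}}X_r^i$ is a polynomial in $X_r$ of degree $i$.
\item It then suffices that $H_k(Y_r)$ is orthogonal to every function of $Y_r$ of the form $\bbE[p(X_r)\mid Y_r]$ with $\deg p<k$. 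That route also looks awkward.
\end{itemize}

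The cleaner route is to expand in the other direction. The span of $\{y^{\underline{i}}\}_{i<k}$ equals the span of $\{H_i\}_{i<k}$, since $H_i$ has degree exactly $i$ in $y$ with leading coefficient $a_i/n^{\underline{i}}\neq0$. So it suffices to show $\bbE[H_k(Y_r)H_i(Y_r)]=0$ for $i<k$.

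For that, use the posterior moments \eqref{eq:moments_of_beta}:
\[
\bbE[(Y_r)^{\underline{j}}\mid\cdot]\quad\text{relates to}\quad \bbE[X_r^j\mid Y_r]=\frac{(Y_r+1/2)_j}{(n+1)_j}.
\]
I would look for the identity
\[
\bbE[Y_r^{\underline{j}}\,g(Y_r)]=n^{\underline{j}}\,\bbE[X_r^j\,g(Y_r)],
\]
valid for every $g$ by the tower property applied to $(X_r,Y_r)$. This fails in general: only $\bbE[Y^{\underline{j}}\mid X]=n^{\underline{j}}X^j$ holds, so one cannot pass $g(Y_r)$ through.

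The robust approach is to treat the Gram matrix directly. Let $M=(\bbE[Y_r^{\underline{i}}Y_r^{\underline{j}}])_{i,j\le k}$ be the moment matrix of $Y_r$ in the factorial basis. Compute it via the beta-binomial factorial moments
\[
\bbE[Y_r^{\underline{j}}]=n^{\underline{j}}\frac{(1/2)_j}{(1)_j},
\]
together with the product rule $y^{\underline{i}}y^{\underline{j}}=\sum_l c_{ijl}\,y^{\underline{l}}$. Then compare $M$ with the corresponding moment matrix of $X_r$ after the diagonal rescaling by $n^{\underline{i}}$ and the factor $(n+1)_k/n^{\underline{k}}$.

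I expect this comparison to be the main obstacle. My first attempt would be a generating-function identity. The beta-binomial is the Pólya urn, so
\[
\bbE\bigl[s^{Y_r}\bigr]={}_2F_1\bigl(-n,\tfrac12;1;1-s\bigr),
\]
and the polynomials orthogonal with respect to $P_{Y_r}$ are the Hahn polynomials $Q_k(y;-\tfrac12,-\tfrac12,n)$. The key step is then to show, by comparing ${}_3F_2$ representations, that $H_k$ is proportional to this Hahn polynomial; that is, that expanding $\tilde T_k$ as ${}_2F_1(-k,k;\tfrac12;x)$ and replacing $x^j$ by $y^{\underline{j}}/n^{\underline{j}}$ produces ${}_3F_2(-k,k,-y;\tfrac12,-n;1)$. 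This substitution is termwise exact, because
\[
x^j\ \mapsto\ \frac{(-y)_j}{(-n)_j}=\frac{y^{\underline{j}}}{n^{\underline{j}}}.
\]
That expression is precisely the Hahn polynomial with $\alpha=\beta=-1/2$, and Hahn orthogonality gives the second item immediately.

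\emph{Third item.} By the second item, $\bbE[H_k^2(Y_r)]=\bbE[H_k(Y_r)\,c\,Y_r^k]$, where $c=a_k/n^{\underline{k}}$ is the leading coefficient in $y$. One could write $Y_r^k$ in falling factorials, but the lower-order terms vanish by orthogonality, so
\[
\bbE[H_k^2(Y_r)]=\frac{a_k}{n^{\underline{k}}}\,\bbE\bigl[H_k(Y_r)Y_r^{\underline{k}}\bigr].
\]
Now expand $H_k$ in falling factorials and compute $\bbE[Y_r^{\underline{j}}Y_r^{\underline{k}}]$ via the Hahn norm formula. Alternatively, quote the standard Hahn squared norm with $\alpha=\beta=-1/2$ and match constants against $\bbE[\tilde T_k(X_r)^2]=\tfrac12$ from \eqref{eq:Chebyshev_orthog}, obtaining the factor $\prod_{j=1}^k\frac{n+j}{n-j+1}$.

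As a sanity check I would verify $k=1$ by hand. Here $H_1(y)=2y/n-1$ and $\mathrm{Var}(Y_r)=n(n+1)/8$, which gives
\[
\bbE[H_1^2(Y_r)]=\frac{4}{n^2}\cdot\frac{n(n+1)}{8}=\frac{n+1}{2n},
\]
matching the claimed formula.
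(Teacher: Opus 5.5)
Your final route is correct, but it differs from the paper's. You identify $H_k$ with a Hahn polynomial and import Hahn orthogonality and the Hahn norm.

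The paper instead uses exactly the argument you set aside as ``awkward'' in your third bullet, and it is in fact the short one. Because $\bbE[X_r^j\mid Y_r=y]=(y+\tfrac12)_j/(n+1)_j$ has exact degree $j$ in $y$, triangular inversion gives the following: for any polynomial $r$ of degree $m$ in $y$, there is a polynomial $q$ of degree $m$ with $\bbE[q(X_r)\mid Y_r]=r(Y_r)$. Applying the tower property twice then gives $\bbE[H_k(Y_r)r(Y_r)]=\bbE[\bbE[H_k(Y_r)\mid X_r]\,q(X_r)]=\bbE[\tilde T_k(X_r)q(X_r)]$, which vanishes for $m<k$.

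Taking $r=H_k$, only the component $b_k\tilde T_k$ of $q$ survives. Compare leading coefficients in $y$: $c_k/n^{\underline k}$ for $H_k$ versus $b_kc_k/(n+1)_k$ for $\bbE[q(X_r)\mid Y_r]$, where $c_k$ is the leading coefficient of $\tilde T_k$. This yields $b_k=(n+1)_k/n^{\underline k}$.

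What the paper's argument buys:
\begin{itemize}
\item It is self-contained.
\item It explains structurally where the intermediate factor $(n+1)_k/n^{\underline k}$ in the third item comes from. It is the ratio of the leading-coefficient scalings $1/n^{\underline k}$ and $1/(n+1)_k$ of the forward and posterior maps.
\item It transfers to any conjugate pair whose posterior moments are polynomials of matching degree.
\end{itemize}

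What your route buys is the explicit identification $H_k=(-1)^kQ_k(\,\cdot\,;-\tfrac12,-\tfrac12,n)$. The sign arises because $\tilde T_k(x)={}_2F_1(-k,k;\tfrac12;1-x)=(-1)^k\,{}_2F_1(-k,k;\tfrac12;x)$, and it is harmless. The price is reliance on an external theorem and a constant matching that you left undone.

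That matching does close. For $\alpha=\beta=-\tfrac12$, the Hahn weight $\binom{y-1/2}{y}\binom{n-y-1/2}{n-y}$ equals $P_{Y_r}(y)$ exactly. The standard squared-norm formula then reduces, for $k\ge1$, to $(k)_{n+1}\,k!/(2k\,n!\,n^{\underline k})=(n+1)_k/(2\,n^{\underline k})$. You should write this out rather than saying ``match constants.''

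At $k=0$ the Hahn norm formula is $0/0$, because $\alpha+\beta+1=0$. Treat $H_0=1$ separately, giving $\bbE[H_0^2(Y_r)]=1$. This is also why the product expression in the statement is meant only for $k\ge1$.

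In a final write-up, drop the detours: the false identity, the Gram-matrix plan, and the attempt to compute $\bbE[Y_r^{\underline j}Y_r^{\underline k}]$ from the norm formula are not needed.
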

\begin{IEEEproof}
To show the first property note that since $Y|X=x\sim \mathrm{Bin}(n,x)$, we have \cite{knoblauch2008closed}
\begin{equation}
\bbE[Y^{\underline{j}}|X=x]=n^{\underline{j}} x^j,
    \qquad j=0,\ldots,n .
\end{equation}
Now, since  $H_k$ is a polynomial
in $y$ of degree at most $k$, we have that 
\begin{align}
    \bbE[H_k(Y)|X=x]
    &=
    \sum_{j=0}^k a_j \frac{1}{n^{\underline{j}}}
 \bbE[Y^{\underline{j}}|X=x]  =
    \sum_{j=0}^k a_j x^j =
    \tilde T_k(x).
\end{align}
This proves the claim.

To show the second property, note that from \eqref{eq:moments_of_beta}, there exists a polynomial $q_m$ of degree $m$ such that 
\begin{equation}
\bbE[q_m(X_r)|Y_r=y] =y^m, \quad  y \in \{0,1,\ldots, n\}. \label{eq:existance_of_qm}
\end{equation}
We also recall that by the orthogonality property of the conditional expectation, for any pair of random variables $(U,V)$ and any two functions $f,g$ \cite{resnick1999probability}:
\begin{equation}
    \bbE \left[ f(U) \bbE[ g(V)|U] \right] = \bbE \left[ \bbE[f(U)|V]  g(V) \right]. \label{eq:orthog_prop}
\end{equation}
Next, note that 
 \begin{align}
        \mathbb{E}[H_k(Y_r) Y_r^m] &=  \mathbb{E}[H_k(Y_r)  \bbE[q_m(X_r)|Y_r]] \label{eq:defining_qm}    \\
       &=\mathbb{E} \left[  \bbE[ H_k(Y_r) | X_r ]q_m(X_r) \right] \label{eq:first_orthog_use} \\
       &=\mathbb{E} \left[  \tilde T_k(X_r)q_m(X_r) \right] \\
       &=0, \label{eq:using_orthog}
    \end{align}
    where \eqref{eq:defining_qm} follows from \eqref{eq:existance_of_qm}; \eqref{eq:first_orthog_use} follows from the \eqref{eq:orthog_prop}; and \eqref{eq:using_orthog} follows by  noting that we can write  $q_m(X_r) = \sum_{i=0}^m b_i \tilde{T}_i(X_r) $ and using \eqref{eq:Chebyshev_orthog} together with $k>m$.  This concludes the proof of the second property.

  We now compute the squared norm of $H_k$. We consider $k\ge 1$; the case $k=0$ is immediate since $H_0=1$. As above, there exists a polynomial $q_k$ of degree $k$ such that
\begin{equation}
\bbE[q_k(X_r)\mid Y_r=y]=H_k(y),
\qquad y\in\{0,\ldots,n\}.
\end{equation}
Using \eqref{eq:orthog_prop}, we obtain
\begin{align}
\bbE[H_k^2(Y_r)]
&=
\bbE\left[
H_k(Y_r)\bbE[q_k(X_r)\mid Y_r]
\right] =
\bbE\left[
q_k(X_r)\bbE[H_k(Y_r)\mid X_r]
\right] =
\bbE[q_k(X_r)\tilde T_k(X_r)].
\end{align}

Since $q_k$ has degree $k$, we can write
\begin{equation}
q_k(x)
=
b_k\tilde T_k(x)+r_{k-1}(x),
\end{equation}
where $r_{k-1}$ is a polynomial of degree at most $k-1$. By the orthogonality of $\tilde T_k$ to all polynomials of degree strictly less than $k$,
\begin{equation}
\bbE[r_{k-1}(X_r)\tilde T_k(X_r)]=0.
\end{equation}
Therefore,
\begin{equation}
\bbE[H_k^2(Y_r)]
=
b_k\bbE[\tilde T_k^2(X_r)].
\end{equation}

It remains to identify $b_k$. Let $c_k$ be the leading coefficient of $\tilde T_k$, so that
\begin{equation}
\tilde T_k(x)=c_kx^k+\text{lower order terms}.
\end{equation}
By the definition of $H_k$ in \eqref{Eq:H_k_basesis},  we have
\begin{equation}
H_k(y)=\frac{c_k}{n^{\underline{k}}}y^k+\text{lower order terms}.
\end{equation}
On the other hand, by \eqref{eq:moments_of_beta},
\begin{align}
\bbE[X_r^k\mid Y_r=y]
&=
\frac{(y+1/2)_k}{(n+1)_k}=
\frac{1}{(n+1)_k}y^k+\text{lower order terms}.
\end{align}
Since
\begin{equation}
q_k(x)=b_k\tilde T_k(x)+r_{k-1}(x),
\end{equation}
the leading coefficient of $q_k$ is $b_kc_k$. Hence
\begin{equation}
\bbE[q_k(X_r)\mid Y_r=y]
=
\frac{b_kc_k}{(n+1)_k}y^k+\text{lower order terms}.
\end{equation}
Comparing this with $\bbE[q_k(X_r)\mid Y_r=y]=H_k(y)$ gives $
\frac{b_kc_k}{(n+1)_k}
=
\frac{c_k}{n^{\underline{k}}}$
and, therefore,
$
b_k=
\frac{(n+1)_k}{n^{\underline{k}}}$. Thus,
\begin{align}
\bbE[H_k^2(Y_r)]
&=
b_k\bbE[\tilde T_k^2(X_r)] =
\frac{(n+1)_k}{n^{\underline{k}}}
\bbE[\tilde T_k^2(X_r)] =
\frac{1}{2}
\prod_{j=1}^k
\frac{n+j}{n-j+1}.
\end{align}
where the last equality follows from \eqref{eq:Chebyshev_orthog}. This concludes the computation of the norm.

This concludes the proof.
\end{IEEEproof}

\section{On the Best Approximation Theory of Finite Binomial Mixtures}
\label{sec:binomial-mixture-approximation}

The key result for providing a lower bound on the support size is the following theorem. It quantifies the best possible approximation of the reference beta-binomial distribution $P_{Y_r}$ by a binomial mixture with finitely many components. Its proof follows from an adaptation of the trigonometric moment method for mixtures of Gaussian distributions of \cite[Thm.~7]{BestApporximationGaussianMixture}.
\subsection{Expression for $\chi^2$-Divergence via Parseval's Identity}

One of the ingredients for our proof will be the following representation of the $\chi^2$-divergence. 

\begin{prop}  \label{prop:Xidiv}
Let $P_Y$ and $P_{Y_r}$ be the output distributions induced by $P_X$ and $P_{X_r}$, respectively, through the binomial channel. Then,  
    \begin{equation}
        \chi^2(P_Y \| P_{Y_r})
     = \sum_{k=1}^n \frac{\epsilon_k^2}{h_k}, \label{eq:chi_square_div_Parseval} 
    \end{equation}
    where $\epsilon_k = \mathbb{E}[\tilde{T}_k(X)]$ and $h_k = \mathbb{E}[H_k^2(Y_r)]  = \frac{1}{2}
    \prod_{j=1}^k
    \frac{n+j}{n-j+1}$.
\end{prop}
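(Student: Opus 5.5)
The plan is to treat $\{H_k\}_{k=0}^n$ as an orthogonal basis of $L^2(P_{Y_r})$ and expand the likelihood ratio $g(y)=P_Y(y)/P_{Y_r}(y)$ in that basis; Parseval's identity then gives the result. This needs $P_{Y_r}(y)>0$ for every $y$, which holds by Proposition~\ref{prop:properties_Y_r}, so $P_Y\ll P_{Y_r}$. The space of functions on $\{0,\ldots,n\}$ has dimension $n+1$.

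First I establish that $H_0,\ldots,H_n$ form an orthogonal basis. Each $H_k$ has degree exactly $k$, since its leading coefficient is $c_k/n^{\underline{k}}\neq 0$. Orthogonality then follows from the earlier proposition: for $m<k$, $H_m$ is a linear combination of $1,y,\ldots,y^m$, and each of these is orthogonal to $H_k$ under $P_{Y_r}$. Their norms are $h_k$, with $h_0=1$. Polynomials of degree at most $n$ span all functions on $n+1$ points, so these $n+1$ orthogonal nonzero vectors form a basis. Hence
\begin{equation}
g=\sum_{k=0}^n \frac{\langle g,H_k\rangle}{h_k}H_k ,
\end{equation}
and Parseval gives
\begin{equation}
\bbE[g^2(Y_r)]=\sum_{k=0}^n \frac{\langle g,H_k\rangle^2}{h_k}.
\end{equation}

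Next I compute the coefficients. By the change of measure and the first property of $H_k$,
\begin{equation}
\langle g,H_k\rangle=\bbE[g(Y_r)H_k(Y_r)]=\bbE[H_k(Y)]=\bbE\big[\bbE[H_k(Y)|X]\big]=\bbE[\tilde T_k(X)]=\epsilon_k .
\end{equation}
This identity $\bbE[H_k(Y)|X=x]=\tilde T_k(x)$ was proved for $x\in(0,1)$. I need it on all of $[0,1]$, because $X$ may place mass at the endpoints. Both sides are polynomials in $x$, so the identity extends to $x=0,1$ by continuity. For $k=0$ the coefficient is $\epsilon_0=1$ and $h_0=1$.

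Finally, $\chi^2(P_Y\|P_{Y_r})=\bbE[(g(Y_r)-1)^2]=\bbE[g^2(Y_r)]-1$, because $\bbE[g(Y_r)]=1$. Subtracting the $k=0$ term, which equals $1$, leaves $\sum_{k=1}^n \epsilon_k^2/h_k$, which is the claim.

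The only delicate point is the basis/degree argument. It requires $k\le n$ so that $n^{\underline{k}}\neq 0$, which holds throughout. It also requires the orthogonality statement of the earlier proposition to cover every $m<k$, which it does. I expect no real obstacle; the rest is bookkeeping.
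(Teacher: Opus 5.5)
Your proposal is correct and follows the paper's own proof. Both expand $P_Y/P_{Y_r}$ in the polynomials $H_k$, identify the coefficients as $\epsilon_k/h_k$ via $\bbE[H_k(Y)]=\bbE[\tilde T_k(X)]$, and apply Parseval after removing the $k=0$ term. You also make explicit two points the paper leaves implicit: that $H_0,\ldots,H_n$ really form an orthogonal basis of $L^2(P_{Y_r})$, and that $\bbE[H_k(Y)\mid X=x]=\tilde T_k(x)$ extends to the endpoints $x\in\{0,1\}$.
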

\begin{IEEEproof}
We expand the density ratio in the polynomials $H_k(y)$, which were defined in \eqref{Eq:H_k_basesis}:
\begin{equation}
    \frac{P_Y(y)}{P_{Y_r}(y)} = \sum_{k=0}^n a_k H_k(y).
\end{equation}
The coefficients are given by
\begin{align}
    a_k &= \frac{1}{h_k} \mathbb{E}\left[ \frac{P_Y(Y_r)}{P_{Y_r}(Y_r)} H_k(Y_r) \right] = \frac{1}{h_k} \mathbb{E}[H_k(Y)] = \frac{1}{h_k} \mathbb{E}[ \bbE[H_k(Y)|X]] = \frac{1}{h_k} \mathbb{E}[\tilde{T}_k(X)]. \label{eq:expression_for_a_k}
\end{align} 
Note that $\epsilon_0 = 1$ and $h_0 = 1$. Next, observe that 
\begin{align}
    \chi^2(P_Y \| P_{Y_r}) &= \mathbb{E}\left[ \left( \frac{P_Y(Y_r)}{P_{Y_r}(Y_r)} - 1 \right)^2 \right] \\
    &= \mathbb{E}\left[ \left( \frac{P_Y(Y_r)}{P_{Y_r}(Y_r)}  \right)^2 \right] -1 \\
    &= \sum_{k=1}^n a_k^2 h_k  \label{eq:using_Parasevals} \\
    &= \sum_{k=1}^n \frac{\epsilon_k^2}{h_k}, \label{eq:expression_for_a_k_use}
\end{align}
where in \eqref{eq:using_Parasevals} we have used Parseval's identity; and in \eqref{eq:expression_for_a_k_use} we have used \eqref{eq:expression_for_a_k}. This concludes the proof.  
    \end{IEEEproof}

\subsection{Best Approximation Lower Bound}
The main result of this section is the following theorem, which might be of independent interest. 
\begin{theorem} \label{thm:main}
Let $P_X$ be a discrete distribution  with $K$ mass points in $[0,1]$. Let $P_Y$ and $P_{Y_r}$ be the output distributions induced by $P_X$ and $P_{X_r}$, respectively, through the binomial channel.  Then, for any integer $L$ such that $K < L \le \frac{n+2}{2}$, we have
\begin{equation}
    \chi^2(P_Y \| P_{Y_r}) \ge B_n(L) := \frac{L-K}{2L \prod_{j=1}^{2L-2} \frac{n+j}{n-j+1}}.
    \label{lowerchi}
\end{equation}
\end{theorem}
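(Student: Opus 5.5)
The plan is to combine the Parseval representation of Proposition~\ref{prop:Xidiv} with a rank argument on a Chebyshev moment matrix, in the spirit of the trigonometric moment method of \cite[Thm.~7]{BestApporximationGaussianMixture}.

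\emph{Step 1: reduce to a bound on finitely many moments.} The hypothesis $L\le \frac{n+2}{2}$ gives $2L-2\le n$, so Proposition~\ref{prop:Xidiv} lets us keep only the first $2L-2$ terms. The factors $\frac{n+j}{n-j+1}\ge 1$, hence $h_k$ is nondecreasing in $k$. Therefore
\begin{equation}
\chi^2(P_Y\|P_{Y_r}) \;\ge\; \sum_{k=1}^{2L-2}\frac{\epsilon_k^2}{h_k}\;\ge\;\frac{2}{\prod_{j=1}^{2L-2}\frac{n+j}{n-j+1}}\sum_{k=1}^{2L-2}\epsilon_k^2 .
\end{equation}
It then suffices to prove the purely moment-theoretic inequality $\sum_{k=1}^{2L-2}\epsilon_k^2\ge \frac{L-K}{4L}$, up to the constant bookkeeping discussed in Step 4. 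Here $\epsilon_k=\bbE[\tilde T_k(X)]$ and $X$ has $K$ atoms.

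\emph{Step 2: a low-rank moment matrix.} Define the $L\times L$ matrix $\sfA$ by $[\sfA]_{i,j}=\bbE[\tilde T_i(X)\tilde T_j(X)]$ for $i,j=0,\ldots,L-1$. It is a Gram matrix $\sum_{s=1}^K P_X(x_s)\,v(x_s)v(x_s)^{\top}$ with $v(x)=(\tilde T_0(x),\ldots,\tilde T_{L-1}(x))^{\top}$, so $\mathrm{rank}(\sfA)\le K$. The product formula $T_iT_j=\frac12(T_{i+j}+T_{|i-j|})$ gives $[\sfA]_{i,j}=\frac12(\epsilon_{i+j}+\epsilon_{|i-j|})$, with the convention $\epsilon_0=1$.

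\emph{Step 3: compare with the reference matrix.} The same matrix for $X_r$ is, by \eqref{eq:Chebyshev_orthog}, $\sfB=\mathrm{diag}(1,\tfrac12,\ldots,\tfrac12)$. All singular values of $\sfB$ are at least $\frac12$. By Eckart--Young, any matrix of rank at most $K$ satisfies $\|\sfA-\sfB\|_F^2\ge\sum_{i>K}\sigma_i(\sfB)^2\ge \frac{L-K}{4}$.

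\emph{Step 4: express the Frobenius norm in the $\epsilon_k$.} The entries of $\Delta=\sfA-\sfB$ are as follows:
\begin{itemize}
\item $[\Delta]_{0,0}=0$;
\item $[\Delta]_{i,i}=\frac12\epsilon_{2i}$ for $i\ge1$;
\item $[\Delta]_{i,j}=\frac12(\epsilon_{i+j}+\epsilon_{|i-j|})$ for $i\neq j$.
\end{itemize}
Only indices $1\le k\le 2L-2$ occur. Using $(a+b)^2\le 2a^2+2b^2$, each entry is bounded by a sum of squared $\epsilon_k$'s. Each $\epsilon_k$ arises from $O(L)$ index pairs, either with $i+j=k$ or with $|i-j|=k$. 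This yields $\|\Delta\|_F^2\le cL\sum_{k=1}^{2L-2}\epsilon_k^2$, and hence $\sum_k\epsilon_k^2\ge \frac{L-K}{4cL}$.

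\emph{Main obstacle.} The delicate part is Step 4: the counting must give exactly the constant claimed in \eqref{lowerchi}. The crude count ($L$ pairs from $i+j=k$, $2L$ from $|i-j|=k$) loses a constant factor. To close the gap I would first try to handle the Hankel part ($\epsilon_{i+j}$) and the Toeplitz part ($\epsilon_{|i-j|}$) more carefully. One option is to work directly with the $L\times L$ trigonometric Toeplitz matrix $[\epsilon_{|i-j|}]$ of the symmetrized angle $\theta$, where $X=(1+\cos\theta)/2$; its reference is the identity, and it has $2K$ atoms, so rank at most $2K$. Another is to use the exact count of pairs per diagonal, $2(L-k)$. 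I would take whichever of the two structures gives the sharper constant, and otherwise accept a bound of the same form with a slightly different numerical constant.
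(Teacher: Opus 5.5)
The approach is the paper's own, and Steps 1--3 match it: truncate the Parseval sum at $2L-2$ using $1/h_k\ge 1/h_{2L-2}$, form the rank-$\le K$ Chebyshev moment matrix, and apply Eckart--Young against $\mathrm{diag}(1,\tfrac12,\dots,\tfrac12)$ to get $\|\Delta\|_F^2\ge\frac{L-K}{4}$. \textbf{The gap is in Step~4.} You leave it open and say you would settle for ``a slightly different numerical constant''. That does not prove the stated inequality. The theorem needs exactly $\|\Delta\|_F^2\le L\sum_{k=1}^{2L-2}\epsilon_k^2$, whereas your crude count (at most $L$ pairs with $i+j=k$, at most $2L$ with $|i-j|=k$) only gives the factor $\tfrac32 L$.

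The fix is to count \emph{both} families exactly. Let $N_1(k)$ and $N_2(k)$ be the numbers of pairs $(i,j)\in\{0,\dots,L-1\}^2$ with $i+j=k$ and with $|i-j|=k$. After the $(a+b)^2\le 2a^2+2b^2$ step, the coefficient of $\epsilon_k^2$ is at most $\tfrac12\bigl(N_1(k)+N_2(k)\bigr)$.
\begin{itemize}
\item For $1\le k\le L-1$: $N_1(k)=k+1$ and $N_2(k)=2(L-k)$, so the coefficient is $L-\frac{k-1}{2}\le L$.
\item For $L\le k\le 2L-2$: $N_1(k)=2L-1-k$ and $N_2(k)=0$, so the coefficient is $\frac{2L-1-k}{2}\le L$.
\end{itemize}
The two counts are complementary: diagonals with $|i-j|=k$ have many pairs exactly when anti-diagonals with $i+j=k$ have few. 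Refining only the Toeplitz count, as you suggest, is not enough: with $N_1(1)\le L$ you still get $\frac{3L-2}{2}>L$ at $k=1$ for $L\ge 3$. With the exact count you get $\sum_{k=1}^{2L-2}\epsilon_k^2\ge\frac{L-K}{4L}$, and your Step~1 turns this into exactly $B_n(L)$.

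Your Toeplitz alternative also works, but only if you size it correctly. An $L\times L$ matrix $[\epsilon_{|i-j|}]$ sees only $\epsilon_1,\dots,\epsilon_{L-1}$ and needs $L>2K$, which $K<L$ does not guarantee. Instead take $R=[\epsilon_{|i-j|}]_{i,j=0}^{2L-2}$, which has rank $\le 2K$ via the symmetrized angle and reference matrix $I$. Then
\begin{equation*}
\|R-I\|_F^2=\sum_{k=1}^{2L-2}2(2L-1-k)\,\epsilon_k^2\le 4(L-1)\sum_{k=1}^{2L-2}\epsilon_k^2 ,
\end{equation*}
the first step being an exact identity with no $(a+b)^2$ loss. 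Eckart--Young gives $\|R-I\|_F^2\ge 2L-1-2K$. Hence $\sum_{k=1}^{2L-2}\epsilon_k^2\ge\frac{2(L-K)-1}{4(L-1)}$, which is at least $\frac{L-K}{4L}$ whenever $L-K\ge 1$. So this route also proves the theorem, with a slightly better constant (roughly a factor $2$ when $L\gg K$).
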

\begin{IEEEproof}
For any integer $m \le n$, we can truncate the sum to lower bound the divergence \eqref{eq:chi_square_div_Parseval}:
\begin{equation} \label{eq:chi2_trunc}
    \chi^2(P_Y \| P_{Y_r}) \ge \sum_{k=1}^{m} \frac{\epsilon_k^2}{h_k}.
\end{equation}

Let $L$ be an integer such that $K < L$ and $2L-2 \le n$. Define the $L \times L$ moment matrix $M$ with entries 
\begin{equation}
    M_{i,j} = \mathbb{E}[\tilde{T}_i(X) \tilde{T}_j(X)]
\end{equation}
for $0 \le i, j \le L-1$. Since $X$ is supported on $K$ mass points, $M$ can be written as the sum of $K$ rank-1 matrices, implying $\text{rank}(M) \le K$.

Let $D$ be the corresponding matrix for the reference distribution $X_r$, so 
\begin{equation}
    D_{i,j} = \mathbb{E}[\tilde{T}_i(X_r) \tilde{T}_j(X_r)].
\end{equation} The matrix $D$ is diagonal with entries $D_{0,0} = 1$ and $D_{i,i} = 1/2$ for $i \ge 1$. The singular values of $D$ are $1$ (with multiplicity 1) and $1/2$ (with multiplicity $L-1$). 

By the Eckart-Young-Mirsky theorem \cite{golub1988generalization}, the Frobenius distance from $D$ to any matrix of rank at most $K$ is lower-bounded by the sum of the squared $L-K$ smallest singular values of $D$. Since $K \ge 1$, the $L-K$ smallest singular values are all $1/2$. Thus,
\begin{equation} \label{eq:EYM}
    \|M - D\|_F^2 \ge (L-K) \left(\frac{1}{2}\right)^2 = \frac{L-K}{4}.
\end{equation}

Using the trigonometric identity $\cos(i\theta)\cos(j\theta) = \frac{1}{2}(\cos((i+j)\theta) + \cos(|i-j|\theta))$, the product of Chebyshev polynomials linearizes as
\begin{equation}
    \tilde{T}_i(x) \tilde{T}_j(x) = \frac{1}{2} \left( \tilde{T}_{i+j}(x) + \tilde{T}_{|i-j|}(x) \right).
\end{equation}
Thus, the entries of the moment matrix $M$ are given by $M_{i,j} = \frac{1}{2} (\epsilon_{i+j} + \epsilon_{|i-j|})$.
To  express the difference matrix $M - D$, we define a modified sequence $\tilde{\epsilon}_k$:
\begin{equation}
    \tilde{\epsilon}_k = \begin{cases} \epsilon_k & \text{if } k \ge 1, \\ 0 & \text{if } k = 0. \end{cases}
\end{equation}
Since $\epsilon_0 = 1$, it is straightforward to verify that for all $0 \le i, j \le L-1$,
\begin{equation}
    M_{i,j} - D_{i,j} = \frac{1}{2} (\tilde{\epsilon}_{i+j} + \tilde{\epsilon}_{|i-j|}).
\end{equation}
Indeed, for $i \neq j$, $D_{i,j} = 0$ and the indices $i+j, |i-j| \ge 1$. For $i = j \ge 1$, $D_{i,i} = 1/2$ and $\frac{1}{2}(\tilde{\epsilon}_{2i} + \tilde{\epsilon}_0) = \frac{1}{2}\epsilon_{2i} = M_{i,i} - 1/2$. For $i=j=0$, $D_{0,0} = 1$ and $\frac{1}{2}(\tilde{\epsilon}_0 + \tilde{\epsilon}_0) = 0 = M_{0,0} - 1$.

Using the inequality $(a+b)^2 \le 2a^2 + 2b^2$, we bound the squared entries:
\begin{equation}
    (M_{i,j} - D_{i,j})^2 \le \frac{1}{2} (\tilde{\epsilon}_{i+j}^2 + \tilde{\epsilon}_{|i-j|}^2).
\end{equation}
Summing over all $0 \le i, j \le L-1$:
\begin{equation}
    \|M - D\|_F^2 \le \frac{1}{2} \sum_{i,j=0}^{L-1} (\tilde{\epsilon}_{i+j}^2 + \tilde{\epsilon}_{|i-j|}^2) = \sum_{k=1}^{2L-2} S_k \tilde{\epsilon}_k^2,
\end{equation}
where $S_k = \frac{1}{2} (N_1(k) + N_2(k))$, with $N_1(k)$ being the number of pairs $(i,j) \in \{0,\dots,L-1\}^2$ such that $i+j=k$, and $N_2(k)$ the number of pairs such that $|i-j|=k$. The $k=0$ term vanishes because $\tilde{\epsilon}_0 = 0$.
For $1 \le k \le L-1$, $N_1(k) = k+1$ and $N_2(k) = 2(L-k)$, giving $S_k = L - \frac{k-1}{2} \le L$.
For $L \le k \le 2L-2$, $N_1(k) = 2L - 1 - k$ and $N_2(k) = 0$, giving $S_k = L - \frac{k+1}{2} \le L$.
Since $\tilde{\epsilon}_k = \epsilon_k$ for $k \ge 1$, we obtain:
\begin{equation}
    \|M - D\|_F^2 \le L \sum_{k=1}^{2L-2} \epsilon_k^2.
\end{equation}
Combining this with \eqref{eq:EYM} yields:
\begin{equation} \label{eq:epsilon_bound}
    \sum_{k=1}^{2L-2} \epsilon_k^2 \ge \frac{L-K}{4L}.
\end{equation}

Notice that $h_k$ in Proposition~\ref{prop:Xidiv} is strictly increasing with $k$ because each factor $\frac{n+j}{n-j+1} > 1$ for $j \ge 1$. Thus, $h_k \le h_{2L-2}$ for all $1 \le k \le 2L-2$. Returning to the $\chi^2$ divergence \eqref{eq:chi2_trunc} with $m = 2L-2$:
\begin{equation}
    \chi^2(P_Y \| P_{Y_r}) \ge \sum_{k=1}^{2L-2} \frac{\epsilon_k^2}{h_k} \ge \frac{1}{h_{2L-2}} \sum_{k=1}^{2L-2} \epsilon_k^2 \ge \frac{L-K}{4L h_{2L-2}}.
\end{equation}
Substituting the expression for $h_{2L-2}$ yields the final lower bound:
\begin{equation}
\label{chisq}
    \chi^2(P_Y \| P_{Y_r}) \ge \frac{L-K}{2L \prod_{j=1}^{2L-2} \frac{n+j}{n-j+1}}.
\end{equation}
This completes the proof of Theorem \ref{thm:main}.
\end{IEEEproof}

The lower bound in Theorem~\ref{thm:main} depends on the free integer parameter $L$ satisfying $K < L$ and $2L-2 \le n$. To obtain the strongest possible bound, we should maximize $B_n(L)$ over all admissible values of $L$. This optimization does not have a closed-form solution. Thus, we establish an explicit lower bound in the next proposition.
\begin{prop}\label{prop:uniform-explicit}
Let $L$ be an integer such that $K < L \le \frac{n+2}{2}$. Then,
\begin{equation}
B_{n}(L)
\ge
\frac{L-K}{2L}\exp\!\left(-\frac{(2L-2)^2}{\,n-2L+3\,}\right).
\end{equation}
\end{prop}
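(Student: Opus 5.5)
We need to show $\prod_{j=1}^{2L-2}\frac{n+j}{n-j+1}\le \exp\!\big(\frac{(2L-2)^2}{n-2L+3}\big)$. Once we have it, the claim follows immediately from the definition of $B_n(L)$ in Theorem~\ref{thm:main}.

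Set $m=2L-2$. The hypothesis $L\le \frac{n+2}{2}$ gives $m\le n$, so every denominator $n-j+1\ge n-m+1\ge 1$ is positive. For each $j\in\{1,\dots,m\}$ we write
\begin{equation}
\frac{n+j}{n-j+1} = 1+\frac{2j-1}{n-j+1}.
\end{equation}
Applying $\log(1+u)\le u$ then gives
\begin{equation}
\log\prod_{j=1}^{m}\frac{n+j}{n-j+1}\le \sum_{j=1}^{m}\frac{2j-1}{n-j+1}.
\end{equation}

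Next we bound every denominator uniformly from below by its smallest value $n-m+1=n-2L+3$. This yields
\begin{equation}
\sum_{j=1}^{m}\frac{2j-1}{n-j+1}\le \frac{1}{n-2L+3}\sum_{j=1}^{m}(2j-1)=\frac{m^2}{n-2L+3},
\end{equation}
since the sum of the first $m$ odd integers is $m^2$. With $m=2L-2$, this is exactly the exponent in the statement.

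Exponentiating, we obtain $\prod_{j=1}^{2L-2}\frac{n+j}{n-j+1}\le \exp\big(\frac{(2L-2)^2}{n-2L+3}\big)$. Taking reciprocals and multiplying by $\frac{L-K}{2L}>0$ (recall $L>K$) completes the argument.

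No step is really an obstacle. The only point needing care is checking that the denominators stay positive, and this is exactly where $L\le\frac{n+2}{2}$ is used. One could get a sharper bound by keeping the individual denominators, but the uniform bound already produces the clean form stated.
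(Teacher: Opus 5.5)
Your proposal is correct and is essentially the paper's own proof. The paper also sets $m=2L-2$, writes each factor as $1+\frac{2j-1}{n-j+1}$, applies $\log(1+u)\le u$, bounds every denominator below by $n-m+1=n-2L+3$, and sums the first $m$ odd integers to get $m^2$. Your explicit check that $L\le\frac{n+2}{2}$ keeps all denominators positive is a small point the paper leaves implicit.
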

\begin{IEEEproof}
Set $m:=2L-2$. Then
\begin{align}
\log\!\left(\prod_{j=1}^{m}\frac{n+j}{n-j+1}\right)
&=
\sum_{j=1}^{m}\log\!\left(1+\frac{2j-1}{n-j+1}\right) \\
&\le
\sum_{j=1}^{m}\frac{2j-1}{n-j+1} \label{eq:use_logineq} \\
&\le
\frac{1}{n-m+1}\sum_{j=1}^{m}(2j-1) \\
&=\frac{m^2}{n-m+1}
\end{align}
where in \eqref{eq:use_logineq} we used $\log(1+x)\le x$ for $x>-1$. Substituting into $B_n(L)$, we have the final result
\[
B_{n}(L)
\ge
\frac{L-K}{2L}\exp\!\left(-\frac{m^2}{n-m+1}\right).
\] 

\end{IEEEproof}

\section{Capacity and Bounds on the Capacity}
\label{sec:cap_bounds}

In this section,  we provide upper and lower bounds on capacity. The upper bound relies on the dual representation of capacity as:
\begin{equation}
   C(n) = \inf_q \max_{x \in [0,1]} \kl{P_{Y|X}(\cdot|x)}{q},
\end{equation}
which, by properly choosing an auxiliary output distribution \(q\), often leads to order-tight bounds. The  reader is referred to \cite{lapidoth2008capacity,mckellips2004simple,boundsAmplt} for applications to other channels. It will also be convenient to work with continuous output, and we will use the following channel output: $\tilde{Y} = Y + U$, where $U \sim \mathcal{U}(0,1)$ is independent of $Y$. Note that because the distance between original $Y$ points is one, such additive noise can be completely filtered out, and we have $I(X; Y) = I(X; Y+ U)$ for all $X$. This trick has been used before in the context of the Poisson channel ~\cite{lapidoth2008capacity}.

The lower bound on the capacity will follow from choosing a convenient input distribution. The exact computation, however, will not be possible, and some further bounds on the entropy of the binomial distribution will be needed. Therefore,  in Appendix~\ref{app:bound_binom_entropy}, we also provide a new upper bound on the entropy of a binomial distribution. Bounds   on the entropy of a binomial distribution have been considered before in \cite{knessl1998integral,cheraghchi2018expressions}.
\subsection{Asymptotically Tight Capacity Bounds} \label{sec:asympt_bounds}
The next theorem provides a lower bound on the capacity by evaluating mutual information at the reference input $X_r\sim\mathrm{Beta}(1/2,1/2)$. 
\begin{theorem}\label{thm:capacity-lb}
For every $n\ge 1$,
\begin{equation}\label{eq:capacity-lb}
C(n) \ge \psi(n+1)-\log\bigl(1+\sqrt{3n+1}\bigr)+\frac12\log\!\left(\frac{3\pi}{2e}\right),
\end{equation}
where $\psi(\cdot)$ denotes the digamma function. In particular,
\begin{equation}\label{eq:capacity-lb-asymp}
C(n) \ge \underline{C}(n) \triangleq \frac12\log\!\left(\frac{n\pi}{2e}\right) + r_{{\rm LB}}(n),
\end{equation}
where 
\begin{equation}
r_{{\rm LB}}(n) \triangleq \frac12\log\!\left(1+\frac{1}{3n}\right)-\log\!\left(1+\frac{1}{\sqrt{3n+1}}\right)-\frac{1}{n+1}. 
\end{equation}
\end{theorem}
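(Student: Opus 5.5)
The plan is to lower bound $C(n)$ by the mutual information at the reference input, $C(n)\ge I(X_r;Y_r)=H(Y_r)-H(Y_r|X_r)$. I will compute the first term exactly and upper bound the second with a Gaussian-type bound on binomial entropy. The constants in \eqref{eq:capacity-lb} suggest this route. The term $\log(1+\sqrt{3n+1})$ is the signature of averaging $\log\bigl(nx(1-x)+\tfrac1{12}\bigr)$ against the arcsine density, and $\psi(n+1)$ should come from the exact beta-binomial entropy.

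\emph{Step 1 (conditional entropy).} I will use the binomial entropy bound of Appendix~\ref{app:bound_binom_entropy}, in the form
\begin{equation*}
H(\mathrm{Bin}(n,x))\le \tfrac12\log\bigl(2\pi e\,(nx(1-x)+\tfrac1{12})\bigr).
\end{equation*}
This follows from $H(Y)=h(Y+U)\le \tfrac12\log(2\pi e\,\mathrm{Var}(Y+U))$ with $\mathrm{Var}(U)=1/12$. I then average over $X_r$. Writing $2X_r-1=\cos\Theta$ with $\Theta$ uniform on $(0,\pi)$, I need $\bbE\log\bigl(A-B\cos^2\Theta\bigr)$ with $A=\frac{3n+1}{12}$ and $A-B=\frac1{12}$. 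The classical identity $\frac1\pi\int_0^\pi\log(a+b\cos\theta)\,\rmd\theta=\log\frac{a+\sqrt{a^2-b^2}}{2}$ evaluates it as
\begin{equation*}
2\log\frac{\sqrt{A}+\sqrt{A-B}}{2}=2\log\frac{1+\sqrt{3n+1}}{2\sqrt{12}}.
\end{equation*}
This produces exactly $-\log(1+\sqrt{3n+1})$ plus explicit constants.

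\emph{Step 2 (output entropy).} I will compute $H(Y_r)=-\bbE\log P_{Y_r}(Y_r)$ exactly from \eqref{eq:output_dist_X_r}. The cleanest way is probably the conjugacy identity
\begin{equation*}
P_{Y_r}(y)=\frac{P_{Y|X}(y|x)f_{X_r}(x)}{f_{X_r|Y_r}(x|y)},
\end{equation*}
where the posterior is $\mathrm{Beta}(y+\tfrac12,n-y+\tfrac12)$. Equivalently, I can use $I(X_r;Y_r)=h(X_r)-\bbE\bigl[h(\mathrm{Beta}(Y_r+\tfrac12,n-Y_r+\tfrac12))\bigr]$ to cross-check. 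The beta entropy is explicit in $\log B(\cdot,\cdot)$ and digamma values. The $\log\Gamma$ terms are handled through \eqref{eq:output_dist_X_r}. The digamma terms have the form $\bbE[(Y_r-\tfrac12)\psi(Y_r+\tfrac12)]$ and $(n-1)\psi(n+1)$, and I will evaluate them using the rising-factorial moment structure \eqref{eq:moments_of_beta}, the recursion $\psi(a+1)=\psi(a)+1/a$, and the ratio \eqref{eq:qstar_ratio}. I expect these to collapse to $\psi(n+1)$ plus constants such as $\log\pi$ and $\log 2$. Combining with Step 1, the $\log 2$'s and $\log\sqrt{12}$ should reduce to $\tfrac12\log\frac{3\pi}{2e}$, which gives \eqref{eq:capacity-lb}.

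\emph{Step 3 (asymptotic form).} I will use $\psi(n+1)\ge\log(n+1)-\frac1{n+1}\ge\log n-\frac1{n+1}$ and rewrite
\begin{equation*}
\log(1+\sqrt{3n+1})=\tfrac12\log(3n)+\tfrac12\log\bigl(1+\tfrac1{3n}\bigr)+\log\bigl(1+\tfrac{1}{\sqrt{3n+1}}\bigr).
\end{equation*}
Regrouping then gives $\tfrac12\log\frac{n\pi}{2e}+r_{\rm LB}(n)$; the sign of the $\tfrac12\log(1+\tfrac1{3n})$ term must be tracked carefully against the stated $r_{\rm LB}$.

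\emph{Main obstacle.} The hard part is Step 2: getting a closed form for $\bbE\log P_{Y_r}(Y_r)$ in which all sums over $y$ of $\log\Gamma$ and digamma terms telescope to the single term $\psi(n+1)$. If the direct telescoping fails, I will first try the $h(X_r)-\bbE\,h(\text{posterior})$ representation, using $\bbE[\psi(Y_r+\tfrac12)]=\bbE[\log X_r]+\psi(n+1)$. That identity holds because the posterior mean of $\log X$ is $\psi(y+\tfrac12)-\psi(n+1)$, and by the tower property averaging this over $Y_r$ recovers $\bbE\log X_r$. It reduces everything to the known value $\bbE\log X_r=-2\log2$.
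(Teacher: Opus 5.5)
Your Step 1 coincides with the paper's treatment of $H(Y_r|X_r)$. Step 2, however, cannot work as planned, because $H(Y_r)$ does not collapse to $\psi(n+1)$ plus absolute constants. For $n=1$ one has $H(Y_r)=\log 2\approx 0.693$, while $\log\frac{\pi}{4}+\psi(2)\approx 0.181$. For $n=2$, $P_{Y_r}=(\frac38,\frac14,\frac38)$ gives $H(Y_r)\approx 1.082$. So $H(Y_r)-\psi(n+1)$ is about $0.270$ at $n=1$ and $0.159$ at $n=2$, which is not a constant.

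The obstruction is the term $\bbE\bigl[\log\frac{\Gamma(Y_r+1)}{\Gamma(Y_r+1/2)}\bigr]$ in $-\log P_{Y_r}(y)=\log\pi+\log\frac{\Gamma(y+1)}{\Gamma(y+1/2)}+\log\frac{\Gamma(n-y+1)}{\Gamma(n-y+1/2)}$. Conjugacy gives closed forms for averages of digamma terms, such as $\psi(y+\frac12)$ and $y\,\psi(y+\frac12)$, but not for averages of log-Gamma ratios.

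Your fallback does not remove this. In $h(X_r)-\bbE[h(\mathrm{Beta}(Y_r+\frac12,n-Y_r+\frac12))]$, the term $\bbE\log B(Y_r+\frac12,n-Y_r+\frac12)=\log\pi-H(Y_r)-\bbE\log\binom{n}{Y_r}$ brings back exactly the log-Gamma averages. So the identity $\bbE\psi(Y_r+\frac12)=\psi(n+1)-2\log 2$ does not ``reduce everything''. This route also bypasses Step 1, so it would not produce the $\log(1+\sqrt{3n+1})$ term.

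The missing idea is that Step 2 must be an inequality; your digamma identity then finishes the job. By convexity of $\log\Gamma$, $\log\Gamma(y+1)-\log\Gamma(y+\frac12)\ge\frac12\psi(y+\frac12)$ for $y\ge0$. The paper reaches the same inequality via Kershaw's bound $P_{Y_r}(y)\le[\pi\sqrt{(y+\frac14)(n-y+\frac14)}]^{-1}$ together with $\log(y+\frac14)\ge\psi(y+\frac12)$. By symmetry this gives $H(Y_r)\ge\log\pi+\bbE\psi(Y_r+\frac12)=\log\frac{\pi}{4}+\psi(n+1)$, and combining with Step 1 yields exactly \eqref{eq:capacity-lb}.

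There is also a smaller issue in Step 3. Your chain $\psi(n+1)\ge\log n-\frac{1}{n+1}$ produces $-\frac12\log(1+\frac{1}{3n})$, which is weaker than the stated $r_{\rm LB}(n)$. To get the $+$ sign you must keep part of $\log(n+1)$, for example $\psi(n+1)\ge\log(n+1)-\frac{1}{n+1}\ge\log(n+\frac13)-\frac{1}{n+1}$.
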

\begin{IEEEproof}
    The proof is given in Appendix~\ref{sec:proof_thm:capacity-lb}.
\end{IEEEproof}

The following upper bound is based on the minimax redundancy construction of Xie and Barron~\cite{xie1997minimax}.

\begin{theorem}\label{thm:capacity-ub}
For all $n\ge 28$, we have
\begin{equation}\label{eq:capacity-ub}
C(n) \le \overline{C}(n) \triangleq \frac12\log\!\left(\frac{n\pi}{2e}\right) + r_{{\rm UB}}(n),
\end{equation}
where
\begin{equation}
\label{resupper}
r_{{\rm UB}}(n) \triangleq -\log\!\left(1-2\left(\frac{2e}{n\pi}\right)^{1/4}\right)
+\frac{10}{\log\frac{n\pi}{2e}}.
\end{equation}
\end{theorem}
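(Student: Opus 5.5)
We bound capacity through the dual representation $C(n)=\inf_q\max_{x}\kl{P_{Y|X}(\cdot|x)}{q}$, choosing $q$ as in Xie--Barron: a mixture of the reference beta-binomial and point masses near the boundary. Concretely, set $\delta=\delta_n$ (to be chosen of order $(2e/(n\pi))^{1/4}$, which matches the first term of $r_{\rm UB}$), and let
\begin{equation*}
q=(1-2\delta)\,P_{Y_r}+\delta\,Q_0+\delta\,Q_1 ,
\end{equation*}
where $Q_0,Q_1$ are outputs induced by a prior concentrated near $0$ and near $1$ (e.g.\ the Jeffreys prior restricted to $[0,c/n]$, or simply a beta-binomial with small parameters tilted to the edges). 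Then $\kl{P_{Y|X}(\cdot|x)}{q}\le -\log(1-2\delta)+\kl{P_{Y|X}(\cdot|x)}{P_{Y_r}}$ for every $x$, and for $x$ near the endpoints we instead use $-\log\delta+\kl{P_{Y|X}(\cdot|x)}{Q_{0/1}}$. The term $-\log(1-2(2e/(n\pi))^{1/4})$ in $r_{\rm UB}$ comes exactly from the first inequality.

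The main step is the interior estimate: for $x\in[x_n,1-x_n]$ with $x_n$ a threshold like $n^{-1/2}$ (or $(\log n)/n$), show
\begin{equation*}
\kl{P_{Y|X}(\cdot|x)}{P_{Y_r}}\le \tfrac12\log\frac{n\pi}{2e}+\frac{10}{\log\frac{n\pi}{2e}}.
\end{equation*}
Write $\kl{P_{Y|X}(\cdot|x)}{P_{Y_r}}=-H(\mathrm{Bin}(n,x))-\bbE[\log P_{Y_r}(Y)]$. For the second term use the pointwise lower bound in Proposition~\ref{prop:properties_Y_r}, $-\log P_{Y_r}(y)\le\log\pi+\frac12\log(y+\tfrac12)+\frac12\log(n-y+\tfrac12)$, and apply Jensen (concavity of $\log$) to get $\bbE\log(Y+\tfrac12)\le\log(nx+\tfrac12)$, similarly for $n-Y$. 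For the first term use a lower bound on the binomial entropy, $H(\mathrm{Bin}(n,x))\ge\frac12\log(2\pi e\,nx(1-x))-O(1/(nx(1-x)))$ (a standard local-CLT/Stirling-type bound, complementary to the entropy bound in Appendix~\ref{app:bound_binom_entropy}). Combining, the main terms give $\log\pi+\frac12\log(n^2x(1-x))-\frac12\log(2\pi e\,nx(1-x))=\frac12\log\frac{n\pi}{2e}$, with errors of order $1/(nx(1-x))$, which is at most a vanishing quantity in the interior region. Using the Jensen step exactly rather than a Taylor expansion keeps all constants explicit.

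For the boundary regions $x<x_n$ (by symmetry also $x>1-x_n$), the cost is $-\log\delta+\kl{P_{Y|X}(\cdot|x)}{Q_0}$. Here $Y$ is concentrated on $O(nx_n)$ values, so choosing $Q_0$ as, e.g., the beta-binomial with parameters making it nearly uniform on $\{0,\dots,\lceil c\, n x_n\rceil\}$ gives divergence at most $\log(nx_n)+O(1)$. With $\delta\asymp n^{-1/4}$ and $nx_n\asymp n^{1/4}$ this totals $\frac12\log n+O(1)$, and the constants must be checked to fall below $\frac12\log\frac{n\pi}{2e}+r_{\rm UB}(n)$. This boundary balancing, and in particular making the explicit constant $10/\log\frac{n\pi}{2e}$ and the threshold $n\ge28$ come out, is the main obstacle. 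I would first try a cleaner route: take the boundary mixture components to be binomials with $x$ of order $1/\log n$, and control the transition zone via the $\chi^2$ bound $\kl{\cdot}{\cdot}\le\log(1+\chi^2)$. The $1/\log$-type error term suggests that Xie and Barron's original tuning, with thresholds depending on $\log n$, is what must be reproduced explicitly.
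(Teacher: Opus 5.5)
Your overall strategy is the paper's: bound $C(n)$ by $\max_x D(P_{Y|X}(\cdot|x)\|Q_n)$, with $Q_n$ the Xie--Barron boundary-corrected Jeffreys mixture. The paper, however, does no analysis. It takes $Q_n=(1-2\eta_n)P_{Y_r}+\eta_n P_{Y|X}(\cdot|c_n/n)+\eta_n P_{Y|X}(\cdot|1-c_n/n)$ and imports the explicit inequality $\max_x D(P_{Y|X}(\cdot|x)\|Q_n)\le\frac12\log\frac{n}{2\pi e}+\log\frac{\pi}{1-2\eta_n}+\frac{5}{c_n}$ for $n\ge 28$. It then sets $\eta_n=(2e/(n\pi))^{1/4}$ and $c_n=\frac12\log\frac{n\pi}{2e}$, and uses $\frac12\log\frac{n}{2\pi e}+\log\pi=\frac12\log\frac{n\pi}{2e}$.

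You instead set out to prove such an inequality yourself, and the proposal stops exactly where the content lies. The step you call the main obstacle is the whole proof, and the way you sketch it cannot work. The target allows no $O(1)$ slack, yet $P_{Y_r}$ by itself already achieves $\frac12\log n+O(1)$ uniformly in $x$. Its excess over $\frac12\log\frac{n\pi}{2e}$ at $x=0$ is asymptotically $\frac12\log(2e)$, since $P_{Y_r}(0)\approx(\pi n)^{-1/2}$. So balancing $-\log\delta$ against $\log(nx_n)+O(1)$ to reach $\frac12\log n+O(1)$ gains nothing. Your fallback, binomial components at $x$ of order $1/\log n$, is at the wrong scale, since $n\,d(0\|1/\log n)\approx n/\log n$.

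What works is a binomial component at $c/n$ with $c\asymp\log n$.
For $x\le c/n$, $D(P_{Y|X}(\cdot|x)\|Q)\le-\log\eta+n\,d(x\|c/n)\le-\log\eta-n\log(1-c/n)\approx-\log\eta+c$. This stays below $\frac12\log\frac{n\pi}{2e}$ provided $c\lesssim\frac12\log\frac{n\pi}{2e}+\log\eta$.
For $c/n\le x\le\frac12$, your interior computation gives excess $-\log(1-2\eta)+O(1/c)$. This requires an explicit bound $H(\mathrm{Bin}(n,x))\ge\frac12\log(2\pi e\,nx(1-x))-K/(nx(1-x))$ valid down to $nx\approx c$. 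The lower bound of Appendix~\ref{app:bound_binom_entropy} loses $\frac{3}{2}$ nats and is not enough. This $O(1/c)$ term is where the $1/\log n$ term comes from.

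The two parameters are coupled, however. With $\eta_n=(2e/(n\pi))^{1/4}$, vanishing excess at $x=0$ requires $\eta_n e^{-c_n}\gtrsim(\sqrt{2e}-1)(\pi n)^{-1/2}$, i.e.\ $c_n\le\frac14\log\frac{n\pi}{2e}+O(1)$. That turns $5/c_n$ into a $20/\log\frac{n\pi}{2e}$-type term unless the constant $5$ is sharpened. Indeed, with the paper's choice $c_n=\frac12\log\frac{n\pi}{2e}$, a direct computation at $n=10^7$ gives $D(P_{Y|X}(\cdot|0)\|Q_n)=-\log Q_n(0)\approx 8.62$, while $\overline{C}(10^7)\approx 8.47$. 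So the quoted inequality fails there, and ``reproducing Xie and Barron's tuning'' as the paper applies it cannot succeed.

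For $28\le n<\frac{2}{\pi}e^{21}$ the statement follows anyway from $C(n)\le\overline{C}_2(n)\le\overline{C}(n)$, by Corollary~\ref{cor:C2_UB_finite_n} and the comparison after it. Beyond that range you need a genuinely explicit-constant version of the argument above.
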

\begin{IEEEproof}
    The proof is given in Appendix~\ref{sec:proof_thm:capacity-up}.
\end{IEEEproof}

\subsection{Finite-$n$ Upper Bounds}
While the upper bound in Theorem~\ref{thm:capacity-ub} provides the correct asymptotic scaling in $n$, we also derive the following upper bounds, which more closely characterize the capacity for finite $n$, which is of practical interst. 
\begin{theorem} \label{thm:C1_UB_finite_n}
	For any $n \geq 1$,
	\begin{align}
		C(n) &\leq \overline{C}_1(n) \triangleq\log \left( \sum_{y=0}^n \binom{n}{y} \left( \frac{y}{n}\right)^y \left(1- \frac{y}{n}\right)^{n-y} \right), \label{eq:C1_UB_finite_n}
	\end{align}
	with the convention that $0^0 = 1$ at $y=0,n$.
\end{theorem}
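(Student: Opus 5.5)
The bound is the Shtarkov (normalized maximum likelihood) bound, so I will obtain it from the dual representation of capacity recalled at the start of this section,
\begin{equation*}
C(n)=\inf_q \max_{x\in[0,1]} \kl{P_{Y|X}(\cdot|x)}{q},
\end{equation*}
by choosing $q$ to be the maximum-likelihood-normalized output distribution. Let $S_n=\sum_{y=0}^n \max_{x\in[0,1]} P_{Y|X}(y|x)$. Since $x^y(1-x)^{n-y}$ is maximized at $x=y/n$, we get $\max_x P_{Y|X}(y|x)=\binom{n}{y}(y/n)^y(1-y/n)^{n-y}$, with $0^0=1$ covering $y=0,n$. This identifies $S_n$ with the sum inside the logarithm in \eqref{eq:C1_UB_finite_n}. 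Define
\begin{equation*}
q(y)=\frac{\max_{x'} P_{Y|X}(y|x')}{S_n}.
\end{equation*}
It is a valid pmf with full support on $\{0,\ldots,n\}$, so every relative entropy below is finite.

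Fix $x\in[0,1]$. Then
\begin{align*}
\kl{P_{Y|X}(\cdot|x)}{q}
&=\sum_{y} P_{Y|X}(y|x)\log\frac{P_{Y|X}(y|x)\, S_n}{\max_{x'}P_{Y|X}(y|x')}\\
&=\log S_n+\sum_y P_{Y|X}(y|x)\log\frac{P_{Y|X}(y|x)}{\max_{x'}P_{Y|X}(y|x')}\le \log S_n .
\end{align*}
The inequality holds because each log-ratio is nonpositive. Terms with $P_{Y|X}(y|x)=0$, which occur at $x\in\{0,1\}$, contribute zero by the usual convention. Taking the maximum over $x$ and using the dual representation gives $C(n)\le\log S_n=\overline{C}_1(n)$.

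Equivalently, for any $P_X$ one can write $I(X;Y)=\mathbb{E}[\kl{P_{Y|X}(\cdot|X)}{q}]-\kl{P_Y}{q}$ and drop the last term, which is nonnegative. No step is a genuine obstacle. The only care needed is at the boundary: the maximizer $y/n$ lies in $[0,1]$, and at $y\in\{0,n\}$ the convention $0^0=1$ is what makes the formula agree with the pointwise maximum.
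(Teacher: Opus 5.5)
Your proposal is correct and follows the paper's proof essentially verbatim: the paper also plugs the normalized upper envelope $\tilde{Q}_n(y)=M_n(y)/S_n$, with $M_n(y)=\max_{x\in[0,1]}P_{Y|X}(y|x)$, into the dual bound, uses $P_{Y|X}(y|x)\le M_n(y)$ to obtain $\log S_n$, and identifies the maximizer as $x=y/n$. Your explicit handling of the boundary outputs $y\in\{0,n\}$ and the alternative golden-formula derivation are fine additions but not needed.
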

\begin{IEEEproof}
    The proof is given in Appendix~\ref{sec:C_UB_finite_n}.
\end{IEEEproof}
From Theorem~\ref{thm:C1_UB_finite_n}, we can derive a slightly looser but even simpler upper bound.
\begin{corollary} \label{cor:C2_UB_finite_n}
	For any $n \geq 1$, further upper-bounding the results of Theorem~\ref{thm:C1_UB_finite_n}, provides
	\begin{align}
		C(n)  &\leq \overline{C}_1(n) \leq \overline{C}_2(n) \triangleq \log \left( 2+ \sqrt{\frac{n \pi }{2}}\right) \label{eq:C2_UB_finite_n}.
	\end{align}
\end{corollary}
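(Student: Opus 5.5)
We need to show that the sum $S_n=\sum_{y=0}^n \binom{n}{y}(y/n)^y(1-y/n)^{n-y}$ satisfies $S_n\le 2+\sqrt{n\pi/2}$. The first inequality $C(n)\le\overline{C}_1(n)$ is Theorem~\ref{thm:C1_UB_finite_n}, and $\log$ is monotone, so only this bound on $S_n$ is needed. The two endpoint terms $y=0$ and $y=n$ each equal $1$ (using $0^0=1$), which accounts for the constant $2$. It therefore remains to show
\begin{equation}
\sum_{y=1}^{n-1}\binom{n}{y}\left(\frac{y}{n}\right)^y\left(1-\frac{y}{n}\right)^{n-y}\le\sqrt{\frac{n\pi}{2}}.
\end{equation}

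\emph{Step 1: a per-term bound.} For the interior terms I will use a Robbins-type Stirling bound, $k!=\sqrt{2\pi k}\,(k/e)^k e^{\theta_k}$ with $\frac{1}{12k+1}<\theta_k<\frac{1}{12k}$. The correction factor satisfies $e^{\theta_n-\theta_y-\theta_{n-y}}\le 1$: since $\theta_n<\frac{1}{12n}$ and $\theta_y>\frac{1}{12y+1}\ge \frac{1}{12n}$ for $y<n$, we get $\theta_n-\theta_y<0$, and $\theta_{n-y}>0$. The powers of $e$ and the $y^y,(n-y)^{n-y}$ factors cancel exactly. This gives
\begin{equation}
\binom{n}{y}\left(\frac{y}{n}\right)^y\left(1-\frac{y}{n}\right)^{n-y}\le\sqrt{\frac{n}{2\pi y(n-y)}},\qquad 1\le y\le n-1.
\end{equation}

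\emph{Step 2: comparing the sum with an integral.} Let $f(t)=\sqrt{n/(2\pi t(n-t))}$. This function is convex on $(0,n)$ and symmetric about $n/2$. Because $f$ is convex, the midpoint rule gives $f(y)\le\int_{y-1/2}^{y+1/2}f(t)\,\rmd t$ for each $y$. Summing over $y=1,\dots,n-1$ yields
\begin{equation}
\sum_{y=1}^{n-1}f(y)\le\int_{1/2}^{n-1/2}f(t)\,\rmd t\le\int_0^n f(t)\,\rmd t=\sqrt{\frac{n}{2\pi}}\cdot\pi=\sqrt{\frac{n\pi}{2}},
\end{equation}
where we used $\int_0^n \frac{\rmd t}{\sqrt{t(n-t)}}=\pi$. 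This is exactly the required bound.

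\emph{Main obstacle.} The delicate step is getting the Stirling correction factors to combine with the correct sign, so that the interior terms are bounded with constant exactly $1$ and no extra multiplicative slack appears. If the Robbins bound turns out to be awkward at some edge case, a fallback is to check the small cases $n\le 2$ directly. For $n=1$ there are no interior terms. For $n=2$ the single interior term is $1/2$, and indeed $1/2\le\sqrt{\pi}$. Once the per-term bound holds, the convexity-based integral comparison is routine.
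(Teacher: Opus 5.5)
Your proof is correct and follows the paper's route. The endpoint terms contribute $2$, Robbins' Stirling bounds give $M_n(y)\le\sqrt{n/(2\pi y(n-y))}$ for the interior terms, and the resulting sum is compared with $\int_0^n \frac{\rmd t}{\sqrt{t(n-t)}}=\pi$. The only difference is in the sum-to-integral step: you use convexity of $1/\sqrt{t(n-t)}$ (the midpoint rule), while the paper uses its monotonicity on each half with left/right Riemann sums, and both arguments are valid.
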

\begin{IEEEproof}
    The proof is given in Appendix~\ref{sec:C_UB_finite_n}.
\end{IEEEproof}

We now show that, in all practical cases, $\overline{C}_1$ and $\overline{C}_2$ are tighter than $\overline{C}$ of Theorem \ref{thm:capacity-ub}. Define the gap between upper bounds $\Delta(n) \triangleq \overline{C}(n) - \overline{C}_2(n)$, for $n \geq 28$. A lower bound on this gap is as follows:
\begin{align}
    \Delta(n) &\geq 2 \left( \frac{2 \rme}{n \pi}\right)^{\frac{1}{4}} -2 \left( \frac{2}{n \pi} \right)^{\frac{1}{2}} + \frac{10}{\log \frac{n \pi}{2 \rme}}-\frac{1}{2} \label{eq:UB_gap1} \\
    &\geq \frac{10}{\log \frac{n \pi}{2 \rme}}-\frac{1}{2}, \label{eq:UB_gap2}
\end{align}
where~\eqref{eq:UB_gap1} holds by the bounds $\log t \leq t-1, \ t>0$ and~\eqref{eq:UB_gap2} holds by noticing that the difference of the first two terms in~\eqref{eq:UB_gap1} is always positive for $n \geq 1$. 

Since~\eqref{eq:UB_gap2} is positive at $n=28$, strictly decreasing, and crosses zero at $\bar{n} = \frac{2}{\pi}\rme^{21} \approx 8.4 \cdot 10^8$, we have that $\overline{C}_2$ is tighter than $\overline{C}$ at least up to $\bar{n}$. Finally, since $\overline{C}_1 \leq \overline{C}_2$, the same holds also for $\overline{C}_1$. 

Notice that, while the upper bound presented in Theorem~\ref{thm:capacity-ub} is not the smallest for finite values of $n$, it eventually becomes the tightest as $n$ goes to infinity. Moreover, the upper bound $\overline{C}$ might be further improved by refining the choice of the constants $\eta_n$ and $c_n$, see Appendix~\ref{sec:proof_thm:capacity-up} and~\cite[Sec.~III-B]{xie1997minimax}.

In Section VII, Fig.~\ref{fig:capacity} compares the capacity bounds derived in this section and Section~\ref{sec:asympt_bounds}
with the numerical estimate of the capacity and the existing bounds in \cite{Zieder2024} and \cite{TangCapBoundsBinomil}.

\subsection{Gap to the Capacity}
The gap to capacity will play an important role in our analysis especially when we provide a lower bound on the number of points in the support of $P_{X^\star}$. For $n \ge 0$, we define
\begin{equation}
    \mathrm{Gap}(n)\triangleq C(n) - I(X_r; Y_r).  \label{eq:Gap_Definition}
\end{equation}

Combining Theorems~\ref{thm:capacity-lb} and~\ref{thm:capacity-ub}, we arrive at the following bound on the gap.
\begin{prop}\label{prop:capacity_gap}
    \begin{align}
        \mathrm{Gap}(n) &\le r_{{\rm UB}}(n)-r_{{\rm LB}}(n), \text{ for $n\ge 28$} \\
        &\le \frac{17}{\log\!\left(\frac{n\pi}{2e}\right)}, \text{ for $n \ge 444$}. 
    \end{align}
\end{prop}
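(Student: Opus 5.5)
The first inequality should follow by combining the two capacity bounds. The proof of Theorem~\ref{thm:capacity-lb} (Appendix~\ref{sec:proof_thm:capacity-lb}) works by evaluating the mutual information at $X_r$. I will therefore use the stronger intermediate statement established there,
\begin{equation*}
I(X_r;Y_r)\ \ge\ \psi(n+1)-\log\bigl(1+\sqrt{3n+1}\bigr)+\tfrac12\log\!\left(\tfrac{3\pi}{2e}\right)\ \ge\ \tfrac12\log\!\left(\tfrac{n\pi}{2e}\right)+r_{\rm LB}(n),
\end{equation*}
rather than only the bound on $C(n)$. If that appendix only states the bound on $C(n)$, I will re-trace it to confirm that every step bounds $I(X_r;Y_r)$ itself. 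For $n\ge 28$, Theorem~\ref{thm:capacity-ub} gives $C(n)\le \tfrac12\log\frac{n\pi}{2e}+r_{\rm UB}(n)$. Subtracting the two bounds in \eqref{eq:Gap_Definition} cancels the common leading term and gives $\mathrm{Gap}(n)\le r_{\rm UB}(n)-r_{\rm LB}(n)$.

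For the second inequality, I will simplify $-r_{\rm LB}$ by dropping the nonpositive term $-\tfrac12\log(1+\tfrac1{3n})$ and using $\log(1+t)\le t$. This gives
\begin{equation*}
-r_{\rm LB}(n)\le \frac{1}{\sqrt{3n+1}}+\frac{1}{n+1}.
\end{equation*}
Write $\ell=\log\frac{n\pi}{2e}$, so that $a:=(2e/(n\pi))^{1/4}=e^{-\ell/4}$. It then suffices to show, for $n\ge 444$,
\begin{equation*}
\ell\left[-\log(1-2a)+\frac{1}{\sqrt{3n+1}}+\frac{1}{n+1}\right]\le 7 ,
\end{equation*}
because the term $10/\ell$ in $r_{\rm UB}$ then adds up to $17/\ell$.

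The approach is monotonicity plus a single numerical check at $n=444$. I will bound $-\log(1-2a)\le \frac{2a}{1-2a}$, which is valid since $2a<1$ for $n\ge 28$. The map $\ell\mapsto \ell e^{-\ell/4}$ is decreasing for $\ell>4$, and $1-2a$ is increasing in $\ell$, so $\ell\cdot\frac{2a}{1-2a}$ is decreasing on the range $n\ge 444$, which corresponds to $\ell\approx 5.55$. Likewise, $\ell/\sqrt{3n+1}$ and $\ell/(n+1)$ are decreasing once $n$ exceeds a small threshold, since their derivatives have the sign of $1-\tfrac{3n}{2(3n+1)}\ell$ and $1-\ell$, respectively. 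Hence the bracketed product is maximized at $n=444$. At that point $2a\approx 0.50$, $-\log(1-2a)\approx 0.69$ (or $\frac{2a}{1-2a}\approx 1$ with the cruder bound), $1/\sqrt{1333}\approx 0.027$ and $1/445\approx 0.002$. The product is then roughly $\ell\cdot 1.03\approx 5.7<7$.

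The main obstacle is the numerical margin at $n=444$. Around there $2a$ is close to $1/2$, so the bound $\frac{2a}{1-2a}$ is lossy. If the margin proves too thin after careful rounding, I will keep $-\log(1-2a)$ exactly and show directly that $\ell\,(-\log(1-2e^{-\ell/4}))$ is decreasing for $\ell\ge 5.5$ by differentiating. The threshold $444$ was presumably chosen so that this check just closes.
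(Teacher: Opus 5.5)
Your proposal is correct and follows essentially the paper's route: subtract the bound of Theorem~\ref{thm:capacity-lb} (which, as you rightly note, is really a bound on $I(X_r;Y_r)$, as its proof in the appendix shows) from Theorem~\ref{thm:capacity-ub}, then control the remainder with elementary inequalities, monotonicity in $n$, and a check at $n=444$. The differences are cosmetic. The paper uses $-\log(1-x)\le x+x^2$, which it applies for $x=2a<1/2$, i.e.\ $n>512e/\pi\approx 443$; that is where $444$ comes from, not from a tight numerical check. It then lumps the polynomially decaying terms into $5n^{-1/4}$. Your $x/(1-x)$ bound combined with monotonicity instead closes with ample margin ($\approx 5.7<7$).
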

\begin{IEEEproof}
By comparing \eqref{eq:capacity-lb-asymp} and \eqref{eq:capacity-ub}, we have
    \begin{align}
{\rm Gap}(n) & \le r_{{\rm UB}}(n)-r_{{\rm LB}}(n) \\
&=-\log\!\left(1-2\left(\frac{2e}{n\pi}\right)^{1/4}\right)
+\frac{10}{\log\!\left(\frac{n\pi}{2e}\right)}  
-\frac12\log\!\left(1+\frac{1}{3n}\right)
+\log\!\left(1+\frac{1}{\sqrt{3n+1}}\right)+\frac{1}{n+1}.
\end{align}
Using $-\log(1-x)\le x + x^2$ for $0<x<1/2$, and noting that for sufficiently large $n$ (e.g., $n \ge 444$)
\[
x = 2\left(\frac{2e}{n\pi}\right)^{1/4} < \frac12,
\]
we obtain
\[
-\log\!\left(1-2\left(\frac{2e}{n\pi}\right)^{1/4}\right)
\le
2\left(\frac{2e}{n\pi}\right)^{1/4}
+
4\left(\frac{2e}{n\pi}\right)^{1/2}.
\]
Next, using $\log(1+x)\le x$ for $x\ge0$,
\[
\log\!\left(1+\frac{1}{\sqrt{3n+1}}\right) +\frac{1}{n+1}\le \frac{1}{\sqrt{3n+1}}+\frac{1}{n+1} \leq \frac{1}{\sqrt{n}} + \frac{1}{n}.
\]
Moreover,
\[
-\frac12\log\!\left(1+\frac{1}{3n}\right)
\le
0.
\]
Therefore,
\begin{align}
{\rm Gap}(n)
&\le
\frac{10}{\log\!\left(\frac{n\pi}{2e}\right)}
+
2\left(\frac{2e}{n\pi}\right)^{1/4}
+
4\left(\frac{2e}{n\pi}\right)^{1/2}
+
\left(\frac{1}{n}\right)^{1/2} +\frac{1}{n}.
\end{align}

For all $n \ge 444$, the polynomial-decay terms are dominated by $n^{-1/4}$. Hence
\begin{align}
{\rm Gap}(n)
&\le
\frac{10}{\log\!\left(\frac{n\pi}{2e}\right)}
+
C_1 n^{-1/4} \\
&= \frac{10}{\log\!\left(\frac{n\pi}{2e}\right)}
+
5 n^{-1/4} \label{eq:pick_C1} \\
&\le \frac{17}{\log\!\left(\frac{n\pi}{2e}\right)}, \qquad n \ge 444, \label{eq:bound_Un}
\end{align}
where in~\eqref{eq:pick_C1} we picked $C_1 = 5$.
\end{IEEEproof}

In Section~\ref{sec:num_res}, Fig.~\ref{fig:norm_gaps} shows the normalized gaps between the numerically computed capacity and the upper and lower bounds. The figure suggests that the gap between the upper bound and the  capacity may be further improved.

\section{Properties of the Capacity-Achieving Distributions}
\label{sec:main_I}
In this section we study properties of capacity-achieving distributions.

\subsection{Discreteness}
\label{sec:discretness}
As already mentioned in Sec.~\ref{sec:Lit}, from the Witsenhausen approach we only know that there exists a discrete distribution with at most $n+1$ mass points.  This, however, does not rule out the existence of other CAIDs (\emph{e.g.},  continuous CAIDs).  


We now show that all CAIDs are discrete and provide  a preliminary bound on the support. We recall the definition
\begin{equation}\label{eq:zeros_KKT_recall}
\cA_n =\{x \in [0,1]:\:  i(x;P_{Y^\star}) -C(n)=0 \}.
\end{equation}
\begin{prop}\label{prop:bound_cardinality_cA} $| \cA_n| \le n+1$.
\end{prop}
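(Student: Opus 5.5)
Write $f(x)=i(x;P_{Y^\star})-C(n)$, so that $\cA_n$ is the zero set of $f$ on $[0,1]$. The plan is to turn $f$ into a function whose zeros can be counted by a derivative/Rolle argument.

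\textbf{Step 1: closed form.} Expanding the relative entropy gives
\begin{equation}
f(x)=-H\big(\mathrm{Bin}(n,x)\big)-\sum_{y=0}^n \binom{n}{y}x^y(1-x)^{n-y}\log P_{Y^\star}(y)-C(n).
\end{equation}
Here $P_{Y^\star}(y)>0$ for all $y$, since $P_{Y^\star}$ is unique and every input induces a fully supported output. The middle term is a polynomial in $x$ of degree at most $n$. The entropy term is not a polynomial, so I would not count zeros of $f$ directly.

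\textbf{Step 2: remove the entropy term.} I would instead use the derivative formula in Proposition~\ref{prop:derivatives_info_density}, or rather its underlying structure. Differentiating $\sum_y P_{Y|X}(y|x)\log\frac{P_{Y|X}(y|x)}{P_{Y^\star}(y)}$ by the standard binomial derivative identity gives
\begin{equation}
f'(x)=n\sum_{y=0}^{n-1}\binom{n-1}{y}x^y(1-x)^{n-1-y}\Big[\log\tfrac{x}{1-x}-\log\tfrac{\binom{n}{y+1}P_{Y^\star}(y)}{\binom{n}{y}P_{Y^\star}(y+1)}\Big].
\end{equation}
In the first part the $\log\frac{x}{1-x}$ factor pulls out, because the binomial weights sum to one. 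This yields
\begin{equation}
f'(x)=n\Big(\log\tfrac{x}{1-x}-p(x)\Big),
\end{equation}
where $p(x)=\sum_{y}\binom{n-1}{y}x^y(1-x)^{n-1-y}c_y$ is a polynomial of degree at most $n-1$, with $c_y=\log\frac{(n-y)P_{Y^\star}(y)}{(y+1)P_{Y^\star}(y+1)}$ matching Lemma~\ref{lem:channel_transformations}.

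Now set $g(x)=\log\frac{x}{1-x}-p(x)$ on $(0,1)$. Its derivative is
\begin{equation}
g'(x)=\frac{1}{x(1-x)}-p'(x)=\frac{1-x(1-x)p'(x)}{x(1-x)},
\end{equation}
and the numerator is a polynomial of degree at most $n$. Hence $g'$ has at most $n$ zeros in $(0,1)$, unless the numerator vanishes identically. That is impossible, because the numerator equals $1$ at $x=0$.

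\textbf{Step 3: count zeros of $f$ via Rolle.} Suppose $f$ had $N$ distinct zeros in $[0,1]$. Then $f$ is continuous on $[0,1]$, including the endpoints where $f(0)=-\log P_{Y^\star}(0)-C(n)$, and differentiable on $(0,1)$. Rolle's theorem gives at least $N-1$ zeros of $f'$, hence of $g$, in $(0,1)$. Applying Rolle again gives at least $N-2$ zeros of $g'$. Combined with Step 2, this yields only $N\le n+2$, which is one short.

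\textbf{Step 4: recover the extra one using the endpoints (main obstacle).} Since $g\to-\infty$ as $x\to0^+$ and $g\to+\infty$ as $x\to1^-$, I would use a generalized Rolle argument on $(0,1)$. Between the two infinite ends and the $N-1$ zeros of $g$, the function $g$ has $N$ monotone pieces to interleave. I would try to show that consecutive zeros of $g$ of the same crossing type force extra critical points, giving $N-1$ zeros of $g'$ and hence $N\le n+1$.

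If that fails, I would use the KKT structure instead. The condition $f\le 0$ forces every interior zero of $f$ to be a local maximum, so each interior zero is a zero of $f'$ itself. Every zero of $f$ is thus either an endpoint or a double zero. Counting multiplicities in the Rolle chain then gains enough to reach $N\le n+1$.

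I expect this refined counting to be the delicate part. The tangency-at-interior-zeros argument is the one I would try first.
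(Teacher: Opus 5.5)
Your proof is correct once you commit to the tangency route at the end of Step~4. It differs from the paper's proof at exactly the point that matters.

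\textbf{The paper's route.} The paper applies Rolle twice, $|\cA_n|\le \sfN((0,1);i'')+2$. It then asserts that $n(1-x)+xG(x)$ is a polynomial of degree $n-1$. That degree count does not hold. In fact $n(1-x)+xG(x)=(1-x)\cdot x(1-x)\,i''(x;P_{Y^\star})$, and $x(1-x)i''$ is a polynomial of degree up to $n$; the paper itself uses degree $n$ in the proof of Theorem~\ref{thm:n/2 bound}. For $n=2$, the CAID of Table~\ref{tab:CapacityExpression} gives, in your notation, $c_0=\log 16$ and $c_1=-\log 16$. Hence $x(1-x)i''(x)=2\bigl(1-x(1-x)\log 256\bigr)$, which has two zeros in $(0,1)$. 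So double Rolle alone yields only $n+2$, exactly as you found in Step~3. For odd $n$, symmetry of $P_{Y^\star}$ makes $x(1-x)i''$ a polynomial in $x(1-x)$, hence of degree at most $n-1$, so the paper's count can be rescued there. It cannot be rescued for even $n$.

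\textbf{Your tangency argument closes the gap.} Since $f\le 0$ on $[0,1]$, each of the $N_{\rm int}\ge N-2$ interior zeros of $f$ is also a zero of $f'$. These are distinct from the $N-1$ Rolle points, which lie strictly between consecutive zeros. So $f'$ has at least $2N-3$ zeros in $(0,1)$ and $f''$ at least $2N-4$. Therefore $2N-4\le n$, i.e.\ $N\le 2+\lfloor n/2\rfloor\le n+1$ for all $n\ge 1$. This is the mechanism of Theorem~\ref{thm:n/2 bound}, so your route gives the sharper bound directly.

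\textbf{Drop the first idea in Step~4.} The limits $g\to\mp\infty$ only impose a parity constraint. Without $f\le 0$ the count $n+2$ is actually attained. For $n=2$, a symmetric output law with small mass at $y=1$ makes $i$ convex--concave--convex, and a suitable level gives four crossings.

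\textbf{Two minor corrections.} First, the bracket in $f'$ has the wrong sign: the correct identity is $f'(x)=n\bigl(\log\frac{x}{1-x}+p(x)\bigr)$. This is harmless for the degree count. Second, positivity of $P_{Y^\star}$ does not follow from ``every input induces a fully supported output''; a point mass at $0$ does not. It follows from $i(x;P_{Y^\star})\le C(n)<\infty$ at any $x\in(0,1)$.
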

\begin{proof}
Note that 
\begin{align}
    |\cA_n| & =
    \sfN\lr{ \, [0,1]; \, i(x;P_{Y^\star})-C(n) \, }  \\
    &\le \sfN\lr{ \, [0,1]; \,i''(x;P_{Y^\star}) \, }+2  \label{eq:use_Rolle} \\
    &= \sfN \lr{ \, [0,1]; \, n(1-x)+ x G(x) \, }+2 \label{eq:use_der_inf_den} \\
    &=  \sfN \lr{ \, [0,1]; \, g(x) \, }+2 \label{eq:introduce_g}
    \end{align}
where \eqref{eq:use_Rolle} follows from Rolle's theorem (see~\cite[Lemma~3]{dytso2019capacity}); \eqref{eq:use_der_inf_den} follows from Proposition \ref{prop:derivatives_info_density}; and  in~\eqref{eq:introduce_g}, we have introduced the function
\begin{align}
    g(x) &=  n(1-x)+ x G(x) \\
    &= n(1-x)+ x \sum_{y=0}^{n-2} P_{Y|X}(y|x)
 (n-y)(n-y-1)  \log\left(\frac{(1-\expcnd{X}{Y=y+2})\expcnd{X}{Y=y} }{(1-\expcnd{X}{Y=y+1})\expcnd{X}{Y=y+1}} \right). \label{eq:def_g}
\end{align}
Now note that  $g(x)$ is a polynomial of degree $n-1$. Therefore, it is either zero in which case $\mathrm{N}\lr{ \, [0,1]; \, g(x) \, } =\infty$ or  $\mathrm{N}\lr{ \, [0,1]; \, g(x) \, } \le  n-1$. The first case can clearly be ruled out since coefficient in \eqref{eq:def_g} cannot be zero.  
\end{proof}
The bound in Proposition \ref{prop:bound_cardinality_cA} will be improved in Sec.~\ref{sec:card_bounds}. 
\subsection{Uniqueness of the Optimal Input Distribution} 
\label{sec:uniqness}

In this section, we show and discuss uniqueness of the CAID. To aid our discussion, it is useful to parameterize the mutual information in terms of distributions instead of random variables, that is 
\begin{equation}
    I(P_X; P_{Y|X} ) = I(X;Y). 
\end{equation}
We also let $\cP_{\cX}$ be the set of all distributions over the set $\cX$. In particular, the optimization in \eqref{eq:capacity_opt_problem} can be written as
\begin{equation}
   \max_{P_X \in \cP_{[0,1]}} I(P_X; P_{Y|X} ).
\end{equation}
A typical way to show that there is a unique maximizer  is to show that the mapping $P_X \mapsto I(P_X; P_{Y|X} ) $ over the set $\cP_{[0,1]}$ is \emph{strictly} concave \cite{smith1971information}.  However, due to the fact that the output space of the binomial channel is finite and the input space is uncountable, the mutual information is not strictly concave over $\cP_{[0,1]}$.  For example, when $n=1$ any  distribution symmetric around $x=\frac{1}{2}$ will induce 
\begin{equation}
    P_{Y}(0) = P_{Y}(1) = \frac{1}{2} 
\end{equation}
which is the capacity-achieving output distribution for $n=1$.  Therefore, to show uniqueness of the CAID a new or slightly different argument is needed. 

 We begin by showing the following result. 

\begin{prop}\label{prop:full_rank_A}  Consider a sequence $0\le x_1 < \ldots  < x_{n+1} \le 1$ and define the
matrix $\sfA \in \bbR^{n+1 \times n+1} $ as
\begin{equation}
[\sfA]_{i,k} = P_{Y|X}(i-1|x_k), \quad  i \in [n+1], \, k \in [n+1]. 
\end{equation}
Then,  $\sfA$ is full rank. 
\end{prop}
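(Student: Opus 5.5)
The plan is to reduce the binomial matrix to a Vandermonde-type matrix by factoring out diagonal terms, with separate handling of the endpoints $x_1=0$ and $x_{n+1}=1$.

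First, factor out the binomial coefficients. Write $[\sfA]_{i,k}=\binom{n}{i-1}x_k^{i-1}(1-x_k)^{n-i+1}$. Then $\sfA=\mathrm{diag}\big(\binom{n}{0},\ldots,\binom{n}{n}\big)\,\sfB$ with $[\sfB]_{i,k}=x_k^{i-1}(1-x_k)^{n-i+1}$. The diagonal factor is invertible, so it suffices to show $\det \sfB\neq 0$.

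Next, treat the interior columns. Suppose first that $0<x_k<1$ for all $k$. Column $k$ equals $(1-x_k)^n\,(1,t_k,t_k^2,\ldots,t_k^n)^{\mathsf T}$, where $t_k=x_k/(1-x_k)$. Hence
\begin{equation}
\det\sfB=\prod_{k=1}^{n+1}(1-x_k)^n\prod_{1\le j<k\le n+1}(t_k-t_j).
\end{equation}
This is nonzero because $x\mapsto x/(1-x)$ is strictly increasing on $[0,1)$, so the $t_k$ are distinct, and each factor $(1-x_k)^n$ is positive.

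Then handle the endpoints, which is the only delicate point: the substitution breaks at $x_{n+1}=1$. If $x_1=0$, column $1$ is $e_1$ under the $0^0=1$ convention, and $t_1=0$ still works, so nothing changes. If $x_{n+1}=1$, column $n+1$ is $e_{n+1}$. Expanding the determinant along that column leaves the $n\times n$ matrix with rows $i=1,\ldots,n$ and columns $k=1,\ldots,n$. Its entries are $x_k^{i-1}(1-x_k)^{n-i+1}=(1-x_k)\cdot x_k^{i-1}(1-x_k)^{(n-1)-(i-1)}$. Since $x_k<1$ for $k\le n$, we can factor $(1-x_k)^n$ out of each column and again obtain a Vandermonde matrix in the distinct values $t_1,\ldots,t_n$. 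So the minor is nonzero, and $\det\sfB=\pm(\text{that minor})\neq 0$.

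An alternative route that avoids the case split: a nontrivial vector $c$ with $c^{\mathsf T}\sfA=0$ defines the polynomial $p(x)=\sum_y c_y P_{Y|X}(y|x)$. This polynomial has degree at most $n$ and vanishes at $n+1$ distinct points, so it is identically zero. That contradicts the linear independence of the Bernstein basis $\{\binom{n}{y}x^y(1-x)^{n-y}\}_{y=0}^n$ of the polynomials of degree at most $n$.
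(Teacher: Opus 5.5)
Your proof is correct and follows essentially the paper's route: the paper also rewrites the kernel in terms of the odds ratio $\rme^{t}=x/(1-x)$ to expose a Vandermonde matrix, and it peels off the endpoint columns $\bfe_1,\bfe_{n+1}$. Your explicit case split (and the Bernstein-basis alternative) also covers the interior-only and single-endpoint configurations, which the paper skips by asserting without justification that one may take $x_1=0$ and $x_{n+1}=1$ ``without loss of generality''.
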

\begin{proof}
First of all, we argue that considering $x_1=0$ and $x_{n+1}=1$ comes without loosing generality. In fact, in this case
the first and last columns of $\sfA$ are $\bfe_1$ and $\bfe_{n+1}$, respectively, where $\bfe_i$ is a zero vector with a $1$ in the $i$-th position. As a consequence, we have $\det(\sfA) = \det(\tilde{\sfA})$, where
\begin{equation}
 [\tilde{\sfA}]_{i,k} = [\sfA]_{i+1,k+1}, \quad i \in [n-1], \, k \in [n-1].   
\end{equation}  
     Next, note that we can rewrite the binomial law as $P_{Y|X}(y|x) = \binom{n}{y}(1+\rme^{t})^{-n}\rme^{ty}$, where $x = \frac{\rme^t}{1+\rme^t}$. The matrix $\sfB$ with $[\sfB]_{y,k} = \rme^{t_k y}$ and $y \in [n-1]$ is a Vandermonde matrix, which is full rank since the $t_k$'s are all distinct \cite{golub2013matrix}. Thanks to the multilinear property of the determinant, we can write that 
    \begin{equation}
        \det(\tilde{\sfA}) = \det(\sfB) \prod_{y=1}^{n-1} \binom{n}{y} \prod_{k=2}^{n} (1+\rme^{t_k})^{-n} >0
    \end{equation}
    where in the last step is due to $\det(\sfB)>0$ and to the positivity of the products. As a consequence, $\sfA$ is a full rank matrix.
\end{proof}

With  the aid of  Proposition~\ref{prop:full_rank_A},  we show the following result. 
\begin{theorem}\label{thm:stric:_concavity}
    Let $\cX \subset [0,1]$ be a  set of cardinality $n+1$.  Then, $P_X \mapsto I(P_X; P_{Y|X})$ is \emph{strictly} concave over $\cP_{\cX}$. 
\end{theorem}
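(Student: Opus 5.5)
The plan is to reduce strict concavity of $P_X \mapsto I(P_X;P_{Y|X})$ on $\cP_{\cX}$ to strict concavity of the output entropy, and then to use injectivity of the linear map $P_X \mapsto P_Y$ furnished by Proposition~\ref{prop:full_rank_A}. Write $\cX=\{x_1<\cdots<x_{n+1}\}$ and identify $P_X\in\cP_{\cX}$ with a probability vector $\bfp\in\bbR^{n+1}$. The induced output pmf is then $P_Y=\sfA\bfp$, with $\sfA$ as in Proposition~\ref{prop:full_rank_A}. The mutual information decomposes as
\begin{equation}
I(P_X;P_{Y|X}) = H(\sfA\bfp) - \sum_{k=1}^{n+1} p_k H\bigl(P_{Y|X}(\cdot|x_k)\bigr),
\end{equation}
where $H$ denotes Shannon entropy. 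The second term is linear in $\bfp$, so it suffices to show that $\bfp\mapsto H(\sfA\bfp)$ is strictly concave on the simplex.

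Take two distinct probability vectors $\bfp\neq\bfp'$ and $\lambda\in(0,1)$. By Proposition~\ref{prop:full_rank_A}, $\sfA$ is invertible, so $\sfA\bfp\neq\sfA\bfp'$. Since $\sfA$ maps the simplex into the set of pmfs on $\{0,\ldots,n\}$, and entropy is strictly concave on that set, we obtain
\begin{equation}
H\bigl(\sfA(\lambda\bfp+(1-\lambda)\bfp')\bigr)=H\bigl(\lambda\sfA\bfp+(1-\lambda)\sfA\bfp'\bigr) > \lambda H(\sfA\bfp)+(1-\lambda)H(\sfA\bfp').
\end{equation}
Adding the linear term back gives strict concavity of the mutual information.

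The only step needing care is the strict concavity of $H$ on the closed simplex, including boundary points where some outputs have zero probability. I would justify it directly. The function $-t\log t$ is strictly concave on $[0,\infty)$ (continuous at $0$, with strictly negative second derivative on $(0,\infty)$). Hence the sum $\sum_y -q_y\log q_y$ satisfies the concavity inequality coordinatewise, with strict inequality in every coordinate $y$ where $q_y\neq q'_y$; at least one such coordinate exists.

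The main obstacle, injectivity of $\bfp\mapsto P_Y$, is exactly what Proposition~\ref{prop:full_rank_A} supplies. Its hypothesis of $n+1$ distinct points in $[0,1]$ matches the cardinality assumption on $\cX$ in the theorem. If $|\cX|$ were smaller, one would use a subset of columns, which still have full column rank, but that case is not needed here.
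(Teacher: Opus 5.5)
Your proof is correct and follows essentially the paper's route: you get injectivity of $\bfp\mapsto\sfA\bfp$ from Proposition~\ref{prop:full_rank_A} and combine it with a strict-concavity argument on the output side. The paper phrases the concavity gap as $(1-\epsilon)D(P_Y\|P_Y^\epsilon)+\epsilon D(Q_Y\|P_Y^\epsilon)$, which is strictly positive when $P_Y\neq Q_Y$. Since $H(Y|X)$ is linear in $\bfp$, this quantity is exactly the concavity gap of $H(\sfA\bfp)$ that you use.
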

\begin{proof} 
    Let $P_X, Q_X \in \cP_{\cX}$, and let $P_X^\epsilon = (1-\epsilon) P_X +\epsilon Q_X$ for $\epsilon \in (0,1)$, which is also in $\cP_{\cX}$.  Moreover, 
    let $P_X \to P_{Y|X} \to P_Y$,  $Q_X \to P_{Y|X} \to Q_Y$ and $P_X^\epsilon \to P_{Y|X} \to P_Y^\epsilon$. 
    Then, first note that
    \begin{align}
    &I( P_X^\epsilon; P_{Y|X}) - (1-\epsilon)  I(  P_X ; P_{Y|X} ) - \epsilon I(   Q_X; P_{Y|X} ) \notag\\
    & \quad =     D (P_{Y|X} \| P_{Y}^\epsilon| P_X^\epsilon)  -(1-\epsilon)  D (P_{Y|X} \| P_{Y}| P_X)  - \epsilon  D (P_{Y|X} \| Q_{Y}| Q_X) \\
     & \quad = (1-\epsilon) D(P_{Y}\| P_{Y}^\epsilon) +\epsilon  D(Q_{Y}\| P_{Y}^\epsilon). \label{eq:concavity_decomposition}
    \end{align}

    We now show that every $ P_X \in \cP_{\cX}$ induces a distinct output distribution (i.e., $P_X \to P_{Y|X} \to P_Y$ is an injective mapping), which implies that \eqref{eq:concavity_decomposition} is strictly positive and, therefore, the mutual information is strictly concave.  Define $\bfp_X = [  P_X(x_1), \ldots, P_X(x_{n+1})]$, with $x_k \in \cX$ and $\bfp_Y = [P_Y(0), \ldots, P_Y(n) ]$. Then, the mapping $P_X \to P_{Y|X} \to P_{Y}$ can be written as the following system of linear equations:
    \begin{equation}
    \sfA \bfp_X = \bfp_Y \label{eq:linear_system}
    \end{equation}
    where the matrix $\sfA \in \bbR^{n+1 \times n+1}$ is such that
\begin{equation}
    [\sfA]_{i,k} = P_{Y|X}(i-1|x_k), \,  i \in [n+1], \, x_k \in \cX. 
\end{equation}
    From Proposition~\ref{prop:full_rank_A}, we know that $\sfA$ is full rank for any $\cX$ of cardinality $n+1$. Therefore, from standard linear algebra result, it follows that the mapping in \eqref{eq:linear_system} is injective (i.e., every $\bfp_X$ induces a distinct $\bfp_Y$).  Therefore, we conclude that \eqref{eq:concavity_decomposition} is positive and the mutual information is strictly concave.  
\end{proof}

Note that,  by Proposition~\ref{prop:bound_cardinality_cA}, $\cA_n$ has cardinality at most $n+1$. Therefore, Theorem~\ref{thm:stric:_concavity} implies the following corollary.  

\begin{corollary}
    $P_X \mapsto I(P_X; P_{Y|X})$ is strictly concave over $\cP_{\cA_n}$. Consequently, $P_{X^\star}$ is unique.   
\end{corollary}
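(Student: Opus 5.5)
The plan is to reduce everything to the finite set $\cA_n$ and then invoke Theorem~\ref{thm:stric:_concavity}. By Lemma~\ref{lem:Uniquness_A_n}, $\cA_n$ is uniquely determined by $n$, because it depends only on the unique output distribution $P_{Y^\star}$ and on $C(n)$. The same lemma gives $\supp(P_{X^\star})\subseteq\cA_n$ for every CAID. By Proposition~\ref{prop:bound_cardinality_cA}, $|\cA_n|\le n+1$, so every CAID is a discrete distribution on a common finite set, i.e., an element of $\cP_{\cA_n}$.

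For strict concavity on $\cP_{\cA_n}$, I would extend $\cA_n$ to a superset $\cX\supseteq\cA_n$ with $|\cX|=n+1$ exactly. This is possible because $[0,1]$ is uncountable, so we can add arbitrary distinct points. Theorem~\ref{thm:stric:_concavity} then gives strict concavity on $\cP_{\cX}$. Since $\cP_{\cA_n}\subseteq\cP_{\cX}$ is a convex subset, strict concavity restricts to it.

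Alternatively, one can bypass the padding step. Apply Proposition~\ref{prop:full_rank_A} to the padded set: the columns of $\sfA$ indexed by $\cA_n$ are a subset of the columns of a full-rank matrix, hence linearly independent. So $P_X\mapsto P_Y$ is injective on $\cP_{\cA_n}$, and the decomposition \eqref{eq:concavity_decomposition} is strictly positive whenever $P_X\neq Q_X$.

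Uniqueness then follows. Suppose $P_{X_1^\star}\ne P_{X_2^\star}$ were both capacity-achieving. Both lie in $\cP_{\cA_n}$. Their midpoint would then satisfy
\begin{equation}
I\left(\tfrac12 P_{X_1^\star}+\tfrac12 P_{X_2^\star};P_{Y|X}\right) > \tfrac12 C(n)+\tfrac12 C(n) = C(n),
\end{equation}
which contradicts the definition of $C(n)$. The midpoint is a valid input in $\cP_{[0,1]}$, so this contradiction is legitimate.

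An even more direct route uses injectivity alone. Both CAIDs induce the same unique $P_{Y^\star}$, so injectivity of $P_X\mapsto P_Y$ on $\cP_{\cA_n}$ immediately forces $P_{X_1^\star}=P_{X_2^\star}$.

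I expect no substantial obstacle. The only delicate point is that Theorem~\ref{thm:stric:_concavity} is stated for sets of cardinality exactly $n+1$, while $\cA_n$ may be smaller; the padding argument above handles this. One should also note that $\cA_n$ is nonempty, since some CAID exists by the Witsenhausen argument, so $\cP_{\cA_n}$ is nonempty.
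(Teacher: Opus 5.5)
Your proposal is correct and follows the paper's argument. The paper derives the corollary directly from $|\cA_n|\le n+1$ (Proposition~\ref{prop:bound_cardinality_cA}) and Theorem~\ref{thm:stric:_concavity}, with uniqueness following because every CAID lies in $\cP_{\cA_n}$ by Lemma~\ref{lem:Uniquness_A_n}; your padding of $\cA_n$ to exactly $n+1$ points (equivalently, using a subset of the linearly independent columns of $\sfA$) and your explicit midpoint and injectivity deductions just spell out steps the paper leaves implicit.
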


\subsection{Symmetry} 
\label{sec:symmetry}

The binomial channel exhibits the following symmetry
\begin{equation} \label{eq:channel_symmetry}
    P_{Y|X}(y|x) = P_{Y|X}(n-y|1-x), \quad x \in [0,1], \, y \in \{0\}\cup [n]. 
\end{equation}
which  immediately leads to the following result.  

\begin{prop}\label{prop:input_symmetry}
If $X^\star$ is capacity-achieving, then  $X^\star \stackrel{d}{=}1-X^\star$.\footnote{Here $\stackrel{d}{=}$ denotes equality in distribution.} 
\end{prop}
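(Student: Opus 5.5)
The plan is to combine the channel symmetry \eqref{eq:channel_symmetry} with the uniqueness of the CAID established in the Corollary of Sec.~\ref{sec:uniqness}. Let $X^\star$ be capacity-achieving with output $Y^\star$, and define $\tilde X = 1-X^\star$. First, we show that $\tilde X$ induces the reflected output $\tilde Y \stackrel{d}{=} n-Y^\star$. Indeed, for every $y\in\{0,\ldots,n\}$,
\begin{equation}
P_{\tilde Y}(y)=\bbE\left[P_{Y|X}(y|1-X^\star)\right]=\bbE\left[P_{Y|X}(n-y|X^\star)\right]=P_{Y^\star}(n-y),
\end{equation}
where the middle equality is \eqref{eq:channel_symmetry}.

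Next, we check that $I(\tilde X;\tilde Y)=I(X^\star;Y^\star)$. Writing mutual information as $I(X;Y)=\bbE\left[\kl{P_{Y|X}(\cdot|X)}{P_Y}\right]$, we have
\begin{equation}
\kl{P_{Y|X}(\cdot|1-x)}{P_{\tilde Y}}=\kl{P_{Y|X}(\cdot|x)}{P_{Y^\star}}
\end{equation}
for every $x\in[0,1]$. This holds because both arguments of the divergence are permuted by the same bijection $y\mapsto n-y$ of the finite output alphabet, and relative entropy is invariant under such a relabeling. In the notation of Lemma~\ref{lem:KKT}, this says $i(1-x;P_{\tilde Y})=i(x;P_{Y^\star})$. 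Averaging over $X^\star$ gives $I(\tilde X;\tilde Y)=C(n)$, so $\tilde X$ is also capacity-achieving. Alternatively, the same identity shows directly that $P_{\tilde X}$ satisfies the KKT conditions of Lemma~\ref{lem:KKT}, since $\supp(P_{\tilde X})=1-\supp(P_{X^\star})$.

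Finally, the Corollary after Theorem~\ref{thm:stric:_concavity} states that the CAID is unique. Hence $P_{1-X^\star}=P_{X^\star}$, that is, $X^\star\stackrel{d}{=}1-X^\star$.

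No step here is a real obstacle; the substantive work is the uniqueness result, which we take as given. The only care needed is to state the relabeling invariance of relative entropy correctly and to verify the output reflection, both of which are immediate from \eqref{eq:channel_symmetry}.
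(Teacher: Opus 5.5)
Your proof is correct and follows the paper's approach. The paper only says the result follows immediately from the channel symmetry, and it places the proposition right after the uniqueness corollary. The intended argument is therefore yours: the reflection shows $1-X^\star$ is also capacity-achieving, and uniqueness of the capacity-achieving input distribution gives $X^\star \stackrel{d}{=} 1-X^\star$.
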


\subsection{On the Location of Support Points}
\label{sec:location}

 Following the same lines of~\cite[Sec.~V]{abou2001capacity} we have that: 
\begin{prop}\label{prop:0_is_optimal}
    Let $P_{X^\star}$ be a CAID. Then, $\{0,1\} \subseteq \supp(P_{X^\star})$.
\end{prop}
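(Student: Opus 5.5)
We argue by contradiction using the KKT conditions of Lemma~\ref{lem:KKT}. By the symmetry in Proposition~\ref{prop:input_symmetry}, it suffices to show $0 \in \supp(P_{X^\star})$; then $1 \in \supp(P_{X^\star})$ follows from $X^\star \stackrel{d}{=} 1-X^\star$. So suppose $0 \notin \supp(P_{X^\star})$. Since the CAID is discrete with finitely many atoms (Proposition~\ref{prop:bound_cardinality_cA} and the uniqueness corollary), there is $\delta>0$ with $P_{X^\star}([0,\delta))=0$. Our aim is to show that $i(x;P_{Y^\star})$ is strictly decreasing on $[0,x_1]$, where $x_1$ is the smallest support point. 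This gives $i(0;P_{Y^\star}) > i(x_1;P_{Y^\star}) = C(n)$, contradicting the inequality part of Lemma~\ref{lem:KKT}.

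\textbf{Step 1.} We write the derivative explicitly. Since $i(x;P_{Y^\star}) = -H(\mathrm{Bin}(n,x)) - \sum_y P_{Y|X}(y|x)\log P_{Y^\star}(y)$, differentiating and using Lemma~\ref{lem:channel_transformations} (equivalently, the first formula of Proposition~\ref{prop:derivatives_info_density}) gives
\begin{equation*}
i'(x;P_{Y^\star}) = n\,\bbE^{n-1}\!\left[\log\frac{x\,(1-\bbE[X^\star|Y^\star=Y'+1])}{(1-x)\,\bbE[X^\star|Y^\star=Y']}\right],
\end{equation*}
where $Y'\sim \mathrm{Bin}(n-1,x)$. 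Here the conditional means are taken under $P_{X^\star}$ with $n$ trials, and the identity $\frac{(n-y)P(y)}{(y+1)P(y+1)} = \frac{\bbE[1-X|Y=y+1]}{\bbE[X|Y=y]}$ converts the ratio $P_{Y^\star}(y+1)/P_{Y^\star}(y)$ into conditional means.

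\textbf{Step 2.} We sign each term. Every atom of $P_{X^\star}$ lies in $[x_1,1]$, so $\bbE[X^\star|Y^\star=y] \ge x_1$ for every $y$, and symmetrically $1-\bbE[X^\star|Y^\star=y+1] \le 1-x_1$. For $x<x_1$, each summand therefore satisfies
\begin{equation*}
\log\frac{x}{\bbE[X^\star|Y^\star=y]} + \log\frac{1-\bbE[X^\star|Y^\star=y+1]}{1-x} \le \log\frac{x}{x_1} + \log\frac{1-x_1}{1-x} < 0.
\end{equation*}
Hence $i'(x;P_{Y^\star})<0$ on $(0,x_1)$. Continuity of $i(\cdot;P_{Y^\star})$ on $[0,1]$ holds because $P_{Y^\star}(y)>0$ for all $y$; this is true since $P_{X^\star}$ has an atom in $(0,1)$ when $0,1$ are excluded. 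With continuity in hand, $i(0;P_{Y^\star}) > i(x_1;P_{Y^\star}) = C(n)$, which is the desired contradiction.

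\textbf{Main obstacle.} The step requiring the most care is Step 1: checking that the derivative really collapses into the expectation of that log-ratio with the correct index shift. It comes from $\frac{d}{dx}P_{Y|X}(y|x) = n[P^{n-1}(y-1|x) - P^{n-1}(y|x)]$ followed by a summation by parts. The case $n=1$ is handled directly, or by the same formula with $Y'\equiv 0$. Everything else is the elementary bound in Step 2.
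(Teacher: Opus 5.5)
Your argument is correct, and it reaches the contradiction by a different route from the paper's.

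The paper perturbs the location of the leftmost atom $x_0>0$. It uses the score form $\frac{\rmd}{\rmd x}P_{Y|X}(y|x)=\frac{y-nx}{x(1-x)}P_{Y|X}(y|x)$ to write $\frac{\partial}{\partial x_0}I(X;Y)=\frac{P_X(x_0)}{x_0(1-x_0)}\,\bbE[(Y-nx_0)\log\frac{P_{Y|X}(Y|x_0)}{P_Y(Y)}\mid X=x_0]$. It then shows that $y\mapsto\log\frac{P_{Y|X}(y|x_0)}{P_Y(y)}$ is decreasing, because every other atom lies to the right of $x_0$. The Chebyshev-type association inequality \cite[Lemma~1]{abou2001capacity} then makes the derivative negative, so moving $x_0$ leftward would increase $I(X;Y)$. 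That derivative equals $P_X(x_0)\,i'(x_0;P_Y)$, so both proofs ultimately hinge on the sign of $i'$; they differ in how that sign is obtained.

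You use the finite-difference form $n\,\bbE^{n-1}[f(Y+1)-f(Y)]$ together with the Tweedie-type identity of Lemma~\ref{lem:channel_transformations}. This turns the sign question into a termwise comparison of $x$ with posterior means that are confined to $[x_1,1-x_1]$. You need neither the monotone-likelihood-ratio step nor the covariance inequality. The contradiction then comes straight from the KKT inequality at $x=0$, rather than from perturbing the optimizer.

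Your route is more elementary and shows that $i(\cdot;P_{Y^\star})$ is strictly decreasing on the whole gap $[0,x_1]$. However, it relies on a binomial-specific identity for posterior means. The paper's template from \cite{abou2001capacity} carries over to channels where only a monotone likelihood ratio structure is available.

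One small remark: you do not need discreteness to produce $\delta$ and $x_1$. The support is closed, so if $0\notin\supp(P_{X^\star})$, its minimum $x_1$ exists and is automatically positive.
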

\begin{proof}
    By using symmetry (Proposition~\ref{prop:input_symmetry}), we can just prove the result for the point at $x=0$. Let $0\le x_0 < x_1  < \ldots < x_N \le 1$ be the support points of $P_X$. Suppose that $x_0>0$. Then, we have that
    \begin{align}
        &\frac{\del}{\del x_0}I(X;Y) = P_X(x_0) \frac{\del}{\del x_0} \expcnd{\log\frac{P_{Y|X}(Y|x_0)}{P_Y(Y)}}{X=x_0} \\
        &= \frac{P_X(x_0)}{x_0(1-x_0)} \expcnd{(Y-nx_0)\log\frac{P_{Y|X}(Y|x_0)}{P_Y(Y)}}{X=x_0}. \label{eq:der_mut_inf}
    \end{align}
    Next, we prove that the function $f: y\mapsto \log\frac{P_{Y|X}(y|x_0)}{P_Y(y)}$ is decreasing. Note that
    \begin{align}
        &\frac{P_Y(y)}{P_{Y|X}(y|x_0)} = P_{X}(x_0) + \sum_{i=1}^N P_{X}(x_i) \left(\frac{x_i}{x_0}\right)^y \left(\frac{1-x_i}{1-x_0}\right)^{n-y} = P_{X}(x_0) + \sum_{i=1}^N P_{X}(x_i) \left(\frac{\frac{1}{x_0}-1}{\frac{1}{x_i}-1}\right)^y \left(\frac{1-x_i}{1-x_0}\right)^{n}
    \end{align}
    is an increasing function of $y$, since $x_0 < x_i$ for $i\ge1$. As a consequence, the function $f$ is decreasing. By noting that $\expcnd{Y}{X=x_0} = n x_0$ and by applying \cite[Lemma 1]{abou2001capacity} to \eqref{eq:der_mut_inf}, we get that $\frac{\del}{\del x_0}I(X;Y)<0$ for all $0<x_0<x_1$. This implies that $x_0 = 0 \in \supp(P_{X^\star})$. 
\end{proof}

An important consequence of Proposition~\ref{prop:0_is_optimal} is given next. 
\begin{corollary}\label{cor:rel_cap_probY}
The channel capacity is equal to
\begin{equation}
    C(n) = \log\frac{1}{P_{Y^\star}(0)} = \log\frac{1}{P_{Y^\star}(n)}.
\end{equation}
\end{corollary}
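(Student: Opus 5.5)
The plan is to evaluate the KKT equality of Lemma~\ref{lem:KKT} at the endpoint $x=0$, which Proposition~\ref{prop:0_is_optimal} places in $\supp(P_{X^\star})$. At $x=0$, with the convention $0^0=1$, the channel output is deterministic: $P_{Y|X}(y|0)=\mathbbm{1}\{y=0\}$. The information density therefore collapses to a single term,
\begin{equation}
i(0;P_{Y^\star}) = \kl{P_{Y|X}(\cdot|0)}{P_{Y^\star}} = 1\cdot\log\frac{1}{P_{Y^\star}(0)}.
\end{equation}

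Since $0\in\supp(P_{X^\star})$, \eqref{eq:KKT_equality} gives $i(0;P_{Y^\star})=C(n)$. This yields $C(n)=\log\frac{1}{P_{Y^\star}(0)}$.

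For the second equality, I would apply the same argument at $x=1$. This point is also in the support by Proposition~\ref{prop:0_is_optimal}, and there $P_{Y|X}(y|1)=\mathbbm{1}\{y=n\}$, so $i(1;P_{Y^\star})=\log\frac{1}{P_{Y^\star}(n)}=C(n)$. Equivalently, Proposition~\ref{prop:input_symmetry} together with the channel symmetry \eqref{eq:channel_symmetry} gives $P_{Y^\star}(y)=P_{Y^\star}(n-y)$, and in particular $P_{Y^\star}(0)=P_{Y^\star}(n)$.

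There is no real obstacle here. The only point to check is that $i(x;P_{Y^\star})$ is well defined at the endpoints, which requires $P_{Y^\star}(0)>0$. This holds because $P_{X^\star}$ has positive mass at $0$, so $P_{Y^\star}(0)\ge P_{X^\star}(0)>0$. The KKT equality then applies pointwise at the endpoint support points, since these are atoms of the discrete $P_{X^\star}$.
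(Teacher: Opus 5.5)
Your proposal is correct and matches the paper's proof. Both evaluate the KKT equality at the support point $x=0$ (guaranteed by Proposition~\ref{prop:0_is_optimal}), where the output collapses to $Y=0$, and then use symmetry, or equivalently the same argument at $x=1$, to obtain $P_{Y^\star}(0)=P_{Y^\star}(n)$; your check that $P_{Y^\star}(0)>0$ is a detail the paper leaves implicit.
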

\begin{proof}
    Thanks to Proposition \ref{prop:0_is_optimal}, we know that $0 \in \supp(P_{X^\star})$. By using the KKT condition \eqref{eq:KKT_equality}, we can write
    \begin{equation*}
        C(n) = i(0;P_{Y^\star}) = \sum_{y=0}^n \binom{n}{y} 0^y \log\frac{\binom{n}{y} 0^y}{P_{Y^\star}(y)} = \log\frac{1}{P_{Y^\star}(0)}.
    \end{equation*}
    By symmetry, we can argue that $P_{Y^\star}(0) = P_{Y^\star}(n)$. 
\end{proof}

We next show that there is at most one support point in the interval $\left(0,\frac{1}{n} \right]$ and, by symmetry, at most one point in $\left[1-\frac{1}{n},1 \right)$.  The proof technique we use was developed in \cite{mceliece1979practical} in the context of Poisson noise channels. 
\begin{prop} \label{prop:loc_info}
    For all $n\ge 1$, we have
    \begin{align}
        \left| \supp(P_{X^\star}) \cap \left(0,\frac{1}{n}\right] \right|&\le 1, \label{eq:limit_supp_1} \\
        \left| \supp(P_{X^\star}) \cap \left[1-\frac{1}{n},1\right) \right|&\le 1.\label{eq:limit_supp_2}
    \end{align}
\end{prop}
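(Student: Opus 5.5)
We will prove \eqref{eq:limit_supp_1}; \eqref{eq:limit_supp_2} then follows from the channel symmetry \eqref{eq:channel_symmetry} together with Proposition~\ref{prop:input_symmetry}. The plan follows the McEliece-type argument: the information density is studied on the interval $(0,\frac1n]$, and we show it cannot equal $C(n)$ at two distinct points there. To that end we rewrite $i(x;P_{Y^\star})$ using Corollary~\ref{cor:rel_cap_probY}:
\begin{equation*}
i(x;P_{Y^\star}) - C(n) = \sum_{y=0}^n P_{Y|X}(y|x)\log\frac{P_{Y|X}(y|x)\,P_{Y^\star}(0)}{P_{Y^\star}(y)}.
\end{equation*}
At $x=0$ this quantity is $0$. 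We want to show it is strictly convex on $[0,\frac1n]$, or at least that it has at most two zeros there counting $x=0$. Since $0\in\cA_n$ and $\cA_n\supseteq \supp(P_{X^\star})$, this would leave at most one point of $\cA_n$ in $(0,\frac1n]$.

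The main tool is the second derivative from Proposition~\ref{prop:derivatives_info_density},
\begin{equation*}
i''(x;P_{Y^\star}) = \frac{n}{x(1-x)} + \frac{G(x)}{(1-x)^2}.
\end{equation*}
We will show that $n(1-x)+xG(x)>0$ for $x\in(0,\frac1n]$, so that $i''>0$ there.

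The key observation concerns the log terms in $G$. By Lemma~\ref{lem:channel_transformations}, each log term equals $\log$ of a ratio built from $P_{Y^\star}$ values, and it can be rewritten through the conditional means. By Lemma~\ref{lem:cond_exp_non_increasing}, $\expcnd{X}{Y=y}\le \expcnd{X}{Y=y+1}$ and $1-\expcnd{X}{Y=y+2}\le 1-\expcnd{X}{Y=y+1}$. The logarithm in $G$ is therefore $\le 0$, so $G\le 0$ and the sign of $i''$ is not automatic. We need a quantitative lower bound on $G$.

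This lower bound is the main obstacle. The plan is to bound $-\log$ of each ratio. I would first try to control $xG(x)$ using the fact that for small $x$ the binomial weights $P_{Y|X}(y|x)(n-y)(n-y-1)$ concentrate on small $y$. Using the identity $(n-y)(n-y-1)P_{Y|X}(y|x)x^2/(1-x)^2=(y+1)(y+2)P_{Y|X}(y+2|x)$, one can reindex $xG(x)$ as $\frac{1-x}{x}\sum_y y(y-1)P_{Y|X}(y|x)\,\ell_y$, where $\ell_y$ denotes the nonpositive log term. Telescoping the log terms and using $\sum_y y(y-1)P_{Y|X}(y|x)=n(n-1)x^2$ suggests a bound of the form $xG\ge -n(n-1)x\cdot c$. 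With $x\le 1/n$ this should be compared to $n(1-x)$.

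If this direct bound is too weak, I would fall back on McEliece's original approach. That approach writes $i(x)-C$ relative to the point $0$ and shows that $x\mapsto [i(x;P_{Y^\star})-C(n)]$ is convex on $[0,1/n]$ by comparing it to a convex majorant: the log terms are dominated by $\log P_{Y^\star}(0)/P_{Y^\star}(y)\ge 0$ with increasing increments. Once convexity on $[0,\frac1n]$ is established, the KKT conditions finish the argument. The function $i(x)-C$ is $\le0$ on the interval and vanishes at $0$. If it vanished at two points $0<a<b\le\frac1n$, strict convexity would force it to be negative at $a$, which is a contradiction.
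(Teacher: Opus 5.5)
\textbf{There is a genuine gap: the key inequality you aim for is false.} Your plan, and your McEliece-style fallback, both rest on showing $n(1-x)+xG(x)>0$, i.e.\ $i''(x;P_{Y^\star})>0$ or convexity of $i$ on all of $(0,\frac1n]$. No bound on the log terms can deliver this.

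\emph{Counterexample, $n=2$.} Table~\ref{tab:CapacityExpression} gives $P_{Y^\star}=(\frac{8}{17},\frac{1}{17},\frac{8}{17})$. Lemma~\ref{lem:channel_transformations} turns the single log term into $\log\frac{P_{Y^\star}(1)^2}{4P_{Y^\star}(0)P_{Y^\star}(2)}=-8\log 2$. Hence $i''(x;P_{Y^\star})=\frac{2}{x(1-x)}-16\log 2$, which is negative at $x=\frac12=\frac1n$; equivalently, $n(1-x)+xG(x)=1-2\log 2<0$ there.

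\emph{Why this is structural.} For $n\ge 2$, any support point $a\in(0,\frac1n]\subset(0,1)$ is an interior maximizer of $i(\cdot;P_{Y^\star})$ by the KKT conditions, so $i'(a)=0$ and $i''(a)\le 0$. Positivity of $i''$ on $(0,\frac1n]$ would therefore exclude every support point from $(0,\frac1n]$. That is stronger than the proposition and false for $n=2$, where $\frac12$ carries mass $\frac{2}{17}$.

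\textbf{The missing idea: monotonicity, not positivity.} You only need $i''$ to change sign at most once on $(0,\frac1n]$. Write
\begin{equation*}
x(1-x)\,i''(x;P_{Y^\star}) = n+\sum_{y=0}^{n-2}\binom{n}{y}(n-y)(n-y-1)\,x^{y+1}(1-x)^{n-y-1}\,\ell_y ,
\end{equation*}
where the $\ell_y\le 0$ are exactly the log terms you already showed nonpositive via Lemma~\ref{lem:cond_exp_non_increasing}.

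Each $x^{y+1}(1-x)^{n-y-1}$ is increasing on $[0,\frac{y+1}{n}]\supseteq[0,\frac1n]$. So the right-hand side is nonincreasing on $(0,\frac1n]$ and equals $n>0$ at $x=0$. Hence $i$ is strictly convex on some $(0,x_0)$ and concave on $(x_0,\frac1n]$.

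\begin{itemize}
\item On the convex piece, $i(0)=C(n)\ge i$ excludes any zero $a$ of $i-C(n)$. Such a zero would force either $i>C(n)$ just to the right of $a$, or $i'(a)>0$ at an interior maximizer.
\item On the concave piece, $\{i=C(n)\}=\{i\ge C(n)\}$ is an interval. Since $|\cA_n|\le n+1$ (Proposition~\ref{prop:bound_cardinality_cA}), it is at most a single point.
\end{itemize}

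This is essentially the paper's argument. The case $n=1$ is immediate, since the support is $\{0,1\}$.
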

\begin{proof}
    For $n=1$, the claim follows from the fact that there are only two mass points at $\{0,1\}$. Next, we consider the case $n\ge 2$. 
    
    From expression~\eqref{eq:i_doubleprime} of Proposition~\ref{prop:der_info_den}, we have that 
    \begin{align}
        &x(1-x)i''(x;P_{Y^\star}) \\
        &= n+\frac{x}{1-x}  \expcnd{(n-Y)(n-Y-1)\log\frac{\expcnd{X^\star}{Y=Y}}{\expcnd{1-X^\star}{Y=Y+1}}\frac{\expcnd{1-X^\star}{Y=Y+2}}{\expcnd{X^\star}{Y=Y+1}}}{X=x} \\
        &= n+\sum_{y=0}^{n-2} \binom{n}{y} x^{y+1}(1-x)^{n-y-1} (n-y)(n-y-1)\log\frac{\expcnd{X^\star}{Y=y}}{\expcnd{X^\star}{Y=y+1}}\frac{\expcnd{X^\star}{Y=n-y-2}}{\expcnd{X^\star}{Y=n-y-1}} \label{eq:sec_der_inf_den_limited_supp},
    \end{align} 
    where in the last step we have exploited the channel symmetry~\eqref{eq:channel_symmetry}.
    Since $y\mapsto \expcnd{X^\star}{Y=y}$ is a non-decreasing function (see Lemma~\ref{lem:cond_exp_non_increasing}), all the terms in the summation of~\eqref{eq:sec_der_inf_den_limited_supp} are nonpositive. Moreover, the functions $x\mapsto x^{y+1}(1-x)^{n-y-1}$ for $y=0, \ldots, n-2$ are increasing for $x \le \frac{1}{n}$. As a consequence, the function $x\mapsto g(x)=x(1-x)i''(x;P_{Y^\star})$ is non-increasing for $x \in \left(0,\frac{1}{n}\right]$. Since $g(0)=n\ge 2$, the function $g$ has at most one zero in the interval $\left(0,\frac{1}{n}\right]$. Then, $i''$ has at most one zero in the interval $\left(0,\frac{1}{n}\right]$, hence $i(x;P_{Y^\star})-C(n)$ has at most one zero crossing in the interval $\left(0,\frac{1}{n}\right]$. This proves~\eqref{eq:limit_supp_1}. To prove~\eqref{eq:limit_supp_2}, we can use symmetry $X^\star \stackrel{d}{=} 1-X^\star$ from Proposition~\ref{prop:input_symmetry}.
\end{proof}

\subsection{Bounds on the Probabilities}
\label{sec:probabilities}

We begin by recalling that for $P_X \to P_{Y|X} \to P_Y$ and $Q_X  \to P_{Y|X} \to Q_{Y}$, we have that
\begin{equation}
    \kl{P_X}{Q_X} = \kl {P_Y}{Q_Y} + D (P_{X|Y} \| Q_{X|Y} |P_Y)  ,\label{eq:KL_identity}
\end{equation}
where the conditional relative entropy is defined as
\begin{equation}
    D (P_{X|Y} \| Q_{X|Y} |P_Y) = \sum_{y=0}^n P_Y(y) \kl{P_{X|Y}(\cdot|y)}{Q_{X|Y}(\cdot|y)}
\end{equation}

The key to finding bounds on the probabilities is the following lemma. 

\begin{lemma}\label{lem:expression_for_prob} For $x^\star \in \supp(P_{X^\star})$
    \begin{equation}
    P_{X^\star}(x^\star) = \rme^{ -C(n) -  \cD(x^\star) }, 
\end{equation}
where $\cD(x^\star) =  D  \left(\delta_{x^\star} \| P_{X^\star|Y} |P_{Y|X }(\cdot|x^\star) \right)$ and $\delta_{x^\star}$ is the single point mass distribution on $x^\star$. 
\end{lemma}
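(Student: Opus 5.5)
The plan is to apply Bayes' rule at the support point $x^\star$ and then average over the channel output using the KKT equality. Since $P_{X^\star}$ is discrete (Section~\ref{sec:discretness}), $x^\star$ is an atom, so $P_{X^\star}(x^\star)>0$. For every $y$ with $P_{Y^\star}(y)>0$, Bayes' rule gives
\begin{equation*}
P_{X^\star|Y}(x^\star|y) = \frac{P_{X^\star}(x^\star)\,P_{Y|X}(y|x^\star)}{P_{Y^\star}(y)} .
\end{equation*}
Taking logarithms and rearranging, we get
\begin{equation*}
\log P_{X^\star}(x^\star) = \log P_{X^\star|Y}(x^\star|y) - \log\frac{P_{Y|X}(y|x^\star)}{P_{Y^\star}(y)}.
\end{equation*}
This identity holds pointwise in $y$ for every $y$ with $P_{Y|X}(y|x^\star)>0$. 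Such $y$ automatically satisfy $P_{Y^\star}(y)>0$, because $P_{Y^\star}(y)\ge P_{X^\star}(x^\star)P_{Y|X}(y|x^\star)$.

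Next, I average both sides with respect to $y\sim P_{Y|X}(\cdot|x^\star)$. The left side is constant in $y$. The last term on the right averages to $i(x^\star;P_{Y^\star})$, and by the KKT equality \eqref{eq:KKT_equality} of Lemma~\ref{lem:KKT} this equals $C(n)$, because $x^\star\in\supp(P_{X^\star})$. The remaining term is $\bbE[\log P_{X^\star|Y}(x^\star|Y)]$, with $Y\sim P_{Y|X}(\cdot|x^\star)$.

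It remains to identify this last term as $-\cD(x^\star)$. For a point mass $\delta_{x^\star}$ and any distribution $Q$ on $[0,1]$, the relative entropy is $\kl{\delta_{x^\star}}{Q}=\log\frac{1}{Q(\{x^\star\})}$. This is finite here because $P_{X^\star|Y}(x^\star|y)>0$ on the averaging set. Hence the conditional relative entropy, averaged under $P_{Y|X}(\cdot|x^\star)$, is exactly $-\bbE[\log P_{X^\star|Y}(x^\star|Y)]=\cD(x^\star)$. Combining the three pieces gives $\log P_{X^\star}(x^\star) = -\cD(x^\star) - C(n)$, and exponentiating yields the claim.

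The only delicate point is the support bookkeeping. The boundary points $x^\star\in\{0,1\}$ give degenerate channel laws: for $x^\star=0$ only $y=0$ has positive probability. The argument still holds there, since the average simply runs over those $y$, and it reproduces Corollary~\ref{cor:rel_cap_probY} as a consistency check. Apart from this, the proof is a direct manipulation and I do not expect any real obstacle.
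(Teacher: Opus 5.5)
Your proof is correct and is essentially the paper's argument. The paper starts from the KKT equality $C(n)=\kl{P_{Y|X}(\cdot|x^\star)}{P_{Y^\star}}$ and applies the chain rule \eqref{eq:KL_identity} to the inputs $\delta_{x^\star}$ and $P_{X^\star}$, which yields $\log\frac{1}{P_{X^\star}(x^\star)}=C(n)+\cD(x^\star)$; your pointwise Bayes identity averaged over $Y\sim P_{Y|X}(\cdot|x^\star)$ is that same chain rule written out, and your atomicity and positivity bookkeeping simply makes explicit what the paper leaves implicit.
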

\begin{proof}
     Using the  equality condition  in the KKT \eqref{eq:KKT_equality}, we have that for $x^{\star} \in \supp(P_{X^{\star}})$
\begin{align}
   C(n) &= \kl{P_{Y|X}(\cdot|x^\star)}{P_{Y^\star}}\\
    &=  \kl{ P_{Y_{x^\star} }  }{P_{Y^\star}} \label{eq:Def_output_P_Y}\\
    &= \kl{\delta_{x^\star} }{  P_{X^\star} } -    D (\delta_{x^\star} \| P_{X^\star|Y} |P_{Y_{x^\star} }) \label{eq:apply_KL_identity}\\
    &= \log \frac{1}{P_{X^\star} (x^\star)} -    D (\delta_{x^\star} \| P_{X^\star|Y} |P_{Y_{x^\star} }), \label{eq:last_step_P(x)}
\end{align}
where  \eqref{eq:Def_output_P_Y} follows by defining $\delta_{x^\star} \to P_{Y|X} \to P_{Y_{x^\star}}$;  and \eqref{eq:apply_KL_identity} follows by using \eqref{eq:KL_identity}.

By rearranging~\eqref{eq:last_step_P(x)}, and recognizing that $P_{Y_{x^\star}}(\cdot) = P_{Y|X}(\cdot |x ^\star)$,  we arrive at: for $x^{\star} \in \supp(P_{X^{\star}})$ 
\begin{equation}
    P_{X^\star}(x^\star) = \rme^{ -C(n) -  D  \left(\delta_{x^\star} \| P_{X^\star|Y} |P_{Y|X }(\cdot|x^\star) \right) }. 
\end{equation}
\end{proof}

The term $\cD(x^\star)$ measures how on average the $P_{X^\star|Y}$ is close to a point measure. We refer to $\cD(x^\star)$ as the \emph{crest-factor}.\footnote{In signal processing, the crest-factor  measures how peaky the waveform is. Specifically, it compares the peak amplitude of a waveform relative to its average value.}  

From Lemma~\ref{lem:expression_for_prob}, by  using $\cD(x^\star) \ge 0$, which follows from the nonnegativity of   relative-entropy, we immediately arrive at the following bound: 
\begin{equation}
    P_{X^\star}(x^\star) \le  \rme^{ -C(n) }, \qquad  x^\star \in \supp(P_{X^\star}). \label{eq:simple_prob_bound}
\end{equation}
The bound in \eqref{eq:simple_prob_bound} might appear ineffective due to the fact that the capacity is unknown.  However, note that for any $\tilde{X}$, from the definition of the capacity we have that 
\begin{equation}
    P_{X^\star}(x^\star) \le  \rme^{ - I(\tilde{X}; \tilde{Y}) },  \qquad  x^\star \in \supp(P_{X^\star}),
\end{equation}
which implies that any good guess results in an upper bound.

The next result improves upon the bound in \eqref{eq:simple_prob_bound}.

\begin{prop}
 \label{prop:crest_factor_bounds}
\text{ }
\begin{itemize}
    \item First Bound: for $x \in \supp(P_{X^\star}) \setminus \{0,1\} $
    \begin{equation}\label{eq:second_bound_factor_crest}
      \cD(x) \ge  \frac{(1-x)^n \log(1-x)^n+x^n \log(x^n)}{(1-x)^n+x^n-1}.
    \end{equation}
    \item Second Bound: for $x\in \supp(P_{X^\star}) \setminus\{\frac{1}{2}\}$
    \begin{equation}\label{eq:first_bound_factor_crest}  
    \cD(x) \ge \bbE \left[ \log \left( 1 + \left( \frac{x}{1-x} \right)^{n-2Y} \right) | X =x  \right].
    \end{equation}
\end{itemize}
    
\end{prop}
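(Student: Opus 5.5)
The plan starts from the observation that, for each $y$, the relative entropy between a point mass and the posterior is just a log-probability. Concretely, $\kl{\delta_{x}}{P_{X^\star|Y}(\cdot|y)}=-\log P_{X^\star|Y}(x|y)$. Hence
\begin{equation*}
\cD(x)=\sum_{y=0}^n P_{Y|X}(y|x)\,\log\frac{1}{P_{X^\star|Y}(x|y)},
\end{equation*}
and every summand is nonnegative. Both bounds will come from upper bounding the posterior probability $P_{X^\star|Y}(x|y)=P_{X^\star}(x)P_{Y|X}(y|x)/P_{Y^\star}(y)$. We do this by keeping only a few terms of the mixture $P_{Y^\star}(y)=\sum_{x'}P_{X^\star}(x')P_{Y|X}(y|x')$, namely the term at $x$ itself and one or two other support points whose masses we can control.

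For the second bound, we use symmetry (Proposition~\ref{prop:input_symmetry}). Since $x\ne \frac12$, the point $1-x$ is a distinct support point with $P_{X^\star}(1-x)=P_{X^\star}(x)$. Dropping all other terms of the mixture gives
\begin{equation*}
P_{X^\star|Y}(x|y)\le \frac{P_{Y|X}(y|x)}{P_{Y|X}(y|x)+P_{Y|X}(y|1-x)}=\frac{1}{1+\left(\frac{x}{1-x}\right)^{n-2y}}.
\end{equation*}
This uses $P_{Y|X}(y|1-x)/P_{Y|X}(y|x)=\left(\frac{1-x}{x}\right)^{y}\left(\frac{x}{1-x}\right)^{n-y}$. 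Taking $-\log$ and averaging over $Y\sim P_{Y|X}(\cdot|x)$ gives \eqref{eq:second_bound_factor_crest}'s companion \eqref{eq:first_bound_factor_crest} directly; the endpoints $x\in\{0,1\}$ cause no trouble here because the expression is well defined.

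For the first bound, we instead exploit the endpoints $0,1\in\supp(P_{X^\star})$ (Proposition~\ref{prop:0_is_optimal}), whose masses are known exactly via Lemma~\ref{lem:expression_for_prob}. At an endpoint the channel output is deterministic, so $\cD(0)=\cD(1)=0$ and $P_{X^\star}(0)=P_{X^\star}(1)=\rme^{-C(n)}$. For the interior point $x$ we also have $P_{X^\star}(x)=\rme^{-C(n)-\cD(x)}$. Write $a=(1-x)^n=P_{Y|X}(0|x)$ and $b=x^n=P_{Y|X}(n|x)$.

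For $y=0$, the mass at $0$ contributes to $P_{Y^\star}(0)$. This gives $P_{X^\star|Y}(x|0)\le \frac{a\,\rme^{-C-\cD}}{a\,\rme^{-C-\cD}+\rme^{-C}}=\frac{a}{a+\rme^{\cD(x)}}$, and similarly $P_{X^\star|Y}(x|n)\le \frac{b}{b+\rme^{\cD(x)}}$. We discard the other $y$ terms, which are nonnegative. Then, using $\log(a+\rme^{\cD})\ge \cD$, we get
\begin{equation*}
\cD(x)\ge a\log\frac{a+\rme^{\cD(x)}}{a}+b\log\frac{b+\rme^{\cD(x)}}{b}\ge (a+b)\cD(x)-a\log a-b\log b .
\end{equation*}
Since $x\in(0,1)$ and $n\ge1$, we have $a+b<1$. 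Rearranging therefore gives $\cD(x)(1-a-b)\ge-(a\log a+b\log b)$, which is exactly \eqref{eq:second_bound_factor_crest}.

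The main subtlety is this last step: the bound is implicit because $\cD(x)$ appears on both sides through $P_{X^\star}(x)$. The key is to recognize that the self-referential inequality is linear in $\cD(x)$ after the crude estimate $\log(a+\rme^{\cD})\ge\cD$, and that the coefficient $1-a-b$ is positive. If $n=1$ the support is $\{0,1\}$, so the statement is vacuous for interior points.
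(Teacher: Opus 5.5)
Your second bound is the same argument as the paper's: symmetry of $P_{X^\star}$, and keep only the mixture terms at $x$ and $1-x$. Your first bound, however, rests on a false step. It is not true that $\cD(0)=0$ and $P_{X^\star}(0)=\rme^{-C(n)}$. Although $Y=0$ deterministically when $X=0$, the posterior $P_{X^\star|Y}(\cdot|0)$ is not a point mass at $0$, because every interior support point also produces $Y=0$ with positive probability. Hence $\cD(0)=\log\frac{P_{Y^\star}(0)}{P_{X^\star}(0)}>0$ precisely in the situation where the first bound is applied, since $x\in(0,1)$ is itself a support point. What Corollary~\ref{cor:rel_cap_probY} gives is $P_{Y^\star}(0)=\rme^{-C(n)}$, not $P_{X^\star}(0)=\rme^{-C(n)}$. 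For $n=2$, $P_{X^\star}(0)=\frac{15}{34}\neq\frac{8}{17}=\rme^{-C(2)}$.

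Consequently your intermediate inequality $P_{X^\star|Y}(x|0)\le \frac{a}{a+\rme^{\cD(x)}}$ is false. It would force $P_{Y^\star}(0)\ge a\,\rme^{-C(n)-\cD(x)}+\rme^{-C(n)}>\rme^{-C(n)}$. The exact value is $P_{X^\star|Y}(x|0)=a\,\rme^{-\cD(x)}$, which strictly exceeds $\frac{a}{a+\rme^{\cD(x)}}$; for $n=2$, $x=\frac12$ these are $\frac{1}{16}$ versus $\frac{1}{17}$. Your final inequality comes out correct only because the later relaxation $\log(a+\rme^{\cD})\ge\cD$ discards exactly the spurious term.

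The repair is simpler than what you wrote. Combine $P_{Y^\star}(0)=P_{Y^\star}(n)=\rme^{-C(n)}$ with $P_{X^\star}(x)=\rme^{-C(n)-\cD(x)}$ to get the exact identities $-\log P_{X^\star|Y}(x|0)=\cD(x)-\log a$ and $-\log P_{X^\star|Y}(x|n)=\cD(x)-\log b$. Dropping the remaining nonnegative summands then gives $\cD(x)\ge (a+b)\cD(x)-a\log a-b\log b$, which is the paper's proof.

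One minor point: $a+b<1$ requires $n\ge 2$, since for $n=1$ one has $a+b=1$. Your closing remark that the case $n=1$ is vacuous covers this, but the sentence asserting $a+b<1$ for $n\ge 1$ should be corrected.
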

\begin{proof}
We begin by noticing that 
\begin{equation}
-\cD(x) = \expcnd{\log(P_{X^\star|Y}(x|Y))}{X=x} . \label{eq:Minus_D}
\end{equation}
 To show the first bound notice that for $x\in (0,1)$ 
     \begin{align}
     \expcnd{\log(P_{X^\star|Y}(x|Y))}{X=x} & \le  \expcnd{(\mathbbm{1}(Y=0)+\mathbbm{1}(Y=n))\log(P_{X^\star|Y}(x|Y))}{X=x} \\
         &= P_{Y|X}(0|x) \log\frac{P_{Y|X}(0|x) P_{X^\star}(x)}{P_{Y^\star}(0)} + P_{Y|X}(n|x) \log\frac{P_{Y|X}(n|x) P_{X^\star}(x)}{P_{Y^\star}(n)} \\
         &= (1-x)^n \left(\log((1-x)^n P_{X^\star}(x)) +C(n)\right) + x^n\left( \log(x^n P_{X^\star}(x))+C(n) \right) \label{eq:use_rel_capacity_probY}\\
         &= \left((1-x)^n+x^n\right)\left(\log(P_{X^\star}(x)) +C(n)\right) + (1-x)^n \log(1-x)^n+x^n \log(x^n) \label{eq:bound_exp_log_P_X_given_Y}\\
         &= - \left((1-x)^n+x^n\right) \cD(x) + (1-x)^n \log(1-x)^n+x^n \log(x^n), \label{eq:using_Def_0f_cD}
     \end{align}
     where in~\eqref{eq:use_rel_capacity_probY} we used Corollary~\ref{cor:rel_cap_probY}; and where in \eqref{eq:using_Def_0f_cD} we have used Lemma~\ref{lem:expression_for_prob}. Combining~\eqref{eq:using_Def_0f_cD} with \eqref{eq:Minus_D} we arrive at the desired first bound.

    To show the second bound note that by symmetry of the optimal distribution,   we have that, for $x \neq 1/2$,
\begin{align}
P_{X^*|Y}(x|y) 
&= \frac{P_{X^\star}(x) P_{Y|X}(y|x) }{P_{Y^\star}(y)}\\
&\le  \frac{P_{X^\star}(x) P_{Y|X}(y|x) }{P_{X^\star}(x)  P_{Y|X}(y|x) + P_{X^\star}(1-x)  P_{Y|X}(y|1-x) }\\
&=  \frac{P_{Y|X}(y|x) }{  P_{Y|X}(y|x) +  P_{Y|X}(y|1-x) }\\
&= \frac{1}{1 + \left( \frac{x}{1-x} \right)^{n-2y}}.
\end{align}

 Then, we get
\begin{align}
\expcnd{\log(P_{X^\star|Y}(x|Y))}{X=x} 
 &\le \bbE \left[ \log \frac{1}{1 + \left( \frac{x}{1-x} \right)^{n-2Y}} |X =x  \right]  \label{eq:Jensen_crest}
\end{align}

This concludes the proof. 
\end{proof}

\begin{figure}
    \centering
    \begin{minipage}[t]{0.49\columnwidth}
    \centering
    \input{IEEE_TIT_submission_correction/Figures/crest_factor_bounds_v2_n10.tex}
    \end{minipage}
\hfill
\begin{minipage}[t]{0.49\columnwidth}
\centering
\input{IEEE_TIT_submission_correction/Figures/crest_factor_bounds_v2_n100.tex}
\end{minipage}
    \caption{Comparison of numerical estimates $\cD(\hat{x}^{\star})$ and bounds on the crest-factor $\cD(x)$ reported in Proposition~\ref{prop:crest_factor_bounds} for \textbf{a)} $n=10$ and \textbf{b)} $n=100$.}
    \label{fig:crest_factor_bound}
\end{figure}

 In Fig.~\ref{fig:crest_factor_bound}, for the cases $n=10$ and $n=100$ we show a comparison between the bounds on $\cD(x)$ given in Proposition~\ref{prop:crest_factor_bounds} and numerical estimates $\cD(\hat{x}^{\star})$, derived from the estimates of $P_{X^{\star}}$ in Sec.~\ref{sec:num_res}. Upper bounds on $\cD(x)$ so far have been elusive.

\begin{table*}
    \centering
    \begin{tabular}{|c|c|c|c|c|}
    \Xhline{2pt}
    $n$ & $C(n)$ & ${\cal X}\equiv\supp(P_{X^\star})$   & $\{P_{X^\star}(x),\, x \in {\cal X} \}$ & $\{P_{Y^\star}(y), \, y\in\{0, 1,\ldots, n\}  \}$ \\
    \Xhline{1pt}
    $1$ & $\log(2)$ & $\left\{0,1 \right\}$  & $\left\{\frac{1}{2},\frac{1}{2}\right\}$     &  $\left\{\frac{1}{2},\frac{1}{2}\right\}$\\
    \hline
    $2$ &  $\log\left(\frac{17}{8}\right)$ & $\left\{0, \frac{1}{2} ,1\right\}$   & $\left\{ \frac{15}{34},  \frac{2}{17},\frac{15}{34}\right\}$  &  $\left\{ \frac{8}{17}, 
 \frac{1}{17},\frac{8}{17} \right\} $ \\
    \hline
    $3$&  $\log\left(\frac{19}{8}\right)$ & $\left\{0, \frac{1}{2} ,1\right\}$ & $\left\{  \frac{15}{38},  \frac{4}{19}, \frac{15}{38}\right\}$ & $\left\{ \frac{8}{19}, 
 \frac{3}{38},\frac{3}{38},\frac{8}{19} \right\}$ \\
   
    \Xhline{1pt}
\end{tabular}
    \caption{Capacity and capacity-achieving distributions.  }
    \label{tab:CapacityExpression}
    
\end{table*}
The bounds in \eqref{eq:simple_prob_bound} also have a  useful generalization bound probability for any measurable set.  To show this, we need the following bound for the mixture of distributions, which is an easy consequence of the log-sum inequality \cite{cover2006elements}: given a mixture 
\begin{equation}
Q = \sum_i w_i Q_i,  \quad  \,  \sum_i w_i=1, \quad \, w_i \ge 0,
\end{equation}
then for  any distribution $P$
\begin{equation}
    \kl{P}{Q} \le \kl{P}{Q_i} +\log \frac{1}{w_i}. \label{eq:mixture_bound}
\end{equation}
This result can now be used to show the following generalization of the bound in \eqref{eq:simple_prob_bound}.
\begin{prop}\label{prop:bounds_on_Sets}
Let $\cA\subseteq[0,1]$ be measurable. If
$P_{X^\star}(\cA)>0$, then for every
\begin{equation}
    x_{\cA}
    \in
    \cA\cap\supp(P_{X^\star}),
\end{equation}
we have
\begin{equation}
    P_{X^\star}(\cA)
    \leq
    B_n(\cA;x_{\cA})\rme^{-C(n)},
    \label{eq:bound_on_measured_sets}
\end{equation}
where
\begin{align}
    B_n(\cA;x_{\cA})
    &=
    \exp\left(
        \sup_{x\in\cA}
        \kl{
            P_{Y|X}(\cdot|x_{\cA})}{   
            P_{Y|X}(\cdot|x)
        }
    \right)
    \label{eq:bound_first_step}\\
    &=
    \exp\left(
        n\sup_{x\in\cA}
        d(x_{\cA}\|x)
    \right),
    \label{eq:binomial_set_bound}
\end{align}
 and $d( \cdot \, \| \, \cdot)$ is the binary relative entropy.
\end{prop}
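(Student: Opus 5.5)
The plan is to combine the KKT equality at the support point $x_{\cA}$ with the mixture bound \eqref{eq:mixture_bound}. We decompose the optimal output distribution as a two-component mixture. Write $w=P_{X^\star}(\cA)>0$ and
\begin{equation}
P_{Y^\star} = w\, Q_{\cA} + (1-w)\, Q_{\cA^c},
\end{equation}
where $Q_{\cA}$ is the output distribution induced by the conditional input law $P_{X^\star}(\cdot \cap \cA)/w$, that is, $Q_{\cA}(y)=\frac{1}{w}\int_{\cA} P_{Y|X}(y|x)\,\rmd P_{X^\star}(x)$. If $w=1$ the bound is trivial, since $B_n\ge 1$ and $C(n)\ge 0$ would need care; more precisely, the argument below still applies with no complementary term.

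Since $x_{\cA}\in\supp(P_{X^\star})$, Lemma~\ref{lem:KKT} gives $C(n)=\kl{P_{Y|X}(\cdot|x_{\cA})}{P_{Y^\star}}$. Applying \eqref{eq:mixture_bound} with $P=P_{Y|X}(\cdot|x_{\cA})$ and the component $Q_{\cA}$ with weight $w$ yields
\begin{equation}
C(n) \le \kl{P_{Y|X}(\cdot|x_{\cA})}{Q_{\cA}} + \log\frac{1}{w}.
\end{equation}
Next, $Q_{\cA}$ is itself a mixture, now a continuous one, of the kernels $P_{Y|X}(\cdot|x)$ with $x\in\cA$. By convexity of relative entropy in its second argument (Jensen),
\begin{equation}
\kl{P_{Y|X}(\cdot|x_{\cA})}{Q_{\cA}} \le \frac{1}{w}\int_{\cA}\kl{P_{Y|X}(\cdot|x_{\cA})}{P_{Y|X}(\cdot|x)}\,\rmd P_{X^\star}(x) \le \sup_{x\in\cA}\kl{P_{Y|X}(\cdot|x_{\cA})}{P_{Y|X}(\cdot|x)}.
\end{equation}
Rearranging gives $w\le \exp(\sup_{x\in\cA} D(\cdot))\,\rme^{-C(n)}$, which is \eqref{eq:bound_first_step}. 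If the supremum is infinite, for instance when $\cA$ contains an endpoint with $x_{\cA}$ interior, the bound holds trivially.

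For \eqref{eq:binomial_set_bound}, we use that relative entropy tensorizes over the $n$ i.i.d. Bernoulli trials. A binomial is a sufficient-statistic image of a Bernoulli product, and the binomial coefficients cancel in the likelihood ratio. This gives $\kl{\mathrm{Bin}(n,x_{\cA})}{\mathrm{Bin}(n,x)} = n\,d(x_{\cA}\|x)$, which can also be checked directly as $\bbE[Y\log\frac{x_{\cA}}{x}+(n-Y)\log\frac{1-x_{\cA}}{1-x}]$ with $\bbE[Y]=nx_{\cA}$.

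The only mildly delicate step is the Jensen step for a possibly uncountable mixture, together with the edge cases $w=1$ and infinite divergence. However, by Proposition~\ref{prop:bound_cardinality_cA} the support of $P_{X^\star}$ is finite, so $Q_{\cA}$ is a finite mixture and the convexity step is elementary.
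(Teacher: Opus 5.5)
Your proposal is correct and follows the paper's proof: you decompose $P_{Y^\star}$ by conditioning on $\{X^\star\in\cA\}$, apply the KKT equality at $x_{\cA}$, then the mixture bound \eqref{eq:mixture_bound}, then convexity of relative entropy in its second argument and the supremum, and finally $\kl{P_{Y|X}(\cdot|x_{\cA})}{P_{Y|X}(\cdot|x)}=n\,d(x_{\cA}\|x)$. The one slip is your aside calling the case $P_{X^\star}(\cA)=1$ ``trivial'': it is not, since $B_n\,\rme^{-C(n)}$ can be below $1$; but, as you go on to say, the same argument covers that case, with the mixture bound holding at weight $1$.
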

\begin{IEEEproof}
Assume that $P_{X^\star}(\cA)>0$ , otherwise \eqref{eq:bound_on_measured_sets} holds trivially. Fix $  x_{\cA}
    \in
    \cA\cap\supp(P_{X^\star})$. 
Let
\begin{equation}
    R_{\cA}
    =
    P_{Y^\star|X^\star\in\cA}.
\end{equation}
By conditioning on the event $\{X^\star\in\cA\}$, the
capacity-achieving output distribution can be written as
\begin{align}
    P_{Y^\star}
    &=
    P_{X^\star}(\cA)R_{\cA}+
    \bigl(1-P_{X^\star}(\cA)\bigr)
    P_{Y^\star|X^\star\notin\cA}.
\end{align}
Starting with the KKT equality~\eqref{eq:KKT_equality}, we have
\begin{align}
    C(n)
    &=
    \kl{
        P_{Y|X}(\cdot|x_{\cA})}
        {
        P_{Y^\star}}\\
    &\leq
    \kl{
        P_{Y|X}(\cdot|x_{\cA})}
        { R_{\cA}}
    +
    \log\frac{1}{P_{X^\star}(\cA)}
    \label{eq:applying_mixture_bound}\\
    &\leq
    \int_{\cA}
    \kl{
        P_{Y|X}(\cdot|x_{\cA})
        }
        {
        P_{Y|X}(\cdot|x)}
    \rmd P_{X^\star|X^\star\in\cA}(x)
    +
    \log\frac{1}{P_{X^\star}(\cA)}
    \label{eq:use_convexity_KL}\\
    &\leq
    \sup_{x\in\cA}
    \kl{
        P_{Y|X}(\cdot|x_{\cA})}
        {
        P_{Y|X}(\cdot|x)
    }
    +
    \log\frac{1}{P_{X^\star}(\cA)}\\
    &=
    n\sup_{x\in\cA}d(x_{\cA}\|x)
    +
    \log\frac{1}{P_{X^\star}(\cA)}.
\end{align}
Here, \eqref{eq:applying_mixture_bound} follows from
\eqref{eq:mixture_bound}, and
\eqref{eq:use_convexity_KL} follows from convexity of relative
entropy in its second argument. Rearranging concludes the
proof.
\end{IEEEproof}

For the bound in \eqref{eq:bound_on_measured_sets} to be informative, the exponent in \eqref{eq:binomial_set_bound} must remain uniformly bounded. We next construct a collection of sets $\cA$ for which this condition holds.

\begin{lemma}\label{lem:reference_partition}
Let $ N
    =
    2\left\lceil
        \frac{5\pi}{4}\sqrt{n}
    \right\rceil$
and define
\begin{equation}
    a_j
    =
    F_r^{-1}\left(\frac{j}{N}\right)
    =
    \sin^2\left(\frac{j\pi}{2N}\right),
    \qquad
    j=0,\ldots,N.
\end{equation}
Consider the partition of $[0,1]$ given by
\begin{align}
    \cA_j
    &=
    [a_j,a_{j+1}),
    \qquad
    j=0,\ldots,N-2,\\
    \cA_{N-1}
    &=
    [a_{N-1},1].
\end{align}
Then,
\begin{equation}
    P_{X^\star}(\cA_j)
    \leq
    \exp\left(
        \frac{8}{25}-C(n)
    \right),
    \qquad
    j=0,\ldots,N-1.
    \label{eq:uniform_cell_mass_bound}
\end{equation}
\end{lemma}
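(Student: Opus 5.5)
The plan is to apply Proposition~\ref{prop:bounds_on_Sets} to each cell $\cA_j$. This reduces the claim to showing that, for every $j$ with $P_{X^\star}(\cA_j)>0$, there is some $x_{\cA_j}\in\cA_j\cap\supp(P_{X^\star})$ with
\begin{equation*}
n\sup_{x\in\cA_j} d(x_{\cA_j}\|x)\le \frac{8}{25}.
\end{equation*}
If $P_{X^\star}(\cA_j)=0$ there is nothing to prove. Throughout I will use the angular parametrization $x=\sin^2\theta$, in which the cells become equal-length intervals $\theta\in[j\pi/(2N),(j+1)\pi/(2N)]$. The choice of $N$ gives
\begin{equation*}
\left(\frac{\pi}{2N}\right)^2\le\frac{1}{25n}.
\end{equation*}

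\textbf{Endpoint cells.} For $j=0$ I will pick $x_{\cA_0}=0$, which is admissible by Proposition~\ref{prop:0_is_optimal}. This choice is forced: for any other point in the cell, $d(x_{\cA_0}\|0)=\infty$. Then
\begin{equation*}
d(0\|x)=-\log(1-x)\le -\log(1-a_1),
\end{equation*}
and $a_1=\sin^2(\pi/(2N))\le (\pi/(2N))^2\le 1/(25n)$. Using $-\log(1-t)\le t/(1-t)$, this gives $n\,d\le \frac{1}{25}(1+o(1))$, which is well below $8/25$. The cell $j=N-1$ is handled the same way with $x_{\cA_{N-1}}=1$, using symmetry (Proposition~\ref{prop:input_symmetry}).

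\textbf{Interior cells $1\le j\le N-2$.} Here $x_{\cA_j}$ is arbitrary in the cell, so I need a bound that holds uniformly over pairs $p=\sin^2\theta$, $q=\sin^2\varphi$ in the same closed cell. I will use $d(p\|q)\le\chi^2=(p-q)^2/(q(1-q))$ together with $\sin^2\theta-\sin^2\varphi=\sin(\theta-\varphi)\sin(\theta+\varphi)$. This yields
\begin{equation*}
d(p\|q)\le 4\sin^2(\theta-\varphi)\,\frac{\sin^2(\theta+\varphi)}{\sin^2(2\varphi)}.
\end{equation*}
The first factor satisfies $\sin^2(\theta-\varphi)\le(\pi/(2N))^2\le 1/(25n)$. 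It remains to show that the ratio $\sin^2(\theta+\varphi)/\sin^2(2\varphi)$ is at most $2$ on interior cells, which would give $n\,d\le 8/25$.

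\textbf{Main obstacle.} The delicate step is this uniform control of the ratio for the cells adjacent to the endpoints, $j=1$ and $j=N-2$. There $\varphi$ can be as small as $\pi/(2N)$ while $\theta$ can reach $2\varphi$, so the crude estimate gives $\sin^2(3\varphi)/\sin^2(2\varphi)\approx 9/4$, which is slightly too large. I would first try to recover the constant by replacing $\chi^2$ with the sharper estimate
\begin{equation*}
d(p\|q)\le \int_\varphi^\theta (\text{Fisher in }\theta)\,(\theta-s)\,ds .
\end{equation*}
In angular coordinates the Fisher information is identically $4$, so this gives roughly $2(\theta-\varphi)^2$ plus a correction controlled by the third derivative. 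That leaves ample slack: $2/25$ versus $8/25$. If that becomes messy, the fallback is to treat $j=1$ separately, using the explicit small-angle bounds $\sin t\ge t(1-t^2/6)$ and $t\le\pi/(2N)\cdot 2$ with $n$ large enough. For $n$ small the claim can be checked directly. Cells with $j$ near $N/2$ are harmless, since there the ratio is close to $1$.
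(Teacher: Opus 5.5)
Your reduction to Proposition~\ref{prop:bounds_on_Sets}, your endpoint cells (with the forced choices $x_{\cA_0}=0$ and $x_{\cA_{N-1}}=1$), and your $\chi^2$ argument for $2\le j\le N-3$ are sound. For $j\ge 2$ the ratio is at most $(5/4)^2$, which gives $n\,d\le 1/4$.

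\textbf{The gap.} It sits exactly where you flag it, at the cells $j=1$ and $j=N-2$, and neither remedy closes it.

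\emph{The fallback cannot work.} Let $x\to a_1$ and let the support point approach $a_2$, so $\varphi=\pi/(2N)$ and $\theta\to 2\varphi$. The $\chi^2$ expression itself tends to $4\sin^2\varphi\,\sin^2(3\varphi)/\sin^2(2\varphi)\approx 9\varphi^2$. Since $25n(\pi/(2N))^2\to 1$, this route gives a constant tending to $9/25>8/25$. The loss is in the inequality $d\le\chi^2$, not in the trigonometric estimates, so sharper small-angle bounds cannot recover it.

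\emph{The first remedy rests on a false inequality.} In the angular parametrization, $d(\sin^2\theta\|\sin^2\varphi)$ is not bounded by $\int_\varphi^\theta 4(\theta-s)\,ds=2(\theta-\varphi)^2$. The exact Fisher-integral (Bregman) representation of relative entropy holds in the natural log-odds parameter, not in $\theta$. In $\theta$ there is a genuine higher-order remainder, and near the boundary, with $\theta/\varphi$ bounded away from $1$, it is of the same order as the leading term. For small $\varphi$ and $\theta=2\varphi$ one has $d\approx(8\log 2-3)\varphi^2\approx 2.54\,(\theta-\varphi)^2$. The slack you claim does exist, but what you wrote does not establish it; you would need an explicit bound on that remainder.

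\textbf{The missing ingredient.} You need a divergence bound that is sharper than $\chi^2$ when the support point (the first argument) lies above $x$. The paper imports such a bound from \cite{rooij2009learning}. For $\theta,\varphi\in[0,\pi/4]$:
\begin{itemize}
\item $d(\sin^2\theta\|\sin^2\varphi)\le 4(\theta-\varphi)^2$ if $\theta\le\varphi$;
\item $d(\sin^2\theta\|\sin^2\varphi)\le 4(\theta-\varphi)^2\,\theta/\varphi$ if $\theta>\varphi$.
\end{itemize}
In any cell with $j\ge 1$ one has $\theta/\varphi\le (j+1)/j\le 2$. Hence $d\le 8(\pi/(2N))^2$ and $n\,d\le 8/25$ uniformly. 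The upper half is handled by $d(p\|q)=d(1-p\|1-q)$.

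The factor $4\theta/\varphi$ improves your $\chi^2$ factor $(\theta+\varphi)^2/\varphi^2$ by exactly $(\theta-\varphi)^2/\varphi^2$. That is precisely what turns your $9/4$ into $2$.

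If you prefer to stay elementary, you can instead treat the cell $j=1$ directly. Use $d(p\|q)\le p\log(p/q)-(p-q)+(p-q)^2/(1-q)$. The first two terms form a jointly convex perspective function, so they can be maximized at the corners of the cell. Combined with explicit small-angle estimates and $N\ge 8$, this stays comfortably below $8(\pi/(2N))^2$. Either way, some replacement of the $\chi^2$ step is needed for $j=1$ and $j=N-2$.
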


\begin{IEEEproof}
By construction,
\begin{equation}
    F_r(a_j)
    =
    \frac{j}{N},
    \qquad
    j=0,\ldots,N.
\end{equation}
Therefore, for every $x,x'\in\cA_j$,
\begin{equation}
    \left|
        F_r(x)-F_r(x')
    \right|
    \leq
    \frac{1}{N}.
    \label{eq:Fr_cell_diameter}
\end{equation}
Moreover, since $N$ is even,
\begin{equation}
    a_{N/2}
    =
    F_r^{-1}\left(\frac12\right)
    =
    \frac12,
\end{equation}
and hence every cell is contained in either $[0,1/2]$ or
$[1/2,1]$.

Fix $j\in\{0,\ldots,N-1\}$. If
$P_{X^\star}(\cA_j)=0$, the result holds trivially. Otherwise,
for the first and last cells, choose
\begin{equation}
    x_0=0,
    \qquad
    x_{N-1}=1,
\end{equation}
which is possible since
$\{0,1\}\subseteq\supp(P_{X^\star})$ by Proposition~\ref{prop:0_is_optimal}. For every other cell,
choose any
\begin{equation}
    x_j
    \in
    \cA_j\cap\supp(P_{X^\star}).
\end{equation}

Consider first
\begin{equation}
    j\in
    \left\{
        0,\ldots,\frac{N}{2}-1
    \right\},
\end{equation}
so that $\cA_j\subseteq[0,1/2]$. For $x\in\cA_j$, define
\begin{equation}
    \theta_j
    =
    \arcsin\sqrt{x_j},
    \qquad
    \theta
    =
    \arcsin\sqrt{x}.
\end{equation}
Then $\theta_j,\theta\in[0,\pi/4]$. Applying
\cite[Lem.~3]{rooij2009learning} with
$\phi_1=\phi_3=\theta_j$ and $\phi_2=\theta$ gives
\begin{equation}
    d(x_j\|x)
    \leq
    \begin{cases}
        4(\theta-\theta_j)^2,
        & \theta_j\leq\theta,\\[1mm]
        4(\theta_j-\theta)^2
        \dfrac{\theta_j}{\theta},
        & \theta_j>\theta.
    \end{cases}
    \label{eq:rooij_cell_bound}
\end{equation}
When $\theta=0$, we necessarily have $j=0$ and
$x_j=x=0$, and hence $d(x_j\|x)=0$. Otherwise, if
$\theta_j>\theta$, then $j\geq1$ and, since $\theta_j$ and
$\theta$ belong to the same angular cell,
\begin{align}
    \frac{\theta_j}{\theta}
    &\leq
    \frac{(j+1)\pi/(2N)}
         {j\pi/(2N)}=
    \frac{j+1}{j}\leq
    2.
    \label{eq:angle_ratio_bound}
\end{align}
It follows that
\begin{align}
    d(x_j\|x)
    &\leq
    8(\theta_j-\theta)^2
    \label{eq:using_rooij_bound}\\
    &=
    2\pi^2
    \left(
        F_r(x_j)-F_r(x)
    \right)^2
    \label{eq:using_Fr_coordinate}\\
    &\leq
    \frac{2\pi^2}{N^2},
    \qquad
    x\in\cA_j,
    \label{eq:using_Fr_cell_diameter}
\end{align}
where \eqref{eq:using_rooij_bound} follows from
\eqref{eq:rooij_cell_bound} and
\eqref{eq:angle_ratio_bound};
\eqref{eq:using_Fr_coordinate} follows from
\begin{equation}
    F_r(x)
    =
    \frac{2}{\pi}\arcsin\sqrt{x};
\end{equation}
and \eqref{eq:using_Fr_cell_diameter} follows from
\eqref{eq:Fr_cell_diameter}. Now suppose that
\begin{equation}
    j\in
    \left\{
        \frac{N}{2},\ldots,N-1
    \right\},
\end{equation}
so that $\cA_j\subseteq[1/2,1]$. Since
\begin{equation}
    d(x_j\|x)
    =
    d(1-x_j\|1-x)
\end{equation}
and
\begin{equation}
    F_r(1-x)
    =
    1-F_r(x),
\end{equation}
the preceding argument applied to $1-x_j$ and $1-x$ gives
\begin{align}
    d(x_j\|x)
    &\leq
    2\pi^2
    \left(
        F_r(1-x_j)-F_r(1-x)
    \right)^2\\
    &=
    2\pi^2
    \left(
        F_r(x_j)-F_r(x)
    \right)^2\\
    &\leq
    \frac{2\pi^2}{N^2},
    \qquad
    x\in\cA_j.
    \label{eq:right_cell_Fr_bound}
\end{align}

Consequently, for every $j\in\{0,\ldots,N-1\}$ with
$P_{X^\star}(\cA_j)>0$,
\begin{align}
    n\sup_{x\in\cA_j}d(x_j\|x)
    &\leq
    \frac{2\pi^2n}{N^2}\leq
    \frac{8}{25},
    \label{eq:cell_divergence_bound}
\end{align}
where the last inequality follows from
\begin{equation}
    N
    \geq
    \frac{5\pi}{2}\sqrt{n}.
\end{equation}
Finally, applying Proposition~\ref{prop:bounds_on_Sets} with
$\cA=\cA_j$ and $x_{\cA}=x_j$ gives
\begin{align}
    P_{X^\star}(\cA_j)
    &\leq
    \exp\left(
        n\sup_{x\in\cA_j}d(x_j\|x)-C(n)
    \right)\leq
    \exp\left(
        \frac{8}{25}-C(n)
    \right).
\end{align}
This concludes the proof.
\end{IEEEproof}


\subsection{Bounds on the Support Cardinality} \label{sec:card_bounds}
We now provide upper and lower bounds on the cardinality of the support of $P_{X^\star}$.  We start with the following \emph{exact} formula for the number of support points. 
\begin{prop}  \label{prop:ExactNumSup}
For $n \ge 1$
    \begin{equation}
| \supp(P_{X^\star})| = \frac{\rme^{ C(n) }}{    \bbE \left[ \rme^{  - \cD(U^*)  } 
 \right]},
\end{equation}
where $U^*$ is uniformly distributed on $\supp(P_{X^\star})$.
\end{prop}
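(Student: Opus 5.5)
The plan is to sum the identity of Lemma~\ref{lem:expression_for_prob} over all support points. By the discreteness result (Proposition~\ref{prop:bound_cardinality_cA} together with Lemma~\ref{lem:Uniquness_A_n}), the support of $P_{X^\star}$ is finite, with at most $n+1$ points. Hence $P_{X^\star}$ is a pmf, and its masses sum to one:
\begin{equation}
1=\sum_{x^\star\in\supp(P_{X^\star})} P_{X^\star}(x^\star).
\end{equation}

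Next, Lemma~\ref{lem:expression_for_prob} applies at every support point. Substituting it into each term gives
\begin{equation}
1=\sum_{x^\star\in\supp(P_{X^\star})}\rme^{-C(n)-\cD(x^\star)}=\rme^{-C(n)}\sum_{x^\star\in\supp(P_{X^\star})}\rme^{-\cD(x^\star)}.
\end{equation}

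The remaining step is to rewrite the sum as an expectation. Let $U^\star$ be uniform on the finite set $\supp(P_{X^\star})$. Then
\begin{equation}
\sum_{x^\star}\rme^{-\cD(x^\star)}=|\supp(P_{X^\star})|\,\bbE\left[\rme^{-\cD(U^\star)}\right].
\end{equation}
Rearranging yields the claimed formula. The division is legitimate because $\cD(x^\star)<\infty$ at support points: by Lemma~\ref{lem:expression_for_prob}, $\rme^{-\cD(x^\star)}=P_{X^\star}(x^\star)\rme^{C(n)}>0$. So the expectation is strictly positive.

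There is no real obstacle in this argument. The only point to check is that the support is finite, so that the masses sum to one and the uniform $U^\star$ is well defined. Both follow from the discreteness result already established.
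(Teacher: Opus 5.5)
Your proof is correct and follows the paper's argument: sum the identity of Lemma~\ref{lem:expression_for_prob} over the finite support, use that the masses sum to one, and divide by $|\supp(P_{X^\star})|$ to turn the sum into an expectation over the uniform $U^\star$. Your justification of finiteness (via Proposition~\ref{prop:bound_cardinality_cA} and Lemma~\ref{lem:Uniquness_A_n}) and of the positivity of the denominator spells out points the paper leaves implicit.
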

\begin{proof}
   Starting with Lemma~\ref{lem:expression_for_prob} and  summing over $x^{\star} \in \supp(P_{X^{\star}})$, we arrive at
\begin{equation}
1 =   \rme^{ -C(n) } \sum_{ x^{\star} \in \supp(P_{X^{\star}}) }  \rme^{  -  \cD(x^\star) }. \label{eq:Intermideiate_step}
\end{equation}
Dividing both sides of  \eqref{eq:Intermideiate_step} by $| \supp(P_{X^\star})|$ and rearranging,  we arrive at the desired result.  
\end{proof}

From Proposition~\ref{prop:ExactNumSup} and  nonnegativity of $\cD$, we arrive at 
\begin{equation}
    | \supp(P_{X^\star})| \ge \rme^{C(n)} =\Theta( \sqrt{n}) \label{eq:sqrt(n) lower bound}
\end{equation}
where the order of the lower bound follows from the fact that $C(n)$ scales as $\frac{1}{2} \log(n)$ as will be shown in Sec.~\ref{sec:cap_bounds}. We will improve the order of this lower bound in Sec.~\ref{sec:lower_bound_support}. 

We now  move on to showing upper bounds. We already have demonstrated a bound of order $n+1$ in Proposition \ref{prop:bound_cardinality_cA}.  We now improve this upper bound by a factor of two. 
\begin{theorem} \label{thm:n/2 bound}
    For $n \ge 1$
    \begin{align}
        |\supp{(P_{X^\star})}| &\le  2 + \left\lfloor \frac{1}{2}\sfN\left((0,1); \, i''(x;P_{Y^\star}) \right) \right\rfloor \label{eq:i" bound} \\
        & \le 2+\left\lfloor\frac{n}{2} \right\rfloor. \label{eq: n/2 bound}
    \end{align}
\end{theorem}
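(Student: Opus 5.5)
We plan to use the standard interlacing argument for the KKT function $h(x) := i(x;P_{Y^\star}) - C(n)$ on $[0,1]$. By Lemma~\ref{lem:KKT}, $h \le 0$ on $[0,1]$ and $h=0$ on $\supp(P_{X^\star})$. By Proposition~\ref{prop:0_is_optimal}, the support contains $0$ and $1$. Write the interior support points as $0<x_1<\cdots<x_m<1$, so that $|\supp(P_{X^\star})| = m+2$. The goal is to show $m \le \lfloor \frac12 \sfN((0,1); i'') \rfloor$.

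\textbf{Step 1 (first derivative).} Each interior $x_k$ is a local maximum of $h$ with $h(x_k)=0$, and $h$ is smooth on $(0,1)$, so $h'(x_k)=0$. Between two consecutive points of $\{0,x_1,\ldots,x_m,1\}$, Rolle's theorem applied to $h$ gives an additional zero of $h'$ strictly inside. There are $m+1$ such gaps. At the endpoints $0$ and $1$ one has to check that $h$ is continuous there, which holds since $i(x;P_{Y^\star})$ is a polynomial combination of $x^y\log x$-type terms extending continuously. Altogether this gives at least $m + (m+1) = 2m+1$ distinct zeros of $h'$ in $(0,1)$, and they interlace: every interior support point $x_k$ sits between two "gap" zeros.

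\textbf{Step 2 (second derivative).} Applying Rolle's theorem to $h'$ on these $2m+1$ ordered zeros produces at least $2m$ zeros of $h''=i''(x;P_{Y^\star})$ in $(0,1)$. Hence
\[
2m \le \sfN\big((0,1); i''(x;P_{Y^\star})\big),
\]
and since $m$ is an integer, $m \le \lfloor \frac12 \sfN \rfloor$. This yields \eqref{eq:i" bound}. A subtle point is a degenerate case in which zeros of $h'$ coincide, for instance a gap zero coinciding with a support point. This cannot happen, because the gap zeros lie in the open intervals strictly between consecutive support points.

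\textbf{Step 3 (counting zeros of $i''$).} For \eqref{eq: n/2 bound}, the plan is to count the zeros of $i''$ on $(0,1)$ using Proposition~\ref{prop:derivatives_info_density}. Multiplying by $x(1-x)^2>0$ on $(0,1)$, the zeros of $i''$ coincide with those of $g(x)=n(1-x)+xG(x)$, which, as in the proof of Proposition~\ref{prop:bound_cardinality_cA}, is a nonzero polynomial of degree at most $n-1$. One then needs $\sfN((0,1);g)\le n$; in fact we get $\le n-1$, so that $\lfloor \frac12 \sfN\rfloor \le \lfloor \frac{n-1}{2}\rfloor \le \lfloor \frac n2\rfloor$. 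The main obstacle is justifying that $g\not\equiv 0$ and that the degree bound is correct. Since $g(0)=n\neq0$, the polynomial $g$ is not identically zero, so it has at most $n-1$ roots. For $n=1$, $i''$ has no zeros and the bound reads $|\supp|\le 2$, which is consistent with the known result.
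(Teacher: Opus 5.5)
Your Steps 1 and 2 are correct and prove the first inequality by a different route from the paper. The paper counts concavity changes: $i''>0$ near the endpoints, so each interior local maximum needs its own concave stretch, which costs two sign changes of $i''$. You instead use that every interior zero of $h$ is tangential ($h'=0$ there). You interlace these zeros with the $m+1$ Rolle zeros from the gaps to get $2m+1$ zeros of $h'$, then apply Rolle again to get $2m$ zeros of $h''$. This does not need the sign of $i''$ near $0$ and $1$ and is, if anything, cleaner. You should add one justification: $P_{Y^\star}(y)>0$ for every $y$, which is forced by $i(x;P_{Y^\star})\le C(n)<\infty$ on $(0,1)$. This guarantees that $h$ is finite and smooth on $(0,1)$.

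The gap is in Step 3. The claim that $g(x)=n(1-x)+xG(x)$ has degree at most $n-1$, and hence $\sfN((0,1);i'')\le n-1$, is false. Each summand of $xG(x)$ is a multiple of $x^{y+1}(1-x)^{n-y}$, which has degree $n+1$. In fact $g(x)=(1-x)\,p(x)$ with
\[ p(x)=x(1-x)\,i''(x;P_{Y^\star})=n+\sum_{y=0}^{n-2}\binom{n}{y}(n-y)(n-y-1)L_y\,x^{y+1}(1-x)^{n-y-1}, \]
where $L_y$ is the logarithm inside $G$. This $p$ can have degree exactly $n$ and $n$ roots in $(0,1)$.

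Take $n=2$. The CAID is supported on $\{0,\tfrac12,1\}$, so your own Step 2 with $m=1$ already forces $\sfN((0,1);i'')\ge 2>n-1$. Explicitly, $L_0=-\log 256$, and $p(x)=2-2\log(256)\,x(1-x)$ has two roots in $(0,1)$. With your bound you would conclude $|\supp(P_{X^\star})|\le 2$ for $n=2$, contradicting Table~\ref{tab:CapacityExpression}. The degree statement you borrowed from the proof of Proposition~\ref{prop:bound_cardinality_cA} is not right for this $g$.

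The fix is what the paper does. Work with $p=x(1-x)i''$, which is a nonzero polynomial (since $p(0)=n$) of degree at most $n$. Then $\sfN((0,1);i'')\le n$ and $\lfloor \sfN/2\rfloor\le\lfloor n/2\rfloor$, which is exactly what the second inequality needs; the sharper bound $n-1$ is not required.
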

\begin{proof}
    First of all, note that by Proposition~\ref{prop:0_is_optimal} we know that the function $i(\cdot; P_{Y^\star})$ starts with a local maximum at $x^\star =0$, and from Proposition~\ref{prop:derivatives_info_density} we know that $i''(x; P_{Y^\star})>0$  for $x\to 0$. Now, by continuity of $i(\cdot; P_{Y^\star})$,  if $i(\cdot; P_{Y^\star})$ changes concavity $k$ times, then it has at most $2+\left\lfloor \frac{k}{2} \right\rfloor$ local maxima.
    Moreover, from the KKT conditions we know that all the zeros of $i(\cdot;P_{Y^\star})-C(n)$ are local maxima. 
    
 Then, we can write 
        \begin{align}
        |\supp{(P_{X^\star})}| 
         & \le \sfN\lr{ \, [0,1]; \, i(\cdot;P_{Y^\star})-C(n) \, } \le 2 + \left\lfloor \frac{1}{2}\sfN\left((0,1); \, i''(\cdot;P_{Y^\star}) \right) \right\rfloor \le 2+\left\lfloor\frac{n}{2} \right\rfloor, \label{eq:bound_zeros_i''} 
    \end{align}
    where the last bound is from the fact that $x\mapsto x(x-1)i''(x;P_{Y^\star})$ is a polynomial of degree $n$ ( Proposition \ref{prop:derivatives_info_density}). 
\end{proof}

The proof of Theorem~\ref{thm:n/2 bound} does not rely on the uniqueness of $P_{X^\star}$. Therefore, it improves on the Witsenhausen bound by a factor of two. Furthermore, the key part of the proof leading to \eqref{eq:i" bound} is independent of the fact that the channel is binomial: Indeed, this fact is only used in \eqref{eq: n/2 bound}.   Consequently, we posit that this bound may prove more beneficial for channels where it is feasible to establish bounds on the number of zeros in~$i''(x; P_{X^\star})$.

\subsection{Exact Capacity for $n \le 3$}

The exact capacity can be computed by first making a guess of the capacity-achieving distribution according to the properties outlined in Sec.~\ref{sec:main_I}. Then, this guess can be checked against the sufficient and necessary KKT conditions in Lemma~\ref{lem:KKT}. These, somewhat tedious, computations  are performed in Appendix~\ref{sec:computations_of_cap_exact}  and Table~\ref{tab:CapacityExpression} displays the results.

\subsection{Asymptotically Optimal Output Distribution}
In Proposition~\ref{prop:capacity_gap}, we have shown that the input $X_r$ is asymptotically optimal from the rate perspective. We now show that the output distribution induced by $X_r$ also approaches the optimal output distribution.

The next two results make this connection precise.
\begin{theorem}\label{thm:optimal-output}
Let $P_{X^\star} \to P_{Y|X} \to P_{Y^\star}$ and $P_{X_r} \to P_{Y|X} \to P_{Y_r}$.  Then, the following statements hold:
\begin{itemize}
    \item There exists an absolute constant $c_\star>0$ such that, for every $n\ge 1$ and every
$y\in\{0,\dots,n\}$,
\begin{equation}
\frac{P_{Y_r}(y)}{P_{Y^\star}(y)} \ge c_\star, \label{eq:lower_bound_ratio_output_dist}
\end{equation}
where one can take  {$c_\star   =
    \frac{\rme^{-8/25}}
    {
        \sqrt{6e/\pi}
        \left(
            \frac{5\pi}{2}+6
        \right)
    }
    \approx
    0.02300399$}
; and 
\item 
\begin{align}
    \kl{P_{Y_r}}{P_{Y^{\star}}}  &\le  \mathrm{Gap}(n), \text{  for $n \ge 1$,}  \\
   & \le \frac{17}{\log\!\left(\frac{n\pi}{2e}\right)}, \text{  for $n \ge 444$}.
    \label{eq:upper_gap}
\end{align}
\end{itemize}
\end{theorem}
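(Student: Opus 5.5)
The inequality $\kl{P_{Y_r}}{P_{Y^\star}}\le \mathrm{Gap}(n)$ should follow directly from the KKT conditions, so I would prove it first. For any input $P_X$ with induced output $P_Y$, the golden formula gives
\begin{equation*}
I(X;Y)=\int \kl{P_{Y|X}(\cdot|x)}{P_{Y^\star}}\,\rmd P_X(x)-\kl{P_Y}{P_{Y^\star}}.
\end{equation*}
Apply it with $X=X_r$. By Lemma~\ref{lem:KKT}, $i(x;P_{Y^\star})\le C(n)$ for every $x\in[0,1]$, so the integral is at most $C(n)$. Hence $I(X_r;Y_r)\le C(n)-\kl{P_{Y_r}}{P_{Y^\star}}$, which is the first claim. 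The bound $17/\log(n\pi/(2e))$ for $n\ge 444$ is then exactly Proposition~\ref{prop:capacity_gap}.

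For the ratio bound \eqref{eq:lower_bound_ratio_output_dist}, I would use the partition of Lemma~\ref{lem:reference_partition} into $N=2\lceil \tfrac{5\pi}{4}\sqrt n\rceil$ cells $\cA_j$, each of $F_r$-measure $1/N$. Fix $y$ and write $g_y(x)=P_{Y|X}(y|x)$. Then
\begin{equation*}
P_{Y^\star}(y)=\sum_j\int_{\cA_j}g_y\,\rmd P_{X^\star}\le \rme^{8/25-C(n)}\sum_{j=0}^{N-1}\sup_{x\in\cA_j}g_y(x),
\end{equation*}
using \eqref{eq:uniform_cell_mass_bound}. Since $g_y$ is unimodal in $x$ with mode $y/n$, a standard Riemann-sum comparison applies. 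For a cell lying entirely to the left of the mode, its sup is at most the average of $g_y$ over the next cell with respect to $F_r$. Symmetrically, for a cell to the right of the mode, its sup is at most the average over the previous cell. Only the cell containing the mode is left over, and its sup is bounded by $\max_x g_y$. This gives
\begin{equation*}
\sum_j\sup_{\cA_j}g_y\le N\,P_{Y_r}(y)+\binom{n}{y}\Big(\tfrac yn\Big)^y\Big(1-\tfrac yn\Big)^{n-y}.
\end{equation*}

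It then remains to show three things.
\begin{itemize}
\item The capacity lower bound of Theorem~\ref{thm:capacity-lb} gives $\rme^{-C(n)}\le \sqrt{6e/\pi}/\sqrt{n}$ up to the precise form. I expect the constant $\sqrt{6e/\pi}$ to come from the term $\sqrt{3n+1}$ there, together with a crude treatment of $\psi(n+1)$.
\item The cell count satisfies $N\le \tfrac{5\pi}{2}\sqrt n+2$.
\item The maximum binomial term is at most a constant times $\sqrt n\,P_{Y_r}(y)$.
\end{itemize}
For the last point, I would use Stirling-type bounds to get $\binom ny (y/n)^y(1-y/n)^{n-y}\lesssim \sqrt{n/(2\pi y(n-y))}$ for $1\le y\le n-1$. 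I would compare this with the pointwise lower bound $P_{Y_r}(y)\ge 1/(\pi\sqrt{(y+\frac12)(n-y+\frac12)})$ from Proposition~\ref{prop:properties_Y_r}. At $y\in\{0,n\}$ the maximum equals $1$ and $P_{Y_r}(0)\ge 1/(\pi\sqrt{(n+\frac12)/2})$. Combining the three bounds should yield $P_{Y^\star}(y)\le \rme^{8/25}\sqrt{6e/\pi}\,(\tfrac{5\pi}{2}+6)\,P_{Y_r}(y)$, which is the stated $c_\star$.

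I expect the main obstacle to be bookkeeping the constants uniformly in $n\ge1$, including small $n$. This affects both the $+2$ in $N$ and the final additive constant $6$, which has to absorb the mode term and the rounding. If the Stirling comparison is too loose at small $y$, I would fall back on the elementary bound $\max_x g_y\le 1$ together with the pointwise lower bound on $P_{Y_r}$, and then handle small $n$ separately.
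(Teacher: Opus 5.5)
Your route is the paper's. The KL bound follows from the golden formula together with the KKT inequality $i(x;P_{Y^\star})\le C(n)$. The ratio bound combines the cell-mass bound of Lemma~\ref{lem:reference_partition}, an upper Darboux sum, a comparison of $M_n(y)=\max_x P_{Y|X}(y|x)$ with $\sqrt n\,P_{Y_r}(y)$, and $\rme^{-C(n)}\sqrt n\le\sqrt{6e/\pi}$; that last bound does come from $\psi(n+1)\ge\log n$ and $1+\sqrt{3n+1}\le 3\sqrt n$.

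The step that fails as you justify it is the Riemann-sum comparison. Write $f_y(t)=P_{Y|X}(y|F_r^{-1}(t))$ and $J_j=[j/N,(j+1)/N]$, and let $J_k$ contain the mode. The sup over $J_{k-1}$ is $f_y(k/N)$ and the sup over $J_{k+1}$ is $f_y((k+1)/N)$, and your rule compares both with the average over $J_k$. This counts $N\int_{J_k}f_y$ twice. It can also fail outright, because on $J_k$ the function is only guaranteed to stay above $\min\{f_y(k/N),f_y((k+1)/N)\}$. In fact the inequality $\sum_j\sup_{\cA_j}g_y\le N P_{Y_r}(y)+M_n(y)$ is false for general unimodal functions. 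A narrow peak centred at a partition point makes two cells have sup equal to the maximum while $N\int f$ is negligible.

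The correct bookkeeping is what the paper does in Lemma~\ref{lem:upper_sum_partition}. Telescoping on the increasing side gives $\sum_{j<k}\sup_{J_j}f_y\le N\int_0^{k/N}f_y+f_y(k/N)$, and the decreasing side is handled the same way. Moreover $\sup_{J_k}f_y=M_n(y)$ and $\min\{f_y(k/N),f_y((k+1)/N)\}\le N\int_{J_k}f_y$. Together these give $\sum_j\sup_{\cA_j}g_y\le N P_{Y_r}(y)+2M_n(y)$.

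The stated constant still comes out. For $1\le y\le n-1$, your Stirling bound combined with the pointwise lower bound on $P_{Y_r}$ gives $M_n(y)\le\sqrt{9\pi n/8}\,P_{Y_r}(y)\le 2\sqrt n\,P_{Y_r}(y)$. The two mode terms therefore fit into the $4\sqrt n$ in $N+4\sqrt n\le(\frac{5\pi}{2}+6)\sqrt n$. At $y\in\{0,n\}$ your pointwise bound only gives the ratio $\pi\sqrt{(n+\frac12)/2}$, which exceeds $2\sqrt n$. This is harmless, because $g_0$ is monotone, so a single copy of $M_n(0)=1$ suffices there. Alternatively, use the exact value $P_{Y_r}(0)=\Gamma(n+\frac12)/(\sqrt\pi\,\Gamma(n+1))\ge 1/\sqrt{\pi(n+\frac12)}$, as the paper does.
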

\begin{IEEEproof} 
   The proof is given in Appendix~\ref{proofoftheorem3}.
\end{IEEEproof}

For our purposes, it will be easier to work with $\chi^2$-divergence. To that end, we have the following result.

\begin{prop}\label{prop:chi2-output}
    \begin{align}
    \chi^2(P_{Y^\star} \| P_{Y_r}) & \le \zeta(c_\star^{-1}) \,   \mathrm{Gap}(n), \text{ for $n \ge 1$}, \label{eq:upper_gap_implicit} \\
    &\le \zeta(c_\star^{-1}) \frac{17}{\log\!\left(\frac{n\pi}{2e}\right)}, \text{ for $n \ge 444$ }
    \label{eq:upper_chi2}
\end{align}
where 
\begin{equation}
    \zeta(t) = \frac{(t-1)^2}{t-1-\log(t)}.
\end{equation}
\end{prop}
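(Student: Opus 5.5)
We will compare the $\chi^2$-divergence and the relative entropy through a pointwise inequality between the functions $(u-1)^2$ and $u\log u-u+1$, and then invoke Theorem~\ref{thm:optimal-output}. Write
\begin{equation}
    u(y)=\frac{P_{Y^\star}(y)}{P_{Y_r}(y)}.
\end{equation}
By \eqref{eq:lower_bound_ratio_output_dist}, $u(y)\le t:=c_\star^{-1}$ for every $y$. Moreover, $\chi^2(P_{Y^\star}\|P_{Y_r})=\bbE[(u(Y_r)-1)^2]$.

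The relative entropy available to us is $\kl{P_{Y_r}}{P_{Y^\star}}$, whose direction is reversed. In terms of $u$ it reads
\begin{equation}
    \kl{P_{Y_r}}{P_{Y^\star}}=\bbE\left[-\log u(Y_r)\right]=\bbE\left[\phi(u(Y_r))\right],
    \qquad \phi(u):=u-1-\log u\ge 0,
\end{equation}
where the second equality uses $\bbE[u(Y_r)]=1$. The key step is the elementary claim
\begin{equation}
    (u-1)^2\le \zeta(t)\,\phi(u),\qquad u\in(0,t],
\end{equation}
with $\zeta(t)=(t-1)^2/(t-1-\log t)$.

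To prove the claim, consider the ratio $r(u)=(u-1)^2/\phi(u)$, extended continuously by $r(1)=2$. We will show that $r$ is nondecreasing on $(0,\infty)$. Then $\sup_{u\le t} r(u)=r(t)=\zeta(t)$, which is exactly the claim.

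For the monotonicity of $r$, we will use the representation $\phi(u)=\int_0^1 \frac{s(u-1)^2}{1+s(u-1)}\,\rmd s$. It gives
\begin{equation}
    \frac{1}{r(u)}=\int_0^1\frac{s}{1+s(u-1)}\,\rmd s.
\end{equation}
The integrand is decreasing in $u$ for every $s\in(0,1]$, so $1/r$ is decreasing and $r$ is increasing. This is the only non-routine verification. Note also that $r(u)\to 0$ as $u\to 0$, which is consistent with the monotonicity.

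Taking expectations under $P_{Y_r}$ then yields
\begin{equation}
    \chi^2(P_{Y^\star}\|P_{Y_r})\le \zeta(c_\star^{-1})\,\kl{P_{Y_r}}{P_{Y^\star}}\le \zeta(c_\star^{-1})\,\mathrm{Gap}(n),
\end{equation}
where the last inequality is the second item of Theorem~\ref{thm:optimal-output}. This gives \eqref{eq:upper_gap_implicit}.

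Finally, \eqref{eq:upper_chi2} follows by substituting the bound of Proposition~\ref{prop:capacity_gap}, valid for $n\ge 444$. The bound is nontrivial because $t=c_\star^{-1}>1$, so $\zeta(t)$ is finite and positive.

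The main obstacle is orienting the ratio correctly. The bounded quantity must be $P_{Y^\star}/P_{Y_r}$, which is what Theorem~\ref{thm:optimal-output} controls, and the divergence must be written as $\bbE[\phi(u)]$ under $P_{Y_r}$. Only in that form does the monotone-ratio argument apply with the upper endpoint $t$.
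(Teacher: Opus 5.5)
Your proof is correct and follows the paper's route. The paper obtains $\chi^2(P_{Y^\star}\|P_{Y_r})\le \zeta(\beta_1^{-1})\,\kl{P_{Y_r}}{P_{Y^\star}}$ by citing Eq.~(169) and Thm.~6 of \cite{sason2016f}, and then uses monotonicity of $\zeta$ to pass from $\beta_1^{-1}$ to $c_\star^{-1}$. Your monotone-ratio argument, built on the integral representation of $u-1-\log u$, is a self-contained proof of that same inequality with the endpoint $t=c_\star^{-1}$ built in, and both arguments then conclude via Theorem~\ref{thm:optimal-output} and Proposition~\ref{prop:capacity_gap}.
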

\begin{IEEEproof}
    By combining \cite[Eq.~(169)]{sason2016f} and \cite[Thm.~6]{sason2016f}, we obtain
\begin{equation}\label{eq:from_chi2_to_kl}
    \chi^2(P_{Y^\star} \| P_{Y_r}) \le \zeta(\beta_1^{-1}) \kl{P_{Y_r}}{P_{Y^{\star}}},
\end{equation}
where
\begin{equation}\label{eq:lower_beta1}
    \beta_1 := \inf_y \frac{P_{Y_r}}{P_{Y^\star}}(y) \ge c_\star >0.
\end{equation}
Putting together~\eqref{eq:from_chi2_to_kl}, \eqref{eq:upper_gap_implicit} and ~\eqref{eq:upper_gap} yields
\begin{align}
    \chi^2(P_{Y^\star} \| P_{Y_r}) &\le \zeta(\beta_1^{-1}) \mathrm{Gap}(n)  \\
    &\le \zeta(c_\star^{-1}) \mathrm{Gap}(n)   \label{eq:monotonicity_of_Z}\\
    &\le \zeta(c_\star^{-1}) \frac{17}{\log\!\left(\frac{n\pi}{2e}\right)},
\end{align}
where in \eqref{eq:monotonicity_of_Z} we used~\eqref{eq:lower_beta1} and that $t \mapsto \zeta(t)$ is increasing for $t\ge 0$.
\end{IEEEproof}

\subsection{A Lower Bound on the Support Size}
\label{sec:lower_bound_support}

We now state the main result of this work, which is an improved lower bound on the cardinality of $P_{X^\ast}$.

\begin{theorem}\label{thm:main-support}
For every $n \ge 1$,
\begin{equation}\label{eq:explicit-support}
|\supp(P_{X^\ast})| \ge \max  \left\{2, \,  \rme^{C(n)}, \,  B(n) \right \},
\end{equation}
where $C(n)$ is the capacity defined in \eqref{eq:capacity_opt_problem} and 
\begin{align}
B(n)&= 
\frac{1}{8}
\min\left\{
\sqrt{
n \,
\log^{+}\!\left(
\frac{1}{4 \,\zeta(c_\star^{-1}) {\rm Gap}(n) }
\right)}, n
\right\} \label{eq:use_def_alphan} \\
&\ge
\frac{1}{8}
\min\left\{
\sqrt{
n \,
\log^{+}\!\left(
\frac{\log\!\left(\frac{n\pi}{2e}\right)}{3268}
\right)}, n
\right\}, \text{ for $n \ge 444$.} 
\end{align}
\end{theorem}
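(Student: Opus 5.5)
The three lower bounds are handled separately, and the theorem is their maximum. The bound $2$ follows from Proposition~\ref{prop:0_is_optimal}, since $\{0,1\}\subseteq\supp(P_{X^\star})$. The bound $\rme^{C(n)}$ is \eqref{eq:sqrt(n) lower bound}, which comes from Proposition~\ref{prop:ExactNumSup} and $\cD\ge 0$. All the work is in $B(n)$. For that term I would sandwich $\chi^2(P_{Y^\star}\|P_{Y_r})$: Proposition~\ref{prop:chi2-output} bounds it above by $\zeta(c_\star^{-1})\mathrm{Gap}(n)$, and Theorem~\ref{thm:main} bounds it below in terms of $K=|\supp(P_{X^\star})|$. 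If $K$ were smaller than $B(n)$, a suitable choice of $L$ would make the lower bound exceed the upper bound.

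Concretely, set $g=\zeta(c_\star^{-1})\mathrm{Gap}(n)$ and argue by contradiction, assuming $K<B(n)$. We may assume $\log^+(1/(4g))>0$, since otherwise $B(n)=0$ and there is nothing to prove. Choose $L=2K$; then $(L-K)/(2L)=1/4$. We need $L\le (n+2)/2$, which holds because $K<B(n)\le n/8$ gives $L<n/4$; this is exactly the role of the cap $n$ inside the $\min$.

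Combining Theorem~\ref{thm:main} with Proposition~\ref{prop:uniform-explicit} gives
\[
g\ \ge\ \chi^2(P_{Y^\star}\|P_{Y_r})\ \ge\ \tfrac14\exp\!\Big(-\tfrac{(2L-2)^2}{n-2L+3}\Big).
\]
Since $L\le n/4$, the denominator satisfies $n-2L+3\ge n/2$, so the exponent is at most $2(4K)^2/n=32K^2/n$. Hence
\[
4g\ \ge\ \exp\!\big(-32K^2/n\big),
\qquad\text{i.e.}\qquad
K\ \ge\ \sqrt{\tfrac{n}{32}\log\tfrac{1}{4g}}.
\]
This has the form $\tfrac{1}{4\sqrt2}\sqrt{n\log(1/(4g))}$, and $\tfrac{1}{4\sqrt2}>\tfrac18$. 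So $K\ge B(n)$, contradicting the assumption. The integrality of $L$ is automatic since $L=2K$. The edge cases $K=1$ and small $n$ are covered by the $\max$ with $2$.

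For the explicit form when $n\ge 444$, use $\mathrm{Gap}(n)\le 17/\log(n\pi/(2e))$ from Proposition~\ref{prop:capacity_gap}, and note that $x\mapsto\log^+(1/(4\zeta x))$ is decreasing. This requires evaluating $4\cdot17\cdot\zeta(c_\star^{-1})$ numerically with $c_\star\approx0.023$: here $t=1/c_\star\approx43.47$, so $\zeta(t)\approx(42.47)^2/(42.47-3.77)\approx46.6$, and $68\cdot46.6\approx3170$, which is below $3268$.

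The step I expect to be most delicate is the constant bookkeeping in the choice of $L$. One must keep $L\le (n+2)/2$ while also controlling $n-2L+3$ from below, and make sure the resulting constant is at least $1/8$. The factor-of-two slack in the denominator and the $\min\{\cdot,n\}$ truncation should absorb both issues.
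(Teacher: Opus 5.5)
Your proof is correct and follows the paper's route. You sandwich $\chi^2(P_{Y^\star}\|P_{Y_r})$ between the upper bound of Proposition~\ref{prop:chi2-output} and the lower bound of Theorem~\ref{thm:main} combined with Proposition~\ref{prop:uniform-explicit} at $L=2K$, then pass to the explicit form via Proposition~\ref{prop:capacity_gap} and the numerical value of $\zeta(c_\star^{-1})$. The only difference is algebraic: the paper splits on $K\le (n+2)/4$, solves the resulting quadratic in $K$ exactly and lower-bounds its root by $\frac18\min\{\sqrt{n\alpha},n\}$, whereas your contradiction hypothesis $K<B(n)\le n/8$ gives $n-2L+3\ge n/2$ directly and yields the constant $1/\sqrt{32}>1/8$ with less bookkeeping.
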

\begin{IEEEproof}
  The first bound in \eqref{eq:explicit-support} is trivial, since otherwise the mutual information would be zero.  The second bound in  \eqref{eq:explicit-support} was shown in \eqref{eq:sqrt(n) lower bound}. 
 The proof of the third bound is given in  Appendix~\ref{sec:proof_lowerbound_supp}. 
\end{IEEEproof}

\begin{corollary}\label{cor:expC_LB}
From Theorem~\ref{thm:main-support}, we can obtain the following explicit lower bound, valid for any $n\geq1$
\begin{align} \label{eq:expC_LB}
    |\supp(P_{X^{\star}})| \geq \rme^{\underline{C}(n)},
\end{align}
where $\underline{C}(n)$ is the capacity lower bound in~\eqref{eq:capacity-lb-asymp}.
\end{corollary}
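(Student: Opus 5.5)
The plan is to combine the bound $|\supp(P_{X^\star})| \ge \rme^{C(n)}$ from \eqref{eq:sqrt(n) lower bound}, which is the second term of Theorem~\ref{thm:main-support}, with the capacity lower bound of Theorem~\ref{thm:capacity-lb}. Since $C(n) \ge \underline{C}(n)$ for every $n\ge 1$ by \eqref{eq:capacity-lb-asymp}, monotonicity of the exponential gives
\begin{equation*}
|\supp(P_{X^\star})| \ge \max\left\{2,\rme^{C(n)},B(n)\right\} \ge \rme^{C(n)} \ge \rme^{\underline{C}(n)} .
\end{equation*}
The bound then holds for all $n\ge1$ with no restriction, because Theorem~\ref{thm:capacity-lb} holds for every $n \ge 1$.

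The only step needing any care is to make sure the chain uses the version of the capacity lower bound valid for all $n$. This is the statement \eqref{eq:capacity-lb-asymp} as written, not the upper-bound-based estimates that need $n\ge 28$ or $n \ge 444$. If one wanted the bound to be informative for small $n$, one could also note that $\max\{2,\cdot\}$ already dominates there. This is not needed, since the corollary only claims the inequality.

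The corollary is explicit because $\rme^{\underline{C}(n)} = \sqrt{\frac{n\pi}{2e}}\,\rme^{r_{\rm LB}(n)}$ is a closed form. I do not expect a real obstacle: the argument is a two-line consequence of results already established.
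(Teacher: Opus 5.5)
Your proposal is correct and is the paper's own argument: combine $|\supp(P_{X^\star})| \ge \rme^{C(n)}$ (the second term in Theorem~\ref{thm:main-support}) with $C(n)\ge \underline{C}(n)$ from Theorem~\ref{thm:capacity-lb}, which holds for every $n\ge 1$, and use monotonicity of the exponential. Your remark that no large-$n$ restriction is needed is also right.
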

\begin{proof}
    By monotonicity of the exponential, Theorem~\ref{thm:capacity-lb}, and Theorem~\ref{thm:main-support}, we get $|\supp(P_{X^{\star}})| \geq \rme^{C(n)} \geq \rme^{\underline{C}(n)}$.
\end{proof}

We make the following remarks:
\begin{itemize}
   
    \item In view of Theorems~\ref{thm:capacity-lb} and~\ref{thm:capacity-ub}, we have that  $\rme^{C(n)} =\Theta(\sqrt{n})$. Therefore, for large $n$ the bound in \eqref{eq:explicit-support} can be expressed as 
    \begin{equation}
        |\supp(P_{X^\ast})| \ge \Omega\!\left(\sqrt{n\log\log n}\right).
    \end{equation}
    \item The explicit constants are not optimized. In particular, the improvement over the order-$\sqrt n$ lower bound is asymptotic and becomes visible only for sufficiently large $n$.
\end{itemize}

\section{Numerical Results} \label{sec:num_res}
In the following we provide numerical estimates of the optimal input distribution $P_{X^\star}$ and of the channel capacity $C(n)$. We denote by $P_{\hat{X}^\star}$ and $P_{\hat{Y}^\star}$ the numerical estimates of, respectively, the capacity-achieving input and output distributions. Moreover, we denote by $\hat{C}(n)$ the channel capacity estimate.

The numerical estimates are derived via a two-stage optimization algorithm. In the first phase, we derive the $\hat{C}(n)$ and $P_{\hat{Y}^\star}$ estimates via the algorithm described in~\cite{komninakis2001capacity}. While the dual formulation in~\cite{komninakis2001capacity} is computationally efficient, deriving $P_{\hat{X}^\star}$ from $P_{\hat{Y}^\star}$ is not a trivial task. The purpose of the second optimization stage is to retrieve a reliable estimate of $P_{\hat{X}^\star}$. In this phase, we use the DAB optimization presented in~\cite{farsad2020capacities}. Generally, direct use of the DAB optimization can be faster with CAID characterized by a low-cardinality support and when past estimates $P_{\hat{X}^\star}$, from lower $n$, can be used to reliably initialize the optimization process. On the other hand, when $n$ increases, so does the cardinality of $\supp(P_{X^\star})$ and DAB complexity. 

Since the complexity of the dual formulation technique in~\cite{komninakis2001capacity} scales more slowly with $n$, we use the first stage to provide reasonably reliable estimates of $P_{\hat{Y}^\star}$. Then, we finely sample $i(x; P_{\hat{Y}^\star})$ for $x \in [0,1]$ and derive a rough estimate of $\supp(P_{X^\star})$ selecting the peaks of $i(x; P_{\hat{Y}^\star})$. Finally, we use the DAB optimization to derive the corresponding probabilities and refine further the initial estimate of $\supp(P_{X^\star})$. The optimization stops when the KKT conditions in Lemma~\ref{lem:KKT} are satisfied up to a tolerance $\varepsilon$. Specifically, in the DAB phase, we define $P_{\hat{Y}^\star}$ as
\begin{align}
P_{\hat{Y}^\star}(y) & = \sum_x P_{\hat{X}^\star}(x) P_{Y \mid X} (y\mid x), \label{eq:pYhat}
\end{align}
and evaluate the capacity estimate as $\hat{C}(n) = \sum_{x} P_{\hat{X}^\star}(x) i(x;P_{\hat{Y}^\star})$. The DAB algorithm reaches convergence when
\begin{subequations}
    \begin{align}
    i(x;P_{\hat{Y}^\star}) - \hat{C}(n) &\le  \varepsilon, \qquad x \in \mathcal{X}, \\
    | i(x;P_{\hat{Y}^\star}) - \hat{C}(n) | &\leq \varepsilon, \qquad x \in \supp(P_{\hat{X}^\star}),  \label{eq:KKT_estimate}
\end{align}
\end{subequations}
where $\mathcal{X}$ is a fine sampling grid over the interval $[0,1]$.

Let us now provide numerical results obtained for a tolerance $\varepsilon = 10^{-6}$ and compare the derived estimates with the bounds proposed throughout this work. The shown numerical estimates were derived for $n \in \{4,10,20,30, \dots, 350\}$.

In Sec.~\ref{sec:cap_bounds}, we show that the reference input $X_r \sim \mathrm{Beta}(\frac{1}{2},\frac{1}{2})$ represents a good asymptotic approximation of the CAID. Denote by $F_{X_r}(x) = \frac{2}{\pi}\arcsin(\sqrt{x})$ the cumulative distribution of $X_r$. As $X^{\star}$ tends to $X_r$, we expect $\supp(X^{\star})$ to become uniformly spaced in the transformed domain $F_{X_r}(x)$. In Fig.~\ref{fig:pmf_evo}, we display support locations in the transformed domain $F_{X_r}(x)$ and, indeed, $F_{X_r}(\supp(\hat{X}^{\star}))$ appears to be roughly uniformly spread, especially at large values of $n$. Moreover, this representation facilitates the visualization of the support for larger $n$, where many mass points are present and would \emph{visually overlap} at the edges, in the linear domain. Notice also that the black dashed lines represent the transformed-domain thresholds $1/n$ and $1-1/n$ from Proposition~\ref{prop:loc_info}.

In Fig.~\ref{fig:suppPx} and Fig.~\ref{fig:capacity}, we show bounds and numerical estimates versus $n$ of, respectively, the input support cardinality $|\supp(P_{X^\star})|$ and the channel capacity $C(n)$ in bit per channel use (bpcu). In both cases, lower bounds appear to be tighter than the upper bounds, in the considered range of $n$.

Let us define the following relative capacity gaps, i.e., the normalized lower-gap estimate 
\begin{equation}
    \underline{G}(n) = \frac{\hat{C}(n) - \underline{C}(n)}{\min\left\{\overline{C}(n),\overline{C}_1(n)\right\} - \underline{C}(n)}, \label{eq:norm-lb-gap}
\end{equation}
and the normalized upper-gap estimate
\begin{equation} 
    \overline{G}(n) = \frac{\min\left\{\overline{C}(n),\overline{C}_1(n)\right\}-\hat{C}(n)}{\min\left\{\overline{C}(n),\overline{C}_1(n)\right\} - \underline{C}(n)}. \label{eq:norm-ub-gap}
\end{equation}
Notice that, in the considered range of $n$ values, $\overline{C}_1(n)$ is significantly tighter than $\overline{C}(n)$. In Fig.~\ref{fig:norm_gaps}, we show $\underline{G}(n)$ and $\overline{G}(n)$. The figure suggests that the gap between the upper bound and the true capacity may be further improved. 

Fig.~\ref{fig:C_vs_log2_k} shows the capacity estimate $\hat{C}(n)$ scaling, as a function of $\log_2 |\supp(P_{\hat{X}^\star})|$. The capacity samples in the figure are from a denser sampling of $n$, to more appropriately display the capacity scaling versus the support cardinality. Specifically, we considered $n \in \{1,2, \dots, 80\} \cup \{90,100, \dots, 350\}$. The red lines represent visually fitted trends, intended only to approximate the capacity scaling law behavior in the given range. The exponent $3/4$, as conjectured in~\cite{mattingly2018maximizing}, appears to provide the best fit for $n\le 350$. We caution, however, against drawing a firm conclusion from this numerical evidence alone, since the range $n\le 350$ may still correspond to a finite-$n$ regime in which the asymptotic scaling is not yet visible.

Finally, in Fig.~\ref{fig:pY_vs_pYr}, we show the evolution of the output distributions $P_{\hat{Y}^\star}$ in~\eqref{eq:pYhat} and $P_{Y_r}$ in~\eqref{eq:output_dist_X_r}. Both distributions are normalized by $P_{\hat{Y}^\star}(1)$ for each $n$ and are displayed versus the normalized upper-half of the output support, i.e., $y/n \in \left\{\frac{1}{2},\frac{1}{2}+\frac{1}{n}, \dots, 1-\frac{1}{n},1\right\}$.  Notice that, since both output distributions are symmetric about $n/2$, we only show the upper-half support to improve the readability of the figure. Fig.~\ref{fig:pY_vs_pYr} qualitatively shows how $P_{\hat{Y}^\star}$ approaches $P_{Y_r}$ as $n$ increases.

\begin{figure}
\begin{minipage}[t]{0.45\linewidth}
    \centering
    \input{IEEE_TIT_submission_correction/Figures/PMFevo.tex}
    \caption{Evolution, versus $n$ and in the transformed domain $F_{X_r}(x)$, of the support estimate $\supp(P_{\hat{X}^\star})$ (blue curves) and thresholds $1/n$ and $1-1/n$ from Prop.~\ref{prop:loc_info} (black dashed curves).}
    \label{fig:pmf_evo}
\end{minipage}
\hfill
\begin{minipage}[t]{0.45\linewidth}
    \centering
    \begin{tikzpicture}
	
	\begin{axis}[%
		width=8cm,
		height=6cm,
		xmin=4,
		xmax=350,
		xlabel style={font=\color{white!15!black}},
		xlabel={$n$},
		ymin=0,
		ymax=250,
		ylabel style={font=\color{white!15!black}},
		ylabel={Input Support Cardinality: $|\supp(P_{X^\star})|$},
        axis x line*=bottom,
        axis y line*=left,
		xmajorgrids,
		ymajorgrids,
        legend cell align=left,
        legend pos=north west
		]
        
		\addplot [color=blue, line width = 0.8pt]
		table[row sep=crcr]{%
            4   4\\
			10	7\\
			20	12\\
			30	17\\
			40	22\\
			50	27\\
			60	32\\
			70	37\\
			80	42\\
			90	47\\
			100	52\\
			110	57\\
			120	62\\
			130	67\\
			140	72\\
			150	77\\
			160	82\\
			170	87\\
			180	92\\
			190	97\\
			200	102\\
			210	107\\
			220	112\\
			230	117\\
			240	122\\
			250	127\\
			260	132\\
			270	137\\
			280	142\\
			290	147\\
			300	152\\
			310	157\\
			320	162\\
			330	167\\
			340	172\\
			350	177\\
		};
        \addlegendentry{Upper Bound in~\eqref{eq: n/2 bound}}
        
		\addplot [color=black, line width = 0.8pt]
		table[row sep=crcr]{%
            4   3\\
			10	5\\
			20	7\\
			30	9\\
			40	10\\
			50	11\\
			60	13\\
			70	14\\
			80	15\\
			90	16\\
			100	17\\
			110	18\\
			120	19\\
			130	20\\
			140	21\\
			150	22\\
			160	23\\
			170	24\\
			180	25\\
			190	26\\
			200	27\\
			210	27\\
			220	28\\
			230	29\\
			240	30\\
			250	31\\
			260	31\\
			270	32\\
			280	33\\
			290	34\\
			300	34\\
			310	35\\
			320	36\\
			330	37\\
			340	38\\
			350	38\\
		};
        \addlegendentry{$|\supp(P_{\hat{X}^\star})|$, num. estimate}
        
        \addplot [color=red, line width = 0.8pt]
        table[row sep=crcr]{%
        	4	1.01427309245884\\
            10	1.89153719073153\\
            20	2.89743031314996\\
            30	3.66917482157263\\
            40	4.31880550356318\\
            50	4.89043055466426\\
            60	5.40672535481279\\
            70	5.88115495495322\\
            80	6.32248306504266\\
            90	6.73678953273831\\
            100	7.12849580915091\\
            110	7.50093608032229\\
            120	7.85669810910735\\
            130	8.19783797067912\\
            140	8.52602138461779\\
            150	8.84262014995276\\
            160	9.14877996735867\\
            170	9.44546938364213\\
            180	9.73351590689809\\
            190	10.013633176402\\
            200	10.2864417540386\\
            210	10.5524852767467\\
            220	10.8122431752993\\
            230	11.0661408113145\\
            240	11.3145576453828\\
            250	11.5578338843645\\
            260	11.7962759401948\\
            270	12.0301609499864\\
            280	12.2597405474572\\
            290	12.4852440318636\\
            300	12.706881048043\\
            310	12.9248438666936\\
            320	13.1393093354343\\
            330	13.3504405569305\\
            340	13.5583883393453\\
            350	13.7632924557628\\
        };
        \addlegendentry{Lower Bound in~\eqref{eq:expC_LB}}
        
	\end{axis}
\end{tikzpicture}%
    \caption{Numerical estimates and bounds on the input support cardinality $|\supp(P_{X^\star})|$ versus $n$.}
    \label{fig:suppPx}
\end{minipage}
\end{figure}

\begin{figure}
    \centering
    \definecolor{mygreen}{RGB}{46,200,50}
\begin{tikzpicture}

\begin{axis}[%
width=8cm,
height=6cm,
xmin=4,
xmax=350,
xlabel style={font=\color{white!15!black}},
xlabel={$n$},
ymin=0,
ymax=11,
ylabel style={font=\color{white!15!black}},
ylabel={Channel Capacity (bpcu)},
axis background/.style={fill=white},
legend cell align=left,
legend style={
    at={(1.05,1)},
    anchor=north west,
},
xmajorgrids,
ymajorgrids
]

\addplot [color=blue, densely dashed, line width = 0.5pt]
table[row sep=crcr]{%
	4	2\\
	10	2.8073549220576\\
	20	3.58496250072116\\
	30	4.08746284125034\\
	40	4.4594316186373\\
	50	4.75488750216347\\
	60	5\\
	70	5.20945336562895\\
	80	5.39231742277876\\
	90	5.55458885167764\\
	100	5.70043971814109\\
	110	5.83289001416474\\
	120	5.95419631038688\\
	130	6.06608919045777\\
	140	6.16992500144231\\
	150	6.2667865406949\\
	160	6.35755200461808\\
	170	6.44294349584873\\
	180	6.52356195605701\\
	190	6.59991284218713\\
	200	6.6724253419715\\
	210	6.74146698640115\\
	220	6.8073549220576\\
	230	6.8703647195834\\
	240	6.93073733756289\\
	250	6.98868468677217\\
	260	7.04439411935845\\
	270	7.09803208296053\\
	280	7.14974711950468\\
	290	7.19967234483636\\
	300	7.24792751344359\\
	310	7.29462074889163\\
	320	7.33985000288462\\
	330	7.38370429247405\\
	340	7.4262647547021\\
	350	7.467605550083\\
};
\addlegendentry{Upper Bound in~\cite[Eq.~(53)]{Zieder2024}}

\addplot [color=blue, densely dotted, line width = 0.5pt]
table[row sep=crcr]{%
	4	5.25004996294657\\
	10	5.91101401039025\\
	20	6.41101401039025\\
	30	6.70349526075083\\
	40	6.91101401039025\\
	50	7.07197805783393\\
	60	7.20349526075083\\
	70	7.31469147141905\\
	80	7.41101401039025\\
	90	7.49597651111141\\
	100	7.57197805783393\\
	110	7.6407298197089\\
	120	7.70349526075083\\
	130	7.76123386946079\\
	140	7.81469147141905\\
	150	7.86445930819451\\
	160	7.91101401039025\\
	170	7.95474543101542\\
	180	7.99597651111141\\
	190	8.03497776711204\\
	200	8.07197805783393\\
	210	8.10717272177963\\
	220	8.1407298197089\\
	230	8.17279498841875\\
	240	8.20349526075083\\
	250	8.23294210527761\\
	260	8.26123386946079\\
	270	8.28845776147198\\
	280	8.31469147141905\\
	290	8.34000450795403\\
	300	8.36445930819451\\
	310	8.38811216558369\\
	320	8.41101401039025\\
	330	8.43321107006948\\
	340	8.45474543101542\\
	350	8.47565551886273\\
};
\addlegendentry{Upper Bound in~\cite[Thm.~1]{TangCapBoundsBinomil}}

\addplot [color=blue, line width = 0.8pt]
  table[row sep=crcr]{%
30	12.7698472063506\\
40	10.367809965475\\
50	9.57945965323824\\
60	9.13448646535147\\
70	8.84076330715802\\
80	8.63030732489015\\
90	8.47147402504577\\
100	8.34717115162596\\
110	8.24723578464013\\
120	8.16519909237347\\
130	8.09672699853892\\
140	8.03879665097482\\
150	7.98923019010056\\
160	7.94641589878528\\
170	7.9091338075184\\
180	7.87644250105905\\
190	7.84760330381509\\
200	7.82202811255603\\
210	7.7992426496166\\
220	7.7788600405787\\
230	7.76056146665069\\
240	7.74408176557947\\
250	7.72919855806522\\
260	7.71572392768259\\
270	7.70349797811996\\
280	7.69238378946887\\
290	7.68226343014279\\
300	7.67303477440613\\
310	7.66460894117521\\
320	7.65690821658391\\
330	7.6498643566289\\
340	7.64341719092631\\
350	7.63751346687505\\
};
\addlegendentry{$\overline{C}(n)$ in~\eqref{eq:capacity-ub}}

\addplot [color=mygreen, line width = 0.8pt]
table[row sep=crcr]{%
    4	1.68650052718322\\
    10	2.22039672596666\\
    20	2.65388195532919\\
    30	2.91637361342171\\
    40	3.10586461174323\\
    50	3.25446207806976\\
    60	3.37681767521722\\
    70	3.48087426288073\\
    80	3.57142953605741\\
    90	3.65160672632059\\
    100	3.72355426179935\\
    110	3.78881401171516\\
    120	3.84853040629056\\
    130	3.90357646750939\\
    140	3.95463349005358\\
    150	4.00224344533836\\
    160	4.04684460587758\\
    170	4.08879645489519\\
    180	4.12839753219163\\
    190	4.16589849290323\\
    200	4.2015118428507\\
    210	4.23541931703594\\
    220	4.26777755485923\\
    230	4.29872252340481\\
    240	4.3283730064202\\
    250	4.35683338637713\\
    260	4.38419588490688\\
    270	4.41054238348964\\
    280	4.4359459154162\\
    290	4.46047189780428\\
    300	4.48417915622433\\
    310	4.50712078249024\\
    320	4.52934485719842\\
    330	4.55089506184172\\
    340	4.57181120016567\\
    350	4.59212964444691\\
};
\addlegendentry{$\overline{C}_1(n)$ in~\eqref{eq:C1_UB_finite_n}}

\addplot [color=mygreen, densely dashed, line width = 0.8pt]
table[row sep=crcr]{%
    4	2.17204845549739\\
    10	2.57611752134614\\
    20	2.92694658166884\\
    30	3.1480692443047\\
    40	3.31130759367671\\
    50	3.44125362530936\\
    60	3.54943871077432\\
    70	3.64223696930579\\
    80	3.72355664291532\\
    90	3.7959728134897\\
    100	3.86127467353816\\
    110	3.92075724812236\\
    120	3.97538884508351\\
    130	4.02591290914933\\
    140	4.07291294001832\\
    150	4.11685549390981\\
    160	4.158119603883\\
    170	4.19701747060238\\
    180	4.23380936376676\\
    190	4.26871457877037\\
    200	4.30191964108371\\
    210	4.33358454984734\\
    220	4.36384759839178\\
    230	4.39282914464137\\
    240	4.42063459494126\\
    250	4.44735679067899\\
    260	4.47307793586251\\
    270	4.49787116786144\\
    280	4.52180184788455\\
    290	4.5449286292335\\
    300	4.56730434780022\\
    310	4.58897676921636\\
    320	4.60998921952582\\
    330	4.63038112054587\\
    340	4.6501884467215\\
    350	4.66944411691581\\
            };
\addlegendentry{$\overline{C}_2(n)$ in~\eqref{eq:C2_UB_finite_n}}

\addplot [color=black, line width = 0.8pt]
  table[row sep=crcr]{%
4   1.372300974844358\\
10	1.77805893825711\\
20	2.14120186959974\\
30	2.36990831715177\\
40	2.53840859956353\\
50	2.67231094790467\\
60	2.7836102069649\\
70	2.87894915126133\\
80	2.96239520782702\\
90	3.03662686798978\\
100	3.10350368687856\\
110	3.1643704039794\\
120	3.22023168127057\\
130	3.27185817601249\\
140	3.31985429782191\\
150	3.36470295038998\\
160	3.40679616616725\\
170	3.44645665355729\\
180	3.48395324923845\\
190	3.51951235102573\\
200	3.55332640807043\\
210	3.58556041155498\\
220	3.61635688025569\\
230	3.64583982041056\\
240	3.67411775942071\\
250	3.7012862592827\\
260	3.72742988355639\\
270	3.75262385366366\\
280	3.77693535619\\
290	3.80042466247182\\
300	3.82314605144426\\
310	3.84514856994925\\
320	3.86647669615506\\
330	3.88717088596043\\
340	3.90726804821599\\
350	3.92680194057999\\
};
\addlegendentry{$\hat{C}(n)$, num. estimate}

\addplot [color=red, line width = 0.8pt]
  table[row sep=crcr]{%
4	0.0204461494720559\\
10	0.919559142422956\\
20	1.53477396301536\\
30	1.87545564489915\\
40	2.11063234656042\\
50	2.28996148598214\\
60	2.43475507376538\\
70	2.55609950270823\\
80	2.66049126759248\\
90	2.75206122810392\\
100	2.83359768429628\\
110	2.90707064816435\\
120	2.97392312641943\\
130	3.03524347555359\\
140	3.09187267381503\\
150	3.14447391681716\\
160	3.1935793660855\\
170	3.23962249175675\\
180	3.28296102445364\\
190	3.3238936069816\\
200	3.36267211195167\\
210	3.39951091127802\\
220	3.43459395950979\\
230	3.4680802814974\\
240	3.50010827690378\\
250	3.53079913484466\\
260	3.56025957052566\\
270	3.58858403916311\\
280	3.61585654252474\\
290	3.64215211480137\\
300	3.66753805373458\\
310	3.69207494764363\\
320	3.71581753763043\\
330	3.73881544569994\\
340	3.76111379305034\\
350	3.78275372782034\\
};
\addlegendentry{$\underline{C}(n)$ in~\eqref{eq:capacity-lb-asymp}}

\addplot [color=red, densely dashed, line width = 0.5pt]
table[row sep=crcr]{%
	4	1\\
	10	1\\
	20	1\\
	30	1\\
	40	1\\
	50	1\\
	60	1\\
	70	1\\
	80	1\\
	90	1\\
	100	1\\
	110	1\\
	120	1\\
	130	1\\
	140	1\\
	150	1\\
	160	1\\
	170	1\\
	180	1\\
	190	1\\
	200	1\\
	210	1\\
	220	1\\
	230	1\\
	240	1\\
	250	1\\
	260	1\\
	270	1.00666983921425\\
	280	1.0418827452104\\
	290	1.0757662727985\\
	300	1.10841424878093\\
	310	1.13991101938314\\
	320	1.17033267814354\\
	330	1.19974810198397\\
	340	1.22821983024197\\
	350	1.25580481431652\\
};
\addlegendentry{Lower Bound in~\cite[Eq.~(52)]{Zieder2024}}

\addplot [color=red, densely dotted, line width = 0.5pt]
  table[row sep=crcr]{%
70	0.0211655338755794\\
80	0.0951649207981956\\
90	0.160994557830096\\
100	0.220300858537502\\
110	0.274275756459357\\
120	0.323810298427255\\
130	0.36958831161509\\
140	0.412146197978492\\
150	0.451912578713712\\
160	0.489235431065045\\
170	0.524401176766146\\
180	0.557648429219106\\
190	0.58917809978814\\
200	0.619160963385441\\
210	0.647743413968354\\
220	0.675051906467111\\
230	0.701196429585275\\
240	0.726273252871427\\
250	0.750367122950814\\
260	0.773553036495644\\
270	0.795897684293612\\
280	0.81746063709194\\
290	0.838295326773754\\
300	0.858449863887091\\
310	0.877967723257631\\
320	0.896888322457165\\
330	0.915247512633139\\
340	0.9330779971806\\
350	0.950409690636087\\
};
\addlegendentry{Lower Bound in~\cite[Thm.~2]{TangCapBoundsBinomil}}

\end{axis}
\end{tikzpicture}
    \caption{Numerical estimates and bounds on the channel capacity in (bpcu) versus $n$.}
    \label{fig:capacity}
\end{figure}

\begin{figure}
\begin{minipage}[t]{0.45\linewidth}
    \centering
    \begin{tikzpicture}

\begin{axis}[%
width=8cm,
height=6cm,
unbounded coords=jump,
xmin=30,
xmax=350,
xlabel style={font=\color{white!15!black}},
xlabel={$n$},
ymode=log,
ymin=0.1,
ymax=1.2,
yminorticks=true,
ylabel style={font=\color{white!15!black}},
ylabel={Normalized Capacity Gap},
axis background/.style={fill=white},
xmajorgrids,
ymajorgrids,
yminorgrids,
legend style={
    at={(0.98,0.55)},
    anchor=east
}
]
\addplot [color=blue, line width = 0.8pt]
    table[row sep=crcr]{%
			4	0.412051693130411\\
            10	0.340040750133115\\
            20	0.458114935511674\\
            30	0.52498401679584\\
            40	0.57017445277004\\
            50	0.603577784130805\\
            60	0.629690072971917\\
            70	0.650888343348734\\
            80	0.668579144508555\\
            90	0.683656201437278\\
            100	0.696720031735088\\
            110	0.708192013173889\\
            120	0.718378110358934\\
            130	0.727506955683021\\
            140	0.735753386435846\\
            150	0.743253838880826\\
            160	0.750116622430674\\
            170	0.756428987723432\\
            180	0.762262188886808\\
            190	0.767675048785481\\
            200	0.77271665957641\\
            210	0.77742836536227\\
            220	0.781845295850221\\
            230	0.785997472865133\\
            240	0.789910790214953\\
            250	0.79360768137429\\
            260	0.797107725302439\\
            270	0.800428068365249\\
            280	0.80358383001924\\
            290	0.806588388845188\\
            300	0.809453629954132\\
            310	0.8121901667844\\
            320	0.814807499513799\\
            330	0.817314168079561\\
            340	0.819717870215348\\
            350	0.822025575754663\\
		};
		\addlegendentry{$\overline{G}(n)$ in~\eqref{eq:norm-ub-gap}}

		\addplot [color=red, line width = 0.8pt]
		table[row sep=crcr]{%
			4	0.587948306869589\\
            10	0.659959249866885\\
            20	0.541885064488326\\
            30	0.47501598320416\\
            40	0.42982554722996\\
            50	0.396422215869195\\
            60	0.370309927028083\\
            70	0.349111656651266\\
            80	0.331420855491445\\
            90	0.316343798562722\\
            100	0.303279968264912\\
            110	0.291807986826111\\
            120	0.281621889641066\\
            130	0.272493044316979\\
            140	0.264246613564154\\
            150	0.256746161119174\\
            160	0.249883377569326\\
            170	0.243571012276568\\
            180	0.237737811113192\\
            190	0.232324951214519\\
            200	0.22728334042359\\
            210	0.22257163463773\\
            220	0.218154704149779\\
            230	0.214002527134867\\
            240	0.210089209785047\\
            250	0.20639231862571\\
            260	0.202892274697561\\
            270	0.199571931634751\\
            280	0.19641616998076\\
            290	0.193411611154812\\
            300	0.190546370045868\\
            310	0.1878098332156\\
            320	0.185192500486201\\
            330	0.182685831920439\\
            340	0.180282129784652\\
            350	0.177974424245337\\
		};
		\addlegendentry{$\underline{G}(n)$ in~\eqref{eq:norm-lb-gap}}

\end{axis}

\end{tikzpicture}%
    \caption{Normalized capacity gap estimates versus $n$.}
    \label{fig:norm_gaps}
\end{minipage}
\hfill
\begin{minipage}[t]{0.45\linewidth}
    \centering
    \begin{tikzpicture}

\begin{axis}[%
width=8cm,
height=6cm,
xmin=1,
xmax=5.248,
xlabel style={font=\color{white!15!black}},
xlabel={$\log_2|\supp(P_{\hat{X}^\star})|$},
ymin=1,
ymax=4.7,
ylabel style={font=\color{white!15!black}},
ylabel={Capacity Estimate (bpcu): $\hat{C}(n)$},
legend pos=north west,
legend cell align=left,
axis background/.style={fill=white},
xmajorgrids,
ymajorgrids,
yminorgrids,
legend style={
    at={(0.02,0.98)},
    font = \footnotesize
}
]
\addplot [color=blue, only marks, mark size=1pt, mark=*, mark options={solid, blue}, forget plot]
  table[row sep=crcr]{%
1.58496250072116	1.08746284125034\\
1.58496250072116	1.24792751344359\\
1.58496250072116	1.37230097484436\\
};

\addplot [color=blue, only marks, mark size=1pt, mark=*, mark options={solid, blue}, forget plot]
  table[row sep=crcr]{%
2	1.45802638586269\\
2	1.53595480712972\\
2	1.60804256873501\\
2	1.67159033789592\\
};

\addplot [color=blue, only marks, mark size=1pt, mark=*, mark options={solid, blue}, forget plot]
  table[row sep=crcr]{%
2.32192809488736	1.72686863853397\\
2.32192809488736	1.77805894436607\\
2.32192809488736	1.82590802276674\\
2.32192809488736	1.8703518605098\\
2.32192809488736	1.91142693374531\\
};

\addplot [color=blue, only marks, mark size=1pt, mark=*, mark options={solid, blue}, forget plot]
  table[row sep=crcr]{%
2.58496250072116	1.94969182963013\\
2.58496250072116	1.9859298239721\\
2.58496250072116	2.02028737463167\\
2.58496250072116	2.0528665871509\\
2.58496250072116	2.08374902766901\\
};

\addplot [color=blue, only marks, mark size=1pt, mark=*, mark options={solid, blue}, forget plot]
  table[row sep=crcr]{%
2.8073549220576	2.1131041980778\\
2.8073549220576	2.14120188578806\\
2.8073549220576	2.1681423213418\\
2.8073549220576	2.19400148008857\\
2.8073549220576	2.21884053432873\\
};

\addplot [color=blue, only marks, mark size=1pt, mark=*, mark options={solid, blue}, forget plot]
  table[row sep=crcr]{%
3	2.24271261271512\\
3	2.26572079334056\\
3	2.28794500813886\\
3	2.30943588403581\\
3	2.33023638741438\\
3	2.35038319109951\\
};

\addplot [color=blue, only marks, mark size=1pt, mark=*, mark options={solid, blue}, forget plot]
  table[row sep=crcr]{%
3.16992500144231	2.36990834081774\\
3.16992500144231	2.38885805135112\\
3.16992500144231	2.4072735714597\\
3.16992500144231	2.42518457036742\\
3.16992500144231	2.44261733491251\\
3.16992500144231	2.4595952002123\\
3.16992500144231	2.4761389612803\\
};

\addplot [color=blue, only marks, mark size=1pt, mark=*, mark options={solid, blue}, forget plot]
  table[row sep=crcr]{%
3.32192809488736	2.49226933236213\\
3.32192809488736	2.50801190128789\\
3.32192809488736	2.52338597551071\\
3.32192809488736	2.53840859952247\\
3.32192809488736	2.55309529349615\\
3.32192809488736	2.5674603071597\\
3.32192809488736	2.58151661487656\\
};

\addplot [color=blue, only marks, mark size=1pt, mark=*, mark options={solid, blue}, forget plot]
  table[row sep=crcr]{%
3.4594316186373	2.59527709910277\\
3.4594316186373	2.60875621438641\\
3.4594316186373	2.62196567087793\\
3.4594316186373	2.63491620904799\\
3.4594316186373	2.6476178485674\\
3.4594316186373	2.6600798753856\\
3.4594316186373	2.67231094790467\\
};

\addplot [color=blue, only marks, mark size=1pt, mark=*, mark options={solid, blue}, forget plot]
  table[row sep=crcr]{%
3.58496250072116	2.68431918747556\\
3.58496250072116	2.69611359600831\\
3.58496250072116	2.70770193134566\\
3.58496250072116	2.71909150777926\\
3.58496250072116	2.73028909506733\\
3.58496250072116	2.74130112099383\\
3.58496250072116	2.7521336064158\\
3.58496250072116	2.76279222783134\\
};

\addplot [color=blue, only marks, mark size=1pt, mark=*, mark options={solid, blue}, forget plot]
  table[row sep=crcr]{%
3.70043971814109	2.773282548729979\\
3.70043971814109	2.7836102069649\\
3.70043971814109	2.79378036001689\\
3.70043971814109	2.80379780468475\\
3.70043971814109	2.81366715549933\\
3.70043971814109	2.82339275642005\\
3.70043971814109	2.83297876113609\\
3.70043971814109	2.84242907939975\\
};

\addplot [color=blue, only marks, mark size=1pt, mark=*, mark options={solid, blue}, forget plot]
  table[row sep=crcr]{%
3.8073549220576	2.85174755530108\\
3.8073549220576	2.86093800644048\\
3.8073549220576	2.87000409804284\\
3.8073549220576	2.87894915126133\\
3.8073549220576	2.88777636721379\\
3.8073549220576	2.8964888984373\\
3.8073549220576	2.90508968777324\\
3.8073549220576	2.9135816141029\\
3.8073549220576	2.92196740417371\\
};

\addplot [color=blue, only marks, mark size=1pt, mark=*, mark options={solid, blue}, forget plot]
  table[row sep=crcr]{%
3.90689059560852	2.9302497447319\\
3.90689059560852	2.93843128088924\\
3.90689059560852	2.94651446701999\\
3.90689059560852	2.9545016376593\\
3.90689059560852	2.96239520782702\\
};

\addplot [color=blue, only marks, mark size=1pt, mark=*, mark options={solid, blue}]
  table[row sep=crcr]{%
1	                1\\
4	3.03662686798978\\
4.08746284125034	3.10350368687856\\
4.16992500144231	3.1643704039794\\
4.24792751344359	3.22023168127057\\
4.32192809488736	3.27185817601249\\
4.39231742277876	3.31985429782191\\
4.4594316186373	3.36470295038998\\
4.52356195605701	3.40679616616725\\
4.58496250072116	3.44645665355729\\
4.64385618977472	3.48395324923845\\
4.70043971814109	3.51951235102573\\
4.75488750216347	3.55332640807043\\
4.75488750216347	3.58556041155498\\
4.8073549220576	3.61635688025569\\
4.85798099512757	3.64583982041056\\
4.90689059560852	3.67411775942071\\
4.95419631038687	3.7012862592827\\
4.95419631038687	3.72742988355639\\
5	3.75262385366366\\
5.04439411935845	3.77693535619\\
5.08746284125034	3.80042466247182\\
5.08746284125034	3.82314605144426\\
5.12928301694497	3.84514856994925\\
5.16992500144231	3.86647669615506\\
5.24792751344359	3.88717088596043\\
5.24792751344359	3.90726804821599\\
5.24792751344359	3.92680194057999\\
};
\addlegendentry{$\hat{C}(n)$}

\addplot [color=red, dashed, line width = 0.8pt]
  table[row sep=crcr]{%
1	1\\
5.32192809488736	4.025349666421151\\
};
\addlegendentry{$0.3 + 0.7 \log_2|\supp(P_{\hat{X}^\star})|$}

\addplot [color=red, line width = 0.8pt]
  table[row sep=crcr]{%
1	0.7\\
5.32192809488736	3.941446071165520\\
};
\addlegendentry{$-0.05 + 0.75 \log_2|\supp(P_{\hat{X}^\star})|$}

\end{axis}
\end{tikzpicture}%
    \caption{The blue dots are the capacity estimates as a function of $\log_2|\supp(P_{\hat{X}^\star})|$. Each dot represents one value of $n$. The red lines are visually fitted trends of the capacity scaling law.}
    \label{fig:C_vs_log2_k}
\end{minipage}
\end{figure}

\begin{figure}
    \centering
    \input{IEEE_TIT_submission_correction/Figures/pY_vs_pYr.tex}
    \caption{Evolution, versus $n$ and the normalized half-support, of the estimated output distribution $P_{\hat{Y}^{\star}}$ and of $P_{Y_r}$, both normalized by $P_{\hat{Y}^{\star}}(1)$ for each $n$.}
    \label{fig:pY_vs_pYr}
\end{figure}

\section{Conclusion} \label{sec:conclusion}

In this work, we investigated the binomial channel and derived structural properties of the optimal input and output distributions. With respect to the capacity-achieving input distribution, we proved its discreteness and uniqueness. Furthermore, we proved that the optimal input distribution is symmetric about $1/2$ and always comprises $\{0,1\}$.

We investigated the beta-binomial output, induced by a beta input distribution, and proved that it asymptotically approaches the capacity-achieving output distribution in relative entropy and $\chi^2$-divergence.

We also proved that the beta-binomial output cannot be approximated too accurately by binomial mixtures with few components and derive a lower bound on the input support cardinality of order $\Omega(\sqrt{n\log\log n})$.

We derived explicit bounds and the support size and on the channel capacity $C(n)$, showing that $C(n) = \frac{1}{2}\log\frac{n\pi}{2e}+o(1)$. Finally, we derived the channel capacity and the optimal input distribution in closed form for $n \leq 3$, and derived numerical estimates for up to $n=350$.

Possible directions for future investigations include further improving the bounds on the channel capacity and on the support size of the capacity-achieving input distribution, as well as extending the proposed analytical framework to multinomial channels.

\begin{appendices}

\section{Derivatives of Information Density}
\label{sec:Derivatives_info_density}

\begin{lemma}\label{lem:derivative_exp_cnd}
For any function $f:\bbN_0 \cap [0,n]\to \bbR$, and $x \in (0,1)$, we have
\begin{align}
    \frac{\rmd}{\rmd x}\expcnd{f(Y)}{X=x} 
    &= n \expcndn{n-1}{f(Y+1)-f(Y)}{X=x} \label{eq:der_cond_exp_n-1}\\
    &=
    \frac{1}{x}\expcnd{Yf(Y)}{X=x}-\frac{1}{1-x}\expcnd{(n-Y)f(Y)}{X=x} \label{eq:der_cond_exp_0}\\
    &= \frac{1}{x}\expcnd{Y(f(Y)-f(Y-1))}{X=x} \label{eq:der_cond_exp_1} \\
    &= \frac{1}{1-x}\expcnd{(n-Y)(f(Y+1)-f(Y))}{X=x},\label{eq:der_cond_exp_2}
\end{align}
\end{lemma}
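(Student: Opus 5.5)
We will prove the four identities by differentiating the explicit finite sum
\begin{equation*}
\expcnd{f(Y)}{X=x}=\sum_{y=0}^n f(y)\binom{n}{y}x^y(1-x)^{n-y}
\end{equation*}
term by term, which is legitimate since it is a polynomial in $x$. The product rule gives, for $x\in(0,1)$,
\begin{equation*}
\frac{\rmd}{\rmd x}\left[\binom{n}{y}x^y(1-x)^{n-y}\right]=\left(\frac{y}{x}-\frac{n-y}{1-x}\right)\binom{n}{y}x^y(1-x)^{n-y}.
\end{equation*}
Multiplying by $f(y)$ and summing yields \eqref{eq:der_cond_exp_0} at once. This is the base identity, and the remaining three follow from it by reindexing.

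For \eqref{eq:der_cond_exp_1}, the plan is to rewrite the second term of \eqref{eq:der_cond_exp_0} so that it also carries the prefactor $\frac{1}{x}$. We use the elementary relation
\begin{equation*}
(n-y)\binom{n}{y}x^y(1-x)^{n-y}=\frac{1-x}{x}\,(y+1)\binom{n}{y+1}x^{y+1}(1-x)^{n-y-1},
\end{equation*}
which is essentially the computation already carried out in the proof of Lemma~\ref{lem:channel_transformations}. It gives
\begin{equation*}
\frac{1}{1-x}\expcnd{(n-Y)f(Y)}{X=x}=\frac{1}{x}\expcnd{Yf(Y-1)}{X=x},
\end{equation*}
where the shift $y\mapsto y+1$ is harmless: the term with $Y=0$ carries the factor $Y=0$, so $f(-1)$ never actually enters. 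Substituting this into \eqref{eq:der_cond_exp_0} gives \eqref{eq:der_cond_exp_1}. Symmetrically, rewriting the first term of \eqref{eq:der_cond_exp_0} as $\frac{1}{1-x}\expcnd{(n-Y)f(Y+1)}{X=x}$ gives \eqref{eq:der_cond_exp_2}. Here the $Y=n$ term vanishes, so $f(n+1)$ never enters.

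For \eqref{eq:der_cond_exp_n-1}, we start from \eqref{eq:der_cond_exp_2} and use \eqref{eq:tr_ch_2}, namely
\begin{equation*}
\frac{n-y}{1-x}P_{Y|X}(y|x)=nP^{n-1}_{Y|X}(y|x)\quad\text{for } y\le n-1.
\end{equation*}
Since the $y=n$ term drops out, \eqref{eq:der_cond_exp_2} becomes $n\sum_{y=0}^{n-1}P^{n-1}_{Y|X}(y|x)\bigl(f(y+1)-f(y)\bigr)$, which is exactly \eqref{eq:der_cond_exp_n-1}. One could equally reach this from \eqref{eq:der_cond_exp_1} via \eqref{eq:tr_ch_1}.

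None of the steps is a real obstacle. The only care needed is bookkeeping at the boundary indices $y=0$ and $y=n$, where $f(-1)$ or $f(n+1)$ appear formally but carry zero weight. This is also the reason for restricting to $x\in(0,1)$, so that the divisions by $x$ and $1-x$ are valid.
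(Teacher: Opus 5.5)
Your proof is correct and follows essentially the same route as the paper: differentiate the binomial pmf term by term to get \eqref{eq:der_cond_exp_0}, then reindex with $(n-y)\binom{n}{y}=(y+1)\binom{n}{y+1}$ to move the prefactors and obtain \eqref{eq:der_cond_exp_1} and \eqref{eq:der_cond_exp_2}. The only cosmetic difference is that you obtain \eqref{eq:der_cond_exp_n-1} from \eqref{eq:der_cond_exp_2} via \eqref{eq:tr_ch_2}, whereas the paper converts the two sums of \eqref{eq:der_cond_exp_0} directly into $(n-1)$-trial pmfs, which is the same computation.
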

\begin{proof}
Let us first prove \eqref{eq:der_cond_exp_n-1}. Note that 
\begin{align}
    \frac{\rmd}{\rmd x} P_{Y|X}(y|x) &= \binom{n}{y} \frac{\rmd}{\rmd x} x^y (1-x)^{n-y}=\frac{y-nx}{x(1-x)}\binom{n}{y}  x^y (1-x)^{n-y}  \\
    &= \frac{y-nx}{x(1-x)}P_{Y|X}(y|x) = \left(\frac{y}{x}-\frac{n-y}{1-x}\right)P_{Y|X}(y|x).
\end{align}
Hence, we can write
\begin{align}
    \frac{\rmd}{\rmd x}\expcnd{f(Y)}{X=x} &= \sum_{y=0}^n f(y) \frac{\rmd}{\rmd x} P_{Y|X}(y|x) \\
    &= \sum_{y=0}^n f(y) \left(\frac{y}{x}-\frac{n-y}{1-x}\right)P_{Y|X}(y|x) \label{eq:obtain_der_cond_exp_0} \\
    &= \sum_{y=0}^{n-1} f(y+1) \frac{y+1}{x}P_{Y|X}(y+1|x) -\sum_{y=0}^{n-1} f(y) \frac{n-y}{1-x}P_{Y|X}(y|x)\\
    &= \sum_{y=0}^{n-1} f(y+1) nP_{Y|X}^{n-1}(y|x)-\sum_{y=0}^{n-1} f(y) nP_{Y|X}^{n-1}(y|x)\\
    &= n \expcndn{n-1}{f(Y+1)-f(Y)}{X=x}.
\end{align}
Result~\eqref{eq:der_cond_exp_0} is obtained in~\eqref{eq:obtain_der_cond_exp_0}. 

To prove \eqref{eq:der_cond_exp_1}, first note that
\begin{align}
     \frac{\rmd}{\rmd x}\expcnd{f(Y)}{X=x} 
    &=   \sum_{y=0}^n f(y) \binom{n}{y}\frac{\rmd}{\rmd x} x^y (1-x)^{n-y} \\
    &= \sum_{y=0}^n f(y) \binom{n}{y} x^y (1-x)^{n-y} \left(\frac{y}{x}-\frac{n-y}{1-x} \right). \label{eq:diff_expcnd}
\end{align}
Next, consider the term
\begin{align}
    \sum_{y=0}^n f(y) \binom{n}{y} x^y (1-x)^{n-y} \frac{n-y}{1-x}  
    &= \sum_{y=0}^{n-1} f(y) \binom{n}{y} x^{y+1} (1-x)^{n-(y+1)} \frac{n-y}{x} \\
    &= \sum_{y=0}^{n-1} f(y) \binom{n}{y+1} x^{y+1} (1-x)^{n-(y+1)} \frac{y+1}{x} \label{eq:binomchange1} \\
    &= \sum_{y=1}^{n} f(y-1) \binom{n}{y} x^{y} (1-x)^{n-y} \frac{y}{x} \\
    &= \frac{1}{x}\expcnd{Yf(Y-1)}{X=x} \label{eq:first_expcnd_rel}
\end{align}
where \eqref{eq:binomchange1} follows from $\binom{n}{y} = \binom{n}{y+1}\frac{y+1}{n-y}$ for $y \ne n$. Using \eqref{eq:first_expcnd_rel} into \eqref{eq:diff_expcnd} proves \eqref{eq:der_cond_exp_1}.

Now consider the term
\begin{align}
    \sum_{y=0}^n f(y) \binom{n}{y} x^y (1-x)^{n-y} \frac{y}{x}  &= \sum_{y=1}^{n} f(y) \binom{n}{y} x^{y-1} (1-x)^{n-(y-1)} \frac{y}{1-x} \\
    &= \sum_{y=1}^{n} f(y) \binom{n}{y-1} x^{y-1} (1-x)^{n-(y-1)} \frac{n-y+1}{1-x} \label{eq:binomchange2} \\
    &= \sum_{y=0}^{n-1} f(y+1) \binom{n}{y} x^{y} (1-x)^{n-y} \frac{n-y}{1-x} \\
    &= \frac{1}{1-x}\expcnd{(n-Y)f(Y+1)}{X=x} \label{eq:second_expcnd_rel}
\end{align}
where \eqref{eq:binomchange2} follows from $\binom{n}{y} = \binom{n}{y-1}\frac{n-y+1}{y}$ for $y \ne 0$. Using \eqref{eq:second_expcnd_rel} into \eqref{eq:diff_expcnd} proves \eqref{eq:der_cond_exp_2}.
\end{proof}

\begin{prop}\label{prop:der_info_den}
For $n\ge 1$ and $x \in (0,1)$, we have
\begin{align}
    &i'(x;P_Y)  =n\log\left(\frac{x}{1-x}\right) + \frac{1}{1-x} \expcnd{(n-Y)\log\frac{\expcnd{1-X}{Y=Y+1}}{\expcnd{X}{Y=Y}}}{X=x}.
\end{align}
For $n\ge 2$ and $x \in (0,1)$, we have
\begin{align}
   i'(x;P_Y) &= n\log\left(\frac{x}{1-x}\right)+ \hspace{-0.1cm}n \expcndn{n-1}{\log\frac{\expcndn{n-1}{1-X}{Y}}{\expcndn{n-1}{X}{Y}}}{X=x} \label{eq:inf_den_derivative_no_bregman} \\
    &= \frac{n \, \expcndn{n-1}{\ell_b(x,\expcndn{n-1}{X}{Y})+\frac{x-\expcndn{n-1}{X}{Y}}{1-\expcndn{n-1}{X}{Y}}}{X=x}  }{x} \label{eq:inf_den_derivative_with_bregman}
\end{align}
and $i''(x;P_Y)$ is 
\begin{align}
    i''(x;P_Y) 
    &= \frac{n}{x(1-x)}+\frac{\expcnd{(n-Y)(n-Y-1)\log\frac{\expcnd{X}{Y=Y}}{\expcnd{1-X}{Y=Y+1}}\frac{\expcnd{1-X}{Y=Y+2}}{\expcnd{X}{Y=Y+1}}}{X=x} }{(1-x)^2}  \label{eq:i_doubleprime} \\
    &= \frac{n \left(1+\expcndn{n-1}{Y\log\frac{\expcndn{n-1}{1-X}{Y} }{\expcndn{n-1}{X}{Y}}}{X=x} \right)-\frac{n-1}{1-x} \left(i'(x;P_Y)-n\log\left(\frac{x}{1-x}\right) \right) }{x(1-x)}. \label{eq:i''_versions}
\end{align}
For $n\ge 3$ and $x \in (0,1)$, we have $i''(x;P_Y)$ is 
\begin{align}
    i''(x;P_Y) &= \frac{n}{x(1-x)}+n(n-1) \expcndn{n-2}{\log\frac{\expcndn{n-1}{1-X}{Y+1}}{\expcndn{n-1}{X}{Y+1}}\frac{\expcndn{n-1}{X}{Y}}{\expcndn{n-1}{1-X}{Y}}}{X=x} \\
    &=\frac{n}{x(1-x)}\left(1+(n-1)\expcndn{n-2}{ \ell_b(x, \expcndn{n-1}{X}{Y+1}) + \frac{x-\expcndn{n-1}{X}{Y+1}}{1-\expcndn{n-1}{X}{Y+1}} }{X=x} \right) \notag \\
    & \quad -\frac{n-1}{1-x} i'(x;P_Y). \label{eq:i''_versions_n>3}
\end{align}
\end{prop}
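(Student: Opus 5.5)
The plan is to write the information density explicitly and differentiate it term by term, using Lemma~\ref{lem:derivative_exp_cnd} and the Tweedie-type identity of Lemma~\ref{lem:channel_transformations}. We have
\begin{equation*}
i(x;P_Y)=\expcnd{\log\binom{n}{Y}}{X=x}+nx\log x+n(1-x)\log(1-x)-\expcnd{\log P_Y(Y)}{X=x},
\end{equation*}
since $\expcnd{Y}{X=x}=nx$. The derivative of the middle terms is $n\log\frac{x}{1-x}$. For the remaining part, set $f(y)=\log\binom{n}{y}-\log P_Y(y)$. Apply \eqref{eq:der_cond_exp_2} to $f$:
\begin{equation*}
f(y+1)-f(y)=\log\frac{n-y}{y+1}+\log\frac{P_Y(y)}{P_Y(y+1)}.
\end{equation*}
By \eqref{eq:ratio_cond_exp_n}, this equals $\log\frac{\expcnd{1-X}{Y=y+1}}{\expcnd{X}{Y=y}}$. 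This gives the first formula for $i'$, valid for every $n\ge1$.

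For the $n-1$ versions, apply \eqref{eq:der_cond_exp_n-1} to $f$ instead of \eqref{eq:der_cond_exp_2}. The increment $f(y+1)-f(y)$ must then be rewritten in terms of the $(n-1)$-trial quantities. By \eqref{eq:multiply_and_divide_by_p_Y} and \eqref{eq:multiply_and_divide_by_p_Y_2},
\begin{equation*}
P_Y(y+1)=\frac{n}{y+1}\expcndn{n-1}{X}{Y=y}P^{n-1}_Y(y),\qquad P_Y(y)=\frac{n}{n-y}\expcndn{n-1}{1-X}{Y=y}P^{n-1}_Y(y).
\end{equation*}
Taking the ratio, the binomial and factorial factors cancel, and
\begin{equation*}
f(y+1)-f(y)=\log\frac{\expcndn{n-1}{1-X}{Y=y}}{\expcndn{n-1}{X}{Y=y}},
\end{equation*}
which is \eqref{eq:inf_den_derivative_no_bregman}. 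To get the Bregman form \eqref{eq:inf_den_derivative_with_bregman}, write $\hat x=\expcndn{n-1}{X}{Y}$. Expanding $\ell_b(x,\hat x)+\frac{x-\hat x}{1-\hat x}$ gives
\begin{equation*}
x\log\frac{x}{1-x}+x\log\frac{1-\hat x}{\hat x}.
\end{equation*}
Multiplying by $n/x$ and averaging recovers \eqref{eq:inf_den_derivative_no_bregman}. This also proves Proposition~\ref{prop:derivatives_info_density}'s first identity.

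For $i''$, differentiate the first formula for $i'$ once more. The term $n\log\frac{x}{1-x}$ contributes $\frac{n}{x(1-x)}$. The second term has the form $\frac{1}{1-x}\expcnd{h(Y)}{X=x}$ with
\begin{equation*}
h(y)=(n-y)\log\frac{\expcnd{1-X}{Y=y+1}}{\expcnd{X}{Y=y}}.
\end{equation*}
Its derivative is
\begin{equation*}
\frac{1}{(1-x)^2}\expcnd{h(Y)}{X=x}+\frac{1}{(1-x)^2}\expcnd{(n-Y)(h(Y+1)-h(Y))}{X=x},
\end{equation*}
again using \eqref{eq:der_cond_exp_2}. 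This is the main bookkeeping step. The terms $(n-Y)h(Y)$ must combine with $\expcnd{h(Y)}{X=x}$ and with $(n-Y)h(Y+1)$. Writing $(n-y)h(y+1)=(n-y)(n-y-1)\log(\cdot)_{y+1}$ and $(n-y)h(y)-h(y)=(n-y)(n-y-1)\log(\cdot)_y$, the difference collapses to $(n-Y)(n-Y-1)$ times the log of the stated quotient, which is \eqref{eq:i_doubleprime}.

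The alternative form \eqref{eq:i''_versions} comes from differentiating \eqref{eq:inf_den_derivative_no_bregman} instead. Apply \eqref{eq:der_cond_exp_0} with $n-1$ trials to $g(y)=\log\frac{\expcndn{n-1}{1-X}{Y=y}}{\expcndn{n-1}{X}{Y=y}}$. The $(n-1-Y)$ piece is then re-expressed through $i'-n\log\frac{x}{1-x}$ using \eqref{eq:inf_den_derivative_no_bregman}, and the rest is rearranged.

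For $n\ge3$, instead apply \eqref{eq:der_cond_exp_n-1} at level $n-1$ to $g$. This gives $n(n-1)\expcndn{n-2}{g(Y+1)-g(Y)}{X=x}$, which is the first $n\ge3$ display. The Bregman version \eqref{eq:i''_versions_n>3} follows by rewriting $g(Y+1)$ via the same $\ell_b$ expansion as above. The leftover $\expcndn{n-2}{g(Y)}{X=x}$ is identified with a multiple of $i'$ through \eqref{eq:inf_den_derivative_no_bregman}, using the level-$(n-1)$ identity $\expcndn{n-1}{g}{}=\expcndn{n-2}{(\ldots)}{}$ obtained from the $P^{n-1}_{Y|X}$ versus $P^{n-2}_{Y|X}$ relations \eqref{eq:tr_ch_1}--\eqref{eq:tr_ch_2}.

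I expect the main obstacle to be this last matching of coefficients. Getting exactly $-\frac{n-1}{1-x}i'$ requires carefully tracking the $\frac{1}{x}$ versus $\frac{1}{1-x}$ prefactors. I would verify it by checking the case $P_Y=P_{Y_r}$, where all conditional means are explicit by \eqref{eq:moments_of_beta}.
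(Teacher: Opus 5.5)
Your proposal is correct. Up to the last step it is the paper's proof: the same identities \eqref{eq:der_cond_exp_2}, \eqref{eq:der_cond_exp_n-1} and \eqref{eq:der_cond_exp_0} are applied to the same functions, Lemma~\ref{lem:channel_transformations} turns increments of $\log\binom{n}{y}-\log P_Y(y)$ into ratios of conditional means, and \eqref{eq:i_doubleprime} comes from the same telescoping $(n-y)h(y+1)-(n-y-1)h(y)$. Two remarks follow.

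\emph{The statement of \eqref{eq:i''_versions} contains a typo.} Write $\mathbb{E}^{m}[\cdot]$ for expectation given $X=x$ and take $g$ as in your proposal. The computation you describe actually yields
\begin{equation*}
i''(x;P_Y)=\frac{n\bigl(1+\mathbb{E}^{n-1}[Y g(Y)]\bigr)}{x(1-x)}-\frac{n-1}{1-x}\Bigl(i'(x;P_Y)-n\log\frac{x}{1-x}\Bigr).
\end{equation*}
Here the second term is not divided by $x(1-x)$. This is what the paper derives in \eqref{eq:use_first_derivative}, and it is the form consistent with \eqref{eq:i''_versions_n>3}. In the displayed statement the fraction bar is misplaced, so no rearrangement reproduces it literally. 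You should say that you prove the corrected form.

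\emph{Your route to \eqref{eq:i''_versions_n>3} differs from the paper's.} The paper starts from the corrected \eqref{eq:i''_versions} and applies \eqref{eq:tr_ch_1} at level $n-1$, namely $\mathbb{E}^{n-1}[Yg(Y)]=(n-1)x\,\mathbb{E}^{n-2}[g(Y+1)]$. The $i'$ term is then already isolated. The remaining $\frac{n(n-1)}{1-x}\log\frac{x}{1-x}$ combines with $\frac{n(n-1)}{1-x}\mathbb{E}^{n-2}[g(Y+1)]$ to give $\frac{n(n-1)}{x(1-x)}\mathbb{E}^{n-2}\bigl[x\log\frac{x(1-\hat x_{Y+1})}{(1-x)\hat x_{Y+1}}\bigr]$, where $\hat x_y=\expcndn{n-1}{X}{Y=y}$, and this is exactly the Bregman term.

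You start instead from $n(n-1)\mathbb{E}^{n-2}[g(Y+1)-g(Y)]$. That works, but the leftover $\mathbb{E}^{n-2}[g(Y)]$ is not a multiple of $i'$ on its own. Adding the level-$(n-1)$ versions of \eqref{eq:tr_ch_1}--\eqref{eq:tr_ch_2} gives Pascal's rule, $\mathbb{E}^{n-1}[g(Y)]=(1-x)\mathbb{E}^{n-2}[g(Y)]+x\,\mathbb{E}^{n-2}[g(Y+1)]$. Hence
\begin{equation*}
-n(n-1)\mathbb{E}^{n-2}[g(Y)]=-\frac{n-1}{1-x}\Bigl(i'(x;P_Y)-n\log\frac{x}{1-x}\Bigr)+\frac{n(n-1)x}{1-x}\mathbb{E}^{n-2}[g(Y+1)].
\end{equation*}
This extra $g(Y+1)$ term must be merged with the existing one, which raises its coefficient from $n(n-1)$ to $\frac{n(n-1)}{1-x}$. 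From there you are at the same point as the paper.

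So the $\frac1x$ versus $\frac1{1-x}$ bookkeeping you worried about closes exactly, and no check at $P_{Y_r}$ is needed. Your version shows directly that \eqref{eq:i''_versions_n>3} is a rewriting of the first $n\ge 3$ display. The paper's version is shorter because it reuses \eqref{eq:i''_versions}.
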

\begin{proof}
Let us start from the expression
    \begin{align}        i(x;P_{Y})&=\expcnd{\log\frac{\binom{n}{Y}}{P_Y(Y)} }{X=x}+nx\log(x) + n(1-x)\log(1-x). 
    \end{align}
    A way to compute the first derivative is as follows:
    \begin{align}
        i'(x;P_{Y}) &=\frac{\rmd}{\rmd x}\expcnd{\log\frac{\binom{n}{Y}}{P_Y(Y)}  }{X=x}+n\log\left(\frac{x}{1-x}\right) \\
        &= \frac{\expcnd{(n-Y)\log\frac{\binom{n}{Y+1}}{P_Y(Y+1)}\frac{P_Y(Y)}{\binom{n}{Y}}}{X=x}}{1-x}  +n\log\left(\frac{x}{1-x}\right) \label{eq:use_last_identity_der_lemma} \\
        &= \frac{\expcnd{(n-Y)\log\frac{\expcnd{1-X}{Y=Y+1}}{\expcnd{X}{Y=Y}}}{X=x}}{1-x} +n\log\left(\frac{x}{1-x}\right), \label{eq:i_prime_first_ver}
    \end{align}
    where~\eqref{eq:use_last_identity_der_lemma} follows from identity~\eqref{eq:der_cond_exp_2} of Lemma~\ref{lem:derivative_exp_cnd}; and the last step follows from identity~\eqref{eq:ratio_cond_exp_n} of Lemma~\ref{lem:channel_transformations}.
    
    An alternative expression can be derived as follows:
    \begin{align}
        i'(x;P_{Y}) &=\frac{\rmd}{\rmd x}\expcnd{\log\frac{\binom{n}{Y}}{P_Y(Y)}  }{X=x}+n\log\left(\frac{x}{1-x}\right) \\
        &= n\expcndn{n-1}{\log\frac{\binom{n}{Y+1}P_Y(Y)}{\binom{n}{Y}P_Y(Y+1)}  }{X=x}+n\log\left(\frac{x}{1-x}\right) \label{eq:apply_derivative_downgrade} \\
        &= n\expcndn{n-1}{\log\frac{\expcndn{n-1}{1-X}{Y} }{\expcndn{n-1}{X}{Y}}  }{X=x}+n\log\left(\frac{x}{1-x}\right)
    \end{align}
    where~\eqref{eq:apply_derivative_downgrade} follows from Lemma~\ref{lem:derivative_exp_cnd}, and the last step follows from Lemma~\ref{lem:channel_transformations}.
    To show~\eqref{eq:inf_den_derivative_with_bregman}, just notice that the Bregman divergence for the binomial channel is
    \begin{align} \label{eq:Bregman_divergence_Binomial}
        \ell_b(x, \expcndn{n-1}{X}{Y}) 
        &= x \log\left(\frac{x \expcndn{n-1}{1-X}{Y}}{(1-x)\expcndn{n-1}{X}{Y}}\right)-\frac{x-\expcndn{n-1}{X}{Y}}{1-\expcndn{n-1}{X}{Y}}.
    \end{align}
    The computation of the second derivative is
    \begin{align}
    &i''(x;P_Y) = \frac{n}{x(1-x)}+\frac{1}{(1-x)^2}  \expcnd{(n-Y)\log\frac{\expcnd{1-X}{Y=Y+1}}{\expcnd{X}{Y=Y}}}{X=x} \nonumber\\
    &\ \frac{1}{(1-x)^2}  \expcnd{(n-Y)\left((n-Y-1)\log\frac{\expcnd{1-X}{Y=Y+2}}{\expcnd{X}{Y=Y+1}}-(n-Y)\log\frac{\expcnd{1-X}{Y=Y+1}}{\expcnd{X}{Y=Y}}\right)}{X=x} \label{eq:use_iprime_first_ver} \\
    &\ =\frac{n}{x(1-x)}+\frac{1}{(1-x)^2}  \expcnd{(n-Y)(n-Y-1)\log\frac{\expcnd{X}{Y=Y}}{\expcnd{1-X}{Y=Y+1}}\frac{\expcnd{1-X}{Y=Y+2}}{\expcnd{X}{Y=Y+1}}}{X=x}, \label{eq:sec_der_comp_proof}
    \end{align}
    where in~\eqref{eq:use_iprime_first_ver} we used~\eqref{eq:i_prime_first_ver} and identity~\eqref{eq:der_cond_exp_2} of Lemma~\ref{lem:derivative_exp_cnd}.

    An alternative formulation of the second derivative is as follows:
    \begin{align}
    i''(x;P_Y) 
    &= \frac{n}{x(1-x)}+\frac{\rmd}{\rmd x}n\expcndn{n-1}{\log\frac{\expcndn{n-1}{1-X}{Y} }{\expcndn{n-1}{X}{Y}}  }{X=x}\\
    &=\frac{n}{x(1-x)}+n(n-1)\bbE^{n-2} \left[\log\frac{\expcndn{n-1}{1-X}{Y+1} }{\expcndn{n-1}{X}{Y+1}} -\log\frac{\expcndn{n-1}{1-X}{Y} }{\expcndn{n-1}{X}{Y}} \mid X =x  \right]
\end{align}
where we applied identity~\eqref{eq:der_cond_exp_n-1} of Lemma~\ref{lem:derivative_exp_cnd}.
Another alternative expression for the second derivative, that is written in terms of the first derivative, is as follows
\begin{align}
    &i''(x;P_Y) =\frac{n}{x(1-x)}+ \frac{\rmd}{\rmd x}n\expcndn{n-1}{\log\frac{\expcndn{n-1}{1-X}{Y} }{\expcndn{n-1}{X}{Y}}  }{X=x}\\
    &=\frac{n}{x(1-x)} + n\left(\frac{1}{x}+\frac{1}{1-x}\right)\expcndn{n-1}{Y\log\frac{\expcndn{n-1}{1-X}{Y} }{\expcndn{n-1}{X}{Y}}  }{X=x}\nonumber \\
    &\quad-\frac{n-1}{1-x}n\expcndn{n-1}{\log\frac{\expcndn{n-1}{1-X}{Y} }{\expcndn{n-1}{X}{Y}}  }{X=x} \label{eq:apply_der_exp_cond_2}\\
    &= \frac{n \left(1+\expcndn{n-1}{Y\log\frac{\expcndn{n-1}{1-X}{Y} }{\expcndn{n-1}{X}{Y}}}{X=x} \right)}{x(1-x)} -\frac{n-1}{1-x} \left(i'(x;P_Y)-n\log\left(\frac{x}{1-x}\right) \right) \label{eq:use_first_derivative} \\
    &= \frac{n \left(1+(n-1)x\expcndn{n-2}{\log\frac{\expcndn{n-1}{1-X}{Y+1} }{\expcndn{n-1}{X}{Y+1}}}{X=x} \right)}{x(1-x)} -\frac{n-1}{1-x} \left(i'(x;P_Y)-n\log\left(\frac{x}{1-x}\right) \right) \label{eq:use_channel_downgrade} \\
    &= \frac{n \left(1+(n-1)x\expcndn{n-2}{\log\frac{x\expcndn{n-1}{1-X}{Y+1} }{(1-x)\expcndn{n-1}{X}{Y+1}}}{X=x} \right)}{x(1-x)} -\frac{n-1}{1-x} i'(x;P_Y)  \\
    &= \frac{n}{x(1-x)} \Bigg(1+(n-1)\bbE^{n-2} \Big [ \ell_b(x, \expcndn{n-1}{X}{Y+1}) + \left. \frac{x-\expcndn{n-1}{X}{Y+1}}{1-\expcndn{n-1}{X}{Y+1}} \mid {X=x} \right] \Bigg)-\frac{n-1}{1-x} i'(x;P_Y)
\end{align}
where~\eqref{eq:apply_der_exp_cond_2} follows from identity~\eqref{eq:der_cond_exp_0}; in~\eqref{eq:use_first_derivative} we used result~\eqref{eq:inf_den_derivative_no_bregman}; in~\eqref{eq:use_channel_downgrade} we made a change of measure by using identity~\eqref{eq:tr_ch_1} of Lemma~\ref{lem:channel_transformations}; and in the last step we used the Bregman divergence for the Binomial channel~\eqref{eq:Bregman_divergence_Binomial}.
\end{proof}

\section{Proof of Proposition~\ref{prop:properties_Y_r} }
\label{app:prop:properties_Y_r}

Under the reference input $X_r$, the output $Y_r$ follows the beta-binomial distribution
\begin{align}
    q_y
    =
    P_{Y_r}(y)
    =
    \frac{
    \Gamma\!\left(y+\tfrac{1}{2}\right)
    \Gamma\!\left(n-y+\tfrac{1}{2}\right)
    }
    {
    \pi\,\Gamma(y+1)\Gamma(n-y+1)
    },
    \qquad y\in \{0\} \cup [n].
    \label{eq:Yr_beta_binomial_proof}
\end{align}
Using the Gamma recurrence $\Gamma(z+1)=z\Gamma(z)$, we obtain
\begin{align}
    \frac{q_{y+1}}{q_y}
    &=
    \frac{
    \Gamma\!\left(y+\tfrac{3}{2}\right)
    \Gamma\!\left(n-y-\tfrac{1}{2}\right)
    }
    {
    \Gamma(y+2)\Gamma(n-y)
    }
    \frac{
    \Gamma(y+1)\Gamma(n-y+1)
    }
    {
    \Gamma\!\left(y+\tfrac{1}{2}\right)
    \Gamma\!\left(n-y+\tfrac{1}{2}\right)
    } =
    \frac{y+\tfrac12}{y+1}
    \cdot
    \frac{n-y}{n-y-\tfrac12},
    \qquad y\in \{0\} \cup [n-1].
\end{align}
Similarly, for $y\in[n]$, we have $\frac{q_{y-1}}{q_y}
    =
    \frac{y}{y-1/2}
    \cdot
    \frac{n-y+1/2}{n-y+1}$. 
    Moreover, the ratio in
\eqref{eq:qstar_ratio} satisfies
\begin{align}
    &\frac{q_{y+1}}{q_y}\le 1
    \quad\Longleftrightarrow\quad
    \left(y+\tfrac12\right)(n-y)
    \le
    (y+1)\left(n-y-\tfrac12\right) 
    \quad\Longleftrightarrow\quad
    y\le \frac{n-1}{2}.
\end{align}
Thus $\{q_y\}_{y=0}^n$ decreases up to the center and then increases.
Hence, for every $y\in\{0\} \cup [n]$, $q_y \ge q_{\lfloor n/2\rfloor}$.

It remains to lower-bound the value at the center. From
\eqref{eq:Yr_beta_binomial_proof},
\begin{equation}
    q_{\lfloor n/2\rfloor}
    =
    \frac{1}{\pi}
    \frac{
    \Gamma\!\left(\lfloor n/2\rfloor+\tfrac12\right)
    }{
    \Gamma\!\left(\lfloor n/2\rfloor+1\right)
    }
    \frac{
    \Gamma\!\left(\lceil n/2\rceil+\tfrac12\right)
    }{
    \Gamma\!\left(\lceil n/2\rceil+1\right)
    } .
\end{equation}
Using Wendel's Gamma-ratio inequality \cite[Eq.~(4)]{Wendel1948}, namely
\begin{equation}
        \frac{\Gamma(x+s)}{\Gamma(x)}
    \le x^s,
    \qquad x>0,\quad 0<s<1,
\end{equation}
with $x=m+\frac12$ and $s=\frac12$, we obtain $\frac{\Gamma(m+1)}{\Gamma(m+\frac12)}
    \le
    \sqrt{m+\frac12}$. Hence,
\begin{equation}
    \frac{\Gamma(m+\frac12)}{\Gamma(m+1)}
    \ge
    \frac{1}{\sqrt{m+\frac12}}
    \ge
    \frac{1}{\sqrt{m+1}},
    \qquad m\ge 0, \label{eq:lowerbound_ratio_of_gammas}
\end{equation}
thus we obtain
\begin{equation}
    q_{\lfloor n/2\rfloor}
    \ge
    \frac{1}{
    \pi
    \sqrt{
    \bigl(\lfloor n/2\rfloor+1\bigr)
    \bigl(\lceil n/2\rceil+1\bigr)
    }} \ge
    \frac{2}{\pi(n+2)},
\end{equation}
where the last step follows from $\sqrt{ab}\le (a+b)/2$, with $a=\lfloor n/2\rfloor+1$, $b=\lceil n/2\rceil+1$, and $a+b=n+2$. This proves \eqref{eq:qstar_min_bound}. 

By Kershaw's Gamma-ratio inequality~\cite[Eq.~(1.3)]{kershaw1983some}, for $x>0$ and
$0<s<1$,
\begin{equation}
    \frac{\Gamma(x+1)}{\Gamma(x+s)}
    >
    \left(x+\frac{s}{2}\right)^{1-s}.
\end{equation}
Taking \(s=\frac12\) and \(x=y\), we obtain, for \(y>0\),
\begin{equation}
    \frac{\Gamma(y+1)}{\Gamma(y+\frac12)}
    >
    \left(y+\frac14\right)^{1/2}.
\end{equation}
Equivalently,
\begin{equation}
    \frac{\Gamma(y+\frac12)}{\Gamma(y+1)}
    <
    \left(y+\frac14\right)^{-1/2},
    \qquad y>0.
\end{equation}
The same bound also holds at \(y=0\), since
\(\Gamma(1/2)=\sqrt{\pi}<2=(1/4)^{-1/2}\). Therefore,
\begin{equation}
    \frac{\Gamma(y+\frac12)}{\Gamma(y+1)}
    \le
    \left(y+\frac14\right)^{-1/2},
    \qquad y\ge 0.
\end{equation}
Applying this bound to both Gamma ratios in
\eqref{eq:Yr_beta_binomial_proof} gives, for every $y\in\{0\} \cup [n]$,
\begin{equation}
    q_y
    \le
    \frac{1}{
    \pi\sqrt{
    \left(y+\tfrac{1}{4}\right)
    \left(n-y+\tfrac{1}{4}\right)
    }} .
\end{equation}
Similarly, by applying the first bound in \eqref{eq:lowerbound_ratio_of_gammas} to ratios in
\eqref{eq:Yr_beta_binomial_proof}, we arrive at
\begin{equation}
    q_y
    \geq
    \frac{1}{
        \pi
        \sqrt{
            (y+\frac12)(n-y+\frac12)
        }
    }.
\end{equation} 
This completes the proof.

\section{Capacity Computation for $n \le 3$}
\label{sec:computations_of_cap_exact}

\subsection{The Case of $n=1$}
Follows immediately from Proposition~\ref{prop:loc_info}.

\subsection{The Case of $n=2$}
From Proposition~\ref{prop:loc_info}, for $n=2$, we infer that $\supp(P_{X^\star}) \subseteq  \left\{0, \frac{1}{2},1 \right\}.$

Now let $p= P_{X^\star}(\frac{1}{2})$.  Using Corollary~\ref{cor:rel_cap_probY} and the equations for $P_{Y^\star}$, we have that 
\begin{equation}
    P_{Y^\star}(0) = P_{Y^\star}(2) = \rme^{-C(2)}, \qquad P_{Y^\star}(1) = p 2\frac{1}{2}\left(1-\frac{1}{2}\right) = \frac{p}{2}.
\end{equation}
From $\sum_{y=0}^2 P_{Y^\star}(y)=1$ it follows that:
\begin{equation}
    2 \rme^{-C(2)}+\frac{p}{2} = 1
\end{equation}
or $p = 2(1-2\rme^{-C(2)})$. From the KKT equality condition in \eqref{eq:KKT_equality}, we have that
\begin{align}
    C(2) &= i \left(\frac{1}{2};P_{Y^\star} \right) = \sum_{y=0}^2 \binom{2}{y} \frac{1}{2^y} \left(1-\frac{1}{2} \right)^{2-y} \log\frac{\binom{2}{y} \frac{1}{2^y} \left(1-\frac{1}{2} \right)^{2-y}}{P_{Y^\star}(y)} \\
    &= \frac{1}{4} \left(\log\frac{1}{4}+C(2) \right) + \frac{1}{2} \log\frac{1}{2(1-2\rme^{-C(2)})}  + \frac{1}{4} \left(\log\frac{1}{4}+C(2) \right)
\end{align}
that can be rewritten as $C(2) = \log\frac{1}{4}+\log\frac{1}{2(1-2 \rme^{-C(2)})}$,
whose solution is $C(2) = \log\frac{17}{8}$. We also have $p = \frac{2}{17}$ and $P_{X^\star}(0) = P_{X^\star}(1) = \frac{15}{34}$.

\subsection{The Case of $n=3$}

From Proposition~\ref{prop:loc_info} and Proposition~\ref{thm:n/2 bound}, for $n=3$, we infer that $\supp(P_{X^\star}) \subseteq  \left\{0, \frac{1}{2},1 \right\}.$

Now, let $p= P_{X^\star}(\frac{1}{2})$. Corollary~\ref{cor:rel_cap_probY}  and direct computations imply that   
\begin{align}
P_{Y^\star}(0) &=  P_{Y^\star}(3) =\rme^{-C(3)} ,\qquad P_{Y^\star}(1) =  P_{Y^\star}(2) = \frac{3}{8} p. 
\end{align}
Now using above and the fact that $\sum_{y=0}^3 P_{Y^*}(y)=1$, we have that 
\begin{equation}
    p = \frac{4}{3} \left(1 -2 \rme^{-C(3)} \right).  \label{eq:expression_for_p}
\end{equation}

Next, it can be shown that $i \left(\frac{1}{2};P_{Y^\star} \right)  = \frac{1}{4} \log \left(  \frac{\rme^{C(3)}}{8p^3} \right)$. From the KKT equality condition in  \eqref{eq:KKT_equality}, we have that 
\begin{equation}
  C(3)=  i \left(\frac{1}{2};P_{Y^\star} \right)  = \frac{1}{4} \log \left(  \frac{\rme^{C(3)}}{8p^3} \right)  
\end{equation}
using the expression for $p$ in \eqref{eq:expression_for_p} and simplifying, we arrive at
\begin{equation}
    C(3) = \log \left(  \frac{1}{ \frac{8}{3} \left( 1- 2\rme^{-C(3)} \right)}\right). 
\end{equation}
Solving for $C(3)$ we arrive at $C(3) =\log \left( \frac{19}{8} \right)$. We also have that 
\begin{align}
P_{Y^\star}(0) &=  P_{Y^\star}(3) =\frac{8}{19} ,\quad P_{Y^\star}(1) &=  P_{Y^\star}(2) = \frac{3}{38}, \quad  P_{X^*}(0) &= P_{X^*}(1) =\frac{15}{38},\quad P_{X^*} \left( \frac{1}{2} \right) &= \frac{4}{19}.
\end{align}

\section{Bounds on the Entropy of a Binomial Random Variable}\label{app:bound_binom_entropy}
First, we need the following result.
\begin{lemma}\label{lem:expect_log_binomial}
    Let $P_{Y|X}(\cdot|x)$ be a binomial distribution with $n$ trials and success probability $x$ per trial. Then,
    \begin{align}\label{eq:log_expect_binomial}
    &\expcnd{\mathbbm{1}(0<Y\le n)\log\left(\frac{Y}{n}\right)}{X=x} \ge   (1-(1-x)^n)\log(x)-1.
    \end{align}
\end{lemma}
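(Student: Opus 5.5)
We need to prove that, with $Y\sim\mathrm{Bin}(n,x)$, the expectation $\expcnd{\mathbbm{1}(0<Y\le n)\log(Y/n)}{X=x}$ is at least $(1-(1-x)^n)\log x-1$.

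The plan is to split $\log(Y/n)=\log x+\log\frac{Y}{nx}$ on the event $\{Y\ge 1\}$, whose probability is $1-(1-x)^n$. This turns the claim into the inequality
\begin{equation*}
\expcnd{\mathbbm{1}(Y\ge 1)\log\frac{Y}{nx}}{X=x}\ge -1 .
\end{equation*}
The cases $x=0$ (where the left-hand side of the claim is $0$ and the right-hand side is $-1$ under $0\log 0=0$) and $x=1$ are trivial, so we take $x\in(0,1)$.

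To bound the remaining term we use the elementary inequality $\log t\ge 1-\frac1t$ for $t>0$, with $t=\frac{Y}{nx}$. This gives
\begin{equation*}
\expcnd{\mathbbm{1}(Y\ge1)\log\tfrac{Y}{nx}}{X=x}\ge \bbP[Y\ge1]-nx\,\expcnd{\mathbbm{1}(Y\ge1)\tfrac1Y}{X=x}.
\end{equation*}
The key step is then an upper bound on the inverse moment $\bbE[\mathbbm{1}(Y\ge1)/Y]$. We use $\frac1y\le\frac{2}{y+1}$ for $y\ge1$ together with the standard identity
\begin{equation*}
\bbE\!\left[\frac{1}{Y+1}\right]=\frac{1-(1-x)^{n+1}}{(n+1)x},
\end{equation*}
which follows from $\frac{1}{y+1}\binom{n}{y}=\frac{1}{n+1}\binom{n+1}{y+1}$. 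Since the $y=0$ term contributes nothing to the left side, this yields
\begin{equation*}
\bbE\!\left[\frac{\mathbbm{1}(Y\ge1)}{Y}\right]\le 2\,\bbE\!\left[\frac{\mathbbm{1}(Y\ge1)}{Y+1}\right]=\frac{2\bigl(1-(1-x)^{n+1}-(n+1)x(1-x)^n\bigr)}{(n+1)x}.
\end{equation*}

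Combining, the quantity of interest is at least
\begin{equation*}
1-(1-x)^n-\frac{2n}{n+1}\bigl(1-(1-x)^{n+1}-(n+1)x(1-x)^n\bigr).
\end{equation*}
It remains to check that this is $\ge -1$, which amounts to bounding a polynomial in $x$. Write $q=1-x$. Dropping the nonnegative term $2nxq^n$ leaves
\begin{equation*}
1-q^n-\tfrac{2n}{n+1}\bigl(1-q^{n+1}\bigr)\ \ge\ 1-q^n-2\bigl(1-q^{n+1}\bigr)\ =\ -1-q^n+2q^{n+1}.
\end{equation*}
This last expression is not obviously $\ge-1$: $2q^{n+1}\ge q^n$ holds only when $q\ge \frac12$. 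Since we dropped $2nxq^n$, we should reinstate it. Including it, the expression is $-1+q^n(-1+2q+2nx)$, and
\begin{equation*}
-1+2q+2nx=1-2x+2nx=1+2(n-1)x\ge 0 .
\end{equation*}
So the full bound is at least $-1$ after all.

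The main obstacle is this final bookkeeping, where the loose constant $2$ from $\frac1y\le\frac2{y+1}$ has to be compensated by keeping the $y=0$ correction term rather than discarding it.
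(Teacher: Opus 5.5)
Your proof is correct, and after the common first step of peeling off $(1-(1-x)^n)\log x$ it takes a genuinely different route from the paper.

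The paper follows the cited Poisson-channel argument. It replaces the sum by an integral against the piecewise-constant density $f_Z(t)=nP_{Y|X}(\lfloor nt\rfloor|x)$ and splits at $t=x$. On $[0,x]$ it integrates by parts and uses unimodality of the binomial pmf to get $\Pr(Z\le t)\le t f_Z(t)$, which produces the $-1$. The piece on $[x,1]$ is dropped as nonnegative.

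You instead linearize with $\log t\ge 1-1/t$. This reduces everything to the inverse moment $\bbE[\mathbbm{1}(Y\ge 1)/Y]$, which you control via $1/y\le 2/(y+1)$ and the closed form of $\bbE[1/(Y+1)]$. Your argument is entirely discrete and needs no monotonicity or Riemann-sum considerations. It also yields the slightly stronger explicit bound $\bbE[\mathbbm{1}(Y\ge1)\log\frac{Y}{nx}]\ge -1+(1-x)^n\bigl(1+2(n-1)x\bigr)$.

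It also avoids the most delicate step of the paper's proof, which as written goes in the wrong direction. There, $P_{Y|X}(k|x)$ is paired with the cell $[k,k+1)$. Since $\log$ is increasing, $\int_k^{k+1}\log\frac{t}{nx}\,\rmd t\ge\log\frac{k}{nx}$, so the integral is an upper rather than a lower bound on the sum. For example, take $n=1$ and $x=0.1$. The integral equals $(1-x)(\log\frac1x-1)\approx 1.17$, whereas the true value of $\bbE[\mathbbm{1}(Y\ge1)\log\frac{Y}{nx}]$ is $x\log\frac1x\approx 0.23$.

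The integral template might transfer more readily to kernels without a clean inverse-moment identity. As a proof of this lemma, though, your route is both simpler and sound as written.
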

\begin{proof}
    Inspired by the approach of~\cite[Appendix~B]{lapidoth2008capacity}, we bound the expectation as follows:
\begin{align}
    &\expcnd{\mathbbm{1}(0<Y\le n)\log\left(\frac{Y}{n}\right)}{X=x} \notag\\
    &\quad =\expcnd{\mathbbm{1}(0<Y\le n)\log\left(x\right)}{X=x} +\expcnd{\mathbbm{1}(0<Y\le n)\log\left(\frac{Y}{nx}\right)}{X=x} \\
    &\quad =(1-(1-x)^n)\log\left(x\right)  +\expcnd{\mathbbm{1}(0<Y\le n)\log\left(\frac{Y}{nx}\right)}{X=x} \\
    &\quad =(1-(1-x)^n)\log\left(x\right)+\sum_{y=1}^{n-1} P_{Y|X}(y|x) \log\left(\frac{y}{nx} \right) \\
    &\quad \ge (1-(1-x)^n)\log\left(x\right)+\int_0^n P_{Y|X}(\lfloor y \rfloor|x) \log\left(\frac{y}{nx} \right) \rmd y \\
    &\quad = (1-(1-x)^n)\log\left(x\right)+n\int_0^1 P_{Y|X}(\lfloor nt \rfloor|x) \log\left(\frac{t}{x} \right) \rmd t  \label{eq:exp_log_two_integrals}
\end{align}
where the inequality holds because $x\mapsto \log(x)$ is an increasing function and negative for $x \in (0,1)$.

Now introduce a continuous variable $Z$ with density $f_{Z}(z) = n P_{Y|X}(\lfloor nz \rfloor|x)$ for $z \in [0,1]$. Then, the integral in~\eqref{eq:exp_log_two_integrals} becomes:
\begin{align}
     &n\int_0^{1} P_{Y|X}(\lfloor nt \rfloor|x) \log\left(\frac{t}{x} \right) \rmd t = \int_0^{1} f_{Z}(t) \log\left(\frac{t}{x} \right) \rmd t =\int_0^{x} f_{Z}(t) \log\left(\frac{t}{x} \right) \rmd t+\int_x^{1} f_{Z}(t) \log\left(\frac{t}{x} \right) \rmd t.
\end{align}
Let us now bound the two integrals separately. For the first integral, by integrating by parts we have
\begin{align}
     \int_0^{x} f_{Z}(t) \log\left(\frac{t}{x} \right) \rmd t &= \left[\Pr(Z \le t) \log\left(\frac{t}{x}\right) \right]_0^{x}-\int_0^{x} \Pr(Z \le t) \frac{1}{t} \rmd t \\
     &\ge-\int_0^{x} \int_0^{t}n P_{Y|X}(\lfloor nz \rfloor|x) \rmd z \frac{1}{t} \rmd t \label{eq:bound_on_cdf} \\
     &\ge -\int_0^{x} n P_{Y|X}(\lfloor nt \rfloor|x)   \rmd t \label{eq:use_increasing_part_of_binomial} \\
     &= -\int_0^{x} f_Z(t)   \rmd t \\
     &\ge -1
\end{align}
where in~\eqref{eq:use_increasing_part_of_binomial} we used that 
$ \int_0^{t}n P_{Y|X}(\lfloor nz \rfloor|x) \rmd z\le tn P_{Y|X}(\lfloor nt \rfloor|x)$  thanks to the following lemma and to $t \le x$:
\begin{lemma}
    Let $P_{Y|X}$ be a binomial distribution. Then, $y\mapsto P_{Y|X}(y|x)$ is increasing for $y \le \lfloor (n+1)x \rfloor$, and decreasing for $y \ge \lceil (n+1)x \rceil$.
\end{lemma}
\begin{proof}
    From the ratio $\frac{P_{Y|X}(y|x)}{P_{Y|X}(y-1|x)} = \frac{n-y+1}{y} \frac{x}{1-x}$
    we see that the condition $P_{Y|X}(y|x) \ge P_{Y|X}(y-1|x)$ is satisfied for $y \le \lfloor (n+1)x \rfloor$.
\end{proof}
For the second integral, write
\begin{align}
    \int_{x}^1 f_Z(t) \log\left(\frac{t}{x} \right) \rmd t \ge 0.
\end{align}
Putting together the two results, we get the result in~\eqref{eq:log_expect_binomial}.
\end{proof}

We are now ready to give the main result of this appendix.
\begin{lemma}
    For $x \in [0,1]$, the entropy of a binomial distribution is bounded as follows
    \begin{equation}
        H(Y|X=x) \le \frac{1}{2}\log\left(2\pi \rme \left(n x(1-x)+\frac{1}{12}\right)\right),
    \end{equation}
    \begin{align}
        H(Y|X=x)&\ge (1-(1-x)^n-x^n)\frac{1}{2}\log\left( 2\pi n\right) +\frac{1}{2}(1-(1-x)^n)\log(x) +\frac{1}{2}(1-x^n)\log(1-x)-1.
    \end{align}
\end{lemma}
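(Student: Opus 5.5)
The upper bound will come from the maximum-entropy principle, and the lower bound from writing the entropy as an expectation of log-factorials. For the upper bound, let $\tilde Y = Y + U$ with $U\sim\mathcal{U}(0,1)$ independent of $Y$, as in Sec.~\ref{sec:cap_bounds}. Because the integers are spaced one apart, the differential entropy satisfies $h(\tilde Y\mid X=x)=H(Y\mid X=x)$. The variance of $\tilde Y$ given $X=x$ is $nx(1-x)+\frac{1}{12}$. The Gaussian maximizes differential entropy under a variance constraint, so
\[
H(Y\mid X=x)=h(\tilde Y\mid X=x)\le \tfrac12\log\!\left(2\pi\rme\left(nx(1-x)+\tfrac1{12}\right)\right).
\]
At $x\in\{0,1\}$ the right-hand side equals $\frac12\log(2\pi\rme/12)>0$, so the bound holds trivially there as well.

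For the lower bound, I would write
\[
H(Y\mid X=x)=-\log n!+\mathbb{E}[\log Y!+\log(n-Y)!]-n x\log x-n(1-x)\log(1-x).
\]
I would then apply Stirling-type bounds, keeping the boundary terms $Y=0$ and $Y=n$ separate, since that is where the indicator functions in Lemma~\ref{lem:expect_log_binomial} enter. The terms should combine into the form
\[
\log\binom{n}{Y}\ge \tfrac12\log(2\pi n)+ nh\!\left(\tfrac{Y}{n}\right)-\tfrac12\log\!\left(\tfrac{Y}{n}\right)-\tfrac12\log\!\left(1-\tfrac Yn\right)+(\text{constant correction})
\]
for $0<Y<n$, where $h$ is the binary entropy in nats. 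I would then use $\mathbb{E}[n h(Y/n)]$ together with the cross terms $-\mathbb{E}[Y\log x+(n-Y)\log(1-x)]$. These combine into $n\,\mathbb{E}[d(Y/n\|x)]\ge 0$, which can be dropped.

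Concretely, I would invoke the standard Robbins bounds $\sqrt{2\pi m}(m/\rme)^m\le m!\le \sqrt{2\pi m}(m/\rme)^m\rme^{1/(12m)}$. With them, $-\log P_{Y|X}(y|x)\ge \frac12\log\!\big(2\pi\, y(n-y)/n\big)+n\,d(y/n\|x)-\frac{1}{12n}$ holds for $1\le y\le n-1$, and I would drop the divergence term. The contributions from $y=0$ and $y=n$ are nonnegative and can also be dropped. Hence
\[
H(Y\mid X=x)\ge \mathbb{E}\!\left[\mathbbm{1}(0<Y<n)\left(\tfrac12\log(2\pi n)+\tfrac12\log\tfrac{Y}{n}+\tfrac12\log\tfrac{n-Y}{n}\right)\right]-\delta.
\]
The first term gives $(1-(1-x)^n-x^n)\frac12\log(2\pi n)$.

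The main obstacle is bounding the two log-expectations. Each one will be handled by Lemma~\ref{lem:expect_log_binomial}, applied to $Y$ and, by the symmetry \eqref{eq:channel_symmetry}, to $n-Y\sim\mathrm{Bin}(n,1-x)$. This yields $\frac12(1-(1-x)^n)\log x-\frac12$ and $\frac12(1-x^n)\log(1-x)-\frac12$, for a total constant of $-1$. The event $Y=n$ contributes $\log(1/n)\cdot 0$ in the term for $Y$, because $\log(n-Y)$ is the term that vanishes there, so the indicator sets line up. The remaining delicate points are the constant $\delta$ from the Stirling correction and the $-1$. The Robbins upper-bound factor enters with a favorable sign (it makes $\log y!$ smaller only through the lower bound), and I would verify that the correction is absorbed, possibly by using the sharper lower bound $\log m!\ge \frac12\log(2\pi m)+m\log(m/\rme)$. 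If a small positive slack remains, I would tighten the "$-1$" constant in Lemma~\ref{lem:expect_log_binomial}, whose proof has room: its $\int_x^1$ term was simply discarded.
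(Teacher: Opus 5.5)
Your approach matches the paper's. The upper bound uses the dithered output $Y+U$ with the Gaussian maximum-entropy bound. The lower bound uses a Stirling-type upper bound on $\binom{n}{y}$, followed by Lemma~\ref{lem:expect_log_binomial} applied to $Y$ and, through \eqref{eq:channel_symmetry}, to $n-Y$. Dropping $n\,d(Y/n\|x)\ge 0$ pointwise is equivalent, after taking expectations, to the paper's Jensen step $\bbE[H_2(Y/n)]\le H_2(x)$.

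The upper bound is fine, but the constant in the lower bound is not closed. The two Robbins bounds you quote only give $\log\binom{n}{y}\le \frac12\log\frac{n}{2\pi y(n-y)}+nh(y/n)+\frac{1}{12n}$. Your argument therefore proves the inequality with an extra $-\frac{1}{12n}$ (weighted by $\bbP[0<Y<n]$) beyond the stated $-1$. Your proposed repair does not remove it:
\begin{itemize}
\item The ``sharper'' bound $\log m!\ge\frac12\log(2\pi m)+m\log(m/\rme)$ is the same crude bound you already used.
\item The factor $\rme^{1/(12n)}$ on $n!$ enters with an unfavorable sign; it is precisely the source of the deficit.
\item The fallback of improving the $-1$ in Lemma~\ref{lem:expect_log_binomial} is only asserted, not carried out.
\end{itemize}

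The missing ingredient is the other half of Robbins' bound, $m!\ge\sqrt{2\pi m}\,(m/\rme)^m\rme^{1/(12m+1)}$, applied to $y!$ and $(n-y)!$. The net correction exponent becomes $\frac{1}{12n}-\frac{1}{12y+1}-\frac{1}{12(n-y)+1}$. This is negative for $1\le y\le n-1$, because $12y+1\le 12n-11<12n$. Hence $\binom{n}{y}\le\sqrt{\frac{n}{2\pi y(n-y)}}\,\rme^{nh(y/n)}$ with no correction term. This is the inequality the paper invokes \cite[Problem~5.8]{gallager1968information}, and it is derived exactly this way in the proof of Corollary~\ref{cor:C2_UB_finite_n}. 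With it your $\delta$ is $0$ and you obtain the constant $-1$ exactly.

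Two smaller slips:
\begin{itemize}
\item Your heuristic display has the wrong inequality direction and the wrong sign on $\frac12\log(2\pi n)$. What is needed is $\log\binom{n}{Y}\le -\frac12\log(2\pi n)+nh(Y/n)-\frac12\log\frac{Y}{n}-\frac12\log\bigl(1-\frac{Y}{n}\bigr)$. Your later pointwise bound on $-\log P_{Y|X}(y|x)$ is stated correctly.
\item The indicator $\{0<Y<n\}$ can be enlarged to $\{0<Y\le n\}$ because $\log(Y/n)=0$ at $Y=n$ (and symmetrically $\log((n-Y)/n)=0$ at $Y=0$). It is not because $\log(n-Y)$ vanishes there.
\end{itemize}
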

\begin{proof}
    For the upper bound, write
    \begin{align}
        H(Y|X=x) &= h(Y+U|X=x) \le \frac{1}{2}\log\left(2\pi \rme \left(n x(1-x)+\frac{1}{12}\right)\right),
    \end{align}
    where the first equality follows from \cite[Lemma 17]{lapidoth2008capacity} where $U\sim {\cal U}[0,1]$ is independent of $Y$; and the last step follows from the Gaussian maximizes entropy principle. 

Next we prove the lower bound. First of all, compute
\begin{align}
    -H(Y|X=x) &= \expcnd{\log\left(\binom{n}{Y}x^{Y}(1-x)^{n-Y}\right)}{X=x} \\
    &= \expcnd{\log\binom{n}{Y}}{X=x} + nx\log(x)  +n(1-x)\log(1-x) \\
    &\le \expcnd{\log\binom{n}{Y}}{X=x}-nH_2(x)
\end{align}
By using the bound $\binom{n}{Y}\le \sqrt{\frac{n}{2\pi Y(n-Y)}}\rme^{n H_2(\frac{Y}{n})}$ for $0<Y<n$ (see, e.g., \cite[Problem~5.8]{gallager1968information}), we can write:
\begin{align}
    & \expcnd{\log\binom{n}{Y}}{X=x} \notag \\
    & =\expcnd{\mathbbm{1}(0<Y<n)\log\binom{n}{Y}}{X=x} \\
    & \le (1-(1-x)^n-x^n)\frac{1}{2}\log\left( \frac{n}{2\pi}\right) -\frac{1}{2}\expcnd{\mathbbm{1}(0<Y<n)\log(Y(n-Y))}{X=x} +n\expcnd{H_2(\frac{Y}{n})}{X=x} \label{eq:bound_bin_coeff} \\
    &= -(1-(1-x)^n-x^n)\frac{1}{2}\log\left(2\pi n\right) -\frac{1}{2}\expcnd{\mathbbm{1}(0<Y<n)\log\left(\frac{Y}{n}\frac{n-Y}{n}\right)}{X=x} +n\expcnd{H_2 \left(\frac{Y}{n} \right)}{X=x} \\
    &= -(1-(1-x)^n-x^n)\frac{1}{2}\log\left( 2\pi n\right) -\frac{1}{2}\expcnd{\mathbbm{1}(0<Y\le n)\log\left(\frac{Y}{n}\right)}{X=x} \nonumber\\
    &\quad-\frac{1}{2}\expcnd{\mathbbm{1}(0< Y\le n)\log\left(\frac{Y}{n}\right)}{X=1-x} +n\expcnd{H_2(\frac{Y}{n})}{X=x}\label{eq:use_symmetry_bino_pmf} \\
    &\le -(1-(1-x)^n-x^n)\frac{1}{2}\log\left( 2\pi n\right)  -\frac{1}{2}\expcnd{\mathbbm{1}(0<Y\le n)\log\left(\frac{Y}{n}\right)}{X=x}\nonumber\\
    &\quad-\frac{1}{2}\expcnd{\mathbbm{1}(0< Y\le n)\log\left(\frac{Y}{n}\right)}{X=1-x}+nH_2(x), \label{eq:exp_log_bin_coeff}
\end{align}
where in~\eqref{eq:use_symmetry_bino_pmf} we used the channel symmetry $P_{Y|X}(y|x) = P_{Y|X}(n-y|1-x)$; and in the last step we used Jensen's inequality and $\expcnd{Y}{X=x}=nx$.

By using Lemma~\ref{lem:expect_log_binomial}, we have
\begin{align}
    &\expcnd{\mathbbm{1}(0<Y\le n)\log\left(\frac{Y}{n}\right)}{X=x}  \ge (1-(1-x)^n)\log(x)-1
\end{align}
and
\begin{align}
   & \expcnd{\mathbbm{1}(0<Y\le n)\log\left(\frac{Y}{n}\right)}{X=1-x}  \ge (1-x^n)\log(1-x)-1.
\end{align}
Therefore, we have
\begin{align}
    &\expcnd{\log\binom{n}{Y}}{X=x} \notag \\
    & \ \le -(1-(1-x)^n-x^n)\frac{1}{2}\log\left( 2\pi n\right)  -\frac{1}{2}(1-(1-x)^n)\log(x)-\frac{1}{2}(1-x^n)\log(1-x) +1+nH_2(x)
\end{align}
and 
\begin{align}
    &-H(Y|X=x) \le -(1-(1-x)^n-x^n)\frac{1}{2}\log\left( 2\pi n\right)  -\frac{1}{2}(1-(1-x)^n)\log(x)-\frac{1}{2}(1-x^n)\log(1-x)+1.
\end{align}
\end{proof}
\section{Proof of Theorem~\ref{thm:capacity-lb}}
\label{sec:proof_thm:capacity-lb}

\begin{IEEEproof}
Let $X_r \sim \mathrm{Beta}\!\left(\frac{1}{2},\frac{1}{2}\right)$, with density as in
\eqref{eq:pdf_beta}, and let $Y_r$ be the output induced by $X_r$ through the
binomial channel. Then,
\begin{equation}
    C(n) \ge I(X_r;Y_r)=H(Y_r)-H(Y_r|X_r).
\end{equation}
We bound the two entropy terms separately.

First, for the conditional entropy term, we have
\begin{align}
    H(Y_r|X_r)
    &\le
    \frac{1}{2}\log(2\pi e)
    +
    \frac{1}{2}
    \mathbb{E}\left[
        \log\left(nX_r(1-X_r)+\frac{1}{12}\right)
    \right] =
    \frac{1}{2}\log(2\pi e)
    +
    \log\!\left(
        \frac{1+\sqrt{3n+1}}{4\sqrt{3}}
    \right),
\end{align}
where the first inequality follows from
\cite[Lemma~10]{zieder2024binomial}. The last equality follows by evaluating
the expectation in closed form. Indeed, using the substitution
$x=\sin^2(t/2)$ and the integral identity in~\cite[Sec.~4.226]{Ryzhik}:
\begin{equation}
    \int_0^{\pi/2} \log(a+b\sin^2 t)\,\mathrm{d}t
    =
    \pi \log\!\left(
        \frac{\sqrt{a}+\sqrt{a+b}}{2}
    \right), \qquad a,b>0,
 \end{equation}
gives the desired expression.

For the output entropy term, using the tail bound in \eqref{eq:tail_bound}, we obtain
\begin{align}
    H(Y_r)
    &=
    \mathbb{E}\left[\log\frac{1}{P_{Y_r}(Y_r)}\right] \\
    &\ge
    \log \pi
    +
    \frac{1}{2}
    \mathbb{E}\left[
        \log\!\left(Y_r+\frac{1}{4}\right)
        +
        \log\!\left(n-Y_r+\frac{1}{4}\right)
    \right] \\
    &=
    \log \pi
    +
    \mathbb{E}\left[
        \log\!\left(Y_r+\frac{1}{4}\right)
    \right] \label{eq:Yr_symmetry_entropy_lb} \\
    &\ge
    \log \pi
    +
    \mathbb{E}\left[
        \psi\!\left(Y_r+\frac{1}{2}\right)
    \right] \label{eq:log_psi} \\
    &=
    \log\!\left(\frac{\pi}{4}\right)+\psi(n+1),
    \label{eq:betabin_identity}
\end{align}
where  \eqref{eq:Yr_symmetry_entropy_lb} follows from the symmetry of
$Y_r$, namely $Y_r\stackrel{d}{=}n-Y_r$; \eqref{eq:log_psi},  follows from 
\begin{equation}
\label{digamma}
    \log\!\left(y+\frac{1}{4}\right)
    \ge
    \psi\!\left(y+\frac{1}{2}\right),
    \qquad y\ge 0,
\end{equation}
where  $\psi$ denotes the digamma function and  \eqref{digamma} follows from the standard digamma bound \(\psi(x)\le \log(x-\frac14)\) for \(x\ge \frac12\)
\cite{Alzer1997GammaPsi}, applied with \(x=y+\frac12\).
; and 
\eqref{eq:betabin_identity}, follows from using the beta-binomial identity which is derived by differentiating the
beta-binomial normalization identity with respect to \(\alpha\): 
\begin{equation}
\mathbb{E}\left[\psi(Y_{\alpha,\beta}+\alpha)\right]
    =
    \psi(n+\alpha+\beta)+\psi(\alpha)-\psi(\alpha+\beta),
\end{equation}
with $\alpha=\beta=\frac{1}{2}$, together with
$\psi(1/2)-\psi(1)=-2\log 2$.
See, e.g., \cite{JohnsonKotzKempUnivariateDiscrete,Minka2000Dirichlet}.

Combining the lower bound on $H(Y_r)$ with the upper bound on
$H(Y_r|X_r)$ gives
\begin{equation}
    C(n)\ge
    \psi(n+1)
    -\log\!\left(1+\sqrt{3n+1}\right)
    +
    \frac{1}{2}\log\!\left(\frac{3\pi}{2e}\right),
\end{equation}
which proves \eqref{eq:capacity-lb}. 
Finally, using the digamma lower bound \cite{Alzer1997GammaPsi}
\begin{equation}
    \psi(n+1)
    \ge
    \log\!\left(n+\frac{1}{3}\right)-\frac{1}{n+1},
\end{equation}
we obtain \eqref{eq:capacity-lb-asymp}, with 
\begin{equation}
    r_{{\rm LB}}(n)
    =
    \frac{1}{2}\log\!\left(1+\frac{1}{3n}\right)
    -
    \log\!\left(1+\frac{1}{\sqrt{3n+1}}\right)
    -
    \frac{1}{n+1}.
\end{equation}
This completes the proof.
\end{IEEEproof}

\section{Proof of Theorem~\ref{thm:capacity-ub}}
\label{sec:proof_thm:capacity-up}
\begin{IEEEproof}
By the dual characterization of capacity,
\begin{equation} \label{eq:dual_capacity}
    C(n) \le \sup_{x \in[0,1]} \kl{P_{Y|X}(\cdot|x)}{  Q_n}
\end{equation}
for any output distribution $Q_n$. Consider the output distribution
\begin{align}\label{eq:Q_n}
    Q_n(y)
    &=
    \eta_n
P_{Y|X}\left(y \, \middle| \, \frac{c_n}{n}\right) 
    +
    \eta_n
P_{Y|X}\left(y \, \middle| \, 1-\frac{c_n}{n}\right) 
    +
    (1-2\eta_n)
    \int_0^1 P_{Y|X}(y|x) f_{X_r}(x)\,\mathrm{d}x,
\end{align}
where $f_{X_r}$ is as in \eqref{eq:pdf_beta}, $\eta_n \ge \left(\frac{2 \mathrm{e}}{n\pi} \right)^{\frac{1}{4}}$, and $c_n \le \frac{1}{2}\log\frac{n\pi}{2 \mathrm{e}}$.
The constants \(\eta_n\) and \(c_n\) are chosen following the boundary-corrected Jeffreys mixture construction of Xie and Barron \cite[Sec.~III.B]{xie1997minimax}. The ordinary Jeffreys mixture, which corresponds here to the arcsine density \(f_{X_r}\), gives the correct interior asymptotics but needs to be modified near the boundary points \(x=0\) and \(x=1\). The parameter \(c_n\) specifies the location of the near-boundary correction points, \(c_n/n\) and \(1-c_n/n\), while \(\eta_n\) specifies the amount of probability mass assigned to these
boundary corrections. The conditions $\eta_n \ge \left(\frac{2 \mathrm e}{n\pi}\right)^{1/4}$ and $c_n \le \frac12 \log\frac{n\pi}{2\mathrm e}$ are sufficient conditions used in \cite[Sec.~III.B]{xie1997minimax} to obtain the uniform redundancy bound in Theorem~\ref{thm:Barron}.
Specifically, 
Xie and Barron have proven the following \cite[Sec.~III.B]{xie1997minimax}.
\begin{theorem}\label{thm:Barron} For the output distribution $Q_n$  in \eqref{eq:Q_n} and for all sufficiently large $n$ (e.g., pick $n\ge 28$) we have
\begin{equation} \label{eq:XieBarron_upper}
    \max_{x \in [0,1]} \kl{P_{Y|X}(\cdot|x) }{ Q_n} \le \frac{1}{2}\log\frac{n}{2\pi \mathrm{e}} +\log\frac{\pi}{1-2\eta_n} +\frac{5}{c_n}.
\end{equation}
\end{theorem}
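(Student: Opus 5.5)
The plan is to bound $\kl{P_{Y|X}(\cdot|x)}{Q_n}$ separately in an interior regime $x\in[c_n/n,\,1-c_n/n]$ and a boundary regime $x\in[0,c_n/n)\cup(1-c_n/n,1]$. By the channel symmetry \eqref{eq:channel_symmetry} and the symmetry of $Q_n$ under $y\mapsto n-y$, it suffices to treat $x\le 1/2$.

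\emph{Interior regime.} Since $Q_n\ge (1-2\eta_n)P_{Y_r}$ pointwise, the mixture bound \eqref{eq:mixture_bound} gives
\begin{equation*}
\kl{P_{Y|X}(\cdot|x)}{Q_n}\le \kl{P_{Y|X}(\cdot|x)}{P_{Y_r}}+\log\frac{1}{1-2\eta_n}.
\end{equation*}
I would then bound $\log P_{Y|X}(y|x)$ from above using the Stirling-type estimate $\binom{n}{y}\le\sqrt{\frac{n}{2\pi y(n-y)}}\,\rme^{nH_2(y/n)}$ already used in Appendix~\ref{app:bound_binom_entropy}, handling $y\in\{0,n\}$ separately. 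I would bound $\log\frac{1}{P_{Y_r}(y)}$ from above using the lower bound in \eqref{eq:tail_bound}. This yields
\begin{equation*}
\kl{P_{Y|X}(\cdot|x)}{P_{Y_r}}\le \log\pi+\frac12\log\frac{n}{2\pi}-n\,\bbE\!\left[d\!\left(\tfrac{Y}{n}\,\middle\|\,x\right)\,\middle|\,X=x\right]+R_1(x),
\end{equation*}
where $R_1(x)=\frac12\bbE\!\left[\log\frac{(Y+\frac12)(n-Y+\frac12)}{Y(n-Y)}\,\middle|\,X=x\right]$, which is of order $1/(nx)$.

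The key estimate is a lower bound
\begin{equation*}
n\,\bbE\!\left[d\!\left(\tfrac{Y}{n}\,\middle\|\,x\right)\,\middle|\,X=x\right]\ge \frac12-R_2(x),
\end{equation*}
which is the ``$\chi^2_1/2$'' term that produces the $-\frac12\log\rme$. I would obtain it from a second-order Taylor expansion of $d(\cdot\|x)$ around $x$ with a controlled remainder, using the binomial third and fourth central moments. For $nx(1-x)\ge c_n/2$ this should give $R_1+R_2\le 5/c_n$.

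\emph{Boundary regime, $x<c_n/n$.} Here the interior error terms blow up, and I expect this to be the main obstacle. Using only the point-mass component via \eqref{eq:mixture_bound} gives $\log\frac{1}{\eta_n}+n\,d(x\|c_n/n)\lesssim \frac14\log\frac{n\pi}{2e}+c_n$, which is too large. So I would instead use the combination $\eta_nP_{Y|X}(\cdot|c_n/n)+(1-2\eta_n)P_{Y_r}$ exactly, splitting $y$ into two ranges:
\begin{itemize}
\item For small $y$ (say $y\lesssim c_n$), bound $Q_n(y)\ge\eta_nP_{Y|X}(y|c_n/n)$. Because $x<c_n/n$, the likelihood ratio $P_{Y|X}(y|x)/P_{Y|X}(y|c_n/n)$ is at most $\rme^{c_n}$ times a factor that decays in $y$.
\item For larger $y$, use the $P_{Y_r}$ bound $P_{Y_r}(y)\gtrsim 1/(\pi\sqrt{ny})$. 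These $y$ carry probability exponentially small in $c_n$, so their contribution is absorbed into $5/c_n$.
\end{itemize}
The delicate point is checking that the conditions $\eta_n\ge(2e/(n\pi))^{1/4}$ and $c_n\le\frac12\log\frac{n\pi}{2e}$ make the total at most the interior value $\frac12\log\frac{n}{2\pi e}+\log\frac{\pi}{1-2\eta_n}+\frac{5}{c_n}$. The explicit threshold $n\ge 28$ would come from requiring $1-2\eta_n>0$ and the remainder bounds to hold with the constant $5$; I would verify this last part numerically.
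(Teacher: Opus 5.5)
Your interior half is fine. You could even skip the Taylor-remainder bookkeeping: $t\mapsto d(t\|x)$ has fourth derivative $\frac{2}{t^3}+\frac{2}{(1-t)^3}>0$, so it dominates its third-order Taylor polynomial at $x$. The binomial second and third central moments then give exactly $n\,\expcnd{d(Y/n\|x)}{X=x}\ge\frac12-\frac{(1-2x)^2}{6nx(1-x)}$, and with $\expcnd{\frac{1}{Y+1}}{X=x}\le\frac{1}{(n+1)x}$ for $R_1$ the remainder is well within $5/c_n$.

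\textbf{The gap: the boundary regime.} No finer splitting can close it, because under the hypotheses as written in \eqref{eq:Q_n} the claimed inequality is false at $x=0$ for large $n$. At $x=0$ the output is deterministic, so $\kl{P_{Y|X}(\cdot|0)}{Q_n}=-\log Q_n(0)$ exactly; there is no range of $y$ to split. Put $A=\frac12\log\frac{n\pi}{2\rme}$. Since $P_{Y_r}(0)=\frac{\Gamma(n+\frac12)}{\sqrt{\pi}\,\Gamma(n+1)}\le\frac{1}{\sqrt{\pi n}}=\frac{\rme^{-A}}{\sqrt{2\rme}}$,
\begin{equation*}
Q_n(0)\le \eta_n\rme^{-c_n}+\eta_n\Big(\frac{c_n}{n}\Big)^n+\frac{\rme^{-A}}{\sqrt{2\rme}} .
\end{equation*}
Take the admissible choice $\eta_n=(\frac{2\rme}{n\pi})^{1/4}=\rme^{-A/2}$ and $c_n=A$; this is precisely the choice used afterwards for Theorem~\ref{thm:capacity-ub}. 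The point-mass term is then of order $\rme^{-3A/2}$, so $-\log Q_n(0)\ge A+\frac12\log(2\rme)-o(1)$. The right-hand side, however, is $A+\log\frac{1}{1-2\eta_n}+\frac{5}{A}=A+o(1)$. At $n=10^8$ the two sides are about $9.78$ and $9.52$.

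This is the familiar $\frac12\log(2\rme)$ boundary excess of the Jeffreys mixture. Your small-$y$ bullet, evaluated at $y=0$, just reproduces the crude bound $\log\frac{1}{\eta_n}+c_n$ that you rejected, and the $P_{Y_r}$ component cannot make up the difference.

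\textbf{What is missing: a joint constraint on $\eta_n$ and $c_n$.} The point mass can cancel the boundary excess only if $\eta_n(1-c_n/n)^n$ is of order $\rme^{-A}$. A sufficient condition is $\log\frac{1}{\eta_n}+c_n\le A$, for instance $c_n\le\frac14\log\frac{n\pi}{2\rme}$ with $\eta_n=(\frac{2\rme}{n\pi})^{1/4}$. Your remark that the crude bound gives $\frac14\log\frac{n\pi}{2\rme}+c_n$ was therefore diagnosing a defect of the hypotheses, not of the crude bound.

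Under the joint constraint, the crude bound already finishes the boundary regime. For $x\le c_n/n$, by \eqref{eq:mixture_bound} and since $x\mapsto d(x\|c_n/n)$ decreases on $[0,c_n/n]$,
\begin{equation*}
\kl{P_{Y|X}(\cdot|x)}{Q_n}\le\log\frac{1}{\eta_n}-n\log\Big(1-\frac{c_n}{n}\Big)\le A+\frac{c_n^2}{n-c_n},
\end{equation*}
which is below the claimed bound. Together with your interior argument and symmetry, this proves the corrected statement.

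The paper offers no argument to compare with here: it only cites Xie--Barron for this inequality, so the missing joint condition is not supplied there either. Your planned numerical check of the original version would have found a failure, not a confirmation.
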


To make the bound explicit, we choose $\eta_n =
    \left(\frac{2 \mathrm e}{n\pi}\right)^{1/4}$. Note that the quantity $\log(\pi/(1-2\eta_n))$ in the right-hand side of \eqref{eq:XieBarron_upper} is well defined only when $\eta_n<1/2$. With our choice of $\eta_n$, the condition \(\eta_n<1/2\) is equivalent to $\left(\frac{2 \mathrm e}{n\pi}\right)^{1/4}<\frac12$ or, equivalently, $n>\frac{32\mathrm e}{\pi}$. Since $\frac{32\mathrm e}{\pi}<28$, we have $\eta_n<1/2$ for all $n\ge 28$.

Using Theorem~\ref{thm:Barron} with \eqref{eq:dual_capacity} concludes the proof.
\end{IEEEproof}

{
\section{Proofs of Theorem~\ref{thm:C1_UB_finite_n} and Corollary~\ref{cor:C2_UB_finite_n}}
\label{sec:C_UB_finite_n}
Let us begin by proving the results in Theorem~\ref{thm:C1_UB_finite_n}. 
\begin{IEEEproof}
Define the \emph{upper envelope} of the binomial kernel $M_n(y) = \max_{x \in [0,1]} P_{Y \mid X} (y \mid x)$. Given the bound on the dual expression of the capacity in~\eqref{eq:dual_capacity}, let us consider the auxiliary output distribution
\begin{align}
    \tilde{Q}_n(y) = \frac{M_n(y)}{S_n}, \quad y \in \{0\} \cup [n],
\end{align}
where $S_n = \sum_{y=0}^n M_n(y)$. Then, we get
\begin{align}
		\kl{P_{Y\mid X}(y \mid x)}{\tilde{Q}_n(y)} &= \sum_y P_{Y\mid X}(y \mid x) \log  \frac{S_n \cdot P_{Y\mid X}(y \mid x)}{M_n(y)} \\
		& \leq \sum_y P_{Y\mid X}(y \mid x) \log S_n \label{eq:pyx_logSn}\\
		& = \log S_n, \label{eq:logSn}
\end{align}
where~\eqref{eq:pyx_logSn} holds because, by definition, we have $P_{Y\mid X}(y \mid x) \leq M_n(y)$. By evaluating the derivatives with respect to $x$ of $P_{Y \mid X} (y \mid x)$, it is straightforward to see that the binomial kernel is maximized in $x = y/n$. 

By plugging $M_n(y) = P_{Y\mid X}(y \mid x=y/n)$ into~\eqref{eq:logSn}, we get the bound in~\eqref{eq:C1_UB_finite_n}. 
\end{IEEEproof}
Let us now consider the proof of Corollary~\ref{cor:C2_UB_finite_n}.
\begin{IEEEproof}
    By Robbins' bounds on Stirling's formula, we get
	\begin{align}
			\binom{n}{y}
			&\le
			\sqrt{\frac{n}{2\pi y(n-y)}} \frac{n^n}{y^y(n-y)^{n-y}}
			\exp\left(
			\frac{1}{12n}
			-\frac{1}{12y+1}
			-\frac{1}{12(n-y)+1}
			\right) \\
			&\le
			\sqrt{\frac{n}{2\pi y(n-y)}}
			\frac{n^n}{y^y(n-y)^{n-y}}, \label{eq:ub_binom_coeff}
	\end{align}
    where the last inequality is given by bounding the exponential term with $1$.

    Let us exclude the cases $y=0,n$, for which $M_n(0) = M_n(n) = 1$. By applying~\eqref{eq:ub_binom_coeff} to the \emph{upper envelope} $M_n(y) = P_{Y\mid X}(y \mid x= y/n)$, we get
    \begin{align}
        M_n(y) &= \binom{n}{y}
		\frac{y^y(n-y)^{n-y}}{n^n} \leq \sqrt{\frac{n}{2\pi y(n-y)}}. \label{eq:UB_Mn}
    \end{align}
    We want to further upper-bound the sum of $M_n(y)$, for $1\leq y \leq n-1$. 
    Consider $g(x) = 1/\sqrt{x(1-x)}$. Since $g(x)$ is decreasing on $(0,1/2]$ and increasing on $[1/2,1)$, by applying left and right Riemann sum inequalities we get
    \begin{align}
        \frac{1}{n} \sum_{y=1}^{n-1} g(y/n) \leq \int_0^1 g(x) \,  \mathrm{d}x = \pi. \label{eq:int_pi}
    \end{align}
    Then, by~\eqref{eq:UB_Mn} and~\eqref{eq:int_pi}, it holds
    \begin{align}
        \sum_{y=1}^{n-1} M_n(y) \leq \sum_{y=1}^{n-1} \sqrt{\frac{n}{2\pi y(n-y)}} = \sqrt{\frac{n}{2\pi}} \frac{1}{n} \sum_{y=1}^{n-1} g(y/n) \leq \sqrt{\frac{n \pi}{2}}. \label{eq:UB_Sn}
    \end{align}
    Finally, by $M_n(0) = M_n(n) = 1$ and by applying the upper bound~\eqref{eq:UB_Sn} to~\eqref{eq:logSn}, we get the final result~\eqref{eq:C2_UB_finite_n}.
\end{IEEEproof}
}

\section{Proof of Theorem ~\ref{thm:optimal-output}}
\label{proofoftheorem3}

Fix $y\in\{0,\ldots,n\}$ and let $\cA_j \subseteq [0,1]$ be as defined in  Lemma~\ref{lem:reference_partition}, then,  
\begin{align}
    P_{Y^*}(y)
    &=
    \sum_{j=0}^{N-1}
    \int_{\cA_j}
    P_{Y|X}(y|x)\,
    \rmd P_{X^\star}(x)\\
    &\leq
    \sum_{j=0}^{N-1}
    P_{X^\star}(\cA_j)
    \sup_{x\in\cA_j}P_{Y|X}(y|x)\\
    &\leq
    \exp\left(
        \frac{8}{25}-C(n)
    \right) \sum_{j=0}^{N-1} \sup_{x\in\cA_j}P_{Y|X}(y|x), \label{eq:upper_bound_on_P_{Y^*}}
\end{align}
where in \eqref{eq:upper_bound_on_P_{Y^*}} we have used the bound in \eqref{eq:uniform_cell_mass_bound}. 

We now upper bound the sum in \eqref{eq:upper_bound_on_P_{Y^*}}.  
\begin{lemma}\label{lem:upper_sum_partition}
Let $\{\cA_j\}_{j=0}^{N-1}$ be the partition constructed in
Lemma~\ref{lem:reference_partition}.
Then, for every $y\in\{0,\ldots,n\}$,
\begin{align}
    \sum_{j=0}^{N-1}
    \sup_{x\in\cA_j}
    P_{Y|X}(y|x)
    \leq
    \left(
        \frac{5\pi}{2}+6
    \right)\sqrt{n}\,P_{Y_r}(y).
    \label{eq:upper_sum_uniform_constant}
\end{align}
\end{lemma}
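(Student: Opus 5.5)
The plan is to compare the sum of cell-wise suprema with the integral defining $P_{Y_r}(y)$. The comparison uses the fact that each cell $\cA_j$ of Lemma~\ref{lem:reference_partition} carries the same reference mass $1/N$, together with the unimodality of $x\mapsto P_{Y|X}(y|x)$. Fix $y$ and pass to the angular coordinate $x=\sin^2\theta$, $\theta\in[0,\pi/2]$. Since $F_r(x)=\frac{2}{\pi}\arcsin\sqrt{x}$, the cells become intervals $[\theta_j,\theta_{j+1})$ of common width $h=\frac{\pi}{2N}$, and $\rmd P_{X_r}=\frac{2}{\pi}\rmd\theta$. Writing $g(\theta)=P_{Y|X}(y|\sin^2\theta)$, we have
\begin{equation}
P_{Y_r}(y)=\frac{2}{\pi}\int_0^{\pi/2} g(\theta)\,\rmd\theta .
\end{equation}
The function $g$ is unimodal in $\theta$, with its maximum at $\sin^2\theta=y/n$, because $x\mapsto x^y(1-x)^{n-y}$ is unimodal and $\theta\mapsto\sin^2\theta$ is increasing.

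\textbf{Step 1 (unimodal Riemann-sum bound).} For a unimodal $g$ on a uniform partition of width $h$, every cell not containing the mode has its supremum attained at the endpoint nearer the mode. That supremum is therefore at most the average of $g$ over the adjacent cell on the mode side. Each adjacent cell is used at most once from each side, and the cell containing the mode contributes at most $\max g$. This gives
\begin{equation}
\sum_{j=0}^{N-1}\sup_{x\in\cA_j}P_{Y|X}(y|x)\le \frac{1}{h}\int_0^{\pi/2} g\,\rmd\theta+2\max_x P_{Y|X}(y|x)= N\,P_{Y_r}(y)+2M_n(y),
\end{equation}
where $M_n(y)=\max_x P_{Y|X}(y|x)$ is the envelope from Appendix~\ref{sec:C_UB_finite_n}. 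I allow the factor $2$ to absorb the boundary bookkeeping; the cell containing the mode and its neighbours are the only delicate part.

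\textbf{Step 2 (the $N$ term).} We have $N\le \frac{5\pi}{2}\sqrt n+2$. The leading part $\frac{5\pi}{2}\sqrt n\,P_{Y_r}(y)$ already matches the first constant in \eqref{eq:upper_sum_uniform_constant}. The remaining $2P_{Y_r}(y)\le 2\sqrt n\,P_{Y_r}(y)$ is absorbed into the $6\sqrt n\,P_{Y_r}(y)$ budget.

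\textbf{Step 3 (the envelope term, expected main obstacle).} It remains to show $M_n(y)\le c\sqrt n\,P_{Y_r}(y)$ with $2c+2\le 6$, i.e.\ $c\le 2$. For $1\le y\le n-1$, use \eqref{eq:UB_Mn}, $M_n(y)\le\sqrt{n/(2\pi y(n-y))}$, together with the lower bound in \eqref{eq:tail_bound}, $P_{Y_r}(y)\ge 1/\bigl(\pi\sqrt{(y+\frac12)(n-y+\frac12)}\bigr)$. Their ratio is at most
\begin{equation}
\sqrt{\pi/2}\,\sqrt{\frac{(y+\frac12)(n-y+\frac12)}{y(n-y)}}\sqrt n\le \tfrac32\sqrt{\pi/2}\,\sqrt n\approx1.88\sqrt n .
\end{equation}
At the endpoints $y\in\{0,n\}$ we have $M_n=1$, while $P_{Y_r}(0)\ge 1/\bigl(\pi\sqrt{\frac12(n+\frac12)}\bigr)$. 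The ratio is then about $\pi\sqrt{(n+\frac12)/2}\approx2.22\sqrt n$, slightly above the target $c\le 2$.

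This endpoint case is where the constants are tight. I would handle it by using the exact value $P_{Y_r}(0)=\Gamma(n+\frac12)/(\sqrt\pi\,\Gamma(n+1))$ via Wendel's inequality. I would also sharpen Step 1 at the endpoints: when the mode sits at the boundary cell, only one side needs the extra $\max g$, so the coefficient there is $1$ rather than $2$. With that refinement the endpoint constant is at most $2.22+2$, and the total stays below $\frac{5\pi}{2}+6$.
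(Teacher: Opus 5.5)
Your route is the paper's: the Darboux-type bound $N P_{Y_r}(y)+2M_n(y)$ in the $F_r$ (equivalently, angular) coordinate, then $N\le\frac{5\pi}{2}\sqrt n+2$, then $M_n(y)\le 2\sqrt n\,P_{Y_r}(y)$ from \eqref{eq:UB_Mn} and the lower bound in \eqref{eq:tail_bound} for $1\le y\le n-1$. Your constant $\sqrt{9\pi/8}\approx1.88$ is exactly the paper's.

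\textbf{The gap: the coefficient $2$ in Step~1.} Your justification does not establish it, and the final constant depends on it.
Charging a cell's supremum to the average of $g$ over its neighbour on the mode side fails for the two neighbours of the mode cell. If the mode lies near one end of its cell, the average of $g$ over that cell can be smaller than $g$ at the cell's endpoints; this happens for the binomial kernel too.
If you bound those two cells by $M_n(y)$ instead, you get $NP_{Y_r}(y)+3M_n(y)$. With the available estimate $M_n(y)\le1.88\sqrt n\,P_{Y_r}(y)$, the leftover constant beyond $\frac{5\pi}{2}$ becomes $2+3\cdot1.88>6$, so the budget is exceeded.

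\textbf{The fix: telescoping.} Let $I_j$ be the closed angular cell of width $h$; by continuity, $\sup_{\cA_j}P_{Y|X}(y|\cdot)=\max_{I_j}g$.
On each $I_j$, $\max_{I_j}g-\min_{I_j}g$ is at most the variation of $g$ on $I_j$. These variations add up to the total variation of $g$ on $[0,\pi/2]$. For a nonnegative unimodal $g$ this equals $2M_n(y)-g(0)-g(\pi/2)\le2M_n(y)$.
Combined with $h\sum_j\min_{I_j}g\le\int_0^{\pi/2}g\,\rmd\theta$, this gives $h\sum_j\max_{I_j}g\le\int_0^{\pi/2}g\,\rmd\theta+2hM_n(y)$. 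This is what the paper's ``split at a maximizer and telescope'' amounts to.

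\textbf{The endpoint case $y\in\{0,n\}$.} You do not need to refine Step~1 here.
The exact value and Wendel's inequality give $P_{Y_r}(0)=\Gamma(n+\frac12)/(\sqrt\pi\,\Gamma(n+1))\ge1/\sqrt{\pi(n+\frac12)}\ge1/(2\sqrt n)$ for $n\ge2$. For $n=1$, $P_{Y_r}(0)=\frac12$ directly. Hence $M_n(0)=1\le2\sqrt n\,P_{Y_r}(0)$. This is how the paper closes the case, with $y=n$ following by symmetry.

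Your alternative also works. For $y\in\{0,n\}$, $g$ is monotone, so the same telescoping gives coefficient $1$. Then even the cruder ratio from \eqref{eq:tail_bound}, $\pi\sqrt{(n+\frac12)/2}\le\frac{\sqrt3\,\pi}{2}\sqrt n\approx2.72\sqrt n$, fits within the remaining budget of $4\sqrt n$. Note that this ratio is about $2.22\sqrt n$ only asymptotically, not for small $n$.
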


\begin{IEEEproof}
For a fixed $y\in\{0,\ldots,n\}$, define
\begin{equation}
    f_y(t)
    =
    P_{Y|X}\left(
        y\middle|F_r^{-1}(t)
    \right),
    \qquad
    t\in[0,1],
\end{equation}
and let
\begin{equation}
    M_n(y)
    =
    \sup_{t\in[0,1]}f_y(t)
    =
    \sup_{x\in[0,1]}P_{Y|X}(y|x).
\end{equation}
Since $x\mapsto P_{Y|X}(y|x)$ is unimodal and $F_r^{-1}$
is increasing, $f_y$ is unimodal. Therefore, its upper Darboux
sum over the uniform partition of $[0,1]$ satisfies
\begin{align}
    \sum_{j=0}^{N-1}
    \sup_{x\in\cA_j}P_{Y|X}(y|x)
    &\leq
    N\int_0^1 f_y(t)\,\rmd t
    +2M_n(y)
    \label{eq:upper_darboux_unimodal}\\
    &=
    N P_{Y_r}(y)+2M_n(y),
    \label{eq:upper_darboux_q}
\end{align}
where \eqref{eq:upper_darboux_unimodal} follows by splitting the
partition at a maximizer of $f_y$ and telescoping the upper-sum
errors over its increasing and decreasing parts; and
\eqref{eq:upper_darboux_q} follows since
$F_r(X_r)\sim\operatorname{Unif}[0,1]$.

We next show that
\begin{equation}
    M_n(y)
    \leq
    2\sqrt{n}\,P_{Y_r}(y),
    \qquad
    y=0,\ldots,n.
    \label{eq:mode_vs_reference}
\end{equation}
Suppose first that $1\leq y\leq n-1$. The maximum is attained
at $x=y/n$, and hence, by \eqref{eq:UB_Mn},
\begin{equation}
    M_n(y)
    \leq
    \sqrt{
        \frac{n}{2\pi y(n-y)}
    }.
    \label{eq:mode_upper_for_sum}
\end{equation}

Combining \eqref{eq:mode_upper_for_sum} and
lower bound in \eqref{eq:tail_bound}, we obtain
\begin{align}
    \frac{M_n(y)}{P_{Y_r}(y)}
    &\leq
    \sqrt{
        \frac{\pi n}{2}
        \left(
            1+\frac{1}{2y}
        \right)
        \left(
            1+\frac{1}{2(n-y)}
        \right)
    }\leq
    \sqrt{\frac{9\pi n}{8}}
    \leq
    2\sqrt{n},
    \label{eq:interior_mode_q_bound}
\end{align}
where the second inequality follows since
$y\geq1$ and $n-y\geq1$. For $y=0$, we have $M_n(0)=1$ and
\begin{equation}
    P_{Y_r}(0)
    =
    \frac{\Gamma(n+\frac12)}
    {\sqrt{\pi}\Gamma(n+1)}
    \geq
    \frac{1}{\sqrt{\pi(n+\frac12)}}.
\end{equation}
For $n\geq2$, this implies
\begin{equation}
    P_{Y_r}(0)
    \geq
    \frac{1}{2\sqrt{n}},
\end{equation}
while for $n=1$ the same inequality follows directly from
$P_{Y_r}(0)=1/2$. Thus,
\begin{equation}
    M_n(0)
    \leq
    2\sqrt{n}\,P_{Y_r}(0).
\end{equation}
The case $y=n$ follows by symmetry, which proves
\eqref{eq:mode_vs_reference}.

In addition, since
\begin{equation}
    N
    =
    2\left\lceil
        \frac{5\pi}{4}\sqrt{n}
    \right\rceil
    \leq
    \frac{5\pi}{2}\sqrt{n}+2,
\end{equation}
and $2\leq2\sqrt{n}$ for $n\geq1$, we obtain
\begin{equation}
    N+4\sqrt{n}
    \leq
    \left(
        \frac{5\pi}{2}+6
    \right)\sqrt{n}. \label{eq:bound_onN_plus_q}
\end{equation}
Substituting \eqref{eq:mode_vs_reference} into
\eqref{eq:upper_darboux_q} and using
\eqref{eq:bound_onN_plus_q} proves
\eqref{eq:upper_sum_uniform_constant}.
\end{IEEEproof}

Combining the bound in \eqref{eq:upper_bound_on_P_{Y^*}} and \eqref{eq:upper_sum_uniform_constant}, we arrive at
\begin{align}
    \frac{P_{Y^*}(y)}{P_{Y_r}(y)} \le \exp\left(
        \frac{8}{25}-C(n)
    \right) \left(
        \frac{5\pi}{2}+6
    \right)\sqrt{n}. \label{eq:ratio_n_dependent}
\end{align}

Finally, Theorem~\ref{thm:capacity-lb}, together with
$\psi(n+1)\geq\log n$ and
$1+\sqrt{3n+1}\leq3\sqrt n$, gives
\begin{equation}
    C(n)
    \geq
    \frac12
    \log\left(
        \frac{n\pi}{6e}
    \right),
\end{equation}
and hence
\begin{equation}
    \rme^{-C(n)}\sqrt n
    \leq
    \sqrt{\frac{6e}{\pi}}.
    \label{eq:capacity_ratio_bound}
\end{equation}
Substituting 
\eqref{eq:capacity_ratio_bound} into
\eqref{eq:ratio_n_dependent} yields
\begin{equation}
    \frac{P_{Y^*}(y)}{P_{Y_r}(y)}
    \leq
    \rme^{8/25}
    \sqrt{\frac{6e}{\pi}}
    \left(
        \frac{5\pi}{2}
        +6
    \right)
    =
    c_\star^{-1}.
\end{equation}
This concludes the proof of the bound on the ratio.


To show \eqref{eq:upper_gap}, note that by using the \emph{golden formula} \cite{topsoe1967information,csiszar2011information}, we have
\begin{align}
    \kl{P_{Y_r}}{P_{Y^{\star}}} &\le C(n)-I(X_r;Y_r) \le {\rm Gap}(n) \le \frac{17}{\log\!\left(\frac{n\pi}{2e}\right)}, \label{eq:upper_gap_last_Step}
\end{align}
where the last inequality follows from Proposition~\ref{prop:capacity_gap}.

\section{Proof of Theorem~\ref{thm:main-support}} \label{sec:proof_lowerbound_supp}

Before showing the proof of the third bound in Theorem~\ref{thm:main-support}, we state the following helper proposition. 

\begin{prop}\label{prop:invert-chi2}
Let $P_X$ be a discrete distribution  with $K$ mass points in $[0,1]$ and let $P_X  \to P_{Y|X} \to P_Y$.  Let $X_r \sim \text{Beta}(1/2, 1/2)$ and let $P_{X_r}  \to P_{Y|X} \to P_{Y_r}$.  Assume that
\begin{equation} \label{eq:assumption_chi2}
\chi^2(P_Y\|P_{Y_r}) \le \delta
\end{equation}
for some $\delta \in(0,1/4)$, and define $\alpha\triangleq \log\frac{1}{4 \delta}$. Then, 
\begin{equation}
K  \ge  \frac{1}{8}
    \min\left\{
    \sqrt{n\alpha}, n
    \right\}. 
\end{equation}

\end{prop}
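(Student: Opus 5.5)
The plan is to argue by contradiction using Theorem~\ref{thm:main} together with the explicit estimate of Proposition~\ref{prop:uniform-explicit}. Suppose $K < \frac{1}{8}\min\{\sqrt{n\alpha}, n\}$. We will then exhibit an admissible integer $L$ with $K<L\le \frac{n+2}{2}$ for which $B_n(L) > \delta$. This contradicts \eqref{eq:assumption_chi2}, because Theorem~\ref{thm:main} gives $\chi^2(P_Y\|P_{Y_r})\ge B_n(L)$.

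The natural choice is $L = 2K$, or more precisely $L=\max\{2K, K+1\}$; since $K\ge1$, this is simply $L=2K$. With this choice $\frac{L-K}{2L} = \frac14$. Proposition~\ref{prop:uniform-explicit} then yields
\begin{equation*}
B_n(L) \ge \frac14 \exp\!\left(-\frac{(2L-2)^2}{n-2L+3}\right) \ge \frac14\exp\!\left(-\frac{16K^2}{n-4K+3}\right).
\end{equation*}
It remains to check two things. First, admissibility: $L=2K\le \frac{n+2}{2}$ follows from $K< n/8$, since then $4K < n/2 \le n+2$. Second, the exponent must be strictly smaller than $\alpha=\log\frac{1}{4\delta}$, because $\frac14 e^{-t} > \delta$ exactly when $t<\alpha$. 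Using $K<n/8$ gives $n-4K+3 > n/2$, so
\begin{equation*}
\frac{16K^2}{n-4K+3} < \frac{32K^2}{n} < \frac{32}{64}\,\alpha = \frac{\alpha}{2} \le \alpha,
\end{equation*}
where we used $K^2 < n\alpha/64$. Hence $B_n(L)>\frac14 e^{-\alpha}=\delta$, which is the desired contradiction. The hypothesis $\delta<1/4$ ensures $\alpha>0$, so the bound is nontrivial.

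The main obstacle is only bookkeeping. We must make sure that both branches of the minimum are used: the $n/8$ branch guarantees $L$ is admissible and keeps the denominator $n-2L+3$ of order $n$, while the $\sqrt{n\alpha}/8$ branch controls the exponent. We must also handle the edge case in which $\frac18\min\{\cdot\}\le 1$; there the claim is trivial because $K\ge1$. Integrality is not an issue, since $L=2K$ is automatically an integer. The constants leave slack (we obtain $\alpha/2$ rather than $\alpha$), so no sharp estimates are required.
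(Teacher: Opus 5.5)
Your argument is correct and follows the paper's proof: both take $L=2K$ in Theorem~\ref{thm:main} and apply Proposition~\ref{prop:uniform-explicit} to get $\chi^2(P_Y\|P_{Y_r})\ge \frac14\exp\!\left(-\frac{(4K-2)^2}{n-4K+3}\right)$. The only difference is the final inversion: the paper solves the quadratic $(4K-2)^2\ge \alpha(n-4K+3)$ exactly and then lower-bounds its root, whereas your contradiction hypothesis $K<n/8$ lets you use $(4K-2)^2\le 16K^2$ and $n-4K+3>n/2$ directly, which is shorter and leaves slack in the exponent.
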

\begin{IEEEproof}   
    If $K>(n+2)/4$, there is nothing to prove. Assume instead that $K\le \frac{n+2}{4}$. Then $L=2K$ is admissible in Theorem~\ref{thm:main}, because $2L-2=4K-2\le n$. Applying Proposition~\ref{prop:uniform-explicit} with $L=2K$ gives
\begin{equation}
\chi^2(P_Y\|P_{Y_r}) \ge \frac14\exp\!\left(-\frac{(4K-2)^2}{n-4K+3}\right).
\end{equation}
Combining this with the assumption $\chi^2(P_Y\|P_{Y_r})\le \delta$, we obtain
\begin{equation}
\delta \ge \frac14\exp\!\left(-\frac{(4K-2)^2}{n-4K+3}\right),
\end{equation}
and, therefore, $\frac{(4K-2)^2}{n-4K+3}\ge \alpha_n$. Rearranging yields
\begin{equation}
16K^2+(4\alpha_n-16)K+(4-3\alpha_n-\alpha_n n)\ge 0.
\end{equation}
Solving this quadratic inequality for $K$ gives 
\begin{equation}\label{eq:invert-chi2}
K \ge \frac{4-\alpha+\sqrt{\alpha(4n+\alpha+4)}}{8}.
\end{equation}

We now show the following bound
\begin{equation}
    \frac{
    4-\alpha+\sqrt{\alpha (4n+\alpha+4)}
    }{8}
    \ge
    \frac{1}{8}
    \min\left\{
    \sqrt{n\alpha}, n
    \right\}, \label{eq:use_min_alphan_bound}
\end{equation}
which holds for every $\alpha \ge 0$.  For $\alpha>0$, rationalizing gives
\begin{align}
    \sqrt{\alpha(4n+\alpha+4)}-\alpha
    &=
    \frac{4\alpha(n+1)}
    {
    \sqrt{\alpha(4n+\alpha+4)}+\alpha
    }  \ge
    \frac{\alpha(n+1)}
    {
    \alpha+\sqrt{\alpha(n+1)}
    },
\end{align}
where we used
$\sqrt{\alpha(4n+\alpha+4)}
\le \alpha+2\sqrt{\alpha(n+1)}$. The last expression is lower-bounded,
up to the factor $1/8$, by $\sqrt{n\alpha}$ when $\alpha\le n+1$ and by
$n$ when $\alpha\ge n+1$. This proves \eqref{eq:use_min_alphan_bound}.

Consequently, either $K>(n+2)/4$ or $K$ is lower-bounded as in \eqref{eq:use_min_alphan_bound}. The proof is concluded by noting that \eqref{eq:use_min_alphan_bound} is the smaller of the two bounds. 
\end{IEEEproof}

We are now ready to prove Theorem~\ref{thm:main-support}. 
\begin{IEEEproof}
Starting with the bound in Proposition~\ref{prop:chi2-output}, we have that 
\begin{equation}
  \chi^2(P_{Y^\star} \| P_{Y_r}) \le \zeta(c_\star^{-1}) {\rm Gap}(n) \triangleq \delta_n,
\end{equation}
and, in addition, let $\alpha_n \triangleq \log^{+} \frac{1}{4\delta_n}$.
Therefore, by applying Proposition~\ref{prop:invert-chi2}, 
\begin{align}
    K
&\ge
\frac{1}{8}
\min\left\{
\sqrt{n\alpha_n}, n
\right\}  \\
&=
\frac{1}{8}
\min\left\{
\sqrt{
n \,
\log^{+}\!\left(
\frac{1}{4 \,\zeta(c_\star^{-1}) {\rm Gap}(n) }
\right)}, n
\right\}  \\
&\ge
\frac{1}{8}
\min\left\{
\sqrt{
n \,
\log^{+}\!\left(
\frac{\log\!\left(\frac{n\pi}{2e}\right)}{3268} 
\right)}, n
\right\}, \text{ for $n \ge 444$} 
\end{align}
where in the last step we have used the bound in   \eqref{eq:upper_chi2}  and $ 4 \cdot 17 \cdot \,\zeta(c_\star^{-1})\le  3268$.  This concludes the proof. 
\end{IEEEproof}

\end{appendices}

\bibliographystyle{IEEEtran}
\bibliography{biblio}

\end{document}